\newcommand\void[1]       {}
\theoremstyle{definition}
\newtheorem{thm}{Theorem}[section]
\newtheorem{prop}[thm]{Proposition}
\newtheorem{pthm}[thm]{Theorem$^{\mathrm{ph}}$}
\newtheorem{lem}[thm]{Lemma}
\newtheorem{prob}[thm]{Problem}
\theoremstyle{definition}
\newtheorem{defn}[thm]{Definition}
\newtheorem{expl}[thm]{Example}
\newtheorem{rem}[thm]{Remark}
\numberwithin{equation}{section}
\numberwithin{thm}{section}
\newcommand\nn             {\nonumber \\}
\newcommand\be            {\begin{equation}}
\newcommand\ee            {\end{equation}}
\newcommand\bea           {\begin{eqnarray}}
\newcommand\eea         {\end{eqnarray}}
\newcommand\bnu          {\begin{enumerate}}
\newcommand\enu          {\end{enumerate}}
\newcommand{\pf}{\begin{proof}}
\newcommand{\epf}{\end{proof}}
\newcommand\Cb            {\mathbb{C}}
\newcommand\Rb            {\mathbb{R}}
\newcommand\Zb            {\mathbb{Z}}
\newcommand\CA           {\EuScript{A}}
\newcommand\CB           {\EuScript{B}}
\newcommand\CC           {\EuScript{C}}
\newcommand\CD           {\EuScript{D}}
\newcommand\CE          {\EuScript{E}}
\newcommand\CG         {\EuScript{G}}
\newcommand\CM          {\EuScript{M}}
\newcommand\CN         {\EuScript{N}}
\newcommand\CP         {\EuScript{P}}
\newcommand\CQ         {\EuScript{Q}}
\newcommand\CR         {\EuScript{R}}
\newcommand\CS         {\EuScript{S}}
\newcommand\CT         {\EuScript{T}}
\newcommand\CW        {\EuScript{W}}
\newcommand\CX         {\EuScript{X}}
\newcommand\CY         {\EuScript{Y}}
\newcommand\CXs{{\EuScript{X}^\sharp}}
\newcommand\CYs{{\EuScript{Y}^\sharp}}
\newcommand{\FZ}{\mathfrak{Z}}
 \DeclareMathOperator{\Hom}{Hom}
 \DeclareMathOperator{\Aut}{Aut}
 \DeclareMathOperator{\id}{id}
 \DeclareMathOperator{\fun}{Fun}
 \DeclareMathOperator{\Alg}{Alg}
 \DeclareMathOperator{\LMod}{LMod}
 \DeclareMathOperator{\RMod}{RMod}
 \DeclareMathOperator{\BMod}{BMod}
 \newcommand{\inv}{\mathbf{Inv}}
 \newcommand{\tra}{$^{tr}$}
\newcommand{\rev}{\mathrm{rev}}
\newcommand{\one}{\mathbf1}
\newcommand\fpdim       {\mathrm{FPdim}}
\newcommand\op {\mathrm{op}}
\newcommand\hilb {\mathrm{Hilb}}
\newcommand\vect {\mathrm{Vec}}
\newcommand\Rep {\mathrm{Rep}}
\newcommand\nao {\mbox{$n$+1}}
\newcommand\nmo {\mbox{$n$-1}}
\newcommand\Pic {\mathrm{Pic}}
\newcommand\Irr {\mathrm{Irr}}
\newcommand\mext {\mathrm{M}_{\mathrm{ex}}}
\DeclareMathOperator{\kar}{Kar}
\newcommand\catkc {\mathbf{Cat}_\Cb^{\mathrm{kc}}}
\definecolor{red}{rgb}{1,0,0}
\definecolor{blue}{rgb}{0,0,1}
\definecolor{green}{rgb}{0,1,0}
\newcommand{\blue}[1]{{\color{blue}{#1}}}
\renewcommand{\>}{\rangle}
\begin{document}

\begin{center} \LARGE
Classification of topological phases with finite internal symmetries in all dimensions
\end{center}

\vspace{0.01cm}
\begin{center}
Liang Kong$^{a,b}$,\,  
Tian Lan$^{c}$,\,
Xiao-Gang Wen$^{d}$,
Zhi-Hao Zhang$^{e,a}$,
Hao Zheng$^{a,b,f}$
\\[1em]
$^a$ Shenzhen Institute for Quantum Science and Engineering, \\
Southern University of Science and Technology, Shenzhen, 518055, China 
\\[0.4em]
$^b$ Guangdong Provincial Key Laboratory of Quantum Science and Engineering, \\
Southern University of Science and Technology, Shenzhen, 518055, China
\\[0.4em]
$^c$ Institute for Quantum Computing, \\
University of Waterloo, Waterloo, Ontario N2L 3G1, Canada
\\[0.4em]
$^d$ Department of Physics, Massachusetts Institute of Technology, \\
Cambridge, Massachusetts 02139, USA
\\[0.4em]
$^e$ School of Mathematical Sciences \\
University of Science and Technology of China, Hefei, 230026, China 
\\[0.4em]
$^f$ 
Department of Mathematics, Peking University, Beijing 100871, China
\end{center}

\begin{abstract}
We develop a mathematical theory of symmetry protected trivial (SPT) orders and anomaly-free symmetry enriched topological (SET) orders in all dimensions via two different approaches with an emphasis on the second approach. The first
approach is to gauge the symmetry in the same dimension by adding topological excitations as it was done in the 2d case, in which the gauging process is mathematically described by the minimal modular extensions of unitary braided fusion 1-categories. This 2d result immediately generalizes to all dimensions except in 1d, which is treated with special care. The second approach is to use the 1-dimensional higher bulk of the SPT/SET order and the boundary-bulk relation. This approach also leads us to a precise mathematical description and a classification of SPT/SET orders in all dimensions. The equivalence of these two approaches, together with known physical results, provides us with many precise mathematical predictions. 
\end{abstract}

\tableofcontents

\section{Introduction}

The \emph{gapped liquid phases} of matter is the simplest kind of the quantum
matter in physics, and yet they contain very rich patterns of many-body quantum
entanglement which require modern mathematics to describe.  The notion of a
\emph{gapped liquid phase} is defined microscopically as an equivalence class
of many-body states that are equivalent under local unitary transformations
\cite{CGW1038} and the stacking of product states \cite{ZW1490,SM1403}. Gapped
liquid phases without symmetries are called topological orders
\cite{W8987,W9039,KW9327}. Gapped liquid phases with symmetries (i.e.
0-symmetries) include gapped spontaneous symmetry breaking orders, symmetry
enriched topological (SET) orders \cite{CGW1038} and symmetry protected trivial
(SPT) orders \cite{GW0931,CLW1141,CGL1314}. After a 30-year effort, we start to
gain a rather complete understanding of them for bosonic/fermionic systems
with/without symmetries. Throughout this work, we use $n$d to denote the
spatial dimension and $\nao$D to denote the spacetime dimension. 

\medskip
In 1+1D, all gapped phases are liquid phases. For bosonic systems, they are classified by triples $(G_H,G_\Psi,\omega_2)$ \cite{CGW1107,SPC1139}, where $G_H$ is the symmetry group of the Hamiltonian, $G_\Psi$ is that of the ground state ($G_\Psi \subset G_H$), and $\omega_2\in H^2(G_\Psi,U(1))$ is a 2-cocycle. For fermionic systems, the classification can be obtained from that for bosonic systems via the Jordan-Wigner transformation ({\it i.e.} bosonization) \cite{CGW1128}.

In 2+1D, we believe that all gapped phases are liquid phases. There are two approaches towards the classification of SPT/SET orders. One approach, based on G-crossed braided fusion categories, works for bosonic systems \cite{BBC1440}, and important steps were made in \cite{FVM18} for fermionic SET orders. The other one, which works for both bosonic and fermionic SPT/SET orders, is based on the modular extensions of unitary braided fusion categories
\cite{LW160205946,LW160205936}. More precisely, gapped liquid phases with a finite anomaly-free symmetry $G_H$ are classified by $(G_H,\CE \subset\CC \subset\CM)$, where $\CE$ is the symmetric fusion category $\Rep(G_\Psi)$ (resp. $\Rep(G_\Psi,z)$) for a bosonic (resp. fermionic) system. Here, $G_\Psi \subset G_H$ is the unbroken subgroup, $z$ generates the fermion parity symmetry and $\Rep(G_\Psi,z)$ is the category of $G_\Psi$-representations with braidings respecting the fermion parity; and $\CC$ is a unitary braided fusion category with M\"{u}ger center being $\CE$ and $\CM$ is a minimal modular extension of $\CC$ \cite{LW160205946,LW160205936}.

In 3+1D, there are gapped \emph{non-liquid} phases \cite{C0502,H11011962}.
Gapped liquid phases for bosonic systems without symmetry, i.e. bosonic
topological orders, are classified by Dijkgraaf-Witten theories if all
point-like excitations are bosons; and by twisted 2-gauge theories with gauge
2-group $B(G,Z_2)$ if some point-like excitations are fermions and there is no
Majorana zero mode; and by a special class of fusion 2-categories if some
point-like excitations are fermions and there are Majorana zero modes at some
triple-string intersections \cite{LW170404221,LW180108530,ZW180809394}. These
results match well with the classification of 3+1D SPT orders for bosonic
\cite{CGL1314,K1459} and fermionic systems
\cite{GW1441,KTT1429,GK150505856,FH160406527,KT170108264,WG170310937}, and with recent mathematical results \cite{jf20}. This suggests that all gapped liquid phases for bosonic/fermionic systems with finite 0-symmetries can be obtained by partially gauging the symmetries of bosonic/fermionic SPT orders \cite{LW180108530}. 

\medskip
What is unsatisfying is that the methods that lead to above results are different for different dimensions, and the symmetries are restricted to only 0-symmetries. In this work, we provide two systematic approaches toward SPT/SET orders with all finite internal symmetries (see Remark\,\ref{rem:finite-internal-symmetry}), including $n$-groups (see for example \cite{kt13,GW14125148,tk15,wen19,ww18}) and algebraic higher symmetries beyond $n$-groups (see Example\,\ref{rem:n-SFC}). 

The first approach is an immediate generalization of the theory of minimal modular extensions in 2+1D to all higher dimensions (summarized in Theorem$^{\mathrm{ph}}$\,\ref{pthm:main-1}) except in 1+1D, which is treated with special care. But this approach has many unsatisfying aspects. Most importantly, the modular-extension description, which is based on the idea of gauging the symmetry, is not intrinsic with respect to the category $\CC$ of 2-or-higher codimensional topological excitations. One obtains a gauging of the symmetry in $\CC$ by introducing external topological excitations until the whole set of topological excitations form an anomaly-free topological order without symmetry. This is not an intrinsic approach because the SPT/SET order exists before we gauge the symmetry. An intrinsic description should not depend on the gauging. It means that some data intrinsically associated to $\CC$ is missing. As a consequence, we do not know when a minimal modular extension of $\CC$ exists even in 2+1D except a single counterexample discovered by Drinfeld \cite{drinfeld}.

We find this missing data in our second approach, which is based on the idea of {\it boundary-bulk relation} \cite{kwz1,kwz2}. The idea is rather simple. When $\CC$ admits a minimal modular extension, it means that the SET order is anomaly-free. In other words, its unique bulk must be the trivial 1-higher-dimensional SPT order, the categorical description of which is actually non-trivial. By the boundary-bulk relation, the bulk of $\CC$ is given by the ``center of $\CC$'' , which should be identified via a braided equivalence $\phi$ with the non-trivial categorical description of the trivial 1-higher-dimensional SPT order obtained in the first approach. The identification $\phi$ is potentially not unique, and is precisely the missing data we are looking for. Namely, the pair $(\CC,\phi)$ gives a complete mathematical description of an anomaly-free SET order. As a consequence, different identifications $\phi$ should correspond to different minimal modular extensions of $\CC$. It is not known how to check the existence of a minimal modular extension directly. The existence of $\phi$, however, can be checked by computing the ``center of $\CC$'' explicitly and comparing it with the categorical description of the trivial 1-higher-dimensional SPT order.

Although the idea is simple, the difficulty lies in how to make sense of the ``center of $\CC$'' precisely. One of the lessons we have learned from \cite{kwz1} is that the categorical description of an SET order $\CP$ depends on its codimension relative to a higher dimensional anomaly-free topological order, in which $\CP$ is realized as a gapped defect. The categorical description of an anomaly-free SET order in the modular-extension approach is 0-codimensional and contains only 2-or-higher codimensional topological excitations. When we regard the SET order as a boundary of the trivial 1-higher-dimensional SPT order, we need a 1-codimensional description, which should include not only the topological excitations in $\CC$ but also those can be obtained from $\CC$ via condensations, called the condensation descendants of $\CC$. In Section\,\ref{sec:cc}, we explain in details that this completion of $\CC$ by adding condensation descendants, called the condensation completion of $\CC$,  precisely amounts to the so-called ``Karoubi completion'' or ``the delooping'' $\Sigma\CC$ of $\CC$ in mathematics \cite{dr,gjf19,jf20}. Mathematically, $\CC$ is a unitary braided fusion $n$-category and the delooping $\Sigma\CC$ is a unitary fusion $(\nao)$-category \cite{jf20}, and the precise meaning of the ``center of $\CC$'' is the monoidal center $\FZ_1(\Sigma\CC)$ of $\Sigma\CC$. As a consequence, we obtain a precise mathematical description and a classification of SPT/SET orders modulo invertible topological orders without symmetries.

\medskip
Throughout this work, we assume that the notion of a SPT/SET order is modulo invertible topological orders (without symmetries) (see Section\,\ref{sec:physical}). We summarize our main results as a physical theorem below.   
\begin{pthm} \label{pthm:main-2_0}
For $n\geq 1$, let $\CR$ be a unitary symmetric fusion $n$-category viewed as a higher symmetry (see Example\,\ref{rem:n-SFC}). 
We call an $n$d (spatial dimension) SPT/SET order with the higher symmetry $\CR$ (modulo invertible topological orders) an $n$d SPT/SET$_{/\CR}$ order. 
\bnu 
\item An anomaly-free $n$d SET$_{/\CR}$ order is uniquely characterized by
a pair $(\CA,\phi)$, where $\CA$ is a unitary fusion $n$-category over $\CR$ (see Definition\,\ref{def:n-fusion-over-R}) describing all topological excitations (including all condensation descendants) and $\phi: \FZ_1(\CR) \to \FZ_1(\CA)$ is a braided equivalence rendering the following diagram commutative (up to a natural isomorphism):
\be \label{diag:RRA}
\xymatrix@R=0.8em{
& \CR \ar@{^(->}[dl]_{\iota_0} \ar@{^(->}[dr]^{\iota_\CA} & \\
\FZ_1(\CR) \ar[rr]_\simeq^\phi & & \FZ_1(\CA).
}
\ee
We denote the set of equivalence classes of all such $\phi$ by $\mathrm{BrEq}((\FZ_1(\CR),\iota_0),(\FZ_1(\CA),\iota_\CA))$ (see Definition\,\ref{def:n-equivalence-over-R}). 

\item When $\CA=\CR$, the pair $(\CR, \phi)$ describes an SPT$_{/\CR}$ order and
  $(\CR,\id_{\FZ_1(\CR)})$ describes the trivial SPT$_{/\CR}$ order. Moreover, the
  group of all SPT$_{/\CR}$ orders (with the multiplication defined by the stacking and the identity element by the trivial SPT order) is isomorphic to the group
  $\Aut^{br}(\FZ_1(\CR),\iota_0)$, which denotes the underlying group of braided autoequivalences of $\FZ_1(\CR)$ preserving $\iota_0$, i.e. $\phi\circ\iota_0\simeq\iota_0$. For a given category $\CA$ of topological excitations (including condensation descendants), we have
\enu
\[
\{ \mbox{$n$d anomaly-free SET$_{/\CR}$ orders with topological excitations $\CA$} \} = \frac{ \mathrm{BrEq}((\FZ_1(\CR),\iota_0),(\FZ_1(\CA),\iota_\CA))}{\Aut^\otimes(\CA,\iota_\CA)}\, .
\]

When $n=1,2$, we obtain explicit classifications.   
\bnu

\item All 1d anomaly-free SET$_{/\CR}$ orders are SPT$_{/\CR}$ orders. The group of 1d SPT$_{/\CR}$ orders is isomorphic to the group $\Aut^{br}(\FZ_1(\CR),\iota_0)$ and to the Picard group $\mathrm{Pic}(\CR)$ of $\CR$. More explicitly, in this case, we have $\CR=\Rep(G)$ or $\Rep(G,z)$, and we have the following natural group isomorphisms \cite{car}: 
\begin{align} 
\label{eq:pic=_0}
\Pic(\Rep(G)) &\simeq H^2(G,U(1))  \\
\label{eq:pic=f_0}
\Pic(\Rep(G,z)) &\simeq \left\{ \begin{array}{ll} 
H^2(G,U(1)) \times \Zb_2 & \textrm{if $G = G_b \times \langle z \rangle$}; \\
H^2(G, U(1)) & \textrm{if otherwise}. 
\end{array} \right.
\end{align}
\item The group of all 2d SPT$_{/\CR}$ orders is isomorphic to $\Aut^{br}(\FZ_1(\CR),\iota_0)$. 
We denote the set of minimal modular extensions of a braided fusion 1-category $\CC$ by $\mext(\CC)$. The consistency of our physical theory demands the equivalence of two approaches. This equivalence leads to the following results. 
\bnu
\item when $\CR\simeq\Sigma\Rep(G)$, $\Aut^{br}(\FZ_1(\CR),\iota_0)\simeq\mext(\Sigma\Rep(G))\simeq H^3(G,U(1))$ \cite{LW160205936};
\item when $\CR\simeq\Sigma\Rep(G,z)$, $\Aut^{br}(\FZ_1(\CR),\iota_0)\simeq\mext(\Sigma\Rep(G,z))$, which is $\Zb_{16}$ for $G=\Zb_2$, and how to compute it for $G \neq \Zb_2$ is shown in \cite{gvr};
\item when $\CR= \Sigma\CE$, $\CA=\Sigma\CC$ and $\CE$ is the M\"uger
  center of $\CC$, \newline
  we have $\mathrm{BrEq}((\FZ_1(\CR),\iota_0),(\FZ_1(\CA),\iota_\CA)) \simeq \mext(\CC)$;
\item when $\CR\nsimeq \Sigma\Rep(G), \Sigma\Rep(G,z)$, i.e. a higher symmetry such as $2\hilb_H$ for an abelian group $H$, our results go beyond the usual classifications.
\enu
\enu
We also classify SET orders with only symmetry anomalies (see Definition\,\ref{defn:hooft-anomaly}) in Theorem$^{\mathrm{ph}}$\,\ref{pthm:nd-t-hooft-anomaly} and mixed gravitational and symmetry anomalies in Remark\,\ref{anoset}. 
\end{pthm}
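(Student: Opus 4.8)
\medskip\noindent
The plan is to derive this physical theorem by establishing the equivalence of the two approaches outlined in the introduction and then extracting the low-dimensional classifications from known mathematical results. First I would pin down the categorical description of the \emph{trivial} $(\nao)$d SPT$_{/\CR}$ order coming from the first approach: using the all-dimensional generalization of minimal modular extensions (Theorem$^{\mathrm{ph}}$\,\ref{pthm:main-1}) together with the identification in Section\,\ref{sec:cc} of the condensation completion of a unitary braided fusion $n$-category with its delooping $\Sigma(-)$, I would show that the trivial phase, viewed as a gapped boundary and hence in a $1$-codimensional description, carries the non-trivial datum $(\FZ_1(\CR),\iota_0)$, where $\iota_0\colon\CR\hookrightarrow\FZ_1(\CR)$ is the canonical central embedding.

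Next I would invoke the boundary--bulk relation \cite{kwz1,kwz2}, upgraded to unitary fusion $n$-categories, to argue that an $n$d SET$_{/\CR}$ order whose topological excitations together with all condensation descendants form a unitary fusion $n$-category $\CA$ over $\CR$ (Definition\,\ref{def:n-fusion-over-R}) has a \emph{unique} bulk, and that this bulk is the $(\nao)$d phase described by the monoidal center $\FZ_1(\CA)$ with the induced central embedding $\iota_\CA$. By definition the SET$_{/\CR}$ order is anomaly-free precisely when its bulk is the trivial $(\nao)$d SPT$_{/\CR}$ order, i.e. precisely when there is a braided equivalence $\phi\colon\FZ_1(\CR)\to\FZ_1(\CA)$ identifying the two bulks; compatibility with the symmetry forces the triangle \eqref{diag:RRA} to commute up to a natural isomorphism, producing the pair $(\CA,\phi)$ of part (1). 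For uniqueness I would show that the only residual freedom in a $1$-codimensional boundary description of $\CA$ is a monoidal autoequivalence of $\CA$ preserving the symmetry, acting on $\phi$ by post-composition with its monoidal center, so that quotienting by $\Aut^\otimes(\CA,\iota_\CA)$ gives the stated bijection. Specializing to $\CA=\CR$, $\iota_\CA=\iota_0$ recovers SPT$_{/\CR}$ orders; since stacking corresponds to composition of braided autoequivalences and $\phi=\id_{\FZ_1(\CR)}$ to the trivial order, the group of SPT$_{/\CR}$ orders is $\Aut^{br}(\FZ_1(\CR),\iota_0)$, which is part (2).

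For the explicit low-dimensional statements I would argue as follows. When $n=1$ a unitary symmetric fusion $1$-category is $\Rep(G)$ or $\Rep(G,z)$, and since there are no non-trivial $1$d topological orders the only topological excitations are the symmetry charges, forcing $\CA\simeq\CR$ modulo invertible topological orders, so every $1$d anomaly-free SET$_{/\CR}$ order is SPT; the identification $\Aut^{br}(\FZ_1(\CR),\iota_0)\simeq\Pic(\CR)$ --- a braided autoequivalence of the Drinfeld double $\FZ_1(\CR)$ preserving the Lagrangian subcategory $\CR$ together with its embedding is the same datum as an invertible $\CR$-module category --- together with the computations of \cite{car} then yields \eqref{eq:pic=_0}--\eqref{eq:pic=f_0}, the extra $\Zb_2$ in the split fermionic case coming from the invertible module realizing the $1$d fermionic invertible order. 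When $n=2$ I would match the second approach against the modular-extension approach: for $\CR=\Sigma\CE$ and $\CA=\Sigma\CC$ with $\CE$ the M\"uger center of $\CC$, I would prove that a braided equivalence $\FZ_1(\Sigma\CE)\to\FZ_1(\Sigma\CC)$ over $\CE$ exists precisely when $\CC$ admits a minimal modular extension and that the set of such equivalences is naturally in bijection with $\mext(\CC)$, so that $\mathrm{BrEq}((\FZ_1(\CR),\iota_0),(\FZ_1(\CA),\iota_\CA))\simeq\mext(\CC)$ --- this is the ``missing data'' promised in the introduction; specializing $\CC$ to $\Rep(G)$ or $\Rep(G,z)$ and quoting $\mext(\Sigma\Rep(G))\simeq H^3(G,U(1))$ \cite{LW160205936} and the computation of $\mext(\Sigma\Rep(G,z))$ ($\Zb_{16}$ for $G=\Zb_2$ and via \cite{gvr} for general $G$) gives the listed cases, while the remaining case $\CR\nsimeq\Sigma\Rep(G),\Sigma\Rep(G,z)$ (for instance $\CR=2\hilb_H$) is automatically covered by the general formula of part (1).

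I expect the main obstacle to be the second step: making the ``center of $\CC$'' and the boundary--bulk relation rigorous at the level of unitary fusion $n$-categories, controlling the $(\nao)$-categorical coherence data, and proving that the residual ambiguity in the $1$-codimensional description is exactly $\Aut^\otimes(\CA,\iota_\CA)$ and no larger. The equivalence $\mathrm{BrEq}(\dots)\simeq\mext(\CC)$ in dimension $2$ is the second delicate point, since it requires computing $\FZ_1(\Sigma\CC)$ explicitly and comparing it with the categorical description of the trivial $3$d SPT order obtained in the first step.
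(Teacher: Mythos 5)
Your proposal follows essentially the same route as the paper: Section 2 supplies the gauging/minimal-modular-extension description of the trivial one-dimension-higher SPT order $(\FZ_1(\CR),\iota_0)$, Section 3 combines condensation completion ($\Sigma$) with the boundary--bulk relation to produce the pair $(\CA,\phi)$ and the quotient by $\Aut^\otimes(\CA,\iota_\CA)$, the $n=1$ case is settled via $\Pic(\CE)\simeq\Aut^{br}(\FZ_1(\CE),\iota_0)$ and \cite{car}, and the $n=2$ case via the bijection $\mathrm{BrEq}((\FZ_1(\Sigma\CE),\iota_0),(\FZ_1(\Sigma\CC),\iota))\simeq\mext(\CC)$ obtained by closing the fan. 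The difficulties you flag (rigorous higher-categorical centers, and the explicit comparison with $\mext(\CC)$ in dimension $2$) are exactly the points the paper leaves at the level of physical theorems, so the plan matches the paper's argument in both substance and in where it stops being rigorous.
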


\begin{rem}
A mathematical definition of a (multi-)fusion $n$-category was recently introduced by Theo Johnson-Freyd \cite{jf20}. We recall his definition in Definition\,\ref{def:jf}. When the higher symmetry $\CR$ is trivial, i.e. $\CR=n\hilb$, we expect that $\iota_0$ is a braided equivalence and $\phi$ is necessarily isomorphic to the identity functor. As a consequence, in this case, Theorem$^{\mathrm{ph}}$\,\ref{pthm:main-2} reduces to the classification of topological orders modulo invertible topological orders without symmetries (see \cite{jf20} and Remark\,\ref{rem:TO} and \ref{rem:spatial-morita-eq}).  
\end{rem}

Theorem$^{\mathrm{ph}}$\,\ref{pthm:main-2_0} does not include the classification of 0d SPT/SET orders because 0d cases is slightly different from higher dimensional cases. We present the 0d cases separately below. 
\begin{pthm}
All 0d anomaly-free SET orders with a finite symmetry $G$ are SPT orders. A 0d SPT order is uniquely characterized by a pair $(\Rep(G),\phi)$, where $\phi$ is a monoidal auto-equivalence of $\FZ_0(\Rep(G)) \coloneqq \fun_\hilb(\Rep(G),\Rep(G))$ such that $\phi\circ\iota_0\simeq \iota_0$, where $\iota_0: \Rep(G) \to \FZ_0(\Rep(G))$ defined by $a \mapsto a\otimes -$ for $a\in\Rep(G)$. We denote the set of equivalence classes of such $\phi$ by $\Aut^\otimes(\FZ_0(\Rep(G)),\iota_0)$. The group of all 0d SPT orders is isomorphic to the group $\Aut^\otimes(\FZ_0(\Rep(G)),\iota_0)$, which is further isomorphic to the group $H^1(G,U(1))$. 
\end{pthm}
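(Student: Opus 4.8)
\emph{Plan.} The strategy is to combine one physical input---there is no room for nontrivial topological order in $0$d---with a single genuine categorical computation, the identification $\Aut^\otimes(\FZ_0(\Rep(G)),\iota_0)\simeq H^1(G,U(1))$.
First, \emph{why all anomaly-free $0$d $G$-SET orders are SPT}: a $0$d system has no spatial extent, hence no topological excitations and no nontrivial topological order, so beyond the symmetry charges---which form $\Rep(G)$---there is nothing, and the category of topological excitations (with its condensation descendants) is forced to be $\Rep(G)$, playing simultaneously the roles of $\CR$ and $\CA$ exactly as in the SPT case of Theorem$^{\mathrm{ph}}$\,\ref{pthm:main-2_0}. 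Concretely such a phase is determined by the $G$-action on its ground-state space, which for a phase that becomes trivial once the symmetry is forgotten must be $1$-dimensional, i.e.\ a character $\chi\colon G\to U(1)$; hence it is an SPT order, stacking multiplies characters, and the $0$d $G$-SPT orders form a group with identity $\chi=\one$. It then remains to match this group with $\Aut^\otimes(\FZ_0(\Rep(G)),\iota_0)$ and to compute the latter.

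\emph{The categorical datum.} Following the boundary-bulk reasoning used for $n\ge 1$, I regard a $0$d SPT$_{/G}$ order as a boundary of its $1$-higher-dimensional bulk, which for an anomaly-free order is the trivial $1$d $G$-SPT order whose categorical description is the $n=1$ case of Theorem$^{\mathrm{ph}}$\,\ref{pthm:main-2_0}. The dimension shift replaces the $E_1$-center $\FZ_1$ used in higher dimensions by $\FZ_0(\Rep(G))=\fun_\hilb(\Rep(G),\Rep(G))$, and the identification of the computed bulk with the standard trivial one is a monoidal autoequivalence $\phi$ of $\FZ_0(\Rep(G))$ satisfying $\phi\circ\iota_0\simeq\iota_0$, where $\iota_0\colon a\mapsto a\otimes-$. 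This produces the pair $(\Rep(G),\phi)$ with $\id_{\FZ_0(\Rep(G))}$ describing the trivial order, in parallel with the $n\ge 1$ statement.

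\emph{The computation.} Since $\Rep(G)$ is semisimple with simples indexed by $\Irr(G)$, $\FZ_0(\Rep(G))$ is the matrix category $\mathrm{Mat}_{|\Irr(G)|}(\hilb)$ with composition as tensor product, and a monoidal autoequivalence is---up to monoidal natural isomorphism---determined by its underlying permutation $\sigma$ of $\Irr(G)$ (for a matrix category the associativity and unit data carry no extra, $H^3$-type, content). Evaluating $\iota_0$ on simples, the condition $\phi\circ\iota_0\simeq\iota_0$ unwinds to $N_{ij}^{k}=N_{i\,\sigma(j)}^{\sigma(k)}$ for all $i,j,k\in\Irr(G)$, where the $N$ are the fusion coefficients of $\Rep(G)$; equivalently $\sigma$ induces a basis-permuting automorphism of the fusion ring $K_0(\Rep(G))$ as a module over itself. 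As this module is free of rank one, such an automorphism is right multiplication by a unit represented by a simple object, i.e.\ by an invertible object of $\Rep(G)$, which is precisely a character $\chi\in\widehat{G}$; the corresponding $\phi$ is realized explicitly by conjugation by the autoequivalence $-\otimes\chi$ of $\Rep(G)$, and distinct characters give non-isomorphic $\phi$ because $-\otimes(\chi'\chi^{-1})$ is central in $\fun_\hilb(\Rep(G),\Rep(G))$ only when $\chi'=\chi$. Hence $\chi\mapsto\phi_\chi$ is a group isomorphism $\widehat{G}\xrightarrow{\ \sim\ }\Aut^\otimes(\FZ_0(\Rep(G)),\iota_0)$, compatible with stacking, and $\widehat{G}=\Hom(G,U(1))=H^1(G,U(1))$, which gives the theorem.

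\emph{Main obstacle.} The one non-formal point is the justification in the second step: that the correct dimension-shifted bulk datum of a $0$d phase is exactly $(\FZ_0(\Rep(G)),\iota_0)$ and not some other completion. This is the precise sense in which ``$0$d is slightly different,'' and it has to be extracted from the boundary-bulk relation together with the $n=1$ classification rather than deduced formally; granting it, the remaining computation is routine module-category bookkeeping, the only subtlety worth a sentence being that a monoidal autoequivalence of a matrix category is pinned down up to natural isomorphism by its underlying permutation.
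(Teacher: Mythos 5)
Your proposal is correct, and its overall architecture coincides with the paper's: the physical argument that every anomaly-free $0$d SET order is an SPT order, the boundary--bulk identification of the classifying datum as a monoidal autoequivalence $\phi$ of $\FZ_0(\Rep(G))=\fun_\hilb(\Rep(G),\Rep(G))$ with $\phi\circ\iota_0\simeq\iota_0$, and the conclusion $H^1(G,U(1))$. Where you genuinely diverge is in the one rigorous computation. The paper argues at the level of module categories: since $\Rep(G)$ is the unique left $\FZ_0(\Rep(G))$-module, any $\phi$ is conjugation $f\circ(-)\circ f^{-1}$ by an autoequivalence $f$ of $\Rep(G)$; the condition $\phi\circ\iota_0\simeq\iota_0$ then reads $e\otimes f(x)\simeq f(e\otimes x)$, so $f$ is a left $\Rep(G)$-module functor, hence $f\simeq -\otimes f(\one)$ with $f(\one)$ invertible, i.e.\ a character. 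You instead first reduce $\phi$ to its underlying permutation $\sigma$ of $\Irr(G)$ and then decategorify, reading the condition as $N_{ij}^k=N_{i\sigma(j)}^{\sigma(k)}$ and using freeness of $K_0(\Rep(G))$ over itself to identify $\sigma$ with right multiplication by an invertible class. The two arguments are parallel at different levels; yours is more elementary and combinatorial, while the paper's is shorter and, importantly, \emph{proves} en route the one fact you assert with only a hand-wave (``the associativity and unit data carry no extra content''), namely that a monoidal autoequivalence of the matrix category is pinned down up to monoidal natural isomorphism by its permutation --- that is precisely what uniqueness of the module $\Rep(G)$ over $\FZ_0(\Rep(G))$ delivers, so if you want a complete argument you should either cite that fact or fold the paper's module-theoretic step into your proof. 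Two minor points: the passage from ``right multiplication by $[V_{\sigma(\one)}]$ permutes the basis'' to ``$V_{\sigma(\one)}$ is invertible'' deserves the one-line Frobenius--Perron dimension count, and ``central'' is not quite the right word in your injectivity claim --- the correct statement is that conjugation by $-\otimes\mu$ is monoidally trivial iff the functor $-\otimes\mu$ is isomorphic to $\id_{\Rep(G)}$, iff $\mu\simeq\one$.
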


\begin{rem} \label{rem:finite-internal-symmetry}
There are different kinds of symmetries in physics, such as global symmetries (i.e. 0-symmetries), higher symmetries (e.g. higher groups, symmetric fusion $n$-categories, etc.), spatial symmetries (e.g. translation, rotation, etc). Gapped liquid phases are defined on any lattices, including random lattices, on which there is no spatial symmetry. In this paper, we exclude spatial symmetries and consider only finite internal symmetries, such as 0-symmetries and higher symmetries. The ``finiteness" is due to the energy gap of the liquid phases. If the symmetry is not finite, the spontaneously symmetry-breaking states and the symmetry-gauged states can be gapless. We mainly focus on unitary symmetries. Time-reversal symmetry is discussed briefly only in Section\,\ref{sec:physical}. The general study will be left for the future.
\end{rem}

\begin{rem}
We use the term ``SPT orders'' in the sense of Definitions \ref{bspt} and \ref{fspt}. For bosonic systems, it is the same as the usual definition in most literature. For fermionic systems, it contains some fermionic invertible topological order.
Thus it is different and includes those SPT orders in the usual sense as a proper subgroup. This change in terminology makes it convenient and natural to treat bosonic and fermionic systems on the same footing.
\end{rem}

This paper contains a few mathematically rigorous results in 0d,1d,2d cases, but it is physical and unrigorous in general. We avoid to discuss many mathematical details in our formulation, which should be important eventually. For example, the mathematical definition of a unitary symmetric fusion $n$-category is not known (see Remark\,\ref{rem:unitary}). Moreover, shall we include the natural isomorphism making diagram (\ref{diag:RRA}) commutative and many hidden higher isomorphisms in our characterization of SPT/SET orders? We do not include them here because we do not see their physical meanings and because current setup works very well in lower dimensional cases. On the other hand, we feel that they should be part of a complete-yet-unknown theory, in which, from a mathematical point of view, the right question is not to ask for the set of SPT/SET orders, but to study the category of all SPT/SET orders instead (see \cite{kwz1}). For example, from our description of an $n$d SPT order $\CX$, it is easy to see that all lower dimensional SPT orders are encoded in the higher automorphisms of $\CX$ (see Remark\,\ref{rem:spt-groupoid}). This is beyond the scope of this work, which should be regarded as a blueprint for future studies.

\medskip
The layout of this work: we study the first approach in Section\,\ref{sec:mext} and the second approach in Section\,\ref{sec:bb-relation}, and review some higher categorical notions in Appendix\,\ref{sec:appendix}. We want to emphasize that these two approaches are not independent. In particular, the mathematical description of the 1-dimensional-higher bulk needed for the boundary-bulk relation in the second approach is given by a minimal modular extension in the first approach (see Remark\,\ref{rem:tower}). 


\medskip
Throughout this work, we use ``Theorem$^{\mathrm{ph}}$'' to highlight a physical result, and use ``Theorem'' to highlight a mathematically rigorous result.  

\begin{rem}
During the final editing of the first draft, we received from Theo Johnson-Freyd the draft of his new paper \cite{jf20}. In this marvelous paper, Theo obtained a mathematical classification of topological orders in all dimensions and that of the SPT/SET orders in lower dimensions. Our classification of SET orders reduces to that of topological orders obtained in \cite{jf20} when the symmetry is trivial (see Remark\,\ref{rem:TO}). For lower dimensional SPT/SET orders, our results have a lot of overlaps with those in \cite{jf20}. Although our understanding of \cite{jf20} is still very limited, we try our best to add remarks in various places to either explain the relation or remind readers of a different approach.  
\end{rem}

\void{  
\medskip
We summarize our main results below. An onsite symmetry in $n$d can be mathematically described by a symmetric fusion $n$-category $\CR$ (e.g. $n\Rep(G)$, $n\Rep(G,z)$). 
 \begin{defn}
A unitary fusion $n$-category over $\CR$ is a unitary fusion $n$-category $\CA$ equipped with a braided embedding $\iota_\CA: \CR \hookrightarrow \FZ_1(\CA)$. 
\end{defn}

The simplest unitary fusion $n$-category over $\CR$ is given by $\CR$ together with a canonical embedding $\iota_0: \CR \hookrightarrow \FZ_1(\CR)$. 

\begin{pthm}
We have a classification of all $n$d SPT/SET orders with a symmetry $\CR$. 
\bnu 
\item Anomaly-free $n$d SET order with the symmetry $\CR$ are described and classified by pairs $(\CA,\phi)$, where $\CA$ is a unitary fusion $n$-category over $\CR$ and $\phi: \FZ_1(\CR) \to \FZ_1(\CA)$ is a braided equivalence rendering the following diagram commutative (up to a natural isomorphism):
\[
\xymatrix@R=1.5em{
& \CR \ar@{^(->}[dl]_{\iota_0} \ar@{^(->}[dr]^{\iota_\CA} & \\
\FZ_1(\CR) \ar[rr]_\simeq^\phi & & \FZ_1(\CA).
}
\]

\item When $\CA=\CR$, the $(\CR, \phi)$ describes an SPT order, and
  $(\CR,\id_{\FZ_1(\CR)})$ describes the trivial SPT order. Moreover, the group of all
  SPT's form a group with the multiplication defined by stacking and the
  identity element defined by the trivial SPT order is isomorphic to the group
  $\Aut^{br}(\FZ_1(\CR),\iota_0)$, which denotes the underlying group of braided auto-equivalences of $\FZ_1(\CR)$ preserving $\iota_0$. Moreover, there are natural isomorphism of groups 
\[
\Aut^{br}(\FZ_1(\CR)) \simeq \mathrm{BrPic}(\CR), \quad\quad \Aut^{br}(\FZ_1(\CR),\iota_0) \simeq \Pic(\CR), 
\] 
where $\mathrm{BrPic}(\CR)$ denote the group of invertible $\CR$-$\CR$-bimodules and $\Pic(\CR)$ denotes the subgroup of $\mathrm{BrPic}(\CR)$ consisting of those invertible $\CR$-$\CR$-bimodules with the right $\CR$-action induced from the left $\CR$-action via braidings in $\CR$. In particular, when $\CR=n\Rep(G)$, we have 
\[
\Aut^{br}(\FZ_1(n\Rep(G)),\iota_0) \simeq \Pic(n\Rep(G))\simeq H^{n+1}(G,U(1)) .
\] 
\enu
\end{pthm}
}

\noindent {\bf Acknowledgement}: We would like to thank Dmitri Nikshych, Yin Tian and Theo Johnson-Freyd for many helpful discussions and Theo Johnson-Freyd for pointing out a mistake in an earlier version. LK and HZ are supported 
by Guangdong Provincial Key Laboratory (Grant No.2019B121203002). LK is also supported by NSFC under Grant No. 11971219. XGW is partially supported by NSF DMS-1664412 and by the Simons Collaboration on Ultra-Quantum Matter, which is a grant from the Simons Foundation (651440). HZ is also supported by NSFC under Grant No. 11871078.

\section{SPT/SET orders via gauging the symmetry} \label{sec:mext}
In this section, we generalize the idea of gauging the symmetry to all dimensions to obtain a classification theory of $n$d anomaly-free SPT/SET orders for $n\geq 1$. In Section\,\ref{sec:2d-SET-1}, we review the idea of gauging the symmetry in 2d and the associated classification theory of 2d SPT/SET orders from \cite{LW160205936}. In Section\,\ref{sec:1d-SET-1}, we generalize the idea of gauging the symmetry to 1d cases. As a consequence, we obtain a complete classification of 1d SPT/SET orders, then we show (quite nontrivially) that it is compatible with the existing results in physical literature. In Section\,\ref{sec:nd-SET-1}, we propose a classification theory of SPT/SET orders in higher dimensions. This idea does not work for the 0d case, which is studied in Section\,\ref{sec:0d-SPT} based on the idea of boundary-bulk relation.

\subsection{2d SPT/SET orders} \label{sec:2d-SET-1} 
A 2d SPT order with a finite onsite symmetry has no non-trivial particle-like topological excitations. In other words, the only excitations are local excitations or symmetry charges. They form a symmetric fusion 1-category $\CE$. In a bosonic system $\CE=\Rep(G)$, where $\Rep(G)$ denotes the category of representations of a finite group $G$. In a fermionic system $\CE=\Rep(G,z)$ , where $z$ is the fermion parity symmetry and $z$ is in the center of $G$. This data does not fully characterize a SPT order. We need additional data. 

\begin{defn}
For a braided fusion 1-category $\CC$, its M\"{u}ger center (or $E_2$-center), denoted by $\FZ_2(\CC)$, is defined by the full subcategory of $\CC$ consisting of those objects $x\in\CC$ that are symmetric to all objects, i.e. $(x\otimes y \xrightarrow{c_{x,y}} y\otimes x \xrightarrow{c_{y,x}} x\otimes y)=\id_{x\otimes y}$ for all $y\in\CC$. For a given full subcategory $\CD$ of $\CC$, the centralizer of $\CD$ in $\CC$, denoted by $\FZ_2(\CD;\CC)$, is the full subcategory of $\CC$ consisting of those objects that are symmetric to all objects in $\CD$. 
\end{defn}

Throughout this work, by ``an embedding'' we mean a fully faithful functor. 
\begin{defn}
A unitary braided fusion 1-category over $\CE$ is a unitary braided fusion 1-category equipped with a braided embedding $\eta_\CC: \CE \hookrightarrow \FZ_2(\CC)$, i.e. a pair $(\CC,\eta_\CC)$ or just $\CC$ for simplicity. It is called a unitary modular 1-category over $\CE$, i.e. a UMTC$_{/\CE}$, if $\eta_\CC$ is an equivalence. An equivalence between two UMTC$_{/\CE}$'s $(\CC,\eta_\CC)$ and $(\CD,\eta_\CD)$ is a braided equivalence $\phi: \CC \to \CD$ such that $\phi\circ \eta_\CC \simeq \eta_\CD$. We denote the set of equivalence classes of autoequivalences of $(\CC,\eta_\CC)$ by $\Aut^{br}(\CC,\eta_\CC)$. 
\end{defn}

An anomaly-free 2d SET order with an onsite symmetry $\CE$, or a 2d SET$_{/\CE}$ order, has particle-like topological excitations described by a unitary modular 1-category $\CC$ over $\CE$. This data does not fully characterize the SET order. One way to complete the data is to categorically gauge the symmetry by adding external particles, each of which is not symmetric to at least one particle in $\CE$, until the whole set of particles form a UMTC. In this way, we break the symmetry $\CE$. This categorical gauging process can be precisely formulated by the so-called minimal modular extensions of $\CC$ \cite{LW160205936}. 

\begin{defn}
A minimal modular extension of a UMTC$_{/\CE}$~$\CC$ is a pair $(\CM,\iota_\CM)$, where
$\CM$ is a UMTC and $\iota_\CM: \CC \hookrightarrow \CM$ is a braided embedding factoring through $\FZ_2(\CE;\CM)$ as follows: 
\[
\begin{tikzcd} 
\CC \arrow[hook]{r}{\simeq}  \arrow[hook]{rd}[swap]{\iota_\CM} & \FZ_2(\CE;\CM) \arrow[hook]{d} \\
& \CM\, .
\end{tikzcd}
\]
An equivalence between two minimal modular extensions $(\CM,\iota_\CM)$ and $(\CN,\iota_\CN)$ is a braided equivalence $\phi: \CM\to\CN$ such that $\phi\circ\iota_\CM\simeq \iota_\CN$. 
\end{defn}

As a consequence, an anomaly-free 2d SET order can be fully characterized by a quadruple $(\CC,\eta_\CC;\CM,\iota_\CM)$. When $\CC=\CE$, the pair $(\CM,\iota_\CM)$ fully characterizes a 2d SPT order. Moreover, the trivial SPT order is fully characterized by the pair $(\FZ_1(\CE),\iota_0)$, where $\FZ_1(\CE)$ denotes the Drinfeld center of $\CE$ and $\iota_0: \CE \hookrightarrow \FZ_1(\CE)$ is the canonical embedding. In order to know when such two quadruples defines the same SPT orders, we introduce the notion of an equivalence between two such quadruples. 

\begin{defn}\label{equiquad}
An equivalence between two such quadruples $(\CC,\eta_\CC;\CM,\iota_\CM)$ and $(\CC',\eta_{\CC'};\CM',\iota_{\CM'})$ is a pair $(g,f)$ of braided equivalences rendering the following diagram commutative (up to natural isomorphisms): 
\be \label{diag:eq-ECM-1}
\begin{tikzcd} 
\CE \arrow[hook]{r}   \arrow[equal]{d}  & \CC \arrow[hook]{r}{\iota_\CM} \arrow{d}{g}[swap]{\simeq} 
& \CM \arrow{d}{f}[swap]{\simeq}   \\
\CE \arrow[hook]{r}   & \CC' \arrow[hook]{r}{\iota_{\CM'}} & \CM'
\end{tikzcd} .
\ee
\end{defn}

\begin{rem}
  Note that $(f\circ \iota_\CM) |_\CE \simeq \iota_{\CM'}|_\CE$. We have $f(\CC) \simeq f(\FZ_2(\CE;\CM)) \simeq \FZ_2(f(\CE);\CM') \simeq \FZ_2(\CE;\CM') \simeq \CC'$. Therefore, $f$ naturally induces a functor $f': \CC \xrightarrow{\simeq} f(\CC) \simeq \CC'$ such that the following diagram 
\be \label{diag:eq-ECM-2}
\begin{tikzcd} 
\CE \arrow[hook]{r}   \arrow[equal]{d}  & \CC \arrow[hook]{r}{\iota_\CM} \arrow{d}{f'}[swap]{\simeq} 
& \CM \arrow{d}{f}[swap]{\simeq}   \\
\CE \arrow[hook]{r}   & \CC' \arrow[hook]{r}{\iota_{\CM'}} & \CM'
\end{tikzcd}
\ee
is commutative (up to natural isomorphisms). One can show that $f'\simeq g$. Namely, $g$ is uniquely determined by $f$ up to natural isomorphisms. 
\end{rem}

We recall the mathematical definition of the stacking of 2d SPT$_{/\CE}$ orders introduced in \cite{LW160205936}. Naive stacking of two 2d SPT$_{/\CE}$ orders by Deligne tensor product is not correct because it enhances the symmetry from $\CE$ to $\CE\boxtimes \CE$. The correct definition of the stacking of two 2d SPT$_{/\CE}$ orders should properly break $\CE\boxtimes \CE$ to $\CE$. This is achieved by the condensable algebra $L_\CE =\oplus_{i\in \Irr(\CE)} i\boxtimes i^\vee$ in $\CE\boxtimes \CE$, where $\Irr(\CE)$ denotes the set of equivalence classes of simple objects in $\CE$. Its algebra structure is defined by that of $\otimes^R(\one_\CE)$, where $\otimes^R$ is the right adjoint functor of the tensor product functor $\otimes: \CE \boxtimes \CE \to \CE$, i.e. $L_\CE \simeq \otimes^R(\one_\CE)$ as algebras.

\medskip
We denote the set of equivalence classes of the minimal modular extensions of two UMTC$_{/\CE}$'s $(\CC,\eta_\CC)$ and $(\CD,\eta_\CD)$ by $\mext(\CC,\eta_\CC)$ and $\mext(\CD,\eta_\CD)$, respectively. When $\CC=\CE$, we set $\mext(\CE) \coloneqq \mext(\CE,\id_\CE)$. We denote the canonical braided embedding $\CE \simeq \CE\boxtimes_\CE\CE \hookrightarrow \CC\boxtimes_\CE\CD$ by $\eta_\CC\boxtimes_\CE\eta_\CD$. 

\begin{lem}[\cite{LW160205936}]
If $\mext(\CC,\eta_\CC)$ and $\mext(\CD,\eta_\CD)$ are not empty, then  
\begin{align}
\mext(\CC,\eta_\CC) \times \mext(\CD,\eta_\CD) &\xrightarrow{\star} \mext(\CC\boxtimes_\CE\CD, \eta_\CC\boxtimes_\CE\eta_\CD), \nn
((\CM,\iota_\CM),\,\,\, (\CN,\iota_\CN)) &\mapsto \left( (\CM \boxtimes \CN)_{L_\CE}^0, \,\, \iota_\CM \star \iota_\CN: (\CC\boxtimes\CD)_{L_\CE}^0 \hookrightarrow (\CM \boxtimes \CN)_{L_\CE}^0\right) \label{eq:def-star}
\end{align}
is a well-defined map. Here, the notation $(-)_{L_\CE}^0$ denotes the category of local $L_\CE$-modules in the braided category $-$, and the functor $\iota_\CM\star\iota_\CN$ is the one induced from $\iota_\CM\boxtimes\iota_\CN: \CC\boxtimes\CD \to \CM\boxtimes\CN$, i.e. $(\iota_\CM\star\iota_\CN)(x) \coloneqq (\iota_\CM\boxtimes\iota_\CN)(x),\forall x\in(\CC\boxtimes\CD)_{L_\CE}$. 
\end{lem}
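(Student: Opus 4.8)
The plan is to verify the three things demanded by the word ``well-defined map'': (i) that $(\CM\boxtimes\CN)^0_{L_\CE}$ is again a UMTC; (ii) that $\iota_\CM\star\iota_\CN$ is a braided embedding factoring through $\FZ_2(\CE;(\CM\boxtimes\CN)^0_{L_\CE})$, so the target pair is genuinely a minimal modular extension of $\CC\boxtimes_\CE\CD$; and (iii) that the assignment descends to equivalence classes, i.e. equivalent input extensions produce equivalent output extensions. First I would recall the general theory of local modules over a connected étale (condensable) algebra $A$ in a UMTC $\CB$: $\CB^0_A$ is again a UMTC, with $\dim\CB^0_A = \dim\CB / (\dim A)^2$, and $\FZ_2(\CB^0_A)$ is controlled by how $A$ sits relative to $\FZ_2(\CB)$. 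Applying this with $\CB = \CM\boxtimes\CN$ (a UMTC since $\CM,\CN$ are) and $A = L_\CE$: the key input is that $L_\CE = \otimes^R(\one_\CE)$ is a connected étale algebra in $\CE\boxtimes\CE \subset \CM\boxtimes\CN$ whose category of local modules inside $\CE\boxtimes\CE$ is exactly $\CE$ (this is the standard identification $(\CE\boxtimes\CE)^0_{L_\CE}\simeq\CE$ underlying the corrected stacking). Since $L_\CE$ lies in $\FZ_2(\CE;\CM)\boxtimes\FZ_2(\CE;\CN)$ one checks it is centralized by the image of $\CE$ appropriately, and a dimension count together with the computation of the Müger center shows $(\CM\boxtimes\CN)^0_{L_\CE}$ has trivial Müger center, hence is modular.

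Next I would identify $\FZ_2\big(\CE;(\CM\boxtimes\CN)^0_{L_\CE}\big)$ with $(\CC\boxtimes\CD)^0_{L_\CE} \simeq \CC\boxtimes_\CE\CD$. The point is that local $L_\CE$-modules supported on $\CC\boxtimes\CD$ (as opposed to all of $\CM\boxtimes\CN$) are precisely those objects that centralize the image of $\CE$: indeed $\CC = \FZ_2(\CE;\CM)$ and $\CD = \FZ_2(\CE;\CN)$ by hypothesis, so $x\in\CM$ centralizes $\CE$ iff $x\in\CC$, and this property is inherited by the category of local modules (centralizing $\CE$ is detected by double-braiding, which is compatible with the module structure). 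This gives the factorization $(\CC\boxtimes\CD)^0_{L_\CE} \xrightarrow{\simeq} \FZ_2(\CE;(\CM\boxtimes\CN)^0_{L_\CE}) \hookrightarrow (\CM\boxtimes\CN)^0_{L_\CE}$, and one recognizes $(\CC\boxtimes\CD)^0_{L_\CE}\simeq\CC\boxtimes_\CE\CD$ as the relative Deligne product realized via $L_\CE$-modules — so $\iota_\CM\star\iota_\CN$ is exactly the required embedding and $(\CM\boxtimes\CN)^0_{L_\CE}$ really is a minimal modular extension of $\CC\boxtimes_\CE\CD$ with the canonical $\eta_\CC\boxtimes_\CE\eta_\CD$. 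Finally, for well-definedness on equivalence classes, given braided equivalences $f:\CM\to\CM'$ with $f\circ\iota_\CM\simeq\iota_{\CM'}$ and $g:\CN\to\CN'$ likewise, the product $f\boxtimes g$ sends $L_\CE$ to $L_\CE$ compatibly (it fixes $\CE\boxtimes\CE$ up to the given isomorphisms, and $L_\CE$ is canonically defined from $\CE$ alone), hence induces a braided equivalence on categories of local modules commuting with the two copies of $\iota_\bullet\star\iota_\bullet$; this is the desired equivalence of minimal modular extensions.

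The main obstacle I anticipate is step (ii)/(iii) being clean only because $L_\CE$ is not an arbitrary algebra but the specific ``diagonal'' connected étale algebra $\otimes^R(\one_\CE)$; the crux is the compatibility between the two braidings involved — the braiding of $\CM\boxtimes\CN$ used to form local modules, and the braiding inside $\CE$ used to centralize — and verifying that $\FZ_2(\CE;-)$ of the local-module category is computed ``componentwise'' as $\FZ_2(\CE;\CM)\boxtimes_\CE\FZ_2(\CE;\CN)$. This requires care with the functor $\otimes^R$ (lax/oplax monoidality, the Frobenius structure on $L_\CE$) and with the fact that $\CC\boxtimes_\CE\CD$ as defined categorically agrees with the $L_\CE$-module model; both are by now standard (they appear in \cite{LW160205936}), so I would cite those identifications rather than redo them, and concentrate the write-up on assembling the modularity and the factorization.
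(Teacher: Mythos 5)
The paper does not prove this lemma itself --- it is quoted from \cite{LW160205936} --- and your proposal reconstructs essentially the argument given there: realize the stacking as local $L_\CE$-modules over the connected \'etale algebra $L_\CE=\otimes^R(\one_\CE)$, invoke the standard facts (\`a la \cite{dmno}) that local modules over a connected \'etale algebra in a nondegenerate braided fusion category form a nondegenerate category with $\dim\CB^0_A=\dim\CB/(\dim A)^2$, identify $\FZ_2(\CE;(\CM\boxtimes\CN)^0_{L_\CE})$ with $(\CC\boxtimes\CD)^0_{L_\CE}\simeq\CC\boxtimes_\CE\CD$ by a centralizer/dimension count, and check descent to equivalence classes via $f\boxtimes g$ preserving $L_\CE$. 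This is the same route as the cited source, so I have nothing substantive to add.
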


\begin{thm}[\cite{LW160205936}]
The set $\mext(\CE)$, together with the multiplication $\star$ and the identity element $(\FZ_1(\CE), \iota_0)$, defines a finite abelian group. The set $\mext(\CC,\eta_\CC)$, if not empty, is an $\mext(\CE)$-torsor. 
\end{thm}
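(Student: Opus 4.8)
The plan is to verify the group axioms for $(\mext(\CE),\star,(\FZ_1(\CE),\iota_0))$ one at a time and then deduce the torsor property, in each step reducing the claim to a structural identity about categories of local modules over the canonical algebra $L_\CE$. Throughout, the compatibility of the chosen equivalences with the embeddings $\eta_{-}$ and $\iota_{-}$ is immediate from the construction of $\iota_\CM\star\iota_\CN$, so the real content is always an equivalence of braided fusion categories, for which a Frobenius–Perron dimension count (using $\fpdim L_\CE=\fpdim\CE$ and $\fpdim\CM=(\fpdim\CE)^2$ for $\CM\in\mext(\CE)$) provides a first sanity check before one upgrades a fully faithful functor to an equivalence.

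\emph{Unit and commutativity.} First I would prove $(\FZ_1(\CE),\iota_0)\star(\CM,\iota_\CM)\simeq(\CM,\iota_\CM)$. By definition the left-hand side is $(\FZ_1(\CE)\boxtimes\CM)^0_{L_\CE}$, with $L_\CE$ sitting in $\CE\boxtimes\CE$ and embedded via $\iota_0\boxtimes\iota_\CM$. The point is that $\FZ_1(\CE)$ is the center of $\CE$ and condensing it along one leg of $L_\CE$ recovers $\FZ_1(\CE)$ while condensing along the other leg contracts it against $\CM$; I would make this precise via the compatibility of the relative Deligne product with condensation, writing $(\FZ_1(\CE)\boxtimes\CM)^0_{L_\CE}\simeq\FZ_1(\CE)\boxtimes_\CE\CM$, and then using that $\FZ_1(\CE)$ acts on $\CM$ through $\iota_0$ together with $\CE\boxtimes_\CE\CM\simeq\CM$. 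Commutativity follows because the braided swap $\CM\boxtimes\CN\xrightarrow{\simeq}\CN\boxtimes\CM$ carries $L_\CE=\bigoplus_{i\in\Irr(\CE)} i\boxtimes i^\vee$ to itself as an algebra (symmetry of $\CE$), hence induces an equivalence on local modules intertwining the embeddings.

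\emph{Associativity, inverses, finiteness.} For associativity I would identify both $((\CM\boxtimes\CN)^0_{L_\CE}\boxtimes\CP)^0_{L_\CE}$ and $(\CM\boxtimes(\CN\boxtimes\CP)^0_{L_\CE})^0_{L_\CE}$ with the one-step condensation of $\CM\boxtimes\CN\boxtimes\CP$ along the connected étale algebra $L_\CE^{(12)}\otimes L_\CE^{(23)}$ (the two copies live in commuting subcategories), using the confluence principle $(\CX^0_A)^0_B\simeq\CX^0_{A\otimes B}$ for compatible algebras. For inverses I would take $\overline{\CM}$ (the category $\CM$ with reversed braiding): since $\CM$ is a UMTC one has $\CM\boxtimes\overline{\CM}\simeq\FZ_1(\CM)$, under which $L_\CE$ becomes the canonical algebra attached to the subcategory $\CE=\FZ_2(\CE;\CM)\subset\CM$, and naturality of the center under this condensation gives $\FZ_1(\CM)^0_{L_\CE}\simeq\FZ_1(\CE)$ (with matching dimension $(\fpdim\CE)^2$); thus $(\CM,\iota_\CM)\star(\overline{\CM},\iota_{\overline{\CM}})\simeq(\FZ_1(\CE),\iota_0)$. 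This makes $(\mext(\CE),\star)$ an abelian group. Finiteness follows because every element has $\fpdim\CM=(\fpdim\CE)^2$ and contains $\CE$ as its Müger centralizer, which, combined with the Deligne classification $\CE\simeq\Rep(G)$ or $\Rep(G,z)$ and the known finiteness of minimal modular extensions in those cases, bounds the cardinality.

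\emph{Torsor.} The same formula gives $\mext(\CE)\times\mext(\CC,\eta_\CC)\to\mext(\CC\boxtimes_\CE\CE,\eta_\CC\boxtimes_\CE\id)\simeq\mext(\CC,\eta_\CC)$, and the identities above (with one input replaced by a minimal modular extension of $\CC$) show this is a group action. For transitivity, given $\CM,\CM'\in\mext(\CC,\eta_\CC)$ I would put $\CN\coloneqq(\CM'\boxtimes\overline{\CM})^0_{L_\CC}$ with $L_\CC=\bigoplus_{i\in\Irr(\CC)} i\boxtimes i^\vee$; a dimension count gives $\fpdim\CN=(\fpdim\CE)^2$, one checks $\CN\in\mext(\CE)$, and the confluence and center identities yield $\CN\star\CM\simeq\CM'$. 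Freeness follows from the same cancellation, stacking with $\overline{\CM}$ and using $\CM\star\overline{\CM}\simeq\FZ_1(\CC)$. I expect the main obstacle to be exactly the two ``center--condensation'' identities $(\FZ_1(\CE)\boxtimes\CM)^0_{L_\CE}\simeq\CM$ and $(\CM\boxtimes\overline{\CM})^0_{L_\CE}\simeq\FZ_1(\CE)$, i.e. controlling precisely how Drinfeld/Müger centers interact with condensation along $L_\CE$; once these and the iterated-condensation confluence lemma are established, everything else is bookkeeping with the embeddings.
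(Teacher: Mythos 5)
This theorem is stated in the paper only as an import from \cite{LW160205936}; the paper gives no proof of its own (it merely points to that reference, e.g.\ to the proof of Lemma 4.18 there for the unit law and to Remark \ref{rem:inverse} for the inverse). Your sketch reconstructs essentially the strategy of that reference: unit via $(\FZ_1(\CE)\boxtimes\CM)^0_{L_\CE}\simeq\CM$, commutativity via the swap preserving $L_\CE$, associativity via confluence of condensations, inverse via the time-reversal $\overline{\CM}$ using $\CM\boxtimes\overline{\CM}\simeq\FZ_1(\CM)$, and the torsor structure via $(\CM'\boxtimes\overline{\CM})^0_{L_\CC}$. So in outline this is the same approach, and the two ``center--condensation'' identities you isolate are indeed where the real work lives.

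Three points deserve more care before this counts as a proof. First, your finiteness argument is circular as written: you invoke ``the known finiteness of minimal modular extensions in those cases,'' which is part of the statement being proved. The actual argument bounds the rank of any $\CM\in\mext(\CE)$ by $\fpdim\CM=(\fpdim\CE)^2$ (each simple object has Frobenius--Perron dimension at least $1$) and then appeals to Ocneanu rigidity, i.e.\ finiteness of braided fusion categories of bounded rank. Second, in the freeness/cancellation step you write $\CM\star\overline{\CM}\simeq\FZ_1(\CC)$, but $\star$ was defined as condensation along $L_\CE$, while the element $\CN$ you constructed for transitivity uses $L_\CC$; cancelling an unknown $\CN\in\mext(\CE)$ against $\CM\in\mext(\CC,\eta_\CC)$ requires a mixed confluence statement relating the $L_\CE$- and $L_\CC$-condensations (the two algebras do not live in ``commuting subcategories'' here, since $L_\CE\subset L_\CC$ as algebras in $\CC\boxtimes\overline{\CC}$), and this needs to be stated and proved rather than absorbed into bookkeeping. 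Third, the claim that compatibility with the embeddings is ``immediate'' undersells the verification that each product or quotient actually lands in the asserted $\mext(-)$: one must check the minimality condition $\FZ_2(\CE;-)\simeq(\text{the expected subcategory})$, which in \cite{LW160205936} is done with M\"uger's centralizer--dimension formula, not just an $\fpdim$ count of the ambient category. None of these is a fatal flaw, but each is a genuine lemma rather than a formality.
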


\begin{rem} \label{rem:inverse}
Let $\overline{\CM}$ be the time-reversal of $\CM$, i.e. the same fusion category as $\CM$ but with the braidings defined by the anti-braidings of $\CM$, and $\overline{\iota_\CM} \coloneqq \iota_\CM: \CE=\overline{\CE} \to \overline{\CM}$. Then $(\overline{\CM},\overline{\iota_\CM})$ defines the inverse of $(\CM,\iota_\CM)$ in $\mext(\CE)$. 
\end{rem}

\begin{pthm}[\cite{LW160205936}] \label{pthm:2d-SET-1} 
The group of 2d SPT$_{/\CE}$ orders (with the multiplication defined by stacking and the identity element defined by the trivial SPT$_{/\CE}$ order) is isomorphic to the group $\mext(\CE)$. More explicitly, we have
\begin{align*}
\CE=\Rep(G), \quad\quad &\mext(\Rep(G)) \simeq H^3(G,U(1)); \\
\CE=\Rep(\Zb_2,z), \quad\quad &\mext(\Rep(\Zb_2,z)) \simeq \Zb_{16};
\end{align*} 
\end{pthm}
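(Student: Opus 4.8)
The plan is to establish \textbf{Theorem$^{\mathrm{ph}}$\,\ref{pthm:2d-SET-1}} in two stages: first identifying the abstract group $\mext(\CE)$ with the group of 2d SPT$_{/\CE}$ orders, and then computing $\mext(\CE)$ explicitly for the two advertised symmetric fusion categories. For the first stage, recall that a 2d SPT$_{/\CE}$ order has trivial particle content $\CC = \CE$, so the earlier discussion tells us such an order is fully characterized by a minimal modular extension $(\CM,\iota_\CM)$ of $\CE$, with the trivial SPT corresponding to $(\FZ_1(\CE),\iota_0)$. The physical stacking operation was shown (in the recalled \cite{LW160205936} material) to correspond to the operation $\star$ built from the condensable algebra $L_\CE \simeq \otimes^R(\one_\CE)$ in $\CE\boxtimes\CE$, via local modules; and the earlier Theorem asserts $(\mext(\CE),\star,(\FZ_1(\CE),\iota_0))$ is a finite abelian group, with inverses given by time-reversal (Remark\,\ref{rem:inverse}). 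So the first stage amounts to matching the physical stacking of SPT orders with $\star$ and checking that equivalence of quadruples in the sense of Definition\,\ref{equiquad} reduces, when $\CC=\CE$, to equivalence of minimal modular extensions — which is essentially definitional. This gives the group isomorphism $\{\text{2d SPT}_{/\CE}\} \simeq \mext(\CE)$.

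The second stage is the explicit computation. For $\CE = \Rep(G)$ with $G$ finite, I would argue that minimal modular extensions of $\Rep(G)$ are, up to equivalence, exactly the twisted Drinfeld doubles $\FZ_1(\Vect_G^\omega)$ for $\omega \in H^3(G,U(1))$: each such double contains $\Rep(G) \simeq \FZ_2(\Rep(G);\FZ_1(\Vect_G^\omega))$ as the transparent-to-$\Rep(G)$ part, hence is a minimal modular extension; conversely, any modular extension $\CM$ gives rise, by de-equivariantization along $\Rep(G) \hookrightarrow \CM$, to a $G$-crossed modular category whose underlying fusion category is pointed (since the extension is minimal and $\FPdim\CM = |G|^2$), forcing $\CM \simeq \FZ_1(\Vect_G^\omega)$ for some $\omega$. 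One then checks that two such $\omega,\omega'$ give equivalent extensions iff $[\omega]=[\omega']$ in $H^3(G,U(1))$ (the equivalence must fix $\Rep(G)$, which rigidifies the cohomology class), and that $\star$ corresponds to addition in $H^3$. For $\CE = \Rep(\Zb_2,z)$ — the category $\svect$ of super-vector spaces — the minimal modular extensions are the sixteen modular categories classified by their chiral central charge $c \bmod 8$ taking values in $\tfrac12\Zb/8\Zb$, i.e. the Ising-type and semion-type categories and their descendants realized by $p+ip$ superconductors; the group law under $\star$ reproduces $\Zb_{16}$, the content of the classical result underlying the $\nu \in \Zb_{16}$ classification of 3+1D fermionic SPT phases.

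The \textbf{main obstacle} is the computation $\mext(\Rep(\Zb_2,z))\simeq\Zb_{16}$: unlike the bosonic case, there is no ``canonical'' minimal modular extension serving as identity that is itself of Drinfeld-double form, the relevant modular categories (Ising, $\mathrm{SO}(N)_1$, etc.) must be enumerated and their $\star$-products computed, and one must verify the subtle fact that the chiral central charge is a complete invariant and additive under $\star$ — essentially a repackaging of the Gauss-sum/Arf-invariant theory of $\svect$-module categories. In the bosonic case the comparable difficulty is the surjectivity and injectivity of $\omega \mapsto \FZ_1(\Vect_G^\omega)$ onto $\mext(\Rep(G))/\!\sim$, which rests on the structure theory of braided $G$-crossed categories (the de-equivariantization/equivariantization correspondence) and the identification of $H^3(G,U(1))$ with the obstruction-free part. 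Since the excerpt flags this as a physical theorem and attributes the statements to \cite{LW160205936}, I would present the argument at the level of matching the physical picture to these categorical inputs, citing the modular-category computations rather than re-deriving the Gauss sums.
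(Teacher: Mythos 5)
You should first note that the paper itself gives no proof of Theorem$^{\mathrm{ph}}$\,\ref{pthm:2d-SET-1}: it is stated as a recalled result, attributed wholesale to \cite{LW160205936}, and the only ``proof-adjacent'' material in the paper is the surrounding setup of Section\,\ref{sec:2d-SET-1} (the definition of minimal modular extensions, the equivalence of quadruples, the condensable algebra $L_\CE$ and the $\star$-product, and the statement that $\mext(\CE)$ is a finite abelian group with inverses given by time reversal). Your first stage reproduces exactly that setup, and your second stage is the standard argument of the cited reference: for $\CE=\Rep(G)$, de-equivariantization of a minimal modular extension $\CM$ with $\fpdim\CM=|G|^2$ yields a pointed $G$-crossed braided category, forcing $\CM\simeq\FZ_1(\hilb_G^\omega)$ with $[\omega]\in H^3(G,U(1))$ a complete invariant additive under $\star$; for $\CE=\Rep(\Zb_2,z)=\svect$ one invokes Kitaev's sixteen-fold way with the chiral central charge in $\tfrac12\Zb/8\Zb$ as the complete, $\star$-additive invariant. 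So your proposal is correct at the level of rigor the paper operates at and follows the same route as its source. One small inaccuracy in your ``main obstacle'' discussion: the identity element of $\mext(\svect)$ \emph{is} of Drinfeld-double form, namely $(\FZ_1(\svect),\iota_0)$, i.e.\ the toric code with $\svect$ embedded as $\{1,f\}$; what distinguishes the fermionic case is rather that the remaining fifteen extensions (Ising-type, semion-type, etc.) are not twisted doubles, which is why the enumeration and the Gauss-sum/central-charge bookkeeping are unavoidable there.
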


The mathematics that is needed to compute the group $\mext(\Rep(G,z))$ was developed in \cite{gvr}. It is possible that $\mext(\CC,\eta_\CC)$ is empty \cite{drinfeld}. In this case, $(\CC,\eta_\CC)$ describes particle-like excitations of an anomalous 2d SET order. When $\mext(\CC,\eta_\CC)$ is not empty, $(\CC,\eta_\CC)$ describes particle-like excitations of an anomaly-free 2d SET order. In this case, $\mext(\CC,\eta_\CC)$ admits an action of $\Aut^{br}(\CC,\eta_\CC)$ defined by $(\CM,\iota_\CM) \mapsto (\CM,\iota_\CM\circ\alpha)$ for $\alpha\in\Aut^{br}(\CC,\eta_\CC)$. 
\begin{pthm}
The set of anomaly-free 2d SET$_{/\CE}$ orders with particle-like excitations given by the pair $(\CC,\eta_\CC)$ can be identified with the set $\mext(\CC,\eta_\CC)/\Aut^{br}(\CC,\eta_\CC)$. 
\end{pthm}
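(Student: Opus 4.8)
The plan is to show that the naive set of anomaly-free 2d SET$_{/\CE}$ orders with fixed particle content $(\CC,\eta_\CC)$, which one might at first identify with $\mext(\CC,\eta_\CC)$ by attaching a minimal modular extension, is in fact only classified by $\mext(\CC,\eta_\CC)$ \emph{up to} the redundancy coming from automorphisms of the pair $(\CC,\eta_\CC)$. Concretely, an anomaly-free 2d SET$_{/\CE}$ order is the data $(\CC,\eta_\CC;\CM,\iota_\CM)$ from the earlier discussion, and two such quadruples present the same SET order precisely when they are equivalent in the sense of Definition \ref{equiquad}. So the content of the statement is: fixing $(\CC,\eta_\CC)$, the quotient of $\mext(\CC,\eta_\CC)$ by the $\Aut^{br}(\CC,\eta_\CC)$-action $(\CM,\iota_\CM)\mapsto(\CM,\iota_\CM\circ\alpha)$ is exactly the set of equivalence classes of quadruples with that first component.

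First I would fix $(\CC,\eta_\CC)$ and set up the surjection $\mext(\CC,\eta_\CC)\twoheadrightarrow\{\text{SET$_{/\CE}$ orders with particles }(\CC,\eta_\CC)\}$ sending $(\CM,\iota_\CM)$ to the class of the quadruple $(\CC,\eta_\CC;\CM,\iota_\CM)$; surjectivity is immediate since every anomaly-free SET order with this particle content has, by definition, a minimal modular extension. Then I would analyze when $(\CC,\eta_\CC;\CM,\iota_\CM)$ and $(\CC,\eta_\CC;\CM',\iota_{\CM'})$ are equivalent quadruples, i.e. when there is a pair $(g,f)$ of braided equivalences as in diagram (\ref{diag:eq-ECM-1}) with identity on $\CE$ and $g:\CC\to\CC$, $f:\CM\to\CM'$ making the squares commute up to natural isomorphism. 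The Remark following Definition \ref{equiquad} is the crucial input here: it shows $g$ is determined by $f$ up to natural isomorphism (via the centralizer description $f(\CC)\simeq\FZ_2(\CE;\CM')$), and moreover $g$ automatically preserves $\eta_\CC$ since $f|_\CE\simeq\id_\CE$. Hence an equivalence of quadruples with first component $(\CC,\eta_\CC)$ amounts to: a braided equivalence $f:\CM\to\CM'$ with $f\circ\iota_\CM\simeq\iota_{\CM'}\circ g$ for some $g\in\Aut^{br}(\CC,\eta_\CC)$. Composing with $g^{-1}$, this says precisely that $(\CM,\iota_\CM\circ g^{-1})\simeq(\CM',\iota_{\CM'})$ as minimal modular extensions, i.e. $(\CM',\iota_{\CM'})$ lies in the $\Aut^{br}(\CC,\eta_\CC)$-orbit of $(\CM,\iota_\CM)$. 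This gives the claimed bijection with $\mext(\CC,\eta_\CC)/\Aut^{br}(\CC,\eta_\CC)$.

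The main obstacle I expect is the bookkeeping of the $2$-categorical data, namely making the ``up to natural isomorphism'' clauses in diagrams (\ref{diag:eq-ECM-1}) and (\ref{diag:eq-ECM-2}) genuinely coherent, so that the passage from $f$ to the induced $g\simeq f'$ and back is well-defined on equivalence classes; this is exactly the subtlety flagged in the Remark after Definition \ref{equiquad} (``one can show that $f'\simeq g$''), and one must check that the resulting $g$, and in particular its class in $\Aut^{br}(\CC,\eta_\CC)$, does not depend on the chosen natural isomorphisms witnessing commutativity of the right-hand square. A secondary point is to confirm that the $\Aut^{br}(\CC,\eta_\CC)$-action on $\mext(\CC,\eta_\CC)$ stated just before the theorem is indeed \emph{free} on the relevant orbits, or rather—more carefully—that the set-theoretic quotient by this action is the right object; since the theorem only asserts an identification of sets (orbits), freeness is not actually needed, and I would simply verify that the fibers of the surjection above are exactly the $\Aut^{br}(\CC,\eta_\CC)$-orbits, which is what the paragraph above accomplishes. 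Finally, I would note consistency with Theorem \ref{pthm:2d-SET-1}: when $\CC=\CE$ one has $\Aut^{br}(\CE,\id_\CE)$ acting trivially (any such autoequivalence preserving $\id_\CE$ is isomorphic to the identity), recovering that 2d SPT$_{/\CE}$ orders are classified by $\mext(\CE)$ itself.
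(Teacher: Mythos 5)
Your proposal is correct and is essentially the argument the paper intends: the paper states this as a physical theorem without a written proof, but the ingredients you use — the characterization of an SET order by a quadruple $(\CC,\eta_\CC;\CM,\iota_\CM)$, the equivalence of quadruples from Definition\,\ref{equiquad}, the remark that $g$ is determined by $f$ and restricts to the identity on $\CE$, and the $\Aut^{br}(\CC,\eta_\CC)$-action $(\CM,\iota_\CM)\mapsto(\CM,\iota_\CM\circ\alpha)$ — are exactly the ones the paper sets up immediately beforehand, and your identification of the fibers of $\mext(\CC,\eta_\CC)\twoheadrightarrow\{\text{SET orders}\}$ with the orbits of that action is the intended content. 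Your consistency check that $\Aut^{br}(\CE,\id_\CE)$ is trivial, so the $\CC=\CE$ case recovers the classification of SPT$_{/\CE}$ orders by $\mext(\CE)$, is also correct.
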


\subsection{1d SPT/SET orders} \label{sec:1d-SET-1}
Applying the idea of gauging the symmetry to the 1d case is nontrivial because there is no braiding in 1d. We treat this case with special care in Section\,\ref{sec:1d-spt-1-math}.

\subsubsection{Mathematical classification} \label{sec:1d-spt-1-math}

In this subsection, we give a classification of 1d anomaly-free SET orders with a finite onsite symmetry $G$ using the idea of gauging the symmetry. Local excitations or symmetry charges form a symmetric fusion 1-category $\CE$. For a bosonic system, $\CE=\Rep(G)$; for a fermionic system, $\CE=\Rep(G,z)$. We refer to such an SPT/SET order by an SPT/SET$_{/\CE}$ order. When $G$ is trivial, we have $\CE=\hilb$, where $\hilb$ denotes the 1-category of finite-dimensional Hilbert spaces. 

\medskip
Since there is no non-trivial anomaly-free 1d topological order \cite{CGW1128}, we expect that all anomaly-free 1d SET orders are SPT orders. This fact follows automatically as we proceed our analysis. Since topological excitations in a 1d anomaly-free SET order can be fused in 1d, they must form a unitary fusion 1-category (without braidings), denoted by $\CA$. It must contain all local excitations $\CE$ as a full subcategory, i.e. a monoidal embedding $\eta_\CA: \CE \hookrightarrow \CA$. 

Since local excitations in $\CE$ can be created and annihilated by local operators, they can also be freely moved into the symmetric 2d bulk, which must be the trivial 2d SPT order. The ability of moving into the bulk is characterized by the existence of a natural isomorphism $\gamma_{e,x} : e\otimes x \simeq x \otimes e$ for $e\in\CE, x\in\CA$, called the half-braiding. The half-braiding equips $\CA$ with a braided embedding $\iota_\CA : \CE \hookrightarrow \FZ_1(\CA)$ given by $e \mapsto (\eta_\CA(e),\gamma_{e,-})$, and this embedding is part of the data of the 1d SET order. Clearly $\eta_\CA$ is equal to the composition of $\iota_\CA$ with the forgetful functor $\FZ_1(\CA) \to \CA$. A fusion category $\CA$ equipped with a braided embedding $\iota_\CA : \CE \hookrightarrow \FZ_1(\CA)$ such that the composition $\CE \hookrightarrow \FZ_1(\CA) \to \CA$ is fully faithful is called a fusion category over $\CE$ \cite{dno}. Hence the topological excitations of a 1d SET order form a unitary fusion category over $\CE$.

\medskip
Recall that, in 2d case, symmetry charges cannot be detected by braidings among topological excitations, but can be detected by braidings if the symmetry is gauged. Therefore, in 1d, we also need to introduce a categorical way to gauge the symmetry $\CE$. Similar to 2d case, we try to add more particles to $\CA$ such that resulting set of particles form a monoidal category $\CM$ which describes an anomaly-free 1d topological order. As we know, however, the only anomaly-free 1d topological order is the trivial one. If $\CM=\hilb$, then it contradicts to the fact $\CE \hookrightarrow \CA$ for a non-trivial group $G$. What is wrong? 

Fortunately, the monoidal category $\CM$ does not have to be $\hilb$. The requirement for $\CM$ being anomaly-free is equivalent to the condition that it has a trivial 2d bulk, i.e. $\FZ_1(\CM) \simeq \hilb$. This equation has non-trivial solutions in unitary multi-fusion 1-categories, which are physically relevant and describe unstable 1d phases \cite{kwz1,akz}. A unitary multi-fusion category differs from a unitary fusion category in that its tensor unit $\one$ is potentially not simple, i.e. $\one=\oplus_i \one_i$. The associated ground state degeneracy (GSD) on $S^1$, given by $\dim \Hom_\CM(\one,\one)$, is non-trivial thus unstable. They often occur in  dimensional reduction processes. For example, when we squeeze a 2d topological order with two gapped boundaries \cite{kwz1,akz} to a narrow strip, i.e. a quasi-1d system, we often obtain a unitary multi-fusion category. Mathematically, such a unitary multi-fusion category can be rewritten as the category $\fun_{\hilb}(\CX,\CX)$ of unitary endofunctors for a finite unitary category $\CX$, or equivalently, a finite unitary $\hilb$-module.

Similar to the 2d case, we propose our gauging process as follows. 
\begin{quote}
To gauge the symmetry $\CE$, we add more particles to $\CA$ by requiring that each additional particle $y \in \CM$ must make at least one of particle $e\in\CE$ non-local. By ``non-local'' we mean that $e$ can not move into the bulk anymore, or equivalently, there is no half-braiding isomorphism from $e\otimes y$ to $y\otimes e$. 
\end{quote}
Adding more particles to $\CA$ breaks the symmetry $\CE$ in both 1d and the 2d bulk because $\FZ_1(\CM)\simeq \hilb$. It is equivalent to say that adding $y$ makes a particle $e\in\CE$ non-local. 

Recall that the relative center of $\CE$ in $\CM$, denoted by $Z_\CE(\CM)$, is a category consisting of pairs $(x,\beta_{-,x})$, where $\beta_{-,x}$ is the half-braiding, i.e. a family of isomorphisms $\beta_{e,x}: e\otimes x \xrightarrow{\simeq} x\otimes e$ for all $e\in\CE$ natural in variable $e$. Since $\CE$ is symmetric, the monoidal embedding $\CE \hookrightarrow \CM$ induces a central functor $\eta_\CM : \CE \to Z_\CE(\CM)$, thus $Z_\CE(\CM)$ is naturally a fusion category over $\CE$. Also the embedding $\CA \hookrightarrow \CM$ induces an obvious functor $\iota_\CM:\CA \to Z_\CE(\CM)$ defined by $a \mapsto (a,\gamma_{-,a})$, where $\gamma_{-,a}: -\otimes a \to a\otimes -$ is given by the half-braiding of $\CE$ in $\FZ_1(\CA)$. Now we are ready to state the precise mathematical formulation of gauging the symmetry $\CE$ in 1d. 
\begin{pthm}
Gauging the symmetry $\CE$ in 1d amounts to adding more particles to $\CA$ to form a unitary multi-fusion category $\CM$ such that the monoidal functor $\iota_\CM: \CA \xrightarrow{\simeq} Z_\CE(\CM)$ induced by the embedding $\CA \hookrightarrow \CM$ is an equivalence of unitary fusion categories over $\CE$ (recall \cite[Definition\,2.7]{dno}). 
\end{pthm}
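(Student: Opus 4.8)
\emph{Proof proposal.} Being a physical theorem, the content to establish is that the three ingredients of the gauging prescription in the displayed quote --- (i) one starts from the unitary fusion category over $\CE$ of topological excitations $\CA$, with a full monoidal embedding $\CA\hookrightarrow\CM$; (ii) every added particle $y\in\CM$ not already in $\CA$ makes at least one object of $\CE$ non-local, i.e.\ admits no half-braiding isomorphism $e\otimes y\xrightarrow{\sim}y\otimes e$ for some $e\in\CE$; (iii) the process is run until $\CM$ is anomaly-free as a $1$d topological order, i.e.\ $\FZ_1(\CM)\simeq\hilb$ --- are jointly equivalent to the single condition that $\iota_\CM\colon\CA\to Z_\CE(\CM)$ is an equivalence of unitary fusion categories over $\CE$. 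The plan is: first show that $\iota_\CM$ is automatically a fully faithful monoidal functor over $\CE$; then match condition (ii) with essential surjectivity of $\iota_\CM$, using (iii) to see that the process terminates at a genuine multi-fusion category; and finally read off consistency and exhaustiveness from the structural dictionary between fusion categories over $\CE$ and multi-fusion categories with trivial Drinfeld center.

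For the first step, since $\CA\hookrightarrow\CM$ is a full monoidal embedding we have $\Hom_\CM(a,a')=\Hom_\CA(a,a')$ for all $a,a'\in\CA$, and every such morphism is automatically compatible with the half-braidings $\gamma_{-,a}$ induced from $\iota_\CA\colon\CE\hookrightarrow\FZ_1(\CA)$, by naturality of half-braidings of objects of $\FZ_1(\CA)$; hence $\Hom_{Z_\CE(\CM)}(\iota_\CM a,\iota_\CM a')=\Hom_\CA(a,a')$ and $\iota_\CM$ is fully faithful. Its monoidal structure is inherited from that of $\FZ_1(\CA)$ restricted along $\iota_\CA$, and $\iota_\CM\circ\eta_\CA\simeq\eta_\CM$ because the half-braiding of $\eta_\CA(e)$ with $\CE$ is just the symmetric braiding of $\CE$ with itself; thus $\iota_\CM$ is a morphism of fusion categories over $\CE$ in the sense of \cite[Definition~2.7]{dno}.

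The crucial step is the identification of condition (ii) with essential surjectivity. An object of $Z_\CE(\CM)$ is a pair $(m,\beta_{-,m})$ with $m\in\CM$ together with a coherent natural family $\beta_{e,m}\colon e\otimes m\xrightarrow{\sim}m\otimes e$, $e\in\CE$. A particle $y\in\CM$ which makes some $e_0\in\CE$ non-local admits, by definition, no such half-braiding, hence $y$ cannot be the underlying object of any object of $Z_\CE(\CM)$; conversely, if the gauging is complete in the sense of (ii)--(iii), then every object of $Z_\CE(\CM)$ has underlying object built by direct sums and direct summands from objects of $\CA$ and carries the canonical half-braiding $\gamma_{-,a}$, so $\iota_\CM$ is essentially surjective, hence an equivalence. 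Here I would use (iii), $\FZ_1(\CM)\simeq\hilb$ --- the precise meaning of ``$\CM$ describes an anomaly-free $1$d topological order'', equivalent by the remark above to $\CM\simeq\fun_\hilb(\CX,\CX)$ for a finite unitary $\hilb$-module $\CX$ --- both to guarantee that the one-particle-at-a-time process terminates at a genuine multi-fusion category and to exclude spurious surviving half-braidings; this is exactly the regime in which the DNO dictionary identifies fusion categories over $\CE$ with multi-fusion $\CM$ carrying a central $\CE$-structure, via $\CA\mapsto\CM$ and $\CM\mapsto Z_\CE(\CM)$.

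The main obstacle is this last point: proving rigorously that the local, combinatorial gauging procedure terminates at a $\CM$ with trivial Drinfeld center and that at termination $Z_\CE(\CM)\simeq\CA$ on the nose --- full faithfulness of $\iota_\CM$ is easy, but essential surjectivity together with the matching of the half-braiding data is the real work. The clean mathematical backbone is the theory of (multi-)fusion categories over a symmetric fusion category combined with the characterization of multi-fusion categories with trivial center as endofunctor categories $\fun_\hilb(\CX,\CX)$; for $\CE=\Rep(G)$ this is essentially $1$d de-equivariantization, and the resulting classification must be --- and, as recorded in Theorem$^{\mathrm{ph}}$~\ref{pthm:main-2_0}, is --- compatible with $\Pic(\Rep(G))\simeq H^2(G,U(1))$. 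As this is a physical theorem, the argument is conducted at the level of this consistency check rather than in full rigour.
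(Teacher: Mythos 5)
First, a point of calibration: the paper offers no proof of this Theorem$^{\mathrm{ph}}$ at all. The statement is introduced with ``Now we are ready to state the precise mathematical formulation of gauging the symmetry $\CE$ in 1d'' --- that is, the condition that $\iota_\CM\colon\CA\to Z_\CE(\CM)$ be an equivalence of unitary fusion categories over $\CE$ is \emph{put forward as the definition} of the gauging procedure, motivated by the informal discussion immediately preceding it (non-locality of added particles, $\FZ_1(\CM)\simeq\hilb$, the relative center). The genuine mathematics in that subsection only starts after the statement, in the bullet point deducing $\CA\simeq Z_\CE(\fun(\CX,\CX))=\fun_\CE(\CX,\CX)$ and hence $\eta_\CA$ an equivalence via the Frobenius--Perron dimension argument of \cite{eo}. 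Your proposal faithfully reconstructs the motivating discussion, and your one genuinely new mathematical observation --- that $\iota_\CM$ is automatically fully faithful because naturality of $\gamma_{e,-}$ in the $\CA$-variable makes every $f\in\Hom_\CA(a,a')$ commute with the canonical half-braidings, while fullness of $\CA\hookrightarrow\CM$ gives the reverse inclusion --- is correct and is a useful addition.

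The gap is the one you yourself flag at the end, and it is worth being precise about why it cannot be waved away. Condition (ii) of the informal prescription (``each added particle makes some $e\in\CE$ non-local'') constrains only which underlying objects $m\in\CM$ admit \emph{some} half-braiding with $\CE$; it says nothing about \emph{which} half-braidings occur on objects that do admit one. An object $(a,\beta_{-,a})$ of $Z_\CE(\CM)$ with $a\in\CA$ but $\beta\not\simeq\gamma_{-,a}$ would lie outside the essential image of $\iota_\CM$ while violating none of (i)--(iii); such exotic half-braidings on a fixed underlying object genuinely exist in general (already for $\CE=\Rep(G)$ the monoidal natural automorphisms of the identity give a nontrivial torsor of half-braidings when $G$ has nontrivial center), so essential surjectivity of $\iota_\CM$ is strictly stronger than your conditions (i)--(iii). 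Hence the biconditional ``(i)--(iii) $\Leftrightarrow$ $\iota_\CM$ is an equivalence over $\CE$'' that your proof sets out to establish is not what your argument delivers; to close it you would need either to strengthen (ii) so that it excludes exotic half-braidings by fiat, or to argue from $\CM\simeq\fun_\hilb(\CX,\CX)$ and $Z_\CE(\CM)\simeq\fun_\CE(\CX,\CX)$ that none survive. The paper sidesteps the issue entirely by never claiming the implication in that direction, so your proposal is not in conflict with the paper --- but as a proof of an ``if and only if'' it is incomplete at exactly the point you identify.
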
 

\begin{rem} 
If such $\CM$ does not exist, then the pair $(\CA,\eta_\CA)$ describes an anomalous 1d SET order. We study this type of anomalous 1d SET orders in Section\,\ref{sec:1d-SET-2}. 
\end{rem}

Since $\CM=\fun_\hilb(\CX,\CX)$ for a certain finite unitary category $\CX$, it implies that $\eta_\CA: \CE \hookrightarrow \CA$ is a monoidal equivalence. The proof of this statement is given below. 
\begin{itemize}
\item The functor $\CE \to \CM = \fun(\CX,\CX)$ equips the category $\CX$ with a (left) $\CE$-module structure. Hence we obtain $\CA \simeq Z_\CE(\CM) = Z_\CE(\fun(\CX,\CX)) = \fun_\CE(\CX,\CX)$. Since $\CE$ and $\fun_\CE(\CX,\CX)^\rev$ are Morita equivalent, they 
share the same Frobenius-Perron dimension. By \cite{eo}, $\eta_\CA$ is a monoidal equivalence.
\end{itemize}
In other words, all anomaly-free 1d SET orders are SPT orders. 

The monoidal equivalences $\CE^\rev \simeq \CE \xrightarrow{\eta_\CA} \fun_\CE(\CX,\CX)$ equip $\CX$ with a structure of an invertible $\CE$-$\CE$-bimodule such that the right $\CE$-action is induced by the left $\CE$-action via the symmetric braidings of $\CE$. Such invertible $\CE$-$\CE$-bimodules form a group $\Pic(\CE)$, which is called the Picard group of $\CE$. It is a subgroup of the group $\mathrm{BrPic}(\CE)$ of all invertible $\CE$-$\CE$-bimodules. The multiplication in $\mathrm{BrPic}(\CE)$ is defined by the relative tensor product $\boxtimes_\CE$, and the identity element is defined by the trivial bimodule $\CE$. 

Since $\CX$ determines $\CM$ completely, we conclude that 1d SPT orders are classified by invertible $\CE$-$\CE$-bimodules in $\Pic(\CE)$. Moreover, since the stacking of two $\CE$'s is given by $\boxtimes_\CE$, we have $\CE\boxtimes_\CE\CE\simeq\CE$. It is only reasonable that the stacking of two 1d SPT orders can be described by the relative tensor product $\boxtimes_\CE$ of two invertible $\CE$-$\CE$-bimodules. This fact can be seen more explicitly in Figure\,\ref{fig:pic-aut} discussed in Section\,\ref{sec:1d-SET-2}. Therefore, we obtain a complete classification of 1d SPT/SET orders. 
\begin{pthm}\label{pthm:1d-spt-1}
All anomaly-free 1d SET$_{/\CE}$ orders are 1d SPT$_{/\CE}$ orders. A 1d SPT$_{/\CE}$ order can be uniquely characterized by a pair $(\CM,\iota_\CM)$, where $\CM=\fun_{\hilb}(\CX,\CX)$ for an invertible $\CE$-module $\CX$ with the module structure defined by a faithful monoidal functor $\iota_\CM: \CE \to \CM$, or equivalently, by an invertible $\CE$-module $\CX$. All 1d SPT$_{/\CE}$ orders form a group with multiplication defined by the stacking and the identity element being the trivial 1d SPT$_{/\CE}$ order. 
\bnu
\item The trivial 1d SPT$_{/\CE}$ order is given by the pair $(\fun_{\hilb}(\CE,\CE),\iota_0)$, where $\iota_0: \CE \to \fun_{\hilb}(\CE,\CE)$ is the canonical functor defined by $e\mapsto e\otimes -$, or equivalently, by the trivial $\CE$-module $\CE$. 
\item The stacking of two SPT$_{/\CE}$ orders corresponds to the relative tensor product $\boxtimes_\CE$ of $\CE$-modules. 
\enu
In other words, the group of 1d SPT orders is isomorphic to the Picard group $\mathrm{Pic}(\CE)$ of $\CE$. More explicitly, we have the following natural group isomorphisms \cite{car}: 
\begin{align} 
\label{eq:pic=}
\Pic(\Rep(G)) &\simeq H^2(G,U(1))  \\
\label{eq:pic=f}
\Pic(\Rep(G,z)) &\simeq \left\{ \begin{array}{ll} 
H^2(G,U(1)) \times \Zb_2 & \textrm{if $G = G_b \times \langle z \rangle$}; \\
H^2(G, U(1)) & \textrm{if otherwise}. 
\end{array} \right.
\end{align}
\end{pthm}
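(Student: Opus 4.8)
The plan is to establish Theorem$^{\mathrm{ph}}$\,\ref{pthm:1d-spt-1} in two parts: first the structural statement that anomaly-free 1d SET$_{/\CE}$ orders are exactly pairs $(\CM,\iota_\CM)$ with $\CM = \fun_\hilb(\CX,\CX)$ for an invertible $\CE$-module $\CX$, together with the group structure; second the explicit computation of $\Pic(\Rep(G))$ and $\Pic(\Rep(G,z))$. For the first part, most of the argument is already assembled in the discussion preceding the statement, so the proof is largely a matter of organizing it. I would start from the preceding physical analysis: gauging $\CE$ means completing $\CA$ to a unitary multi-fusion $\CM$ with $\FZ_1(\CM) \simeq \hilb$ and $\iota_\CM : \CA \xrightarrow{\simeq} Z_\CE(\CM)$ an equivalence of fusion categories over $\CE$. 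Using $\FZ_1(\CM) \simeq \hilb$ one writes $\CM \simeq \fun_\hilb(\CX,\CX)$ for a finite semisimple unitary category $\CX$ (this is the classification of fusion categories Morita-trivial relative to $\hilb$). The central functor $\CE \to \CM$ then endows $\CX$ with a left $\CE$-module structure, and $\CA \simeq Z_\CE(\fun_\hilb(\CX,\CX)) \simeq \fun_\CE(\CX,\CX)$. The dimension-counting argument via Morita equivalence of $\CE$ and $\fun_\CE(\CX,\CX)^\rev$ together with \cite{eo} forces $\eta_\CA$ to be an equivalence, hence every anomaly-free 1d SET$_{/\CE}$ order is an SPT$_{/\CE}$ order; and it further forces $\fun_\CE(\CX,\CX) \simeq \CE$, which is precisely the condition that $\CX$ be an invertible $\CE$-module (with the right action pulled back through the symmetric braiding), i.e. $[\CX] \in \Pic(\CE)$. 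The trivial module $\CX = \CE$ gives the trivial SPT order, and stacking goes to $\boxtimes_\CE$ because stacking $\CE$ with itself is $\CE \boxtimes_\CE \CE \simeq \CE$; this identifies the group of 1d SPT$_{/\CE}$ orders with $\Pic(\CE)$.

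Next I would verify that the assignment $(\CM,\iota_\CM) \leftrightarrow [\CX]$ is a bijection onto $\Pic(\CE)$ respecting the group laws. Injectivity: $\CX \simeq \CM$-module $\fun_\hilb(\hilb,\CX)$ recovers $\CX$ from $\CM$ together with its $\CE$-action, so the SPT order determines $[\CX]$; conversely two SPT orders with equivalent $\CX$ are equivalent. Surjectivity: any invertible $\CE$-$\CE$-bimodule $\CX$ with right action induced from the left via the braiding yields $\CM := \fun_\hilb(\CX,\CX)$ with $\FZ_1(\CM) \simeq \hilb$ and $Z_\CE(\CM) \simeq \fun_\CE(\CX,\CX) \simeq \CE$, so $\iota_\CM$ furnishes a legitimate anomaly-free 1d SET$_{/\CE}$ order. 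Compatibility with multiplication is the statement that $\fun_\hilb(\CX \boxtimes_\CE \CY, -) \simeq \fun_\hilb(\CX,-) \otimes_{?} \fun_\hilb(\CY,-)$ at the level of the stacking operation, which is the content indicated by Figure\,\ref{fig:pic-aut}; I would reduce this to the bimodule identity $\fun_\CE(\CX \boxtimes_\CE \CY, \CX \boxtimes_\CE \CY) \simeq \fun_\CE(\CX,\CX) \boxtimes_\CE \fun_\CE(\CY,\CY)$ for invertible bimodules, which holds because the relative Deligne product of invertible bimodules is invertible and dualizing commutes with it.

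For the second part, the isomorphisms \eqref{eq:pic=} and \eqref{eq:pic=f} are cited from \cite{car}, so strictly I may invoke them; but to make the paper self-contained I would sketch why $\Pic(\Rep(G)) \simeq H^2(G,U(1))$. Invertible $\Rep(G)$-module categories are classified by pairs (subgroup, compatible cocycle) that are "full", and invertibility pins the subgroup down to $G$ itself, leaving $H^2(G,U(1))$ of projective-representation twists; the group law on module categories matches the cup-product-free addition of $2$-cocycles because tensoring module categories multiplies the associated projective factor systems. For $\Rep(G,z)$ the same analysis applies but one must keep track of the fermion-parity grading: the extra $\Zb_2$ in the case $G = G_b \times \langle z\rangle$ comes from the fact that when the parity splits off as a direct factor there is an additional invertible module — essentially the "Arf" or Kitaev-chain module — not available when $z$ is entangled with the rest of $G$, which is exactly the dichotomy recorded in \cite{car}.

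\textbf{Main obstacle.} The genuinely delicate point is not the group-theoretic bookkeeping but the passage "$\FZ_1(\CM) \simeq \hilb$ $\Rightarrow$ $\CM \simeq \fun_\hilb(\CX,\CX)$" and the subsequent identification $Z_\CE(\fun_\hilb(\CX,\CX)) \simeq \fun_\CE(\CX,\CX)$ together with the dimension argument forcing invertibility of $\CX$ as an $\CE$-module. This is where one must use the full strength of the theory of multi-fusion categories (the characterization of those Morita-equivalent to $\hilb$, the behavior of Frobenius--Perron dimension under Morita equivalence, and \cite{eo}), and where the "physical" and "rigorous" parts of the argument meet; everything downstream — the group structure, the bijection with $\Pic(\CE)$, and the cohomological identification — is then comparatively routine.
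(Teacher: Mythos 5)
Your proposal follows essentially the same route as the paper: gauging produces a unitary multi-fusion category $\CM=\fun_\hilb(\CX,\CX)$ with $\FZ_1(\CM)\simeq\hilb$, the induced $\CE$-module structure on $\CX$ gives $\CA\simeq Z_\CE(\CM)=\fun_\CE(\CX,\CX)$, the Frobenius--Perron dimension argument via Morita equivalence and \cite{eo} forces $\eta_\CA$ to be an equivalence (hence all SETs are SPTs and $\CX$ is an invertible $\CE$-module), and stacking corresponds to $\boxtimes_\CE$, identifying the group with $\Pic(\CE)$. The extra material you add (the explicit bijectivity check and the sketch of $\Pic(\Rep(G))\simeq H^2(G,U(1))$, which the paper simply cites from \cite{car}) is a sound elaboration of the same argument rather than a different approach.
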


\begin{rem}
In \cite[Section\ V.B]{jf20}, a different approach toward the classification of 1d SPT orders (including the time reversal symmetry) was developed. Unfortunately, the difference of these two approaches is not clear to us (see \cite[Theorem\ 6]{jf20}). Instead, we compare our results with known results in physical literature in Section\,\ref{sec:physical}. 
\end{rem}

\subsubsection{Physical classification} \label{sec:physical}

In this subsubsection we recall physical results on 1+1D SPT orders and compare them with the mathematical results in the previous section. Before proceeding, we would like to first fix notations and clarify some terminologies that have been confusingly used in the literature. 
\begin{rem}
In this Remark, a symmetry $G$ can be bosonic or fermionic. For simplicity we omit $z$, but in the notations one may replace $G$ with $(G,z)$ to emphasize that $G$ is fermionic.
  \begin{itemize}
    \item Stacking: Given two topological phases $A$ and $B$, we denote their (decoupled) stacking by $A\boxtimes B$, where no interlayer interaction is introduced. If $A$ has symmetry $G_A$ and $B$ has symmetry $G_B$, $A\boxtimes B$ has symmetry $G_A\times G_B$. When consider topological phases $A,B$ with the same symmetry $G=G_A=G_B$, there is also a natural (symmetry-preserving) stacking denoted by $A\boxtimes_G B$. $A\boxtimes_G B$ is constructed by first taking $A\boxtimes B$, then introducing interlayer interactions that breaks the symmetry from $G\times G$ to $G$ (preserving the diagonal subgroup determined by the embedding $g \mapsto (g,g)$). When there is no symmetry $G=\{1\}$, two stacking operations coincide $\boxtimes_{\{1\}}=\boxtimes$. The stacking operation is commutative and all topological phases with symmetry $G$ form a commutative monoid under the stacking.
    \item Trivial phases: The trivial phase $I_G$ with symmetry $G$ is the unit under the stacking, i.e. $I_G\boxtimes_G A=A$ for any $A$ with symmetry $G$. Physically, the trivial phase is represented by a tensor product state.
    \item Invertible phases: A phase $A$ with symmetry $G$ is invertible if there exists a phase $B$ with symmetry $G$ such that $A\boxtimes_G B=I_G$. Invertible phases are also referred to as invertible topological orders, or short-range entangled states. All invertible phases with symmetry $G$ form an abelian group, denoted by $\inv_G$.
  \end{itemize}
\end{rem}

Now we fix a bosonic symmetry $G$. The most common definition of a $G$-SPT order in the literature is
\begin{defn}
A bosonic $G$-SPT order is an invertible phase with $G$ symmetry such that when $G$ is completely broken, the phase becomes the trivial phase $I_{\{1\}}$ with no symmetry. More precisely, completely breaking the bosonic $G$ symmetry leads to a group homomorphism $b_G: \inv_G\to \inv_{\{1\}}$ and $G$-SPT orders are $\ker b_G$. 
\end{defn}
However, there is an alternative definition. Assuming that $A$ is an invertible phase with no symmetry, we can equip $A$ with a $G$ symmetry by stacking onto $A$ a decoupled layer of a trivial phase with $G$ symmetry, $A\boxtimes I_G$.
Thus we have a group homomorphism 
\begin{align}
  i_G: \inv_{\{1\}} &\to \inv_G,\nonumber\\
  A&\mapsto A\boxtimes I_G.
\end{align}
Clearly $b_G i_G= \id_{\inv_{\{1\}}}$ since $b_G$ just totally breaks the symmetry of $I_G$. Thus, $i_G$ is an embedding and we can view $\inv{\{1\}}$ as a subgroup of $\inv_G$. Also, the short exact sequence 
\be
  0\to \ker b_G \to \inv_G \to \inv_{\{1\}}\to 0
\ee
splits and since $\inv_G$ is abelian we must have
\be
  \inv_G =\ker b_G \times \inv_{\{1\}}.
\ee

\begin{defn}\label{bspt}
  Bosonic $G$-SPT orders are invertible phases with symmetry $G$ up to invertible phases with no symmetry, namely $\inv_G/\inv_{\{1\}}$. 
\end{defn}
Since $\inv_{\{1\}}$ are phases with no nontrivial excitations, they are invisible to the higher category of excitations. This alternative definition of SPT orders is more convenient and natural in our setting.

However, there is a discrepancy between two flavours of the definitions of fermionic $(G,z)$-SPT orders.  Still, the traditional definition in the literature of a fermionic SPT order is obtained by considering complete symmetry breaking. In contrast to bosonic symmetries, the fermion number parity $(\langle z\rangle,z) $ can not be broken.
\begin{defn}
  A $(G,z)$-\tra SPT order is an invertible phase with symmetry $(G,z)$ such that when $G/\langle z \rangle$ is completely broken, the phase becomes the trivial phase $I_{(\langle z \rangle,z)}$ with symmetry $(\langle z \rangle,z)$. More precisely, completely breaking $G/\langle z \rangle$ leads to a group homomorphism $b_{(G,z)}: \inv_{(G,z)}\to \inv_{(\langle z \rangle,z)}$ and $(G,z)$-\tra SPT orders are $\ker b_{(G,z)}$. To distinguish, we use superscript to indicate ``traditional definition''.
\end{defn}
\begin{rem}
  Since $(\langle z \rangle,z) $ can not be broken, in the literature people often write ``fermionic \dots with no symmetry'' which in fact means ``\dots with $(\langle z \rangle,z)$ symmetry.'' For example, a fermionic topological order means a topological order with $(\langle z \rangle,z)$ symmetry; fermionic invertible phases means $\inv_{(\langle z\rangle,z)}$.
\end{rem}
\begin{rem}
  When $G=G_b\times \langle z \rangle$, similar to the bosonic case we have $b_{(G_b\times \langle z \rangle,z)}(-\boxtimes I_{G_b})=\id_{\inv_{(\langle z\rangle,z)}}$ and then
    \be
    \inv_{(G_b\times\langle z\rangle,z)}=\ker b_{(G_b\times \langle z \rangle,z)}\times \inv_{(\langle z\rangle,z)}.
    \ee
\end{rem}

In this paper, we adopt an alternative definition which is more convenient and natural.
\begin{defn}\label{fspt}
  $(G,z)$-SPT orders are invertible phases with $(G,z)$ symmetry up to invertible phases with no symmetry. More precisely, we denote the image of
  \begin{align}
    i_{(G,z)}   : \inv_{\{1\}} &\to \inv_{(G,z)},\nonumber\\
A &\mapsto A\boxtimes I_{(G,z)}
  \end{align}
by $i_{(G,z)}(\inv_{\{1\}})$, then $(G,z)$-SPT orders are $\inv_{(G,z)}/i_{(G,z)}(\inv_{\{1\}})$.
\end{defn}
\begin{rem}
This paper classifies $(G,z)$-SPT orders but not directly $(G,z)$-\tra SPT orders. Clearly $b_{(G,z)}i_{(G,z)}=i_{(\langle z\rangle,z)},$ thus $\ker b_{(G,z)} \cap i_{(G,z)}(\inv_{\{1\}})=i_{(G,z)}(\ker i_{(\langle z\rangle,z)})$. The quotient map $\inv_{(G,z)} \to \inv_{(G,z)}/i_{(G,z)}(\inv_{\{1\}})$ restricts to $\ker b_{(G,z)} \to \ker b_{(G,z)}/i_{(G,z)}(\ker i_{(\langle z \rangle,z)})$ which relates $(G,z)$-\tra SPT orders to $(G,z)$-SPT orders. It is known that in 1d and 3d, $\inv_{\{1\}}$ is trivial thus $\ker i_{(\langle z \rangle,z)}$ is trivial; in 2d both $\inv_{\{1\}}$ and $\inv_{(\langle z\rangle,z)}$ are $\Zb$ while $i_{(\langle z\rangle,z)}$ maps $1$ in $\inv_{\{1\}}$ to $16$ in $\inv_{(\langle z\rangle,z)}$, thus $\ker i_{(\langle z\rangle,z) }$ is also trivial. In these cases $(G,z)$-\tra SPT orders form a subgroup of $(G,z)$-SPT orders. However, a priori, $i_{(\langle z \rangle,z)}$ may not be an embedding in higher dimensions. In this case, the quotient of $(G,z)$-\tra SPT orders by invertible ones with no symmetry, $\ker b_{(G,z)}/i_{(G,z)}(\ker i_{(\langle z \rangle,z)})$, is a subgroup of $(G,z)$-SPT orders.
\end{rem}
\begin{prob}
It is an interesting problem to investigate if $i_{(\langle z\rangle,z)}$ is always an embedding.
\end{prob}

We are now ready to list the physical results in 1+1D. Since $\inv_{\{1\}}$ is trivial in 1+1D, bosonic $G$- and fermionic $(G,z)$-SPT orders are just $\inv_G$ and $\inv_{(G,z)}$ respectively. $(G,z)$-\tra SPT orders $\ker b_{(G,z)}$ is a subgroup of $\inv_{(G,z)}$.

\begin{itemize}
  \item 1+1D bosonic $G$-SPT orders are $\inv_G=H^2(G,U(1))$.
  \item 1+1D fermionic $(G,z)$-\tra SPT orders can be mapped to bosonic $G$-SPT
    orders via the Jordan-Wigner transformation which converts ({\it i.e.}
    bosonizes) a 1+1D fermionic system to a 1+1D bosonic system \cite{CGW1128}.
    Such a map is a bijection between $\ker b_{(G,z)}$ and $\inv_G$ as sets, but whether it preserves the stacking operation has not been seriously studied. In other words $\ker b_{(G,z)}$ has the same underlying set as $\inv_G$ but the group structures may be different. 
\item However, when $G$ is a unitary symmetry, all known examples suggest that $\ker b_{(G,z)}$ and $\inv_G$ have the same group structure (but it is not clear whether bosonization is a group isomorphism). For anti-unitary symmetries, there is an example where $\ker b_{(G,z)}$ and $\inv_G$ are indeed different groups. This example has symmetry $G=Z_2^T\times \langle z \rangle$, where $Z_2^T=\{1,T\}$ and $T$ is the time reversal symmetry. We discuss this example later.
\item There are physical proposals, such as in group super-cohomology theory \cite{GW1441,LW180901112}, on how to compute the group structure of $(G,z)$-\tra SPT orders $\ker b_{(G,z)}$. Such proposals agree with known examples but have not been fully justified in general cases.
\item In particular, $\inv_{(\langle z\rangle,z)}=\Zb_2$ where the nontrivial phase is represented by Kitaev's Majorana chain.
\item It is believed that for $G \neq G_b \times \langle z \rangle$, the Majorana chain can not have a $G$ symmetry, thus $\inv_{(G,z)} = \ker b_{(G,z)}$. The physical reason is that the Majorana chain can be viewed as a state that spotaneously breaks the fermion parity; when $G \neq G_b \times \langle z \rangle$, the fermion parity can not be broken alone which forbids a Majorana chain with symmetry $G$.
\end{itemize}



Comparing to the physical results, our mathematical result \eqref{eq:pic=f} in the previous section
\begin{itemize}
  \item has taken into account the stacking operation, thus automatically gives the correct group structure;
  \item classifies $(G,z)$-SPT orders which include $(G,z)$-\tra SPT orders as a subgroup;
  \item only applies to unitary symmetry $G$.
\end{itemize}
In conclusion, our mathematical classification results agree with the physical classification results on their overlapping parts. In particular, we can confirm that 1+1D fermionic \tra SPT orders with unitary symmetry $(G,z)$ are classified by $\ker b_{(G,z)}=H^2(G,U(1)),$ where the group structure is indeed correct.

\medskip
In the rest of this subsection, we investigate how our mathematical formulation can be extended to include anti-unitary symmetries. When considering the example mentioned above, 1+1D fermionic systems with anti-unitary symmetry $(G,z)=(Z_2^T\times \langle z\rangle, z)$, we have
 \begin{itemize}
   \item The bosonization, namely bosonic $Z_2^T\times Z_2$ SPT orders are $\inv_{Z_2^T\times Z_2}=H^2(Z_2^T\times Z_2,U(1))=\Zb_2\times\Zb_2$ (see \cite{CGL1314}, note that $Z_2^T$ acts nontrivially on $U(1)$ by complex conjugation).
   \item The SPT orders are $\inv_{(Z_2^T\times \langle z\rangle, z)}=\Zb_8$ \cite{FK10,FK11}.
   \item The \tra SPT orders are $\ker b_{(Z_2^T\times \langle z\rangle, z)}=\Zb_4$. This can be seen from the fact that $\ker b_{(Z_2^T\times \langle z\rangle, z)}$ has the same underlying set as $H^2(Z_2^T\times Z_2,U(1))$ thus has 4 elements, and the fact that $\ker b_{(Z_2^T\times \langle z\rangle, z)}$ is a subgroup of $\inv_{(Z_2^T\times \langle z\rangle, z)}$. 
   \end{itemize}

For anti-unitary symmetry of the form $Z_2^T\times G$, since the real numbers $\Rb$ are invariant under $Z_2^T$, we attempt to compute the Picard groups over base field $\Rb$. Denote by $\Rep_\Rb(G)$ the category of representations of $G$ and by $\Rep_\Rb(G,z)$ the category of super-representations of $(G,z)$, in the category of real vector spaces. By the results of \cite{car},
\begin{align}
  \Pic(\Rep_\Rb(\{1\}))&=\Zb_2=H^2(Z_2^T,U(1))=\inv_{Z_2^T},\nonumber\\
  \Pic(\Rep_\Rb(\langle z \rangle,z))&=\Zb_8=\inv_{(Z_2^T\times\langle z\rangle,z)},\nonumber\\
  \Pic(\Rep_\Rb(Z_2\times \langle z \rangle,z))&=\Zb_4\times \Zb_8,
\end{align}
where the $\Zb_8$ classification of $(Z_2^T\times \langle z\rangle,z)$-SPT orders can be seen. These results motivate us to conjecture that for anti-unitary symmetry of the form $Z_2^T\times G$, the bosonic and fermionic 1+1D SPT orders are
\begin{align}
  \inv_{Z_2^T\times G}&=\Pic(\Rep_\Rb(G)),\\
  \inv_{(Z_2^T\times G,z)}&=\Pic(\Rep_\Rb(G,z)).
  \end{align}
In particular, we predict that for 1+1D fermionic systems with $(Z_2^T\times
Z_2\times \langle z\rangle, z)$ symmetry, the SPT orders are $\Zb_4\times \Zb_8$
and the \tra SPT orders are $\Zb_4\times \Zb_4$. An independent calculation in
\cite[Section IV.C, Figure 4]{TY19} agrees with us on this example.

\subsection{SPT/SET orders in higher dimensions} \label{sec:nd-SET-1}
In this subsection, we sketch the idea of gauging the symmetry to higher dimensions without worrying about how to define various higher categorical notions. The mathematical definitions of some of these notions are briefly reviewed in Appendix\ \ref{sec:appendix} \cite{gjf19,jf20}. Others are not yet available. 

\medskip
The fusion properties of a set of particle-like (i.e.~0d) topological excitations living in a (spatial) $k$-dimensional disk (i.e.~a $k$-disk) are mathematically described by an $E_k$-monoidal 1-category, where the term ``$E_k$-monoidal'' refers to the fact that two particles can be fused along $k$ different spatial directions. More generally, the fusion properties of a set of $n$d topological excitations living in an ($n$+$k$)-disk are mathematically described by an $E_k$-monoidal (\nao)-category. See \cite{lurie} for precise mathematical definitions of these notions. For an $E_k$-monoidal $n$-category $\CX$, the looping $\Omega\CX$ of $\CX$ is defined by  $\Omega\CX \coloneqq \mathrm{End}_\CX(\one_\CX)$, where $\one_\CX$ is the tensor unit of $\CX$. The assignment $\CX \mapsto \Omega\CX$ defines a functor from the category of $E_k$-monoidal $n$-categories to that of $E_{k+1}$-monoidal $(\nmo)$-categories. When we restrict objects in both domain and codomain to $\Cb$-linear additive Karoubi-complete higher categories, this looping functor has an adjoint $\CY \mapsto \Sigma\CY$ called ``delooping'' \cite{jf20}. More explicitly, $\Sigma\CY=\kar(B\CY)$, where $B\CY$ is the one-point delooping and $\kar(-)$ denotes the Karoubi completion \cite{gjf19}. For a multi-fusion $n$-category $\CX$ \cite{jf20}, by \cite[Corollary\ 4.2.3 \& 4.2.4]{gjf19}, its delooping $\Sigma\CX$ is equivalent to the $(\nao)$-category $\RMod_\CX^{\mathrm{fd}}$ of fully dualizable right $\CX$-module $n$-categories (with a subtle difference, see Remark\,\ref{rem:sigma-RMod}).

The theory of unitary multi-fusion higher categories is not yet available. We assume the compatibility of the Karoubi completion and the unitarity. We set $\Sigma^n\CX \coloneqq \Sigma\Sigma^{n-1}\CX$, $\Sigma^0\CX \coloneqq \CX$, $\Sigma\Cb \coloneqq \hilb$, and, for a finite group $G$, 
\be \label{eq:n-sfc}
n\hilb \coloneqq \Sigma^n\Cb, \quad\quad n\Rep(G) \coloneqq \Sigma^{n-1}\Rep(G) 
\quad\quad n\Rep(G,z) \coloneqq \Sigma^{n-1}(\Rep(G,z)). 
\ee
Objects in $n\hilb$ are called $n$-Hilbert spaces. Let $n\hilb_G$ be the $n$-category of $G$-graded $n$-Hilbert spaces. A cocycle $\omega \in H^{n+2}(G,U(1))$ determines  on $n\hilb_G$ a monoidal structure, which is denoted by $n\hilb_G^\omega$. For a unitary braided fusion $n$-category $\CX$, we have $\Sigma\CX \simeq \RMod_\CX^{\mathrm{fd}}((\nao)\hilb)$ as monoidal ($\nao$)-categories, where $\RMod_\CX^{\mathrm{fd}}((\nao)\hilb)$ is the category of fully dualizable right $\CX$-modules in $(\nao)\hilb$.

Let $\CE$ be a unitary symmetric fusion $n$-category for $n\geq 1$. Since $\Rep(G),\Rep(G,z)$ are $E_\infty$-algebras \cite{lurie}, so are $2\Rep(G),2\Rep(G,z)$. By induction, we obtain that $n\Rep(G)$ and $n\Rep(G,z)$ are symmetric fusion $n$-categories (assumed to be unitary). We illustrate an example $2\Rep(\Zb_2)$ by the following quiver:
\[
\xymatrix{
\one \ar@(ul,ur)[]^{\Rep(\Zb_2)} \ar@/^/[rr]^{\hilb} & & T \ar@(ul,ur)[]^{\hilb_{\Zb_2}} \ar@/^/[ll]^{\hilb}
},
\]
where $\one$ and $T$ are the only simple objects in $2\Rep(\Zb_2)$, $\one$ is the tensor unit and the only non-trivial fusion product is $T \otimes T \simeq T\oplus T$ \cite{dr,ktz}. 

If $\CE$ describes some topological excitations in an $n$d SET order, these excitations cannot be detected by double braidings at all. They must be local excitations, and should be viewed as symmetry charges of a certain higher symmetry. Therefore, we regard a unitary symmetric fusion $n$-category as a physical higher symmetry.

\begin{expl} \label{rem:n-SFC}
There are more symmetric fusion $n$-categories than $n\Rep(G),n\Rep(G,z)$. We give some examples below. 
\bnu
\item For a finite abelian group $H$, $n\hilb_H$ has an obvious structure of a symmetric fusion $n$-category. Note that $2\hilb_H \nsimeq 2\Rep(G)$ as $2$-categories if $H$ is non-trivial because $2\Rep(G)$ is connected in the sense that $\hom_{2\Rep(G)}(i,j)\neq 0$ for any pair of simple objects $i,j\in2\Rep(G)$, but $2\hilb_H$ is completely disconnected. 
\item Let $\CG^{(n)}$ be a finite $n$-group. Then $\Rep(\CG^{(n)}) \coloneqq \fun(\CG^{(n)},n\hilb)$ has a canonical symmetric monoidal $n$-category structure inherited from that of $n\hilb$. We believe that it is also unitary fusion. In general, $\Rep(\CG^{(n)})$ has more symmetric fusion $n$-category structures even for $n=2$ \cite{tian}. This means that there are more symmetric fusion $n$-categories than those obtained from $n$-groups. 
\enu
\end{expl}

\begin{defn}
For $n\geq 1$, a unitary braided fusion $n$-category over $\CE$ is a pair $(\CC,\eta_\CC)$, where $\CC$ is a unitary braided fusion $n$-category and $\eta_\CC: \CE \hookrightarrow \FZ_2(\CC)$ a braided embedding, where $\FZ_2(\CC):=\Omega\FZ_1(\Sigma\CC)$ is the $E_2$-center (see \cite[Sec.\,IV.B]{jf20} and Remark\,\ref{rem:E2-center}). It is called a unitary modular $n$-category over $\CE$ if $\eta_\CC$ is an equivalence. If $\FZ_2(\CC)\simeq n\hilb$, it is called a unitary modular $n$-category. A pair $(\CM,\iota_\CM)$ is called a minimal modular extension of $(\CC,\eta_\CC)$ if $\CM$ is a unitary modular $n$-category and $\iota_\CM: \CC \hookrightarrow \CM$ is a braided embedding such that the canonical functor $\FZ_2(\CC) \to \FZ_2(\iota_\CM)$, where $\FZ_2(\iota_\CM)$ is the $E_2$-centralizer of $\iota_\CM$ (see Remark\,\ref{rem:E2-center}), is a braided equivalence. An equivalence $\phi: (\CM,\iota_\CM) \to (\CN,\iota_\CN)$ is a braided equivalence $\phi: \CM\to\CN$ such that $\phi\circ\iota_\CM\simeq \iota_\CN$. 
\end{defn}

\begin{rem}
Perhaps a better definition of a unitary modular $n$-category is a unitary braided fusion $n$-category $\CM$ such that the canonical functor $\CM \boxtimes \overline{\CM} \to \FZ_1(\CM)$ is a braided equivalence. Here $\overline{\CM}$ is the time reversal of $\CM$ (defined by flipping all highest morphisms), because this definition has a clear physical meaning. Similarly, a unitary modular $n$-category over $\CE$ is a unitary braided fusion $n$-category $\CC$  equipped with a braided equivalence from $\CC\boxtimes_\CE\overline{\CC}$ to the $E_2$-centralizer of $\CE\hookrightarrow\FZ_1(\CC)$ (see Remark\,\ref{rem:E2-center}). Both tensor product $\boxtimes$ and $\boxtimes_\CE$ have clear physical meanings and can be defined in certain Karoubi complete world \cite{jf20}. We believe that the mathematical foundation of multi-fusion $n$-categories provided in \cite{jf20} should lead to a proof of the equivalence of these two definitions. 
\end{rem}

\begin{pthm} \label{pthm:main-1}
For $n\geq 1$, we propose the following classification: 
\bnu
\item An anomaly-free $\nao$d (spatial dimension) SET order with a higher symmetry $\CE$ is described and classified by the equivalence classes (defined similarly as Definition~\ref{equiquad}) of a quadruple $(\CC,\eta_\CC,\CM,\iota_\CM)$, where $(\CC,\eta_\CC)$ is a unitary modular $n$-category over $\CE$, and $(\CM,\iota_\CM)$ is a minimal modular extension of $(\CC,\eta_\CC)$. 

\item When $\CC=\CE$, the set of equivalence classes of the pairs $(\CM,\iota_\CM)$, denoted by $\mext(\CE)$, classify all $\nao$d (spatial dimension) SPT$_{/\CE}$ orders with the symmetry $\CE$. The trivial $\nao$d SPT$_{/\CE}$ order is described by $(\FZ_1(\CE),\iota_0)$, where $\iota_0 : \CE \to \FZ_1(\CE)$ is the canonical embedding. When $\CE=n\Rep(G)$, the pair $(\FZ_1(n\hilb_G^\omega),\iota_\omega)$ for $\omega\in H^{n+2}(G,U(1))$ and a braided embedding $\iota_\omega: \CE \hookrightarrow \FZ_1(n\hilb_G^\omega)$, are examples (not all) of $\nao$d SPT$_{/\CE}$ orders (see Remark\,\ref{rem:wen14}), the pair $(\FZ_1(n\hilb_G^0)=\FZ_1(n\Rep(G)),\iota_0)$ describes the trivial $\nao$d SPT order. 
\enu
\end{pthm}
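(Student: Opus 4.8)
The plan is to lift, essentially verbatim, the physical argument of Section~\ref{sec:2d-SET-1} --- whose output is Theorem$^{\mathrm{ph}}$~\ref{pthm:2d-SET-1}, the case $n=1$ --- to all $n$, replacing the braided fusion $1$-category there by a unitary braided fusion $n$-category $\CC$, the M\"uger center by the $E_2$-center $\FZ_2(\CC)=\Omega\FZ_1(\Sigma\CC)$, modularity by the condition $\FZ_2(\CC)\simeq n\hilb$, and minimal modular extensions by the $n$-categorical notion introduced just above. First I would argue that in an $\nao$d anomaly-free SET order with higher symmetry $\CE$ the topological excitations of codimension $\geq 2$ can be fused and braided within the ambient space, hence assemble into a unitary braided fusion $n$-category $\CC$; the symmetry charges form a copy of $\CE$ undetectable by double braiding, i.e.\ a braided embedding $\eta_\CC\colon\CE\hookrightarrow\FZ_2(\CC)$, which is an equivalence precisely because an anomaly-free order has no transparent excitation beyond its symmetry charges. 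Thus $(\CC,\eta_\CC)$ is a unitary modular $n$-category over $\CE$, and it is the full data of the codimension-$\geq 2$ excitations.

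Next I would make the gauging step precise. Promoting the charges in $\CE$ to genuinely braiding-detectable excitations and adjoining the associated symmetry fluxes enlarges $\CC$ to a collection of excitations that no longer sees $\CE$, i.e.\ to an anomaly-free $\nao$d topological order without symmetry, whose codimension-$\geq 2$ excitations form a unitary modular $n$-category $\CM$ with $\FZ_2(\CM)\simeq n\hilb$; the ability to carry out this step at all is the defining feature of being anomaly-free (equivalently, $\mext(\CC,\eta_\CC)\neq\varnothing$). The original excitations are exactly those that still double-braid trivially with the now-gauged charges, which is recorded by a braided embedding $\iota_\CM\colon\CC\hookrightarrow\CM$ identifying $\CC$ with the $E_2$-centralizer of $\CE$ in $\CM$; hence $(\CM,\iota_\CM)$ is a minimal modular extension of $(\CC,\eta_\CC)$. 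Conversely, from such a quadruple one recovers the SET order by ungauging --- condensing in $\CM$ the canonical condensable algebra of symmetry fluxes. I would then argue that gauging and ungauging are mutually inverse, so that $\nao$d anomaly-free SET$_{/\CE}$ orders biject with quadruples $(\CC,\eta_\CC,\CM,\iota_\CM)$, two of which describe the same order exactly when they are connected by the pair of braided equivalences of Definition~\ref{equiquad} (suitably $n$-categorified).

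For part~(2): an SPT order has no nontrivial codimension-$\geq 2$ topological excitation, so $\CC=\CE$ and a quadruple collapses to a pair $(\CM,\iota_\CM)$, a minimal modular extension of $\CE$ itself; the stacking of SPT$_{/\CE}$ orders corresponds, as in Section~\ref{sec:2d-SET-1}, to the $\boxtimes_\CE$-product of minimal modular extensions, making $\mext(\CE)$ a group whose unit is the gauging of the trivial SPT. That gauging is the ``untwisted $\CE$-gauge theory'', which I would identify with $(\FZ_1(\CE),\iota_0)$; here one must check that $\FZ_1(\CE)$ is a unitary modular $n$-category, i.e.\ $\FZ_2(\FZ_1(\CE))\simeq n\hilb$ (a ``center of the center is trivial'' statement), and that $\iota_0$ realizes $\CE$ as the $E_2$-centralizer of itself. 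Finally, for $\CE=n\Rep(G)$ I would construct, for each $\omega\in H^{n+2}(G,U(1))$, an $n\hilb_G^\omega$-module structure giving a braided embedding $\iota_\omega\colon n\Rep(G)\hookrightarrow\FZ_1(n\hilb_G^\omega)$ with image the corresponding $E_2$-centralizer, so that $(\FZ_1(n\hilb_G^\omega),\iota_\omega)\in\mext(n\Rep(G))$ is the gauged twisted Dijkgraaf--Witten theory; at $\omega=0$, since $n\hilb_G$ and $n\Rep(G)$ are Morita equivalent, $\FZ_1(n\hilb_G^0)\simeq\FZ_1(n\Rep(G))$ and this recovers the trivial SPT order, while in general such pairs exhaust only a subgroup of $\mext(n\Rep(G))$ (see Remark~\ref{rem:wen14}).

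The hard part will be turning ``gauging'' into a precise, reversible operation on unitary (multi-)fusion $n$-categories --- the non-intrinsic feature flagged in the Introduction --- since the relevant higher-categorical foundations (the delooping $\Sigma$, the $E_2$-center and $E_2$-centralizer, and unitarity in the multi-fusion setting) are only partially available. In particular, the bijection between anomaly-free SET$_{/\CE}$ orders and quadruples, and the assertion that a minimal modular extension of $(\CC,\eta_\CC)$ exists if and only if the SET order is anomaly-free, must remain physical inputs here --- consistent with the fact that, even in $2$d, no general existence criterion is known and the only known obstruction is Drinfeld's counterexample. This is why the statement is labelled a Theorem$^{\mathrm{ph}}$; a more intrinsic and more nearly rigorous account is the purpose of the boundary-bulk approach of Section~\ref{sec:bb-relation}.
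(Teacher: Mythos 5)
Your proposal is correct and follows essentially the same route as the paper: Theorem$^{\mathrm{ph}}$\,\ref{pthm:main-1} is presented there as a direct lift of the 2d gauging argument of Section\,\ref{sec:2d-SET-1} to unitary braided fusion $n$-categories, with the M\"uger center replaced by the $E_2$-center $\FZ_2(\CC)=\Omega\FZ_1(\Sigma\CC)$ and modularity/minimality phrased via $E_2$-centralizers, and the paper likewise offers no rigorous proof but only this physical generalization (hence the Theorem$^{\mathrm{ph}}$ label). Your explicit caveats --- that the existence of a minimal modular extension as the anomaly-free criterion and the bijection with quadruples remain physical inputs, and that the higher-categorical foundations are incomplete --- match exactly the reservations the paper itself records in Section\,\ref{sec:nd-SET-1} and Remark\,\ref{rem:E2-center}.
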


\begin{rem}
After gauging the symmetry $n\Rep(G)$, we obtain a family of ($\nao$)d topological orders $\FZ_1(n\hilb_G^\omega)$ also known as Dijkgraaf-Witten theories. It was conjectured in \cite{ktz} that there is an equivalence of $n$-categories: 
\be \label{eq:Z1-nhilb-w}
\FZ_1(n\hilb_G^\omega) \simeq \bigoplus_{[h]\in \mathrm{Cl}} \,\, n\Rep(C_G(h), \tau_h(\omega)), 
\ee
where $\mathrm{Cl}$ denotes the set of conjugacy classes of $G$, and $C_G(h)$ is the centralizer of $h\in G$, and $\tau_h: C^{n+2}(G,\Cb^\times) \to C^{n+1}(C_G(h),\Cb^\times)$ is the transgression map (see  \cite{will08}). In this context, $\iota_\omega$ is the embedding of $n\Rep(G)$ onto the $[1]$-component in $\FZ_1(n\hilb_G^\omega)$. The $n=1$ case of the conjecture (\ref{eq:Z1-nhilb-w}) was proved in \cite{will08}, and $n=2$ case was proved in \cite{ktz}. 
\end{rem}

\begin{expl}
For $\omega\in H^{4}(G,U(1))$, the braided fusion 2-category $\FZ_1(2\hilb_G^\omega)$ was explicitly computed in \cite{ktz}. We illustrate 4 simple objects in $\FZ_1(2\hilb_{\Zb_2}^0)$ by the following quiver: 
\begin{align} \label{quiver0}
\xymatrix{
\one \ar@(ul,ur)[]^{\Rep(\Zb_2)}  \ar@/^/[rr]^{1\hilb} & & T \ar@(ul,ur)[]^{1\hilb_{\Zb_2}} \ar@/^/[ll]^{1\hilb}
& & \one_s \ar@(ul,ur)[]^{\Rep(\Zb_2)}  \ar@/^/[rr]^{1\hilb} & & T_s \ar@(ul,ur)[]^{1\hilb_{\Zb_2}} \ar@/^/[ll]^{1\hilb}
},
\end{align}
where $\one$ is the tensor unit. The fusion rules are given by:
\begin{align*}
\one_s \boxtimes \one_s \simeq \one, \quad \one_s \boxtimes T \simeq T \boxtimes \one_s \simeq T_s, \quad
T \boxtimes T \simeq T_s \boxtimes T_s \simeq T \oplus T, \quad T \boxtimes T_s \simeq T_s \boxtimes T \simeq T_s \oplus T_s.
\end{align*}
See \cite[Example~3.8]{ktz} for the braiding structure. 
\end{expl}

\begin{rem} \label{rem:E2-center}
By \cite[Section\,5.3]{lurie}, the centralizer of a braided $n$-functor $f:\CA \rightarrow \CB$ is defined by the braided $n$-category $\FZ_2(f)$ (together with a braided $n$-functor $m$) that is universal among all commutative triangles:
$$
\xymatrix@R=1em{
& \FZ_2(f) \boxtimes \CA \ar[dr]^m & \\
\CA \ar[rr]^{f} \ar[ur]^{\one \boxtimes \id_\CA} &  & \CB\, .
}
$$
The $E_2$-center of $\CA$ is $\FZ_2(\id_\CA)$ which is automatically equipped with an $E_3$-monoidal structure. Importantly, $\FZ_2(\iota_\CM)$ is not a subcategory of $\CM$ because, even for $n=2$, an object in $\FZ_2(\iota_\CM)$ is an object in $\CM$ together with a family of 2-isomorphisms between double braidings and identity morphisms (see for example \cite[Def.\,3.10]{ktz}). Although we have nearly no concrete example of Theorem$^{\mathrm{ph}}$\,\ref{pthm:main-1} beyond the Dijkgraaf-Witten theories, the gauging-the-symmetry description of the trivial $\nao$d SPT$_{/\CE}$, i.e. the pair $(\FZ_1(\CE), \iota_0)$,  is enough to guarantee our second approach in Section\,\ref{sec:bb-relation} to work. We need the gauging-the-symmetry descriptions of non-trivial $\nao$d SPT$_{/\CE}$ orders only for SET orders with 't Hooft anomalies (see Def.\,\ref{defn:hooft-anomaly}). 
\end{rem}



\section{SPT/SET orders via boundary-bulk relation} \label{sec:bb-relation}
As we mentioned in the introduction, the idea of gauging the symmetry is not an intrinsic approach, and perhaps is rather strange from a mathematical point of view because an anomaly-free SPT/SET is well-defined before we gauging the symmetry. This means that some data intrinsically associated to the category of topological excitations is missing. It turns out that the missing data is living in the one-dimensional-higher bulk of the anomaly-free SPT/SET. 
\begin{pthm} \label{pthm:trivial-bulk}
The unique bulk of an anomaly-free $n$d SET order is the trivial $\nao$d SPT order. 
\end{pthm}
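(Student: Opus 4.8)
The statement to prove is \textbf{Theorem$^{\mathrm{ph}}$ \ref{pthm:trivial-bulk}}: \emph{The unique bulk of an anomaly-free $n$d SET order is the trivial $\nao$d SPT order.}

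This is a physical theorem, so the "proof" is really a physical argument resting on the boundary-bulk relation of \cite{kwz1,kwz2}, together with the uniqueness of the bulk. The plan is to argue in two steps: first, that an anomaly-free SET order, being by definition realizable as a lattice model in its own dimension, must have \emph{some} bulk, and that the boundary-bulk relation forces this bulk to be the ``center'' of the category of topological excitations (including condensation descendants); second, that since the SET order already carries its full symmetry $\CR$ and no gravitational anomaly, the only thing the bulk can ``remember'' is the symmetry, so the bulk must itself be an SPT order with symmetry $\CR$, and anomaly-freeness pins it down to the \emph{trivial} one.

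\textbf{Step 1 (existence and uniqueness of the bulk).} I would invoke the general principle, established in \cite{kwz1,kwz2}, that every (anomaly-free or anomalous) $n$d SET order $\CP$ sits as a gapped boundary of a unique $\nao$d bulk phase, and that this bulk is computed as the center of the fusion $n$-category $\CA$ of topological excitations of $\CP$ (together with all condensation descendants), i.e. $\FZ_1(\CA)$ — a fact that is literally built into the formalism of Section \ref{sec:cc} and diagram \eqref{diag:RRA}. Uniqueness of the bulk is the content of the boundary-bulk relation: the bulk is determined by the boundary, not the other way around. So the task reduces to identifying $\FZ_1(\CA)$ with the trivial $\nao$d SPT$_{/\CR}$ order.

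\textbf{Step 2 (identifying the bulk as the trivial SPT).} Here I would argue physically: an anomaly-free SET order has no gravitational anomaly, so its bulk carries no nontrivial topological order — formally, after forgetting the symmetry the bulk must become the trivial $\nao$d topological order, which at the categorical level (by \cite{jf20} and the discussion following Theorem$^{\mathrm{ph}}$ \ref{pthm:main-2_0}) means the bulk is a $\nao$d SPT$_{/\CR}$ order. Moreover the SET order is anomaly-free precisely in the sense that it \emph{is} a lattice realization carrying the on-site symmetry $\CR$, so its bulk must support $\CR$ as an on-site symmetry with the symmetry charges condensable to a boundary — this is exactly the condition that the bulk SPT be the \emph{trivial} one (the nontrivial $\nao$d SPT$_{/\CR}$ orders do not admit a symmetric gapped boundary to an $n$d phase on which the symmetry acts on-site with full charge content $\CR$). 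Concretely, among the pairs $(\CR,\phi)$ classifying $\nao$d SPT$_{/\CR}$ orders (Theorem$^{\mathrm{ph}}$ \ref{pthm:main-2_0}.2), the bulk of our SET is the one admitting $\CA$ as a boundary via a commuting diagram of the form \eqref{diag:RRA}; taking $\CA = \CR$ and $\phi = \id$ recovers the trivial SPT, and the boundary-bulk relation shows this is the unique possibility consistent with anomaly-freeness.

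\textbf{Main obstacle.} The delicate point — and the one I expect to be the crux — is \emph{why anomaly-freeness of the SET order forces the bulk SPT to be trivial rather than merely some SPT order}. In the modular-extension language of Section \ref{sec:mext}, ``anomaly-free'' was defined to mean that $\CC$ admits a minimal modular extension; one must check this is equivalent to ``the bulk is the trivial SPT'', which is essentially the consistency requirement between the two approaches that the paper repeatedly flags (Remark \ref{rem:tower} and the bullet points under Theorem$^{\mathrm{ph}}$ \ref{pthm:main-2_0}). So rather than a self-contained derivation, the honest proof is: \emph{define} anomaly-free to mean ``has the trivial SPT as its bulk,'' note that this matches the modular-extension notion in the cases where both are available (2d, and topological orders), and observe that the boundary-bulk relation then makes the statement a tautology — the bulk exists, is unique, and is by hypothesis trivial. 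I would present the argument in that light, emphasizing that the real content is the \emph{definition} of anomaly-freeness via the bulk, with the modular-extension picture of Section \ref{sec:mext} supplying the bulk's explicit categorical description $(\FZ_1(\CR),\iota_0)$.
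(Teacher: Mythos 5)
The paper offers no proof of this statement: it is asserted as a bare physical principle, with the introduction explaining that anomaly-freeness of an SET order means precisely that its unique bulk (guaranteed by the boundary-bulk relation of \cite{kwz1,kwz2}) is the trivial one-dimensional-higher SPT order, and Definition\,\ref{defn:hooft-anomaly} later makes ``bulk is the trivial SPT'' essentially the definition of having neither gravitational nor symmetry anomaly. Your diagnosis --- that the honest content is definitional, with the boundary-bulk relation supplying existence and uniqueness of the bulk and the gauging picture of Section\,\ref{sec:mext} supplying its explicit description $(\FZ_1(\CR),\iota_0)$ --- is exactly the paper's implicit reasoning, so your proposal matches the paper's approach.
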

By the boundary-bulk relation, the categorical description of the $\nao$d bulk is given by the center of the category of topological excitations in the $n$d SPT/SET order, and should be identified with the non-trivial categorical description of the trivial $\nao$d SPT order obtained by gauging the symmetry in Section\,\ref{sec:mext}. This identification is not unique in general, and is precisely the missing data we are looking for. 

\medskip
In this section, we use this idea of boundary-bulk relation to obtain a classification of $n$d anomaly-free SPT/SET orders for $n\geq 0$. In Section\,\ref{sec:0d-SPT}, we study the 0d case, which is quite different from other cases. In Section\,\ref{sec:1d-SET-2}, we study the 1d case and show that the new classification results are compatible with the results in Section\,\ref{sec:1d-SET-1}. In Section\,\ref{sec:cc}, we introduce the physical notion of a condensation completion, which plays a crucial role in all $n$d cases for $n\geq 2$, and prove rigorously that it coincides with the mathematical notion of an ``idempotent completion'' introduced in \cite{dr} in the 2d case. In Section\,\ref{sec:2d-SET-2} and \ref{sec:nd-SET-2}, we discuss the 2d and higher dimensional cases, respectively.

\subsection{0d SPT/SET orders} \label{sec:0d-SPT}

If a 0d SET order is anomaly-free, it can be realized by a 0d lattice model. In this case, it makes no sense to talk about long range entanglement. All states are automatically short range entangled. Therefore, all anomaly-free 0d SET orders are SPT orders.

\medskip
Let $G$ be a finite onsite symmetry. The space of ground states of a gapped liquid state necessarily carries a symmetry charge, i.e. a representation of $G$. 
Let $\CE=\Rep(G)$ for a bosonic system or $\CE=\Rep(G,z)$ for a fermionic system. We argue in two different ways that the categorical description of a 0d SPT$_{/\CE}$ order is given by $\CE$, which is regarded as a category by forgetting its braiding and monoidal structure. As categories, we have $\CE=\Rep(G)=\Rep(G,z)$. 
\bnu
\item On the 0+1D world line, one can change the space of ground states from one representation of $G$ to another. This change can be achieved by inserting a 0D topological defect on the world line. As a consequence, all objects in $\Rep(G)$ can appear. Therefore, the categorical description of a 0d SPT$_{/\CE}$ order is $\CE=\Rep(G)$. 
\item According the boundary-bulk relation, the bulk of an anomaly-free 0d topological order must be the trivial 1d SPT order. By Theorem$^{\mathrm{ph}}$\,\ref{pthm:1d-spt-1}, the trivial 1d SPT order is described by the pair $(\fun_\hilb(\CE,\CE),\iota_0)$. Note that $\fun_\hilb(\CE,\CE)$ is precisely the $E_0$-center of $\CE$, i.e. 
\[
\FZ_0(\CE)=\fun_\hilb(\CE,\CE).
\] 
By the boundary-bulk relation \cite{kwz1,kwz2}, we see immediately that the categorical description of an anomaly-free 0d SPT order must be $\CE$. 
\enu

Using the idea of boundary-bulk relation, we obtain the following classification result. 
\begin{pthm}
A 0d SPT$_{/\CE}$ order is uniquely characterized by a pair $(\CE, \phi)$, where $\CE=\Rep(G)$ and $\phi: \FZ_0(\CE) \to \FZ_0(\CE)$ is a monoidal equivalence rendering the following diagram commutative. 
\be \label{diag:E-Z0-phi}
\raisebox{2em}{\xymatrix@R=1.5em{
& \CE \ar@{^(->}[dl]_{\iota_0} \ar@{^(->}[dr]^{\iota_0} & \\
\FZ_0(\CE) \ar[rr]_\simeq^\phi & & \FZ_0(\CE)\, .
}}
\ee
We denote the group of equivalence classes of such monoidal autoequivalences of $\FZ_0(\CE)$ by $\Aut^\otimes(\FZ_0(\CE),\iota_0)$. The trivial 0d SPT order is described by $(\CE,\id_{\FZ_0(\CE)})$. The group of 0d SPT orders is isomorphic to $\Aut^\otimes(\FZ_0(\CE),\iota_0)$. 
\end{pthm}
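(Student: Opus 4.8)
The plan is to establish the isomorphism $\{\text{0d SPT}_{/\CE}\text{ orders}\}\simeq\Aut^\otimes(\FZ_0(\CE),\iota_0)$ in two stages: first upgrade the classification-by-pairs statement (already justified physically via the boundary-bulk relation, which has been established earlier) to the group statement, and then identify the resulting group with $H^1(G,U(1))$. For the first stage I would verify that the stacking operation on 0d SPT orders corresponds, under the correspondence $\CX\mapsto(\CE,\phi_\CX)$, to composition of the autoequivalences $\phi$. Concretely, since stacking of two 0d phases with symmetry $G$ is the Deligne product followed by restriction along the diagonal $G\hookrightarrow G\times G$, one computes that on the level of $E_0$-centers $\FZ_0(\CE)=\fun_\hilb(\CE,\CE)$ the induced operation is composition of endofunctors; the compatibility of the half-braiding data $\phi$ with $\iota_0$ is preserved because $\iota_0$ is the canonical ``act by tensoring'' functor $a\mapsto a\otimes-$, which is manifestly compatible with the monoidal structure used in stacking. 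The trivial SPT order maps to $\id_{\FZ_0(\CE)}$ by definition, and inverses exist because braided/monoidal autoequivalences form a group; hence the correspondence is a group isomorphism onto $\Aut^\otimes(\FZ_0(\CE),\iota_0)$.

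For the second stage, the task is the purely algebraic identification $\Aut^\otimes(\FZ_0(\CE),\iota_0)\simeq H^1(G,U(1))$, where $\CE=\Rep(G)$ as a mere category (equivalently $\Rep(G,z)$, since as plain categories these agree). Here $\FZ_0(\CE)=\fun_\hilb(\Rep(G),\Rep(G))$, which by Morita theory is monoidally equivalent to $\mathrm{Vec}_G$ (the category of $G$-graded vector spaces, with its standard convolution tensor product), because $\Rep(G)$ and $\mathrm{Vec}_G$ are Morita dual. Under this equivalence the canonical functor $\iota_0\colon\Rep(G)\to\fun_\hilb(\Rep(G),\Rep(G))$ corresponds to the inclusion of $\Rep(G)$ into $\mathrm{Vec}_G$ as the subcategory of objects graded by the identity element — more precisely, to the functor sending a $G$-representation to the associated ``diagonal'' object; I would make this identification precise and then compute $\Aut^\otimes(\mathrm{Vec}_G,\iota_0)$. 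A monoidal autoequivalence of $\mathrm{Vec}_G$ is given by a pair $(\sigma,\mu)$ with $\sigma\in\Aut(G)$ and $\mu$ a normalized $U(1)$-valued $2$-cochain adjustment; the constraint $\phi\circ\iota_0\simeq\iota_0$ forces $\sigma=\id_G$ (it must fix the image of $\Rep(G)$) and cuts the cochain data down to $1$-cocycles modulo coboundaries, i.e.\ $H^1(G,U(1))=\Hom(G,U(1))$.

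I expect the main obstacle to be the precise bookkeeping in the second stage: pinning down exactly what residual data a monoidal autoequivalence of $\FZ_0(\CE)$ carries once the constraint $\phi\circ\iota_0\simeq\iota_0$ is imposed, and checking that this constraint really kills the $\Aut(G)$ part and the $H^2$ part while leaving exactly $H^1(G,U(1))$. This is essentially a computation with the monoidal-autoequivalence group of $\mathrm{Vec}_G$ relative to a fixed central functor, and the subtlety is that $\iota_0$ is only required to be preserved up to natural \emph{monoidal} isomorphism, so one must track which natural isomorphisms are available; I would organize this by choosing a skeleton of $\mathrm{Vec}_G$, writing $\phi$ on objects as a permutation of the grading group together with scalar data on morphisms, and solving the resulting cocycle equations. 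The first stage, by contrast, I expect to be routine once the stacking operation is spelled out in terms of $E_0$-centers, and the only care needed there is to confirm that the boundary-bulk dictionary from \cite{kwz1,kwz2} is genuinely functorial with respect to stacking, which is consistent with how it was used for the trivial bulk in Theorem$^{\mathrm{ph}}$\,\ref{pthm:trivial-bulk}.
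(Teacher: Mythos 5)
Your first stage --- taking the pair $(\CE,\phi)$ characterization as given by the boundary-bulk relation against the trivial 1d SPT order $(\FZ_0(\CE),\iota_0)=(\fun_\hilb(\CE,\CE),\iota_0)$, and then checking that stacking of 0d phases corresponds to composition of the autoequivalences $\phi$ --- is essentially what the paper does. This Theorem$^{\mathrm{ph}}$ carries no formal proof there: the pair description is justified by exactly this bulk argument (plus an independent world-line argument), and the group structure is asserted in parallel with the 1d case, where the stacking-equals-composition verification is carried out in detail via invertible domain walls. So that part of your plan is consistent with the paper.

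The problem is your second stage, which in any case addresses the \emph{next} theorem in the paper (the identification with $H^1(G,U(1))$ is not part of the statement you were asked to prove) and which rests on a false identification. The category $\FZ_0(\CE)=\fun_\hilb(\Rep(G),\Rep(G))$ is \emph{not} monoidally equivalent to $\hilb_G$ (your $\mathrm{Vec}_G$). Writing $n=|\Irr(\Rep(G))|$, it is the matrix multi-fusion category $\mathrm{Mat}_n(\hilb)$ with $n^2$ simple objects; it is Morita trivial with trivial Drinfeld center, whereas $\hilb_G$ has $|G|$ simples and its center is the nontrivial Drinfeld double --- already for $G$ abelian the counts $|G|^2$ versus $|G|$ disagree. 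The Morita duality you invoke gives $\fun_{\Rep(G)}(\hilb,\hilb)\simeq\hilb_G$, i.e.\ the dual of $\Rep(G)$ acting on $\hilb$, not the $\hilb$-linear endofunctors of $\Rep(G)$; consequently your description of $\iota_0$ as landing in an identity-graded component, and the $(\sigma,\mu)$ parametrization of autoequivalences of $\mathrm{Vec}_G$, do not apply. The paper's actual route to the $H^1$ statement is more direct: given $\phi\in\Aut^\otimes(\FZ_0(\CE),\iota_0)$, twist the module $\CE$ by $\phi$; since $\CE$ is the unique module over the matrix category $\FZ_0(\CE)$, there is a module equivalence $f:\CE\to{}_\phi\CE$, and the constraint $\phi\circ\iota_0\simeq\iota_0$ forces $f\in\fun_\CE(\CE,\CE)\simeq\CE$, so $f$ is determined by the invertible object $f(\one_\CE)$, a one-dimensional representation of $G$, whence $H^1(G,U(1))$.
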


\begin{thm}
We have a canonical group isomorphism $\Aut^\otimes(\FZ_0(\Rep(G)),\iota_0) \simeq H^1(G,U(1))$. 
\end{thm}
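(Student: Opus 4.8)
The plan is to identify $\FZ_0(\Rep(G)) = \fun_\hilb(\Rep(G),\Rep(G))$ together with its distinguished object $\iota_0 = (-)\otimes-\colon \Rep(G)\to\FZ_0(\Rep(G))$ as a pointed monoidal category, and then show that its group of pointed monoidal autoequivalences is $H^1(G,U(1))$. The key structural observation is that $\Rep(G)$ is a semisimple category with simple objects indexed by $\Irr(G)$, so $\fun_\hilb(\Rep(G),\Rep(G)) \simeq \bigoplus_{\rho,\sigma\in\Irr(G)} \hilb$ as a linear category; monoidal autoequivalences are severely constrained once we remember they must fix $\iota_0$ (up to isomorphism) as an algebra object. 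I would first unwind what $\iota_0\colon \Rep(G)\to \fun_\hilb(\Rep(G),\Rep(G))$ actually is: it sends a representation $V$ to the functor $V\otimes-$, and this is a monoidal functor whose image generates $\fun_\hilb(\Rep(G),\Rep(G))$ under the module action / colimits. The condition $\phi\circ\iota_0\simeq\iota_0$ says $\phi$ restricted along $\iota_0$ is (isomorphic to) the identity monoidal functor on $\Rep(G)$.

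First I would set up the correct interpretation: $\fun_\hilb(\Rep(G),\Rep(G))$ with its composition-of-functors monoidal structure is the "categorified group algebra" and a cleaner model is $\vect$-valued functors on $G$, i.e. $\fun_\hilb(\Rep(G),\Rep(G))\simeq \Rep(G)\boxtimes\Rep(G)^\op$-modules, which is Morita-equivalently $\hilb_G$ (the category of $G$-graded vector spaces, as a mere linear category) — but I must be careful: as a \emph{monoidal} category $\fun_\hilb(\Rep(G),\Rep(G))$ is equivalent to $\fun_\hilb(\hilb,\hilb)$-like data, and the presence of $\iota_0$ is what pins things down. The cleanest route: use that $\Rep(G)$ is an invertible $\Rep(G)$-module category over itself, so $\fun_\hilb(\Rep(G),\Rep(G))$ with the $\Rep(G)$-bimodule structure coming from $\iota_0$ on the left is, as a bimodule category, just $\Rep(G)$ acting on itself; a monoidal autoequivalence $\phi$ fixing $\iota_0$ is then the same as an autoequivalence of $\Rep(G)$ as a $\Rep(G)$-$\Rep(G)$-bimodule category whose underlying functor is trivialized on the generating object — and such bimodule autoequivalences of the unit bimodule are classified by $\Aut^\otimes(\id_{\Rep(G)})$, the group of monoidal natural automorphisms of the identity functor, which is exactly the group of characters of $G$, i.e. $\Hom(G,U(1)) = H^1(G,U(1))$.

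The concrete steps in order: (1) Identify $\FZ_0(\Rep(G)) = \fun_\hilb(\Rep(G),\Rep(G))$ and note $\iota_0$ exhibits $\Rep(G)$ as a module category over it, in fact the "regular" module, making $\FZ_0(\Rep(G))$ monoidally equivalent to the dual category $(\Rep(G))^*_{\Rep(G)}$ which is again $\Rep(G)$ — so abstractly $\FZ_0(\Rep(G))\simeq \Rep(G)$ as monoidal categories, with $\iota_0$ becoming an equivalence; (2) translate: a monoidal autoequivalence $\phi$ of $\FZ_0(\Rep(G))$ with $\phi\circ\iota_0\simeq\iota_0$ corresponds under this equivalence to a monoidal autoequivalence $\psi$ of $\Rep(G)$ equipped with a monoidal natural isomorphism $\psi\circ\id\simeq\id$, i.e. data of a monoidal natural isomorphism $\id_{\Rep(G)}\Rightarrow\id_{\Rep(G)}$; (3) compute $\Aut^\otimes(\id_{\Rep(G)})$: a monoidal natural automorphism of the identity functor assigns to each $V\in\Rep(G)$ an automorphism $a_V\in\Aut(V)$, natural and compatible with tensor; naturality forces $a_V$ to be given by the action of a central grouplike element, and compatibility with $\otimes$ over all of $\Rep(G)$ forces it to come from a character $\chi\colon G\to U(1)$ — and conversely each character gives such an automorphism (acting on $V$ by the scalar $\chi$ twisted appropriately, or more precisely via the image of $\chi$ under $G \to Z(\widehat{G})$); (4) check this identification $\chi\mapsto\phi_\chi$ is a group homomorphism and is canonical (independent of the chosen equivalence $\FZ_0(\Rep(G))\simeq\Rep(G)$), giving the asserted canonical isomorphism $\Aut^\otimes(\FZ_0(\Rep(G)),\iota_0)\simeq \Hom(G,U(1)) = H^1(G,U(1))$.

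The main obstacle I anticipate is step (2)–(3): correctly handling the pointed structure and making sure nothing is over- or under-counted. One must be careful that "$\phi$ fixing $\iota_0$ up to isomorphism" is the right notion — a priori one could worry about the choice of the isomorphism $\phi\circ\iota_0\simeq\iota_0$ contributing extra data, but the theorem (and the surrounding physics setup) takes the bare set of such $\phi$ up to equivalence, so I would verify that the groupoid of pairs $(\phi,\text{iso})$ has $\pi_0$ equal to $\Hom(G,U(1))$ and that this agrees with the naive count. A second subtlety is the fermionic remark "as categories $\Rep(G)=\Rep(G,z)$" — this is used so that $\FZ_0$ is the same in both cases, and indeed $H^1(G,U(1))$ is the answer regardless; I would remark that the braiding/monoidal structure of $\CE$ is genuinely forgotten here, which is why the $0$d answer does not distinguish bosonic from fermionic. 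Finally, to make the isomorphism genuinely \emph{canonical} as claimed, I would phrase it via the evaluation pairing: a character $\chi$ acts on $\FZ_0(\Rep(G)) = \fun_\hilb(\Rep(G),\Rep(G))$ by conjugating with the autoequivalence $V\mapsto V\otimes\Cb_\chi$ of $\Rep(G)$ (where $\Cb_\chi$ is the $1$-dimensional representation $\chi$), and check this conjugation fixes $\iota_0$ and realizes the bijection — this description is manifestly natural in $G$.
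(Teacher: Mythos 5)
Your final description of the isomorphism --- $\chi \mapsto$ conjugation of $\fun_\hilb(\Rep(G),\Rep(G))$ by the linear autoequivalence $-\otimes\Cb_\chi$ --- is exactly the map the paper constructs, but two of the intermediate steps you use to justify it are wrong, and the surjectivity of this map is left unproved. First, $\FZ_0(\Rep(G))=\fun_\hilb(\Rep(G),\Rep(G))$ is \emph{not} monoidally equivalent to $\Rep(G)$, and $\iota_0$ is not an equivalence: as a linear category it is $\bigoplus_{i,j\in\Irr(G)}\hilb$, a multifusion category with $|\Irr(G)|^2$ simple objects and non-simple unit, whereas $\Rep(G)$ is fusion with $|\Irr(G)|$ simples. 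The dual category that does recover $\Rep(G)$ is $\fun_{\Rep(G)}(\Rep(G),\Rep(G))$, i.e.\ the $\Rep(G)$-module endofunctors, not the $\hilb$-linear ones; conflating the two is what lets you ``identify'' $\phi$ with a monoidal autoequivalence of $\Rep(G)$, which is not the right target. Second, the group you reduce to in step (3), $\Aut^\otimes(\id_{\Rep(G)})$ (monoidal natural automorphisms of the identity functor), is \emph{not} $\Hom(G,U(1))$: by Tannakian reconstruction it is the center $Z(G)$ of the group. The group that actually appears is the group of invertible objects of $\fun_{\Rep(G)}(\Rep(G),\Rep(G))\simeq\Rep(G)$, i.e.\ invertible left $\Rep(G)$-module autoequivalences of the regular module, i.e.\ one-dimensional representations --- and \emph{that} is $H^1(G,U(1))$. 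Module autoequivalences (functors $f$ with $f(e\otimes x)\simeq e\otimes f(x)$) and monoidal natural automorphisms of $\id$ are different objects, and your argument slides from one to the other.

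The remaining gap is surjectivity: you must show that \emph{every} monoidal autoequivalence $\phi$ of $\FZ_0(\Rep(G))$ with $\phi\circ\iota_0\simeq\iota_0$ is conjugation by some linear autoequivalence $f$ of $\Rep(G)$, and then that the constraint from $\iota_0$ forces $f$ to be a left $\Rep(G)$-module functor, hence $f\simeq -\otimes f(\one_{\Rep(G)})$ with $f(\one_{\Rep(G)})$ invertible. The paper does this by a categorified Skolem--Noether argument: $\phi$ transports the tautological $\FZ_0(\Rep(G))$-module structure on $\Rep(G)$ to a new one ${}_\phi\Rep(G)$, and since $\fun_\hilb(\CX,\CX)$ has a unique indecomposable module category (namely $\CX$ itself), there is a module equivalence $f:\Rep(G)\to{}_\phi\Rep(G)$, i.e.\ $\phi(a)\simeq f\circ a\circ f^{-1}$. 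You gesture at this (``$\Rep(G)$ is an invertible module category over itself'', ``$\iota_0$ pins things down'') but never run the argument; without it you have only constructed a homomorphism $H^1(G,U(1))\to\Aut^\otimes(\FZ_0(\Rep(G)),\iota_0)$, not an isomorphism.
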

\pf
Given $\phi\in\Aut^\otimes(\FZ_0(\CE),\iota_0)$, it endows $\CE=\Rep(G)$ with another left $\FZ_0(\CE)$-module structure $\odot:\FZ_0(\CE)\times \CE \to \CE$ defined by $a\odot x \coloneqq \phi(a)(x)$ for $a\in\FZ_0(\CE)$ and $x\in\CE$. We denote this module structure by ${}_\phi\CE$. Since $\CE$ is the unique left $\FZ_0(\CE)$-module up to equivalences, there is a module equivalence $f: \CE \to {}_\phi\CE$. We have $\phi(a)(f(x)) \simeq f(a(x))$. This implies that $\phi \cong f\circ a \circ f^{-1}$ (not canonically). For $e\in\CE$, we have $\iota_0(e)=e\otimes -$. The condition (\ref{diag:E-Z0-phi}) implies that 
\[
e\otimes f(x) \simeq \iota_0(e)(f(x)) \simeq \phi(\iota_0(e))(f(x)) \simeq f(e\otimes x), 
\]
which further implies that $f\in\fun_\CE(\CE,\CE)$. Using $\fun_\CE(\CE,\CE)\simeq \CE$, we can identify $f$ with $f(\one_\CE)\in\CE$. Since $f$ is an equivalence, $f(\one_\CE)$ is invertible, and is precisely a 1-dimensional representation of $G$, or equivalently, an element in $H^1(G,U(1))$. The composition of $\phi$ is compatible with that of $f$, which is further compatible with the tensor product of 1-dimensional representations of $G$ and with the multiplication in $H^1(G,U(1))$. 
\epf
This result is expected in physics.  The 0d invertible phases with symmetry $G$
are classified by 1-dimensional representations of $G$, which happen to be
given by  $H^1(G,U(1))$.

\begin{rem} \label{rem:theo-0d-spt}
There is an interesting discussion of 0d SPT orders, including the time reversal symmetry in \cite[Section\ V.A]{jf20}. 
\end{rem}

\subsection{1d SPT/SET orders revisit} \label{sec:1d-SET-2}
In this subsection, we give a new classification of all 1d anomaly-free SET$_{/\CE}$ orders based on the idea of boundary-bulk relation. 

\medskip
Let $\CE$ be a symmetric fusion 1-category of symmetry charges. We denote the category of topological excitations in a 1d anomaly-free SET$_{/\CE}$ order by $\CA$, which is a unitary fusion 1-category over $\CE$, i.e. $\CA$ is a unitary fusion category equipped with a braided embedding $\iota : \CE \hookrightarrow \FZ_1(\CA)$ such that the composed functor $\eta_\CA : \CE \stackrel{\iota}{\hookrightarrow} \FZ_1(\CA) \to \CA$ is fully faithful. According to the results in Section\,\ref{sec:2d-SET-1}, the trivial 2d SPT order is described by the pair $(\FZ_1(\CE),\iota_0)$. By Theorem$^{\mathrm{ph}}$\,\ref{pthm:trivial-bulk}, $\CA$ must be equipped with a braided equivalence $\phi: \FZ_1(\CE) \to \FZ_1(\CA)$. Moreover, $\phi$ must preserve the symmetry charges in $\CE$, i.e. $\phi$ is a braided equivalence rendering the following diagram commutative (up to natural isomorphisms): 
\be \label{diag:EA}
\raisebox{2em}{\xymatrix@R=1em{
& \CE \ar@{^(->}[dl]_{\iota_0} \ar@{^(->}[dr]^{\iota} & \\
\FZ_1(\CE) \ar[rr]_\simeq^\phi & & \FZ_1(\CA)\, .
}}
\ee
The main result in \cite{eno08} says that $\FZ_1(\CE)\simeq\FZ_1(\CA)$ if and only if $\CE$ and $\CA$ are Morita equivalent. This further implies that $\eta_\CA : \CE \to \CA$ is a monoidal equivalence, and thus $\CA$ can be identified with $\CE$ via $\eta_\CA$. We prove again the fact that all anomaly-free 1d SET orders are SPT orders.

\medskip
Now we show that the pair $(\CE,\phi)$ fully characterizes a 1d SPT order. No further data is needed. We achieve this by constructing a canonical isomorphism (recall Theorem$^{\mathrm{ph}}$\,\ref{pthm:1d-spt-1}):
\be \label{eq:pic-aut_0}
\mathrm{Pic}(\CE) \simeq \mathrm{Aut}^{br}(\FZ_1(\CE),\iota_0). 
\ee
This fact was first proved in \cite{dn}. Our proof has a clear physical meaning. 


In Figure\,\ref{fig:pic-aut} (a), we depict a physical configuration. Two 1d SPT orders are depicted as two 1d boundaries in Figure\,\ref{fig:pic-aut} (a). The excitations in these two 1d SPT orders are both given by $\CE$. Their 2d bulks are both the trivial 2d SPT order described by the pair $(\FZ_1(\CE),\iota_0)$. The invertible $\CE$-$\CE$-bimodule $\CX$ clearly describes a 0d domain wall between two 1d SPT orders. A 0d domain wall is defined by an invertible $\CE$-$\CE$-bimodule $\CX$. It uniquely determines a relative 1d bulk, i.e. a 1d domain wall between two trivial 2d SPT orders, defined by the unitary fusion category $\FZ_1^{(1)}(\CX) \coloneqq \fun_{\CE|\CE}(\CX,\CX)$ \cite{kz}. Since $\CX$ is invertible, $\FZ_1^{(1)}(\CX)$ is also an invertible domain wall \cite{kz}. Hence, this 1d wall $\FZ_1^{(1)}(\CX)$ must be a 1d SPT order. 

\begin{figure}
$$
\raisebox{-30pt}{
  \begin{picture}(120,95)
   \put(0,15){\scalebox{1}{\includegraphics{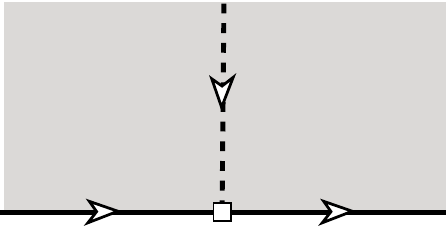}}}
   \put(0,15){
     \setlength{\unitlength}{.75pt}\put(0,0){
     \put(130,48)  {\scriptsize $(\FZ_1(\CE),\iota_0)$}
     \put(10,48) {\scriptsize $(\FZ_1(\CE),\iota_0)$}
     \put(35,-8) {\scriptsize $\CE$}
     \put(125,-8){\scriptsize $\CE$}
     \put(81,-7) {\scriptsize $\CX$}
     \put(73,93) {\scriptsize $\FZ_1^{(1)}(\CX) \coloneqq \fun_{\CE|\CE}(\CX,\CX)$}
         
     }\setlength{\unitlength}{1pt}}
  \end{picture}} 
\quad\quad\quad\quad\quad\quad\quad\quad
\raisebox{-30pt}{
  \begin{picture}(120,95)
   \put(0,15){\scalebox{1}{\includegraphics{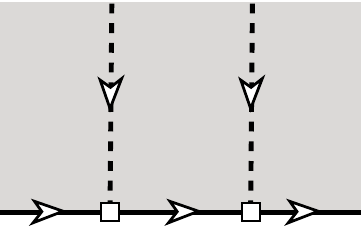}}}
   \put(0,15){
     \setlength{\unitlength}{.75pt}\put(0,0){
     \put(38,-8) {\scriptsize $\CX_1$}
     \put(93,-9) {\scriptsize $\CX_2$}
     \put(15,-8) {\scriptsize $\CE$}
     \put(66,-8) {\scriptsize $\CE$}
     \put(116,-8) {\scriptsize $\CE$}
     \put(39,93) {\scriptsize $\CY_{\phi_1}$}
     \put(93,93) {\scriptsize $\CY_{\phi_2}$}
         
     }\setlength{\unitlength}{1pt}}
  \end{picture}} 
$$
$$
(a) \hspace{6cm} (b)
$$
\caption{Picture (a) illustrates the relation among $\CE,\CA,\CX,\CM$ in a physical way and provides a proof of the canonical isomorphism in (\ref{eq:pic-aut}); Picture (b) illustrates the compatibility between the multiplications in $\mathrm{Pic}(\CE)$ and $\Aut^{br}(\FZ_1(\CE),\iota_0)$, where $\CY_{\phi_i}$ denotes the invertible 1d domain wall associated to $\phi_i$ for $i=1,2$ and is itself a 1d SPT order. 
}
\label{fig:pic-aut}
\end{figure}

\medskip
According to \cite[Theorem\ 3.3.7]{kz}, the assignment
\be \label{eq:Z-functor}
\CE \mapsto \FZ_1(\CE), \quad\quad \CX \mapsto \FZ_1^{(1)}(\CX) \coloneqq \fun_{\CE|\CE}(\CX,\CX)
\ee
defines a fully faithful functor. As a consequence, there is a one-to-one correspondence between the equivalence classes of invertible $\CE$-$\CE$-bimodules and those of invertible domain walls between two 2d topological orders defined by $\FZ_1(\CE)$. It is also known that there is a natural group isomorphism from the set of equivalence classes of braided autoequivalences of $\FZ_1(\CE)$ to the set of equivalence classes of invertible domain walls between two $\FZ_1(\CE)$'s \cite[Example\ 2.6.7,Corollary\ 3.3.10]{kz}. We denote it by $\phi \mapsto \CY_\phi$. This leads to the well-known group isomorphism (first proved in \cite{eno09}):
\be \label{eq:brpic-aut}
\FZ_1^{(1)}|_{\mathrm{BrPic}(\CE)}: \,\, \mathrm{BrPic}(\CE) \xrightarrow{\simeq} \mathrm{Aut}^{br}(\FZ_1(\CE)). 
\ee
Moreover, we can see directly from Figure\,\ref{fig:pic-aut} that the condition for $\CX\in\Pic(\CE)$ is equivalent to the condition for $\phi\in\mathrm{Aut}^{br}(\FZ_1(\CE),\iota_0)$, where $\phi$ is determined by the 1d SPT order $\CY_\phi\simeq\FZ_1^{(1)}(\CX)$. Therefore, 
we obtain a group isomorphism 
\be \label{eq:pic-aut}
\FZ_1^{(1)}|_{\Pic(\CE)}: \,\, \mathrm{Pic}(\CE) \xrightarrow{\simeq} \mathrm{Aut}^{br}(\FZ_1(\CE),\iota_0).
\ee

\medskip
Notice that the stacking of two 1d SPT orders amounts to stacking two layers of
Figure\,\ref{fig:pic-aut}. Recall that the stacking of the 2d bulks is defined
by (\ref{eq:def-star}). All 1d and 0d defects should stack compatibly. We know
two 1d SPT orders stack according to $\CE\boxtimes_\CE\CE \simeq \CE$. This is
compatible with (\ref{eq:def-star}). As a consequence, $\CX_1$ and $\CX_2$ must
stack according to $\CX_1\boxtimes_\CE\CX_2$, which is compatible with the composition in $\mathrm{Aut}^{br}(\FZ_1(\CE),\iota_0)$ as illustrated in Figure\,\ref{fig:pic-aut} (b), which shows nothing but the functoriality of (\ref{eq:Z-functor}). Therefore, we obtain a complete classification of 1d SPT/SET orders. 

\begin{pthm}\label{pthm:1d-spt-2}
All anomaly-free 1d SET$_{/\CE}$ orders are 1d SPT$_{/\CE}$ orders. A 1d SPT$_{/\CE}$ order can be
characterized either by $\phi\in\mathrm{Aut}^{br}(\FZ_1(\CE),\iota_0)$, or by
$\CY_\phi$, or by $(\FZ_1^{(1)})^{-1}(\CY_\phi)\in\Pic(\CE)$. Moreover, the
group of 1d SPT$_{/\CE}$ orders is isomorphic to both $\mathrm{Aut}^{br}(\FZ_1(\CE),\iota_0)$ and $\Pic(\CE)$. 
\end{pthm}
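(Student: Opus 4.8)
The plan is to assemble \textbf{Theorem$^{\mathrm{ph}}$\,\ref{pthm:1d-spt-2}} from the ingredients already laid out in this subsection, checking that each logical step is justified either by an earlier result in this paper or by the cited literature. The statement has three parts: (i) every anomaly-free 1d SET$_{/\CE}$ order is an SPT$_{/\CE}$ order; (ii) such an order is characterized equivalently by $\phi \in \mathrm{Aut}^{br}(\FZ_1(\CE),\iota_0)$, by $\CY_\phi$, or by $(\FZ_1^{(1)})^{-1}(\CY_\phi) \in \Pic(\CE)$; (iii) the group of 1d SPT$_{/\CE}$ orders is isomorphic to both $\mathrm{Aut}^{br}(\FZ_1(\CE),\iota_0)$ and $\Pic(\CE)$. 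I would prove these in order.

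For part (i): the topological excitations of a 1d anomaly-free SET$_{/\CE}$ order form a unitary fusion category $\CA$ over $\CE$, and by Theorem$^{\mathrm{ph}}$\,\ref{pthm:trivial-bulk} its 2d bulk is the trivial 2d SPT order, whose categorical description by Section\,\ref{sec:2d-SET-1} is $(\FZ_1(\CE),\iota_0)$. The boundary-bulk relation forces a braided equivalence $\phi:\FZ_1(\CE)\xrightarrow{\simeq}\FZ_1(\CA)$ making diagram\,(\ref{diag:EA}) commute. By \cite{eno08}, $\FZ_1(\CE)\simeq\FZ_1(\CA)$ as braided categories is equivalent to $\CE$ and $\CA$ being Morita equivalent; combined with the fact that $\eta_\CA:\CE\hookrightarrow\CA$ is a fully faithful monoidal functor and a dimension/Frobenius-Perron count (as in Section\,\ref{sec:1d-spt-1-math}), this forces $\eta_\CA$ to be a monoidal equivalence, so $\CA\simeq\CE$. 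Hence the order is an SPT$_{/\CE}$ order and is characterized by the pair $(\CE,\phi)$.

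For part (ii): I would argue that no data beyond $\phi$ is needed by exhibiting the geometric picture of Figure\,\ref{fig:pic-aut}(a). A 0d domain wall between two copies of the trivial 1d SPT order is an invertible $\CE$-$\CE$-bimodule $\CX$; by \cite[Theorem\,3.3.7]{kz} the assignment $\CX\mapsto\FZ_1^{(1)}(\CX):=\fun_{\CE|\CE}(\CX,\CX)$ is a fully faithful functor onto invertible domain walls between the 2d bulks $\FZ_1(\CE)$, and by \cite[Example\,2.6.7, Corollary\,3.3.10]{kz} these invertible domain walls are in turn in natural bijection with $\mathrm{Aut}^{br}(\FZ_1(\CE))$ via $\phi\mapsto\CY_\phi$. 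Composing gives \cite{eno09}'s isomorphism\,(\ref{eq:brpic-aut}), $\mathrm{BrPic}(\CE)\xrightarrow{\simeq}\mathrm{Aut}^{br}(\FZ_1(\CE))$. The key refinement is to check that $\CX\in\Pic(\CE)$ — i.e.\ the right $\CE$-action is the left one transported by the symmetric braiding — holds precisely when the induced $\phi$ preserves $\iota_0$, which one reads off from diagram\,(\ref{diag:EA}) and the figure; this yields the restricted isomorphism\,(\ref{eq:pic-aut}) and establishes the three equivalent characterizations.

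For part (iii): the group law is the stacking of 1d SPT orders. The stacking of the 2d bulks is computed by the $\star$-operation\,(\ref{eq:def-star}); since the two bulks are both trivial, $\CE\boxtimes_\CE\CE\simeq\CE$, so the stacked order again has trivial 2d bulk and is SPT. Compatibility of all strata forces the 0d walls to stack by $\CX_1\boxtimes_\CE\CX_2$ — this is the multiplication in $\Pic(\CE)$ — and, under $\FZ_1^{(1)}$, by composition in $\mathrm{Aut}^{br}(\FZ_1(\CE),\iota_0)$, as depicted in Figure\,\ref{fig:pic-aut}(b); this last statement is exactly the functoriality of the assignment\,(\ref{eq:Z-functor}). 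Hence $\FZ_1^{(1)}|_{\Pic(\CE)}$ is a group isomorphism, completing the proof. The main obstacle is the middle step: making the ``no further data is needed'' claim precise — i.e.\ verifying that the functor of \cite{kz} really is \emph{essentially surjective} onto the relevant invertible walls and that the condition $\phi\circ\iota_0\simeq\iota_0$ corresponds exactly to the Picard (as opposed to Brauer-Picard) subgroup — since everything else is assembling citations, but this identification is where the physical reasoning via Figure\,\ref{fig:pic-aut} is doing the real work and would need to be spelled out carefully.
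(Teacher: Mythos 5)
Your proposal is correct and follows essentially the same route as the paper: boundary-bulk relation plus the trivial-bulk theorem and \cite{eno08} for part (i), the fully faithful center functor of \cite[Theorem\,3.3.7]{kz} together with \cite{eno09} and the restriction $\mathrm{BrPic}(\CE)\supset\Pic(\CE)\leftrightarrow\Aut^{br}(\FZ_1(\CE),\iota_0)$ read off from Figure\,\ref{fig:pic-aut} for part (ii), and compatibility of the stacking with $\boxtimes_\CE$ and functoriality of (\ref{eq:Z-functor}) for part (iii). The obstacle you flag at the end (making the Picard-versus-Brauer--Picard identification precise) is exactly the step the paper also leaves at the level of the physical picture, deferring the rigorous statement to \cite{dn}.
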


Although $\CA$ and $\CM$ are not explicitly in Figure\,\ref{fig:pic-aut} (a), they can be recovered by fusing 1d phases along 2d phases.  
\bnu
\item $\CA$ can be recovered by fusing $\FZ_1^{(1)}(\CX)$ with the right 1d boundary SPT order $\CE$ along the 2d bulk $(\FZ_1(\CE),\iota_0)$, i.e. 
\be \label{eq:recover-A}
\fun_{\CE|\CE}(\CX,\CX) \boxtimes_{\FZ_1(\CE)} \CE \simeq \fun_\CE(\CX,\CX) \simeq \CA, 
\ee
where the first monoidal equivalence is due to \cite[Theorem\,3.1.7]{kz}.

\item $\CM$ can be recovered by closing the fan to give an anomaly-free 1d phase defined by 
\be \label{eq:recover-M}
\CE\boxtimes_{\FZ_1(\CE)} \FZ_1^{(1)}(\CX) \boxtimes_{\FZ_1(\CE)} \CE \simeq 
\CE\boxtimes_{\FZ_1(\CE)} \fun_\CE(\CX,\CX) \simeq \fun_\hilb(\CX,\CX)=\CM. 
\ee
where the second monoidal equivalence is again due to \cite[Theorem\,3.1.7]{kz}.
\enu

\begin{rem}\label{rem:spt-commutative}
It is easy to see why the group $\Pic(\CE)$ or $\Aut^{br}(\FZ_1(\CE),\iota_0)$ is abelian. This follows from the following  natural equivalences:
\[
\CY_{\phi_1 \circ \phi_2} \simeq \CY_{\phi_1} \boxtimes_{(\FZ_1(\CE),\iota_0)} \CY_{\phi_2} \simeq 
\CY_{\phi_2} \boxtimes_{(\overline{\FZ_1(\CE)},\overline{\iota_0})} \CY_{\phi_1 \simeq } \simeq 
\CY_{\phi_2} \boxtimes_{(\FZ_1(\CE),\iota_0)} \CY_{\phi_1} \simeq \CY_{\phi_2\circ \phi_1}, 
\]
where the second $\simeq$ is obtained by doing a left-right mirror reflection of Figure\,\ref{fig:pic-aut} (b), and the third $\simeq$
is due to the fact that $(\overline{\FZ_1(\CE)},\overline{\iota_0}) \simeq (\FZ_1(\CE),\iota_0)^{-1} \simeq (\FZ_1(\CE),\iota_0)$ \cite{LW160205936}, or equivalently, the trivial SPT order preserves the time-reversal symmetry. 
\end{rem}

Actually, the physical stacking of two trivial 2d SPT orders induces an independent mathematical definition of a new multiplication on $\mathrm{Aut}^{br}(\FZ_1(\CE),\iota_0)$. Physically, this multiplication must coincide with the composition of functors in 
$\mathrm{Aut}^{br}(\FZ_1(\CE),\iota_0)$. This leads to a non-trivial mathematical result, which should be proved independently and rigorously. We spell out this result explicitly in Theorem$^{\mathrm{ph}}$\,\ref{pthm:star=composition}. 

\medskip
There is a well-defined map 
\begin{align*}
\Aut^{br}(\CM,\iota_\CM) \times \Aut^{br}(\CN,\iota_\CN) &\xrightarrow{\star} \Aut^{br}((\CM \boxtimes \CN)_{L_\CE}^0,\iota_\CM \star \iota_\CN) \\
(\phi_1, \phi_2) &\mapsto \phi_1 \star \phi_2 \coloneqq (x \mapsto
(\phi_1\boxtimes\phi_2)(x)).
\end{align*}
Indeed, it is clear that $\phi_1\star\phi_2 \in\Aut^{br}((\CM \boxtimes \CN)_{L_\CE}^0)$, and we have, for $x\in (\CE\boxtimes\CE)_{L_\CE}^0$, 
\begin{align*}
(\phi_1\star \phi_2)\circ (\iota_\CM \star \iota_\CN)(x) & \coloneqq (\phi_1\boxtimes\phi_2)((\iota_\CM \boxtimes \iota_\CN)(x)) \\
&= \left( (\phi_1 \circ \iota_\CM) \boxtimes (\phi_2 \circ \iota_\CN)\right)(x) \simeq (\iota_\CM \boxtimes \iota_\CN)(x).
\end{align*} 
Since the trivial 2d SPT order $(\FZ_1(\CE), \iota_0)$ gives the identity element under $\star$, for a minimal modular extension $(\CM,\iota_\CM)$, there is a canonical braided equivalence $g: \CM \to (\CM\boxtimes\FZ_1(\CE))_{L_\CE}^0$ explicitly constructed in \cite[Proof of Lemma 4.18]{LW160205936}. Using $g$, we obtain a map 
\begin{align*}
\Aut^{br}(\CM,\iota_\CM) \times \Aut^{br}(\FZ_1(\CE),\iota_0) &\xrightarrow{\star^g} \Aut^{br}(\CM,\iota_\CM) \\
(\phi_1, \phi_2) &\mapsto \phi_1\star^g \phi_2 \coloneqq g^{-1}\circ (\phi_1\star\phi_2) \circ g
\end{align*} 
\begin{pthm} \label{pthm:star=composition}
When $(\CM,\iota_\CM)=(\FZ_1(\CE),\iota_0)$, we have $\phi_1 \star^g \phi_2 \simeq \phi_1 \circ \phi_2$. 
\end{pthm}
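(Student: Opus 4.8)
The plan is to trace through the explicit construction of the braided equivalence $g\colon \FZ_1(\CE) \to (\FZ_1(\CE)\boxtimes\FZ_1(\CE))^0_{L_\CE}$ from \cite[Proof of Lemma 4.18]{LW160205936} and to show that, under this identification, the operation $\star$ on pairs of autoequivalences really is implemented by composition. First I would recall that the canonical braided equivalence $g$ arises from the fact that $\FZ_1(\CE)\simeq \FZ_1(\CE)\boxtimes_\CE\FZ_1(\CE)$ once we pass to local $L_\CE$-modules, where $L_\CE=\otimes^R(\one_\CE)$ is the canonical Lagrangian-type algebra implementing the diagonal ``un-stacking'' of $\CE\boxtimes\CE$ back to $\CE$; concretely, $g$ sends an object $(x,c_{x,-})\in\FZ_1(\CE)$ to the $L_\CE$-module built from $x\boxtimes\one_\CE$ (or symmetrically $\one_\CE\boxtimes x$, the two being canonically isomorphic as local modules over $L_\CE$). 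The key point is that $g$ is natural in $\CE$ and is built entirely out of the symmetric structure of $\CE$ and the universal property of the center, so it intertwines any braided autoequivalence of $\FZ_1(\CE)$ fixing $\iota_0$ with its ``double'' on the stacked side.

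The main computation is then the following: for $\phi_1,\phi_2\in\Aut^{br}(\FZ_1(\CE),\iota_0)$, I would unwind
\[
\phi_1\star^g\phi_2 = g^{-1}\circ(\phi_1\star\phi_2)\circ g = g^{-1}\circ\bigl(x\mapsto(\phi_1\boxtimes\phi_2)(x)\bigr)\circ g,
\]
and show that applying $(\phi_1\boxtimes\phi_2)$ to the image $g(y)$ of $y\in\FZ_1(\CE)$ — which lives in the ``$x\boxtimes\one_\CE$-type'' sector — produces, after composing with $g^{-1}$, the object $\phi_1(\phi_2(y))$. The heuristic: $g$ presents $y$ as $y\boxtimes\one_\CE$ tensored up over $L_\CE$; $\phi_2$ acts on the second tensor factor's contribution through the identification of $\one_\CE$-sector with the first, so that after the relative tensor product both $\phi_1$ and $\phi_2$ act on the same ``slot'' and therefore compose. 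This should be made precise by using that $\phi_i\circ\iota_0\simeq\iota_0$, so $\phi_i$ is the identity on the image of $\CE$, hence descends compatibly through the local-module construction for $L_\CE$ (which only depends on $\CE$-linear data). I would verify compatibility of $g$ with the $L_\CE$-module structures and with braidings so that the equivalence of \emph{braided} autoequivalences, not merely of functors, is obtained.

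The hard part will be keeping track of precisely which of the two embeddings $\CE\hookrightarrow\FZ_1(\CE)\boxtimes\FZ_1(\CE)$ (left factor vs. right factor) each of $\phi_1$ and $\phi_2$ is being ``transported along'' by $g$, because the asymmetry between the two tensor slots is exactly what converts the a priori symmetric-looking operation $\star$ into the (noncommutative) composition $\phi_1\circ\phi_2$ rather than, say, $\phi_2\circ\phi_1$ or a symmetrized product. Concretely, the subtlety is that $g$ is \emph{not} symmetric in the two $\FZ_1(\CE)$ factors — it picks out $x\boxtimes\one_\CE$ — and one must check that $\phi_1\boxtimes\phi_2$ restricted to the image of $g$ acts as $\phi_1$ on the ``external'' data and as $\phi_2$ only through its (trivial-on-$\CE$) action that gets absorbed into the $L_\CE$-module relations, leaving a net action $y\mapsto\phi_1(\phi_2(y))$. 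A clean way to organize this, which I would attempt, is to invoke the boundary–bulk/functoriality picture of Figure~\ref{fig:pic-aut}(b): under the isomorphism $\FZ_1^{(1)}|_{\Pic(\CE)}$ of \eqref{eq:pic-aut}, $\star$ corresponds to the relative tensor product $\boxtimes_\CE$ of invertible $\CE$-$\CE$-bimodules, which is manifestly the group law on $\Pic(\CE)$; transporting back through \eqref{eq:pic-aut} then identifies $\star$ with composition in $\Aut^{br}(\FZ_1(\CE),\iota_0)$, giving the statement without a bare-hands module computation. Either route works; the module-theoretic one is more self-contained, the bimodule one is shorter but leans on the functoriality of \eqref{eq:Z-functor} established in \cite{kz}.
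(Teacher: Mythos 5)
The first thing to be clear about is that the paper offers no rigorous proof of this statement: it is stated as a Theorem$^{\mathrm{ph}}$, justified only by the physical requirement that bulk stacking (the $\star$ of \eqref{eq:def-star}) and boundary stacking be compatible, as illustrated in Figure\,\ref{fig:pic-aut}(b), and the text immediately before the statement explicitly flags it as ``a non-trivial mathematical result, which should be proved independently and rigorously.'' Your second route --- transporting $\star$ through \eqref{eq:pic-aut} to $\boxtimes_\CE$ on $\Pic(\CE)$ and back --- is precisely the paper's own argument, and as a mathematical proof it is circular: the identification of the bulk operation $\star$ with the wall operation $\boxtimes_\CE$ under the boundary--bulk dictionary is exactly the compatibility that the theorem asserts, not an independently established fact.

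Your first route is the one that would constitute an actual proof, and as sketched it has a concrete gap. The functor $x\mapsto L_\CE\otimes(x\boxtimes\one)$ cannot be the equivalence $g$: since $\CE$ is Lagrangian in the nondegenerate category $\FZ_1(\CE)$, an object $x\notin\CE$ does not centralize $L_\CE=\oplus_i\,i\boxtimes i^\vee$, so the free module on $x\boxtimes\one$ is not local; moreover the free modules on $x\boxtimes\one_\CE$ and $\one_\CE\boxtimes x$ are \emph{not} canonically isomorphic, because $L_\CE$ only couples the $\CE$-parts of the two factors, not the full $\FZ_1(\CE)$'s. Worse, if one takes your heuristic at face value the computation returns the wrong answer: $\phi_2$ fixes $\one_\CE$ and preserves $L_\CE$ up to isomorphism (this is what $\phi_2\circ\iota_0\simeq\iota_0$ gives you), so $(\phi_1\boxtimes\phi_2)$ applied to anything of the form ``$\phi_1$ on the first slot, trivial data in the second slot'' yields $\phi_1$ alone, i.e. $\phi_1\star^g\phi_2\simeq\phi_1$ rather than $\phi_1\circ\phi_2$ --- which would make the action of 2d SPT stacking on 1d walls trivial and contradict the theorem. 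The entire content of the statement is that $\phi_2$ re-enters nontrivially, and it does so through data your sketch discards: $\phi_2$ does \emph{not} fix the forgetful functor $\FZ_1(\CE)\to\CE$, hence does not fix the canonical Lagrangian algebra $I(\one_\CE)$, and the actual equivalence $g$ of \cite[Lemma 4.18]{LW160205936} (and the locality constraint on $L_\CE$-modules) couples the two tensor factors in exactly the way that converts this into composition. Until that mechanism is made explicit, the proof is not there.
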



\medskip
In the rest of this subsection, we discuss 1d SET$_{/\CE}$ orders with anomalies. In general, the anomaly of a SET order is a mixture of that associated to symmetries and the gravitational anomaly. It is difficult to distinguish them by a clean definition except in some special cases, such as an SET order obtained by stacking an SET order with only gravitational anomaly with another SET order with only anomaly associated to symmetries. However, it is possible to define them separately in certain limits. 

\begin{defn} \label{defn:gravitational-anomaly}
The gravitational anomaly of an SET order $\CX$ is defined by that of the bosonic or fermionic topological order obtained by fully breaking the symmetry in $\CX$. 
\end{defn}

\begin{defn} \label{defn:hooft-anomaly}
An SET order without gravitational anomaly is said to have a non-trivial (resp. trivial) {\it symmetry anomaly} if its bulk is a 1-dimensional-higher non-trivial (resp. trivial) SPT order (without any intrinsic topological order). If its bulk SPT order is a non-trivial twisted gauge theory, this symmetry anomaly is called {\it 't Hooft anomaly}. 
\end{defn}

\begin{pthm} \label{pthm:1d-thooft}
\label{SET1d}
If a 1d SET order (modulo invertible topological orders) does not have any
gravitational anomaly, then it does not have any non-trivial symmetry anomaly.
\end{pthm}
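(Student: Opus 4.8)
The plan is to deduce the statement from the classification of anomaly-free 1d SET orders in Sections\,\ref{sec:1d-SET-1} and \ref{sec:1d-SET-2}, together with the absence of non-trivial anomaly-free 1d topological orders \cite{CGW1128}. Let $\CX$ be a 1d SET$_{/\CE}$ order (modulo invertible topological orders) without gravitational anomaly, $\CA$ its unitary fusion $1$-category of topological excitations over $\CE$, and $\CY$ its 2d bulk. First I would note that, by Definition\,\ref{defn:gravitational-anomaly}, the hypothesis says precisely that completely breaking $\CE$ in $\CX$ yields an anomaly-free 1d topological order $\CX_0$, which by \cite{CGW1128} must be trivial (modulo invertible orders). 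Breaking $\CE$ throughout the boundary--bulk system also turns $\CY$ into a 2d topological order $\CY_0$ of which $\CX_0$ is a boundary; since $\CX_0$ is trivial, $\CY_0\simeq\FZ_1(\CX_0)$ is trivial, i.e.\ $\CY$ carries no intrinsic topological order and is a 2d SPT$_{/\CE}$ order. In the language of Section\,\ref{sec:2d-SET-1}, the bulk $(\FZ_1(\CA),\iota_\CA)$ then has $\FZ_2(\CE;\FZ_1(\CA))=\CE$, i.e.\ $\CE$ is Lagrangian in $\FZ_1(\CA)$, and comparing Frobenius--Perron dimensions, $(\fpdim\CE)^2=\fpdim\FZ_1(\CA)=(\fpdim\CA)^2$, so the fully faithful monoidal functor $\eta_\CA:\CE\hookrightarrow\CA$ is an equivalence and $\CA\simeq\CE$, exactly as derived in Section\,\ref{sec:1d-SET-2}.

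Next I would argue that this 2d SPT$_{/\CE}$ bulk is the \emph{trivial} one. Having reduced to $\CA\simeq\CE$, the order $\CX$ is a genuine gapped $(1{+}1)$D phase whose only excitations are the symmetry charges $\CE$; by the analysis of Section\,\ref{sec:1d-SET-1} it then admits a gauging $\fun_\hilb(\CX_{\mathrm{mod}},\CX_{\mathrm{mod}})$ for an invertible $\CE$-module $\CX_{\mathrm{mod}}$, hence is one of the 1d SPT$_{/\CE}$ orders $(\CE,\phi)$ with $\phi\in\Aut^{br}(\FZ_1(\CE),\iota_0)$ of Theorem$^{\mathrm{ph}}$\,\ref{pthm:1d-spt-2}. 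Every such order is anomaly-free, so by Theorem$^{\mathrm{ph}}$\,\ref{pthm:trivial-bulk} its bulk is the trivial 2d SPT$_{/\CE}$ order $(\FZ_1(\CE),\iota_0)$; equivalently $(\FZ_1(\CA),\iota_\CA)\simeq(\FZ_1(\CE),\iota_0)$ in $\mext(\CE)$, so $\CX$ has trivial symmetry anomaly. The same conclusion has a direct physical reading: if the bulk were a non-trivial 2d SPT$_{/\CE}$ order, $\CX$ would be a gapped, $\CE$-preserving $(1{+}1)$D boundary carrying a non-trivial 't Hooft anomaly, which is impossible, since it can neither break $\CE$, nor be gapless, nor absorb the anomaly into intrinsic $(1{+}1)$D topological order, there being none by \cite{CGW1128}.

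The main obstacle will be the categorical incarnation of this last step: proving rigorously that a minimal modular extension of $\CE$ whose underlying modular category is $\FZ_1(\CE)$ and which arises as the bulk of a 1d SET order represents the trivial class of $\mext(\CE)$ (with $\mext(\Rep(G))\cong H^3(G,U(1))$). A priori a braided section $\iota_\CA:\CE\hookrightarrow\FZ_1(\CE)$ of the forgetful functor $\FZ_1(\CE)\to\CE$ need not lie in the $\Aut^{br}(\FZ_1(\CE))$-orbit of $\iota_0$, so the argument must use that $\CX$ is a bona fide gapped $(1{+}1)$D phase with an \emph{internal} symmetry; the physical content is that an anomalous internal symmetry cannot be realized on a non-degenerate gapped ground space in $(1{+}1)$D, in the spirit of Lieb--Schultz--Mattis-type obstructions. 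A clean rigorous proof would presumably show that any $\iota_\CA$ compatible with a gauging in the sense of Section\,\ref{sec:1d-SET-1} is $\Aut^{br}(\FZ_1(\CE))$-equivalent to $\iota_0$, which together with Section\,\ref{sec:1d-SET-2} closes the argument.
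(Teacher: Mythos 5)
Your proposal is essentially the paper's own proof: by Definition\,\ref{defn:hooft-anomaly} the bulk is a minimal modular extension $(\CM,\iota_\CM)$ of $\CE$ braided-equivalent to $\FZ_1(\CA)$, and the Frobenius--Perron dimension count $\fpdim(\CM)=\fpdim(\FZ_1(\CE))=(\fpdim\CE)^2=(\fpdim\CA)^2$ forces $\eta_\CA:\CE\to\CA$ to be a monoidal equivalence, whence $(\CM,\iota_\CM)\simeq(\FZ_1(\CE),\iota_0)$. The ``obstacle'' you flag at the end --- that a braided section $\iota:\CE\hookrightarrow\FZ_1(\CE)$ need not a priori lie in the $\Aut^{br}(\FZ_1(\CE))$-orbit of $\iota_0$ --- is precisely the step the paper asserts without further argument, so your writeup is no less complete than the published one; only your intermediate appeal to Theorem$^{\mathrm{ph}}$\,\ref{pthm:1d-spt-2} in the second paragraph is circular (it presupposes the anomaly-freeness being proved) and should be dropped in favor of the dimension argument you already give.
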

\begin{proof}
The category of topological excitations of a 1d SET$_{/\CE}$ order with a
symmetry anomaly still form a unitary fusion 1-category $\CA$ over $\CE$, i.e. a fusion category $\CA$ equipped with a braided embedding $\iota : \CE \hookrightarrow \FZ_1(\CA)$ such that the composed functor $\eta_\CA : \CE \stackrel{\iota}{\hookrightarrow} \FZ_1(\CA) \to \CA$ is fully faithful.
By Definition\,\ref{defn:hooft-anomaly}, we obtain a commutative diagram: 
\be \label{diag:EMA}
\raisebox{2em}{\xymatrix@R=1em{
& \CE \ar@{^(->}[dl]_{\iota_\CM} \ar@{^(->}[dr]^{\iota} & \\
\CM \ar[rr]_\simeq^\phi & & \FZ_1(\CA)\, ,
}}
\ee
where $(\CM,\iota_\CM)$ is a minimal modular extension of $\CE$ describing the
symmetry anomaly and $\phi$ is a braided equivalence. Since the quantum dimension of $\CM$ is the same as that of $\FZ(\CE)$ by the definition of minimal modular extensions, we obtain $\fpdim(\CA) = \fpdim(\CE)$, which further implies that $\eta_\CA : \CE \hookrightarrow \CA$ is a monoidal equivalence. Therefore, $(\CM,\iota_\CM) \simeq (\FZ(\CE),\iota_0)$ is the trivial SPT. 
\end{proof}

\begin{rem}
The proof of Theorem$^{\mathrm{ph}}$\,\ref{pthm:1d-thooft} can be viewed as a mathematical proof of an earlier result which says that the boundary of a 2d non-trivial SPT order must be gapless or symmetry breaking \cite{CLW1141}. 
\end{rem}

\begin{rem}
It turns out that $n$d SET orders with the mixture of gravitational anomalies and symmetry
anomalies can be characterized in a similar manner. See Remark \ref{anoset} for
details.
\end{rem}

\subsection{Condensation completion} \label{sec:cc}

We would like to find a categorical description of 2d SPT/SET orders using the idea of boundary-bulk relation. For this purpose, we need to find a categorical description of an anomaly-free 2d SET order regarded as a 2d boundary of the trivial 3d SPT order.

\medskip
Could this yet-to-be-found categorical description be the same as we have seen? One of the important lessons we have learned in \cite{KW14,kwz1} is the following. 
\begin{quote}
The categorical description of a potentially anomalous $n$d topological order $\CP_n$ depends on its codimension with respect to an ($n$+$k$)d anomaly-free topological order $\CQ_{n+k}$, in which $\CP_n$ is realized as a $k$-codimensional gapped defect. 
\end{quote}
A 0-codimensional description is only possible when the topological order (e.g. $\CQ_{n+k}$) is anomaly-free. In this work, the 0-codimensional description of an anomaly-free topological order $\CQ_{n+k}$ is always chosen to be a unitary modular ($n$+$k$-1)-category, i.e. an $E_2$-algebra.  

\begin{rem}
The trivial 1-codimensional domain wall in the topological order $\CQ_{n+k}$ naturally inherits a 1-codimensional description as an $E_1$-algebra (by forgetting the braidings), its $k$-codimensional defects living on the trivial 1-codimensional domain wall naturally inherit a $k$-codimensional description as $E_{2-k}$-algebras (see \cite[Remark\,2.24]{kwz1} for the meaning of an $E_{-1}$-algebra, an $E_{-2}$-algebra, etc). 
\end{rem}

Using dimensional reduction, a potentially anomalous $n$d topological order $\CP_n$ can always be realized as a gapped boundary of an anomaly-free $\nao$d topological order \cite{kwz1}. Therefore, a 1-codimensional description of $\CP_n$ is always possible. In this work, we only care about 0-codimensional and 1-codimensional descriptions, which are often different. 

\begin{defn}[\cite{KW14,kwz1}]
For a potentially anomalous $n$d topological order $\CP_n$, a 1-codimensional description of $\CP_n$ is a unitary fusion $n$-category, i.e. an $E_1$-algebra, such that its monoidal center (or $E_1$-center) coincides with the 0-codimensional description of the anomaly-free $\nao$d bulk of $\CP_n$. 
\end{defn}

It is well-known that an anomaly-free 2d topological order (modulo invertible topological orders, which are $E_8$-states in this case) can be described by a UMTC $\CM$. Note that $\CM$ only describes particle-like excitations. This is possible not because there is no other topological excitations. Actually, in general, there are many gapped 1d domain walls and 0d walls between 1d walls in an anomaly-free 2d topological order. The reason we can ignore them is because they can all be obtained from particle-like excitations via condensations as shown in \cite{kong,KW14}. Therefore, in this case, we can regard particle-like excitations as more elementary, and view all the rest topological excitations as descendants of the elementary ones. In this sense, the UMTC $\CM$ gives a 0-codimensional description of this 2d topological order. 

However, if we want to regard the same anomaly-free 2d topological order $\CM$ as a 1-codimensional gapped boundary of the trivial 3d topological order and check the boundary-bulk relation \cite{kwz1,kwz2}, the UMTC $\CM$ is not enough. We need to find a 1-codimensional description of the same anomaly-free 2d topological order. One of the lessons we have learned in \cite{kz19a,kz19b} is that a mathematical description of 1-codimensional (gapped or gapless) boundary should include all possible topological defects and all condensation descendants. 
\bnu
\item For example, the 0-codimensional description of an anomaly-free 1+1D rational CFT can simply be a non-chiral CFT with modular invariant partition functions\footnote{Strictly speaking, this is not correct. The correct one is a non-chiral CFT with structure constants satisfying genus-0 factorization properties and with modular invariant 1-point correlation function on torus.}. On the other hand, its 1-codimensional description, when viewed as a gappable gapless boundary of the trivial 2d topological order, must includes all possible 0+1D domain walls and 0D walls between two 0+1D walls allowed by a given non-chiral symmetry \cite[Section\ 5.2]{kz19b}. As a consequence, the complete set of defects of all dimensions forms an enriched fusion category $\CXs$, whose monoidal center gives precisely the 0-codimensional description of the trivial 2d topological order, i.e. $\FZ_1(\CXs) \simeq 1\hilb$.
\item  Similarly, the mathematical description of a chiral gapless boundary of an anomaly-free 2d chiral topological order $\CM$, i.e. a 1-codimensional description, should not only include those chiral fields living on the entire 1+1D world sheet of the 1d boundary but also all 0+1D domain walls and 0D walls between 0+1D walls on the same 1+1D world sheet \cite{kz19a}. The complete set of defects of all dimensions forms again an enriched (multi-)fusion category $\CYs$ such that $\FZ_1(\CYs) \simeq \CM$. 
\enu
In our 2d case, all possible topological defects include not only particle-like topological excitations but also 1d gapped domain walls and 0d walls between 1d walls. Since all topological excitations beyond those in $\CM$ can all be obtained from those in $\CM$ via condensations, the process of including these condensation descendants can be called the condensation completion of $\CM$ (see also the discussion in \cite[Section\ XI.B.]{KW14}). We summarize and conclude as follows.  
\begin{pthm}
The 1-codimensional (categorical) description of an anomaly-free 2d topological order $\CM$ is given by the condensation completion of $\CM$. 
\end{pthm}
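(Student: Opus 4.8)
The plan is to combine a direct physical argument with a consistency check against the boundary–bulk relation. First I would recall what a $1$-codimensional description of the anomaly-free 2d topological order $\CM$ must look like once $\CM$ is viewed as a gapped boundary of its 3d bulk: it should be a unitary fusion $2$-category $\widehat\CM$ whose monoidal center $\FZ_1(\widehat\CM)$ reproduces the $0$-codimensional description of that 3d bulk. Since $\CM$ is a UMTC (modulo $E_8$ states) and hence anomaly-free, its 3d bulk is the trivial 3d topological order, whose $0$-codimensional description is $2\hilb$. So the goal is to exhibit, out of the complete set of topological defects of $\CM$, a unitary fusion $2$-category $\widehat\CM$ with $\FZ_1(\widehat\CM)\simeq 2\hilb$.

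Second, I would identify this complete set of defects. The $0$-codimensional UMTC $\CM$ records only particle-like excitations; but as a 2d phase $\CM$ also carries $1$d gapped domain walls and $0$d walls between $1$d walls. By the physics of anyon condensation \cite{kong,KW14}, each of these higher-codimensional defects is obtained from the particle-like excitations of $\CM$ by condensation, and no genuinely new defect arises. Hence the totality of topological defects of all codimensions is precisely the condensation completion $\widehat\CM$ of $\CM$, which is a unitary fusion $2$-category with the particles sitting inside as the looping $\Omega\widehat\CM\simeq\CM$. As is shown rigorously later in this section, $\widehat\CM$ coincides with the Karoubi completion / delooping $\Sigma\CM$ in the sense of \cite{dr,gjf19,jf20}.

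Third, I would run the consistency check $\FZ_1(\widehat\CM)\simeq 2\hilb$. Writing $\widehat\CM\simeq\Sigma\CM$, this follows from the center-of-delooping identity $\FZ_1(\Sigma\CC)\simeq\Sigma\FZ_2(\CC)$ for a braided fusion $1$-category $\CC$ (available non-unitarily via \cite{gjf19,jf20}) together with the modularity of $\CM$, i.e.\ $\FZ_2(\CM)\simeq\hilb$, which yields $\FZ_1(\Sigma\CM)\simeq\Sigma\hilb=2\hilb$. This is exactly the $0$-codimensional description of the trivial 3d topological order demanded by the boundary–bulk relation \cite{kwz1,kwz2}, so $\widehat\CM$ satisfies the defining property of a $1$-codimensional description, completing the argument.

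The main obstacle is that, as it stands, essentially every step is physical rather than rigorous. The theory of unitary fusion $2$-categories — and the compatibility of the Karoubi completion with a unitary structure — is not yet developed, so the assertion ``$\widehat\CM$ is a unitary fusion $2$-category'' is only physically justified. The statement that every $1$d or $0$d topological defect of $\CM$ is a condensation descendant of a particle is physical input from the theory of anyon condensation, and the center-of-delooping identity would still need its unitary refinement. A fully rigorous proof therefore has to wait for a theory of unitary fusion higher categories, at which point each of these inputs should become a theorem; until then the content above is best read as a \emph{physical} proof, consistent with \cite{dr,gjf19,jf20}.
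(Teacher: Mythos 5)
Your proposal is correct and takes essentially the same approach as the paper, which justifies this statement by precisely the two observations you make: a 1-codimensional description must contain all topological defects including their condensation descendants (the Condensation-Completion Principle), and in 2d every 1d and 0d defect is a condensation descendant of the particles in $\CM$ by anyon-condensation theory \cite{kong,KW14}. Your closing consistency check $\FZ_1(\Sigma\CM)\simeq 2\hilb$ is not part of the paper's argument for this particular statement but appears there separately as Theorem$^{\mathrm{ph}}$\,\ref{pthm:center-repM=trivial}, argued by a slightly different sketch.
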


We should view this as a special case of a general principle.  
\begin{quote}
{\bf Condensation-Completion Principle}: All possible defects (including condensation descendants) should be included in the 1-codimensional (categorical) description of a potentially anomalous $n$d gapped liquid phases when it is viewed as a boundary of its 1-higher-dimensional bulk. Moreover, the 1-codimensional description of the boundary and the 0-codimensional description of the bulk satisfies the boundary-bulk relation (i.e. bulk is the center of the boundary). 
\end{quote}

\begin{rem}
We believe that the not-yet-constructed higher dimensional generalization of Levin-Wen models \cite{LW05} with gapped boundaries can provide concrete realizations of about Condensation-Completion Principle. The 2d cases were done in \cite{KK12}. A special family of cases in 3d were realized in \cite{WW11}. 
\end{rem}

The following Theorem$^{\mathrm{ph}}$ generalizes a similar result for non-chiral UMTC's obtained in \cite[Section\ XI.B, Remark\ 16]{KW14} to all UMTC's. 
\begin{pthm} \label{pthm:cc-M}
The condensation completion of $\CM$ is given by the delooping $\Sigma\CM$ of $\CM$, which is equivalent to the fusion 2-category $\RMod_\CM(2\hilb)$ of right $\CM$-modules in $2\hilb$. 
\end{pthm}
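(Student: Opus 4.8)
The plan is to obtain the statement by combining two facts, only one of which requires new work here: \emph{(i)} the condensation completion of a unitary braided fusion $1$-category coincides with its Karoubi (``idempotent'') completion, and \emph{(ii)} the Karoubi completion of the one-point delooping of a (multi-)fusion $n$-category computes the $(\nao)$-category of its module categories. Fact \emph{(i)} is exactly the \textbf{Condensation-Completion Principle} together with the rigorous $2$d matching with \cite{dr} established earlier in Section\,\ref{sec:cc}; I would invoke it with $\CC=\CM$ to conclude that the condensation completion of $\CM$ is $\kar(B\CM)=\Sigma\CM$. Dimensionally this is forced: $B\CM$ is a one-object $2$-category, so $\Sigma\CM$ is a fusion $2$-category whose objects are the $1$d gapped domain walls of the $2$d topological order $\CM$ obtained by anyon condensation, whose $1$-morphisms are the $0$d walls between them, and whose $2$-morphisms are the instantons between those; the monoidal structure — physically, fusion of codimension-$1$ defects — is the one produced from the braiding of $\CM$, the $E_2$-structure on $\CM$ delooping to an $E_1$-structure on $\Sigma\CM$.

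For \emph{(ii)} I would cite \cite[Corollary\,4.2.3 \& 4.2.4]{gjf19} (equivalently, the statement $\Sigma\CX\simeq\RMod_\CX^{\mathrm{fd}}((\nao)\hilb)$ recorded in Section\,\ref{sec:nd-SET-1}), noting that in the unitary finite-semisimple world of $n=1$ every right $\CM$-module category is fully dualizable, so the superscript $\mathrm{fd}$ can be dropped and $\Sigma\CM\simeq\RMod_\CM(2\hilb)$. On the right-hand side the tensor product is the relative tensor product $\boxtimes_\CM$, which is defined because the braiding of $\CM$ promotes any right $\CM$-module category to a $\CM$-$\CM$-bimodule category; this matches the delooped monoidal structure of the previous paragraph. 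Under the equivalence, $\CM$ viewed as a right module over itself is the trivial $1$d wall, and $\fun_\CM(\CN_1,\CN_2)$ is the category of $0$d walls between the walls $\CN_1,\CN_2$.

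I would then include a short physical sanity check explaining why this works for \emph{every} UMTC, improving on \cite[Section\,XI.B, Remark\,16]{KW14}, which assumed $\CM$ non-chiral. The point is that the defects being adjoined are not Lagrangian algebras in $\CM$ (gapped boundaries to vacuum, which need not exist) but $1$d gapped domain walls between $\CM$ and itself; by the folding trick these are boundaries of $\CM\boxtimes\overline{\CM}$, and modularity of $\CM$ — equivalently the canonical braided equivalence $\CM\boxtimes\overline{\CM}\simeq\FZ_1(\CM)$ — identifies them with indecomposable right $\CM$-module categories, which always exist (e.g.\ $\CM$ itself), with the additive completion accounting for non-simple walls. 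Hence the condensation completion is $\RMod_\CM(2\hilb)$ whether or not $\CM$ is chiral.

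The hard part is not this assembly but the input \emph{(i)}: rigorously matching the physical operation ``adjoin all condensation descendants'' with $\kar(B-)$, i.e.\ checking that the separable (condensable) algebras in $\CM$ are precisely the condensation monads of $B\CM$ and that their module categories exhaust the descendant defects. This is what is carried out in the $2$d case in Section\,\ref{sec:cc}; in higher dimensions, and in the unitary setting, it would rest on the not-yet-complete theory of unitary multi-fusion higher categories assumed throughout this work. A secondary, purely cosmetic subtlety is the fully-dualizable-versus-all discrepancy flagged in Remark\,\ref{rem:sigma-RMod}, which collapses in the semisimple case.
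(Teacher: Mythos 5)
There is a genuine gap, and it is a circularity. Your step \emph{(i)} — that the condensation completion of $\CM$ is $\kar(B\CM)=\Sigma\CM$ — is justified by appeal to ``the rigorous $2$d matching with \cite{dr} established earlier in Section\,\ref{sec:cc}''; but that matching \emph{is} Theorem$^{\mathrm{ph}}$\,\ref{pthm:cc-M}, the statement you are asked to prove. Nothing earlier in Section\,\ref{sec:cc} establishes it: the Condensation-Completion Principle only says that all defects and condensation descendants must be included in the $1$-codimensional description; it does not identify the resulting $2$-category with $\kar(B\CM)$ or with $\RMod_\CM(2\hilb)$. You acknowledge at the end that ``the hard part'' is exactly this matching and that it ``is what is carried out in the $2$d case in Section\,\ref{sec:cc}'' — which amounts to deferring the entire content of the proof to the proof you were supposed to supply. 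The parts you do carry out — the citation of \cite[Corollary\,4.2.3 \& 4.2.4]{gjf19} for $\Sigma\CM\simeq\RMod_\CM^{\mathrm{fd}}(2\hilb)$ (modulo Remark\,\ref{rem:sigma-RMod}), and the folding-trick explanation of why chirality of $\CM$ is irrelevant — are correct and consistent with the paper, but they are the easy half.

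What the paper's argument actually does, and what your proposal asserts rather than checks, is the concrete identification of the physically defined condensation completion with $\RMod_\CM(2\hilb)$ \emph{together with its composition and monoidal structure}: all gapped $1$d domain walls of $\CM$ are realized as $\fun_\CM(\CX,\CX)$ for right $\CM$-modules $\CX$ (this is where the input from anyon condensation \cite{kong,KW14,dmno,egno} enters), $0$d walls are $\fun_\CM(\CX,\CY)$ \cite{akz}, the composition of $1$-morphisms in $\Sigma\CM$ coincides with the vertical fusion $\boxtimes_{\fun_\CM(\CX_2,\CX_2)}$ of $0$d walls (diagram (\ref{diag:vertical-fusion})), and — crucially — the $E_1$-structure on $\Sigma\CM$ obtained by delooping the braiding of $\CM$ coincides with the \emph{horizontal} fusion $\boxtimes_\CM$ of walls through the bulk, via the functor (\ref{eq:otimes-morphisms}) and the commutative diagram (\ref{diag:otimes-morphisms}), which rests on \cite[Theorem\ 3.3.6]{kz}. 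Your sentence ``this matches the delooped monoidal structure of the previous paragraph'' is precisely the claim that needs an argument; without it one only knows that $\Sigma\CM$ and the condensation completion have the same objects and hom-categories, not that they agree as fusion $2$-categories. To repair the proposal you would need to replace the appeal to Section\,\ref{sec:cc} by this explicit enumeration-and-matching, or by an independent verification that the condensation monads of $B\CM$ are exactly the condensable algebras in $\CM$ and that splitting them reproduces the wall categories above with the correct two fusion operations.
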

\pf
Since $\CM$ is a UMTC, we have $\CM \simeq \CM^\rev$ as fusion categories via its braidings. It is easy to see that $\RMod_{\CM}(2\hilb) \simeq \LMod_{\CM}(2\hilb)$.
 
When $\CM$ is viewed as a unitary fusion 1-category by forgetting its braidings, it defines the trivial 1d domain wall as illustrated in Figure\,\ref{fig:cond-completion} along with some other 1d, 0d domain walls. All the other gapped 1d domain walls in the same topological order can be described by unitary multi-fusion 1-categories $\fun_{\CM}(\CX,\CX)$ of $\CM$-module functors for a finite unitary right $\CM$-module $\CX$ (see \cite{dmno,kong,egno}). The 0d domain wall between two such 1d walls $\fun_{\CM}(\CX,\CX)$ and $\fun_{\CM}(\CY,\CY)$ is precisely given by $\fun_{\CM}(\CX,\CY)$ or $\fun_{\CM}(\CY,\CX) \simeq \fun_{\CM}(\CX,\CY)^\op$ depending on the orientation of 1d walls (see \cite{akz}).

\begin{figure}
$$
\raisebox{-30pt}{
  \begin{picture}(150,95)
   \put(-20,15){\scalebox{0.9}{\includegraphics{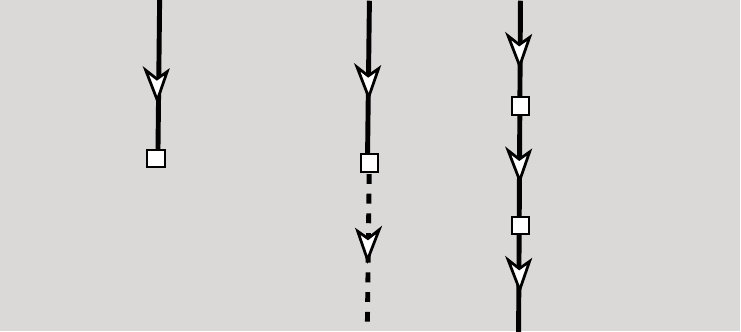}}}
   \put(-20,15){
     \setlength{\unitlength}{.75pt}\put(0,0){
     \put(245,3)  {\scriptsize $\CM$}
     \put(10,8) {\scriptsize $\CM$}
     \put(110,28) {\scriptsize $\CM$}
     \put(-15,48) {\scriptsize $\fun_{\CM}(\CM,\CX)\simeq\CX$}
     \put(116,56) {\scriptsize $\CY$}
     \put(186,97) {\scriptsize $\fun_{\CM}(\CX_3,\CX_3)$}
     \put(186,76) {\scriptsize $\fun_{\CM}(\CX_2,\CX_3)$}
     \put(186,55) {\scriptsize $\fun_{\CM}(\CX_2,\CX_2)$}
     \put(186,35) {\scriptsize $\fun_{\CM}(\CX_1,\CX_2)$}
     \put(186,15) {\scriptsize $\fun_{\CM}(\CX_1,\CX_1)$}
     \put(35,121) {\scriptsize $\fun_{\CM}(\CX,\CX)$}
     \put(110,121) {\scriptsize $\fun_{\CM}(\CY,\CY)$}
         
     }\setlength{\unitlength}{1pt}}
  \end{picture}} 
$$
\caption{\small This picture illustrates some 1d, 0d domain walls in an anomaly-free 2d topological order $\CM$. 
}
\label{fig:cond-completion}
\end{figure}

We illustrate 0-,1-morphisms in the 2-category $\Sigma\CM$ in the following diagram:
\begin{align} \label{quiver1}
\xymatrix{
\CX \ar@(ul,ur)[]^{\fun_{\CM}(\CX,\CX)} \ar@/^/[rr]^{\fun_{\CM}(\CX,\CY)} & & \CY \ar@(ul,ur)[]^{\fun_{\CM}(\CY,\CY)} \ar@/^/[ll]^{\fun_{\CM}(\CY,\CX)}
} .
\end{align}
\bnu
\item Each 0-morphism $\CX\in\Sigma\CM$ labels a 1d domain wall $\fun_{\CM}(\CX,\CX)$.  

\item 1-morphisms in $\Sigma\CM$ are precisely 0d walls or 0d topological excitations in 1d walls.  

\item 2-morphisms in $\Sigma\CM$ are instantons.

\item The composite $\circ$ of 1-morphisms in $\Sigma\CM$ is precisely the fusion of 0d walls along 1d walls (e.g. the vertical fusion $\boxtimes_{\fun_\CM(\CX_2,\CX_2)}$ in Figure\,\ref{fig:cond-completion}) as shown by the following commutative diagram: 
\be \label{diag:vertical-fusion}
\xymatrix{
\fun_\CM(\CX_2,\CX_3) \boxtimes \fun_\CM(\CX_1,\CX_2) \ar[r]^-\circ \ar[d]_{\boxtimes_{\fun_\CM(\CX_2,\CX_2)}} & \fun_\CM(\CX_1,\CX_3) \\
\fun_\CM(\CX_2,\CX_3) \boxtimes_{\fun_\CM(\CX_2,\CX_2)} \fun_\CM(\CX_1,\CX_2)\, \ar[ur]_\simeq & 
} .
\ee
\enu
We conclude that the 2-category $\Sigma\CM$ precisely encodes the information of all 0d,1d domain walls in $\CM$ and the vertical fusion among them. 

Notice that 0d,1d domain walls can also be fused horizontally in Figure\,\ref{fig:cond-completion}, e.g. 
\be \label{eq:horizontal-fusion}
\fun_\CM(\CX,\CX) \boxtimes_\CM \fun_\CM(\CY,\CY), \quad\quad \fun_\CM(\CX,\CX') \boxtimes_\CM \fun_\CM(\CY,\CY').
\ee
What structure in $\Sigma\CM$ encodes these horizontal fusions? It turns out that since $\CM$ is braided (i.e. an $E_2$-algebra), $\Sigma\CM$ has an additional monoidal structure (i.e. an $E_1$-algebra). Moreover, according to \cite{dr}, $\Sigma\CM$ is a fusion 2-category. We show that this monoidal structure encodes the information of the horizontal fusion of 1d,0d domain walls in $\CM$. 
\bnu
\item The monoidal structure $\otimes$ on $\Sigma\CM$ is defined by $\CX \boxtimes_\CM \CY$ on 0-morphisms. Notice that this is compatible with the horizontal fusion of two 0d walls in Figure\,\ref{fig:cond-completion}. 
\item The monoidal structure on higher morphisms is defined by the functor 
\begin{align} \label{eq:otimes-morphisms}
\fun_\CM(\CX,\CX') \boxtimes \fun_\CM(\CY,\CY') &\xrightarrow{\otimes} \fun_\CM(\CX\boxtimes_\CM\CY, \CX'\boxtimes_\CM\CY'),  \nn
f\boxtimes g &\mapsto  (f\boxtimes_\CM g: x\boxtimes_\CM y \mapsto f(x)\boxtimes_\CM g(y)).  
\end{align}
which is monoidal when $\CX=\CX'$ and $\CY=\CY'$. It coincides with the horizontal fusion (\ref{eq:horizontal-fusion}) as shown by the following commutative diagram: 
\be \label{diag:otimes-morphisms}
\xymatrix{
\fun_\CM(\CX,\CX') \boxtimes \fun_\CM(\CY,\CY') \ar[d]_{\boxtimes_\CM} \ar[r]^{\otimes} &  \fun_\CM(\CX\boxtimes_\CM\CY, \CX'\boxtimes_\CM\CY') \\
\fun_\CM(\CX,\CX') \boxtimes_\CM \fun_\CM(\CY,\CY'), \ar[ur]_\simeq & 
}
\ee
\enu
where the equivalence ``$\simeq$'' is monoidal if $\CX=\CX'$ and $\CY=\CY'$. The commutativity of the diagram and the  equivalence $\simeq$ follow from \cite[Theorem\ 3.3.6]{kz} (using the canonical faithful functor $\fun_\CM(\CX,\CX')\to\fun_{\CM|\CM}(\CX,\CX')$). 
\epf

\begin{rem} \label{rem:Theo-2}
The idea of condensation completion was first discussed in a physical context in \cite[Section\ XI]{KW14}, where condensation descendants are called condensed excitations, and the terminology of ``condensation completion'' was not used. Theorem$^{\mathrm{ph}}$\,\ref{pthm:cc-M} for non-chiral UMTC's was obtained in \cite[Remark\ 16]{KW14}, but was stated in a different but equivalent way according to \cite{kz,KYZ19}. 
Theorem$^{\mathrm{ph}}$\,\ref{pthm:cc-M} explains the physical meaning of the mathematical notion of ``idempotent completion'' introduced in mathematics by Douglas and Reutter in \cite{dr} with the motivation of making a 2-category semisimple. The physical necessity of the idempotent completion was further convinced in 3d $G$-gauge theory \cite{ktz}. This notion was further generalized to higher categories by Gaiotto and Johnson-Freyd \cite[Definition\ 1.3.1, 2.1.1]{gjf19} (see also \cite{jf20}) under the name of ``Karoubi completion'', which is briefly reviewed in Appendix. 
\end{rem}

\begin{rem} \label{rem:sigma-RMod}
There is a subtle difference between $\Sigma\CM$ and $\RMod_\CM(2\hilb)$ because there is no canonical functor from $\RMod_\CM(2\hilb)$ to $\Sigma\CM$. The more careful treatment of condensation completion requires us to select a distinguished object $x\in\CX$. This replaces $\CX$ by the pair $(\CX,x)$ as an object. In this way, we recover the Karoubi completion $\Sigma\CM$ in \cite{dr,gjf19}. 
\end{rem}

\begin{rem} \label{rem:premodular}
That the condensation completion is given by $\Sigma\CM$ remains to be true if $\CM$ is pre-modular. In this case, one can view $\CM$ as the mathematical description of particle-like topological excitations in an anomalous 2d topological order. Note that $\CM$ viewed as a unitary fusion 1-category describes again the trivial 1d domain wall in this anomalous 2d topological order. All other 1d domain walls are again given by $\fun_\CM(\CX,\CX)$. In particular, Figure\,\ref{fig:cond-completion} continues to make sense by regarding the picture as a 2d boundary of a hidden non-trivial 3d bulk.  
\end{rem}

We assume that the condensation completion for higher dimensional topological orders is given by the mathematical theory of Karoubi completion developed in \cite{gjf19}. Then we can use the same notation of the delooping to be that of condensation completion. Moreover, we assume that $\Sigma\CD$ is unitary for a unitary (braided) multi-fusion $n$-category.

\begin{expl}
We give a few examples. First, $\Sigma\hilb=2\hilb$, $\Sigma n\Rep(G) = (\nao)\Rep(G)$, $\Sigma n\Rep(G,z) = (\nao)\Rep(G,z)$ by definitions. 
In particular, $\Omega^{n-1}(n\Rep(G))=\Rep(G)$ and
$\Omega^{n-1}(n\Rep(G,z))=\Rep(G,z)$. This simply says that the physical meaning of $(\nmo)$-morphisms in $n\Rep(G)$ or $n\Rep(G,z)$ are symmetry charges, and all the rest morphisms in $n\Rep(G)$ or $n\Rep(G,z)$ are the condensation descendants of the symmetry charges. Each time we increase the dimension by one, all 1-higher-dimensional condensation descendants must be included in the condensation completion. 
\end{expl}

Since $\Sigma\CM$ contains all possible topological defects in the 2d boundary of the trivial 3d topological order, and by \cite{kwz1,kwz2}, we obtain the following result.
\begin{pthm} \label{pthm:center-repM=trivial}
$\FZ_1(\Sigma\CM) \simeq 2\hilb$ as braided 2-categories. 
\end{pthm}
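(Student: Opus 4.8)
The plan is to invoke the boundary-bulk relation together with the identification of the condensation completion $\Sigma\CM$ as a $1$-codimensional description of the anomaly-free $2$d topological order $\CM$, which was established in Theorem$^{\mathrm{ph}}$\,\ref{pthm:cc-M}. Concretely, since $\CM$ is a UMTC, it describes an anomaly-free $2$d topological order (modulo invertible topological orders, i.e.\ $E_8$-states), and by Theorem$^{\mathrm{ph}}$\,\ref{pthm:trivial-bulk}-type reasoning its unique $3$d bulk is the trivial $3$d topological order. The $0$-codimensional description of that trivial $3$d topological order is $2\hilb$. By the Condensation-Completion Principle stated above, the $1$-codimensional description of the boundary $\CM$ is its condensation completion $\Sigma\CM$, and the boundary-bulk relation asserts that the bulk is the monoidal center of this $1$-codimensional description. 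Putting these together gives $\FZ_1(\Sigma\CM)\simeq 2\hilb$ as braided $2$-categories.

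The key steps, in order, would be: (1) recall that $\CM$, viewed as a unitary fusion $1$-category by forgetting its braiding, defines the trivial $1$d domain wall in the $2$d topological order, and that all its $1$d and $0$d defects are the condensation descendants packaged into $\Sigma\CM$, which is a fusion $2$-category by \cite{dr} (this is exactly the content of Theorem$^{\mathrm{ph}}$\,\ref{pthm:cc-M}, together with the monoidal structure $\CX\boxtimes_\CM\CY$ coming from the braiding of $\CM$); (2) apply the boundary-bulk relation of \cite{kwz1,kwz2}: for a gapped boundary given $1$-codimensionally by a fusion $n$-category $\CA$, the anomaly-free bulk is $\FZ_1(\CA)$; (3) identify the bulk in question as the trivial $3$d topological order, whose $0$-codimensional categorical description is $2\hilb$, and conclude $\FZ_1(\Sigma\CM)\simeq 2\hilb$. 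As a consistency check one can also verify directly at the level of looping: $\Omega\FZ_1(\Sigma\CM)$ should recover $\FZ_2$ of the particle-like excitations, i.e.\ the M\"uger center of $\CM$, which is $1\hilb$ precisely because $\CM$ is modular; this is a sanity check rather than a full proof.

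Alternatively, one could argue more algebraically: using $\Sigma\CM\simeq\RMod_\CM(2\hilb)$ and the fact that for a braided fusion category $\CM$ one expects $\FZ_1(\RMod_\CM(2\hilb))\simeq \RMod_{\FZ_1(\CM)}(2\hilb)$ (delooping intertwines center with center, in the appropriate Karoubi-complete setting), and then using that $\CM$ modular means $\FZ_1(\CM)\simeq\CM\boxtimes\overline{\CM}$, hence $\RMod_{\CM\boxtimes\overline\CM}(2\hilb)\simeq\Sigma(\CM\boxtimes\overline\CM)$, which by the factorization homology / Morita-triviality of a modular category is the condensation completion of an invertible object and therefore equivalent to $2\hilb$. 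The main obstacle, and the reason this is phrased as a physical theorem rather than a mathematically rigorous one, is that the higher-categorical machinery needed to make ``$\FZ_1$ commutes with $\Sigma$'' and the boundary-bulk relation precise at the level of fusion $2$-categories is only partially available (it rests on \cite{dr,gjf19,jf20}); a fully rigorous proof would need the monoidal center of a fusion $2$-category, the compatibility of Karoubi completion with taking centers, and the precise statement that a unitary modular $1$-category is ``Morita trivial'' as a braided $2$-categorical object — each of which is nontrivial and is exactly where the physical input (boundary-bulk relation) is substituting for a theorem.
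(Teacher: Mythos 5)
Your main line of argument --- identify $\Sigma\CM$ as the $1$-codimensional description of the anomaly-free $2$d topological order via Theorem$^{\mathrm{ph}}$\,\ref{pthm:cc-M}, then apply the boundary-bulk relation of \cite{kwz1,kwz2} to conclude that its monoidal center is the trivial $3$d bulk $2\hilb$ --- is exactly the justification the paper gives in the sentence immediately preceding the statement, so at the ``physical theorem'' level of rigor your proposal is correct and matches the paper. The one substantive difference is in the more concrete sketch: the paper's accompanying remark reduces the claim to two facts, namely (i) every object of $\FZ_1(\Sigma\CM)$ is a direct sum of copies of $\id_{\Sigma\CM}$ (extracted from the proof of Theorem$^{\mathrm{ph}}$\,\ref{thm:fun-PP} and Remark\,\ref{rem:fun-PP-proof-2}, i.e.\ from the identification $\Sigma\FZ_1(\CP)\simeq\FZ_0(\Sigma\CP)$ and the analysis of which module functors are invertible), and (ii) $\Omega\FZ_1(\Sigma\CM)\simeq 1\hilb$, which is just modularity of $\CM$. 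Your ``sanity check'' is precisely ingredient (ii); what you do not supply is ingredient (i), the connectivity statement that kills any nontrivial objects in the center, and that is the part carrying the real content beyond the bare boundary-bulk slogan. Your alternative algebraic route ($\FZ_1$ commutes with $\Sigma$, then $\FZ_1(\CM)\simeq\CM\boxtimes\overline{\CM}$ and Morita-triviality of a modular category) is a genuinely different decomposition not taken by the paper for this statement, though it is close in spirit to Theorem$^{\mathrm{ph}}$\,\ref{thm:fun-PP}; it is a reasonable strategy, and indeed the rigorous proof the paper points to \cite{DN20} proceeds along broadly similar structural lines. Your closing assessment of where the gaps lie (monoidal centers of fusion $2$-categories, compatibility of Karoubi completion with centers) accurately reflects why the paper labels this a Theorem$^{\mathrm{ph}}$ and defers details.
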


\begin{rem}
We briefly sketch the idea of the proof of Theorem$^{\mathrm{ph}}$\,\ref{pthm:center-repM=trivial} here. Using the proof of Theorem$^{\mathrm{ph}}$\,\ref{thm:fun-PP} and Remark\,\ref{rem:fun-PP-proof-2}, one can show that an object in $\FZ_1(\Sigma\CM)$ is necessarily a direct sum of the identity functor $\id_{\Sigma\CM}$. Then the theorem follows from the fact that $\Omega \FZ_1(\Sigma\CM) \simeq 1\hilb$. We will provide details elsewhere. See Remark\,\ref{rem:theo-umtc}. During the second revision request by JHEP, Davydov and Nikshych posted a rigorous proof of this result in \cite[Thm.\ 4.10,Prop.\ 4.16]{DN20}. 
\end{rem}

If a unitary fusion 2-category $\CT$ has a trivial monoidal center, i.e. $\FZ_1(\CT)\simeq 2\hilb$, it means that $\CT$ describes an anomaly-free 2d topological order. Since 1-codimensional defects cannot be detected by braidings, they must be condensation descendants of particle-like excitations, which can be braided. Therefore, we obtain the following result. 
\begin{pthm} \label{pthm:T=sigmaM}
A unitary fusion 2-category $\CT$ has the trivial monoidal center if and only if there is a UMTC $\CM$ such that $\CT \simeq \Sigma\CM$ as fusion 2-categories. 
\end{pthm}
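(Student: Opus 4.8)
The plan is to prove the two directions of the equivalence separately. The ``if'' direction is essentially Theorem$^{\mathrm{ph}}$\,\ref{pthm:center-repM=trivial}: if $\CT \simeq \Sigma\CM$ for a UMTC $\CM$, then $\FZ_1(\CT) \simeq \FZ_1(\Sigma\CM) \simeq 2\hilb$, so there is nothing new to do. The substance is in the ``only if'' direction, so I would spend the bulk of the argument there.

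For the ``only if'' direction, suppose $\CT$ is a unitary fusion $2$-category with $\FZ_1(\CT) \simeq 2\hilb$. The strategy is to produce a UMTC $\CM$ from the ``particle-like'' layer of $\CT$ and then show $\CT$ is its delooping. First I would set $\CM \coloneqq \Omega\CT = \End_\CT(\one_\CT)$, the looping of $\CT$; since $\CT$ is a unitary fusion $2$-category, $\CM$ is a unitary braided fusion $1$-category (the braiding comes from the Eckmann--Hilton argument applied to the monoidal structure on $\CT$). The key claim is that $\CM$ is in fact modular, i.e. $\FZ_2(\CM) \simeq \hilb$. This should follow from $\FZ_1(\CT) \simeq 2\hilb$ by a looping argument: one has (by definition / by \cite{jf20}) $\FZ_2(\CM) = \Omega\FZ_1(\Sigma\CM)$ and, more to the point, $\FZ_2(\Omega\CT)$ should be computed by $\Omega^2\FZ_1(\CT)$ using that looping intertwines the various centers; since $\FZ_1(\CT)\simeq 2\hilb$, $\Omega^2 (2\hilb) \simeq \Cb$, hence $\FZ_2(\CM)\simeq\hilb$ and $\CM$ is a UMTC.

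Next I would construct the comparison functor $\Sigma\CM \to \CT$. By the universal property of the delooping (it is the free Karoubi-complete $\Cb$-linear monoidal $2$-category on the braided category $\CM$, adjoint to $\Omega$ as stated in Section\,\ref{sec:nd-SET-1}), the identity-on-$\Omega$ data $\CM = \Omega\CT$ gives a canonical monoidal functor $F\colon \Sigma\CM = \kar(B\CM) \to \CT$ (using that $\CT$ is already Karoubi-complete). It remains to show $F$ is an equivalence. It is automatically fully faithful on the relevant hom-categories and a monoidal equivalence onto its image; the real content is essential surjectivity on objects, i.e. that every object of $\CT$ — every ``$1$d domain wall'' in the anomaly-free $2$d topological order described by $\CT$ — is a condensation descendant of particle-like excitations, i.e. lies in the image of $\kar(B\CM)$. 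This is exactly the physical assertion in the statement (``$1$-codimensional defects cannot be detected by braidings, hence are condensation descendants of particle-like excitations''). Mathematically I would argue: an object $X\in\CT$ is, because $\FZ_1(\CT)$ is trivial (so $\CT$ is Morita-trivial / has a canonical fiber functor to $2\hilb$), a module over the algebra $\Omega\CT = \CM$ internal to $\CT$; more precisely, the condensable algebra coming from $\one_\CT$ and its endomorphisms lets one write $X$ as a module $2$-category over $\CM$, placing $X$ in $\RMod_\CM(2\hilb) \simeq \Sigma\CM$. Combined with the identification $\Sigma\CM \simeq \RMod_\CM(2\hilb)$ from Theorem$^{\mathrm{ph}}$\,\ref{pthm:cc-M}, this yields the quasi-inverse to $F$.

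The main obstacle I anticipate is precisely this last essential-surjectivity step: making rigorous the claim that an arbitrary object of a unitary fusion $2$-category with trivial center is a condensation descendant of its loop category. This is a genuinely $2$-categorical statement — it is the fusion-$2$-category analogue of ``a fusion $1$-category with trivial Drinfeld center is $\Vect$'' (which is false!), so the braided/modular structure on $\Omega\CT$ must be used in an essential way, presumably through a $2$-categorical version of the Morita theory / Ostrik-type classification of module categories, or through the characterization of $2\hilb$-enriched centers in \cite{jf20, DN20}. I would flag that a fully rigorous proof likely requires the not-yet-complete theory of unitary (multi-)fusion $2$-categories, consistent with the paper's stated status that Theorem$^{\mathrm{ph}}$\,\ref{pthm:T=sigmaM} is a physical theorem; the rigorous analogue of Theorem$^{\mathrm{ph}}$\,\ref{pthm:center-repM=trivial} recently supplied in \cite{DN20} is the technical heart one would build on.
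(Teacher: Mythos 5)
Your proposal is correct and follows essentially the same route as the paper, which justifies this Theorem$^{\mathrm{ph}}$ only by the brief physical argument that $1$-codimensional defects cannot be detected by braiding and hence must be condensation descendants of $\Omega\CT$, deferring rigor to \cite[Corollary IV.2]{jf20}. Your writeup in fact fleshes out that argument in more detail than the paper does, and correctly isolates the essential-surjectivity of $\Sigma\Omega\CT\to\CT$ as the genuine content.
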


\begin{rem} \label{rem:theo-umtc}
The non-unitary version of Theorem$^{\mathrm{ph}}$\,\ref{pthm:center-repM=trivial} and \ref{pthm:T=sigmaM} is proved as the 2d case of a general result for all higher dimensions in \cite[Corollary\ IV.2]{jf20} (see also Remark\,\ref{rem:theo-n-umtc}). 
\end{rem}

If $\CM$ is non-chiral, then $(\CM,0)$ describes a non-chiral 2d topological order that admits gapped 1d boundaries. Each gapped 1d boundary can be described by a closed multi-fusion left $\CM$-module $\CP$ \cite[Definition\ 2.6.1]{kz}, i.e. a unitary multi-fusion 1-category $\CP$ equipped with a braided equivalence $\psi_\CP: \CM \xrightarrow{\simeq} \FZ_1(\CP)$. Any two such closed multi-fusion left $\CM$-modules $\CP$ and $\CQ$ are necessarily Morita equivalent \cite{eno08}.

\begin{figure}
$$
\raisebox{-30pt}{
  \begin{picture}(120,85)
   \put(0,15){\scalebox{1}{\includegraphics{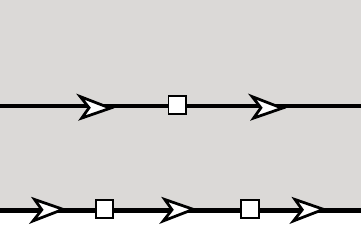}}}
   \put(0,15){
     \setlength{\unitlength}{.75pt}\put(0,0){
     \put(10,75)  {\scriptsize $\CM$}
     \put(120,20)  {\scriptsize $\CM$}
     \put(30,53) {\scriptsize $\CX$}
     \put(100,53) {\scriptsize $\CY$}
     \put(65,53){\scriptsize $\CW$}
     \put(63,-7) {\scriptsize $\CP$}
     \put(13,-7) {\scriptsize $\CP$}
     \put(115,-7) {\scriptsize $\CP$}
         
     }\setlength{\unitlength}{1pt}}
  \end{picture}} 
\quad\quad\quad\quad\quad\quad\quad\quad
\raisebox{-30pt}{
  \begin{picture}(120,85)
   \put(0,15){\scalebox{1}{\includegraphics{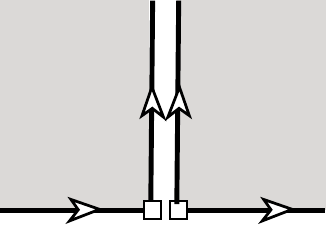}}}
   \put(0,15){
     \setlength{\unitlength}{.75pt}\put(0,0){
     \put(28,-8) {\scriptsize $\CP$}
     \put(100,-9) {\scriptsize $\CP$}
     
     \put(53,90) {\scriptsize $\CQ$}
     \put(67,90) {\scriptsize $\CQ'$}
     \put(10,50) {\scriptsize $\CM$}
     \put(105,50) {\scriptsize $\CM$}
         
     }\setlength{\unitlength}{1pt}}
  \end{picture}} 
$$
\caption{These pictures illustrate the physical meaning of the proof of Theorem$^{\mathrm{ph}}$\,\ref{pthm:cc-P}. 
}
\label{fig:cc-P}
\end{figure}

If we complete $\Sigma\CM$ by its condensation descendants, then we should also complete its boundary $\CP$ by including all condensation descendants and to form a $\Sigma\CM$-module. 
What is the condensation completion of $\CP$? It is clear that in order to include all possible defects on the boundary, we should not only include all possible 0d walls in $\CP$ but also those condensation descendants of $\CM$ fused into the boundary as illustrated in the first picture in Figure\,\ref{fig:cc-P} (a).

\begin{pthm} \label{pthm:cc-P}
The condensation completion of $\CP$ is given by $\Sigma\CP=\RMod_\CP(2\hilb)$, i.e. the 2-category of right $\CP$-modules in $2\hilb$.
\end{pthm}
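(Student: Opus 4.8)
The plan is to mimic, one categorical level up, the argument already used for $\Sigma\CM$ in Theorem$^{\mathrm{ph}}$\,\ref{pthm:cc-M}, but now relative to the boundary module $\CP$ rather than the bulk $\CM$. First I would set up the bookkeeping: $\CP$ is a closed multi-fusion left $\CM$-module, i.e. a unitary multi-fusion $1$-category with a braided equivalence $\psi_\CP:\CM\xrightarrow{\simeq}\FZ_1(\CP)$, and $\Sigma\CP=\RMod_\CP(2\hilb)$ is by definition the condensation completion (delooping) of $\CP$ as a fusion $1$-category. The claim has two parts glued together: (i) the objects/$1$-morphisms/$2$-morphisms of $\RMod_\CP(2\hilb)$ are exactly the $0$d and (boundary) $1$d defects of the boundary $\CP$ together with all condensation descendants obtained both from $0$d walls inside $\CP$ \emph{and} from pushing condensation descendants of $\CM$ onto the boundary, as pictured in Figure\,\ref{fig:cc-P}; and (ii) the horizontal and vertical fusions of these defects match the composition and the $\Sigma\CM$-module structure of $\RMod_\CP(2\hilb)$.

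For (i) I would argue as in the proof of Theorem$^{\mathrm{ph}}$\,\ref{pthm:cc-M}: a gapped $1$d domain wall on the boundary, sitting between two copies of the boundary phase, is described by a unitary multi-fusion $1$-category of the form $\fun_\CP(\CX,\CX)$ for a finite unitary right $\CP$-module $\CX$ (this uses the same module-category classification of gapped walls, \cite{dmno,kong,egno}), and a $0$d wall between $\fun_\CP(\CX,\CX)$ and $\fun_\CP(\CY,\CY)$ is $\fun_\CP(\CX,\CY)$ (cf. \cite{akz}). So the $2$-category assembled from all such defects is precisely $\RMod_\CP(2\hilb)$, with $0$-morphisms $\CX$, $1$-morphisms $\fun_\CP(\CX,\CY)$, and $2$-morphisms the natural transformations; the subtlety about choosing a distinguished object $x\in\CX$ to get the Karoubi completion on the nose (Remark\,\ref{rem:sigma-RMod}) applies verbatim. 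The key physical input here is that all boundary defects really are condensation descendants — nothing new, non-descendant, can appear on the boundary of a (non-chiral) bulk $\CM$ — which is exactly the Condensation-Completion Principle together with the fact that $\CP$ already encodes the particle-like boundary excitations and $\CM$ their bulk parents.

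For (ii) I would show the monoidal/module structure: $\Sigma\CM=\RMod_\CM(2\hilb)$ acts on $\RMod_\CP(2\hilb)$ via $\CX\boxtimes_\CP(-)$ on objects and the obvious functor $\fun_\CM(\CZ,\CZ')\boxtimes\fun_\CP(\CX,\CY)\to\fun_\CP(\CZ\boxtimes_\CM\CX,\CZ'\boxtimes_\CM\CY)$ on higher morphisms, and that this reproduces the horizontal fusion of a bulk descendant into a boundary defect drawn in Figure\,\ref{fig:cc-P}; the commutativity of the relevant squares follows from \cite[Theorem\,3.3.6]{kz} exactly as in the bulk case, using the canonical faithful functors into the relative-center-type categories. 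The vertical fusion along the boundary is the composition in $\RMod_\CP(2\hilb)$ and matches \cite[Theorem\,3.3.6]{kz} and diagram \eqref{diag:vertical-fusion} with $\CM$ replaced by $\CP$.

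The main obstacle I anticipate is \emph{not} part (ii), which is essentially a transcription of the bulk proof, but rather justifying that pushing condensation descendants of $\CM$ onto the boundary produces \emph{no defects beyond} $\RMod_\CP(2\hilb)$, i.e. that the two ways of generating boundary descendants (internal $0$d walls of $\CP$, and bulk descendants fused in) give the same $2$-category and do not overcount. Concretely one must check that $\fun_\CM(\CZ,\CZ)\boxtimes_\CM\CP$, for a bulk module $\CZ$, is again of the form $\fun_\CP(\CX,\CX)$ for a suitable $\CP$-module $\CX$ — this is where the closedness of $\CP$ (the braided equivalence $\psi_\CP:\CM\simeq\FZ_1(\CP)$) and the Morita theory of \cite{eno08,kz} do the real work — and that the induced functors on morphism categories are compatible. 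Once this reduction-to-$\CP$-modules lemma is in place, the rest is the same diagram chase as in Theorem$^{\mathrm{ph}}$\,\ref{pthm:cc-M}.
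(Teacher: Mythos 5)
Your proposal is correct and follows essentially the same route as the paper's proof: you identify the objects of $\RMod_\CP(2\hilb)$ with the gapped boundaries of $\CM$ obtained by fusing bulk condensation descendants into $\CP$, the $1$-morphisms with the $0$d walls $\fun_{\CP^\rev}(\CX,\CY)$, and the ``reduction-to-$\CP$-modules'' step you flag as the main obstacle is exactly what the paper supplies via the observation that $(\CQ')^\rev\boxtimes_\CM\CP$ has trivial Drinfeld center (by \cite[Theorem\ 3.3.6]{kz}), so that every gapped boundary of $\CM\simeq\FZ_1(\CP)$ is Morita-classified by a right $\CP$-module and realized as $\fun_{\CP^\rev}(\CX,\CX)$. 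The only difference is cosmetic: you additionally spell out the $\Sigma\CM$-module structure and horizontal fusion, which the paper's proof leaves implicit, recording only that the vertical fusion of $0$d walls matches composition of $1$-morphisms.
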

\pf
Let $\CQ$ and $\CQ'$ be two gapped boundaries of the 2d topological order $(\CM,0)$ as illustrated in Figure\,\ref{fig:cc-P} (b). They are unitary multi-fusion categories. Notice that $(\CQ')^\rev \boxtimes_\CM \CP$ is an anomaly-free 1d topological order, i.e. a unitary multi-fusion category with trivial Drinfeld center \cite[Theorem\ 3.3.6]{kz}. Therefore, it evaporates into the trivial 2d topological order $2\hilb$. In this way, we obtain a new gapped boundary $\CQ$ of $\CM$ by fusing condensation descendants in $\CM$ to the boundary $\CP$. As a consequence, the condensation completion of $\CP$ realizes all possible gapped boundaries of $\CM$, each of which can be labeled by a right $\CP$-module $\CX$ (in $2\hilb$) and realized as $\fun_{\CP^\rev}(\CX,\CX)$. The unique 0d domain wall between two such boundaries $\fun_{\CP^\rev}(\CX,\CX)$ and $\fun_{\CP^\rev}(\CY,\CY)$ is given by $\fun_{\CP^\rev}(\CX,\CY)$ (or $\fun_{\CP^\rev}(\CY,\CX)$ depending on the orientation). We illustrate 0-,1-morphisms in 
the 2-category $\RMod_\CP(2\hilb)$ in the following diagram:
\begin{align} \label{quiver2}
\xymatrix{
\CX \ar@(ul,ur)[]^{\fun_{\CP^\rev}(\CX,\CX)} \ar@/^/[rr]^{\fun_{\CP^\rev}(\CX,\CY)} & & \CY \ar@(ul,ur)[]^{\fun_{\CP^\rev}(\CX,\CX)} \ar@/^/[ll]^{\fun_{\CP^\rev}(\CY,\CX)}
}.
\end{align}
We see immediately that the 0d,1d domain walls in the condensation completion of $\CP$ are precisely encoded by 1-,0-morphisms in $\RMod_\CP(2\hilb)$. Moreover, similar to (\ref{diag:vertical-fusion}), we see the fusion of 0d walls in $\Sigma\CP$ coincides with the composition of 1-morphisms in $\RMod_\CP(2\hilb)$. 
\epf

We denote the $E_0$-center of a unitary 2-category $\CS$ by $\FZ_0(\CS) \coloneqq \fun_{2\hilb}(\CS,\CS)$. 
\begin{prop}
For a unitary multi-fusion 1-category $\CP$, there is a braided equivalence 
\be \label{eq:center-1-center}
\FZ_1(\CP) \xrightarrow{\simeq} \Omega(\FZ_0(\Sigma\CP)) \quad\quad \mbox{defined by $(z,\beta_{z,-}) \mapsto \{ \CX \xrightarrow{z\odot -} \CX \}_{\CX\in\Sigma\CP}$.} 
\ee
\end{prop}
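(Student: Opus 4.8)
The plan is to realise both sides as the looping $\Omega$ of a monoidal $2$-category and then to identify those monoidal $2$-categories. First I would rewrite the left-hand side using (a categorified version of) Eilenberg--Watts/Morita theory over $2\hilb$: recall from Theorem$^{\mathrm{ph}}$\,\ref{pthm:cc-P} that $\Sigma\CP = \RMod_\CP(2\hilb)$ is the $2$-category of finite unitary right $\CP$-module categories, so a $2\hilb$-linear endofunctor of $\Sigma\CP$ is given by relative tensoring $- \boxtimes_\CP \CB$ with a $\CP$-$\CP$-bimodule category $\CB$; composition of such endofunctors corresponds to $\boxtimes_\CP$ of bimodule categories and $\id_{\Sigma\CP}$ corresponds to the regular bimodule $\CP$. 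This yields an equivalence of monoidal $2$-categories
\[
\FZ_0(\Sigma\CP) = \fun_{2\hilb}(\Sigma\CP,\Sigma\CP) \;\simeq\; \BMod_{\CP|\CP}(2\hilb),
\]
the right-hand side being monoidal under $\boxtimes_\CP$ with unit $\CP$; here unitarity and semisimplicity of $\CP$ ensure all module categories in sight are exact, so these identifications are well behaved. This step can be quoted from the module theory of $2\hilb$ developed in \cite{gjf19,jf20}.

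Next I would apply $\Omega = \End(\text{unit})$ to both sides. Since $\Omega$ of a monoidal $2$-category is canonically braided (Eckmann--Hilton) and $\Omega$ of a monoidal equivalence is a braided equivalence, this gives
\[
\Omega\,\FZ_0(\Sigma\CP) \;\simeq\; \End_{\BMod_{\CP|\CP}(2\hilb)}(\CP) \;=\; \fun_{\CP|\CP}(\CP,\CP)
\]
as braided categories. It then remains to invoke the standard braided equivalence $\fun_{\CP|\CP}(\CP,\CP) \simeq \FZ_1(\CP)$ for a (multi-)fusion $1$-category $\CP$: a bimodule endofunctor $F$ of the regular bimodule is sent to $F(\one_\CP)$ together with the half-braiding read off from its bimodule structure, with inverse $z\mapsto z\otimes -$ where the left $\CP$-action is twisted by $\beta_{z,-}$ (see \cite{eno08,kz}); again the braiding on the left is the canonical $\Omega$-braiding, so the comparison is braided.

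Composing these braided equivalences produces $\FZ_1(\CP)\xrightarrow{\simeq}\Omega\FZ_0(\Sigma\CP)$, and I would check it is the stated map by tracing $(z,\beta_{z,-})$ through: it becomes the bimodule endofunctor $z\otimes-$ of $\CP$, hence the natural endotransformation of $\id_{\Sigma\CP}$ whose component at a right $\CP$-module $\CX$ is $\id_\CX\boxtimes_\CP(z\otimes-)$ acting on $\CX\simeq\CX\boxtimes_\CP\CP$, which is precisely the functor written $z\odot-$ in the statement; naturality in $\CX$ and the coherence of the resulting modification are automatic from functoriality of $\boxtimes_\CP$.

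The one genuinely non-formal ingredient is the first step: the categorified Eilenberg--Watts statement that $\fun_{2\hilb}(\RMod_\CP(2\hilb),\RMod_\CP(2\hilb))$ is the monoidal $2$-category of $\CP$-$\CP$-bimodule categories, \emph{with} the correct identification of units. Granting \cite{gjf19} this is available; a more self-contained argument would instead note that $\Sigma\CP$ is generated under condensations by the regular right module $\CP$, so that a $2\hilb$-linear endofunctor is determined up to equivalence, together with its monoidal structure under composition, by its value on $\CP$ --- a right $\CP$-module endofunctor of $\CP$, i.e.\ an object of $\CP$ by Eilenberg--Watts --- and that pseudonaturality with respect to all left-multiplication functors $L_w = w\otimes-$ forces this object to carry a half-braiding, hence to lie in $\FZ_1(\CP)$, with the hexagon axiom coming from the pseudonaturality coherences and the braiding from the Eckmann--Hilton interchange. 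Everything after Step 1 is bookkeeping.
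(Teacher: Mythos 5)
Your proof is correct, but it is packaged differently from the paper's. The paper argues directly and minimally: it observes that $(z,\beta_{z,-})\mapsto\{z\odot-\}$ is well defined because the half-braiding $\beta_{z,-}$ is exactly what makes $z\odot-$ a $\CP$-module endofunctor of each $\CX$ intertwining the $\CP$-action, and then it exhibits an explicit quasi-inverse $\phi\mapsto\phi_\CP(\one_\CP)$ (evaluation of the component at the regular module on the tensor unit), leaving the verification as routine. You instead factor the statement through the monoidal $2$-equivalence $\FZ_0(\Sigma\CP)\simeq\BMod_{\CP|\CP}(2\hilb)$ and then apply $\Omega$, reducing to the standard braided equivalence $\fun_{\CP|\CP}(\CP,\CP)\simeq\FZ_1(\CP)$. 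This is a legitimate alternative; note that the identification $\FZ_0(\Sigma\CP)\simeq\BMod_{\CP|\CP}(2\hilb)$ is precisely the ``evident equivalence'' the paper invokes in the proof of the \emph{next} result (Theorem$^{\mathrm{ph}}$\,\ref{thm:fun-PP}, $\Sigma\FZ_1(\CP)\simeq\FZ_0(\Sigma\CP)$), so your argument in effect derives the Proposition as the looping of that theorem. What your route buys is that the braided structure comes for free from Eckmann--Hilton applied to $\Omega$ of a monoidal $2$-category, whereas the paper's direct proof leaves the compatibility of braidings inside ``routine to check''; what it costs is reliance on the categorified Eilenberg--Watts input from \cite{gjf19}, which you correctly flag as the one non-formal ingredient. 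Your final unwinding, sending $(z,\beta_{z,-})$ to the endotransformation of $\id_{\Sigma\CP}$ with component $\id_\CX\boxtimes_\CP(z\otimes-)$ on $\CX\simeq\CX\boxtimes_\CP\CP$, recovers exactly the map $z\odot-$ in the statement, and your implicit inverse (evaluation of a bimodule endofunctor of the regular bimodule at $\one_\CP$) agrees with the paper's.
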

\pf
Notice that $\CX \xrightarrow{z \odot -} \CX$ is a well-defined $\CP$-module functor, which intertwines the $\CP$-action via the half-braiding $\beta_{z,-}$. This implies that (\ref{eq:center-1-center}) gives a well-defined functor. 
Conversely, there is a functor $ \Omega(\FZ_0(\Sigma\CP)) \to \FZ_1(\CP)$ defined by $\phi \mapsto \phi_\CP(\one_\CP)$. It is routine to check that it is well-defined and gives the quasi-inverse of (\ref{eq:center-1-center}). 
\epf

\begin{figure}
$$
\raisebox{-30pt}{
  \begin{picture}(130,90)
   \put(0,0){\scalebox{1}{\includegraphics{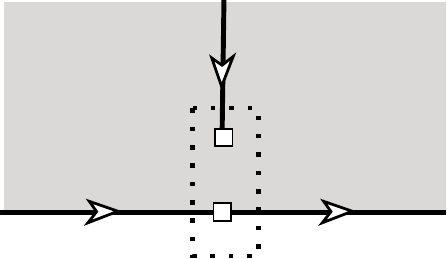}}}
   \put(0,-10){
     \setlength{\unitlength}{.75pt}\put(0,0){
     \put(83,48)  {\scriptsize $X$}
     \put(15,82)  {\scriptsize $\FZ_1(\CP)$}
     \put(135,82)  {\scriptsize $\FZ_1(\CP)$}
     \put(35,17) {\scriptsize $\CP$}
     \put(83,19) {\scriptsize $\CP$}
     \put(125,17) {\scriptsize $\CP$}
     \put(43,120) {\scriptsize $\fun_{\CP|\CP}(X\boxtimes_{\FZ_1(\CP)} \CP, X\boxtimes_{\FZ_1(\CP)} \CP)$}
     \put(65,5) {\scriptsize $X\boxtimes_{\FZ_1(\CP)} \CP$}
         
     }\setlength{\unitlength}{1pt}}
  \end{picture}} 
  \quad\quad\quad\quad\quad\quad\quad\quad\quad
  \raisebox{-30pt}{
  \begin{picture}(100,90)
   \put(0,0){\scalebox{1}{\includegraphics{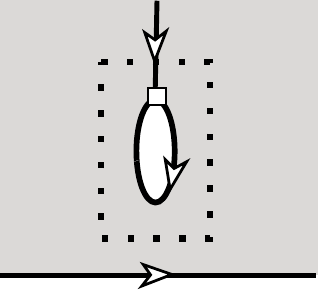}}}
   \put(0,-10){
     \setlength{\unitlength}{.75pt}\put(0,0){
     \put(67,85)  {\scriptsize $Y$}
     \put(3,110)  {\scriptsize $\FZ_1(\CP)$}
     \put(95,110)  {\scriptsize $\FZ_1(\CP)$}
     \put(67,45) {\scriptsize $\CP$}
     \put(125,17) {\scriptsize $\CP$}
     \put(13,130) {\scriptsize $\fun_{\FZ_1(\CP)^\rev}(Y\boxtimes_{\CP^\rev\boxtimes\CP} \CP, Y\boxtimes_{\CP^\rev\boxtimes\CP} \CP)$}
     \put(60,8) {\scriptsize $\CP$}
     \put(85,65) {\scriptsize $Y\boxtimes_{\CP^\rev\boxtimes\CP} \CP$}
         
     }\setlength{\unitlength}{1pt}}
  \end{picture}} 
$$
\caption{Based on the boundary-bulk relation \cite{kwz1}\cite[Theorem 3.3.7]{kz}, these pictures illustrate the physical meanings of the functor $\Phi$ and its inverse in the proof of Theorem$^{\mathrm{ph}}$\,\ref{thm:fun-PP}.}
\label{fig:fun-PP-2}
\end{figure}

\begin{pthm} \label{thm:fun-PP}
For a unitary multi-fusion 1-category $\CP$, there exists a natural monoidal equivalence: 
\be \label{eq:z0-sigma}
\Sigma\FZ_1(\CP) \simeq \FZ_0(\Sigma\CP) \coloneqq \fun_{2\hilb}(\Sigma\CP,\Sigma\CP). 
\ee
\end{pthm}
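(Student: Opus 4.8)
The plan is to exhibit the equivalence \eqref{eq:z0-sigma} as a composite of two Morita-theoretic (``categorified Eilenberg--Watts'') identifications, with the regular bimodule $\CP$ itself serving as the connecting bimodule, exactly as suggested by Figure~\ref{fig:fun-PP-2}. Throughout I use $\Sigma\CP=\RMod_\CP(2\hilb)$ and $\Sigma\FZ_1(\CP)=\RMod_{\FZ_1(\CP)}(2\hilb)$; the latter is monoidal under the relative product $\boxtimes_{\FZ_1(\CP)}$ induced by the braiding of $\FZ_1(\CP)$, while $\FZ_0(\Sigma\CP)=\fun_{2\hilb}(\Sigma\CP,\Sigma\CP)$ is monoidal under composition of functors. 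The resulting statement should moreover be compatible with the looped version \eqref{eq:center-1-center} established in the preceding Proposition, which serves as a consistency check.

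First I would record the Eilenberg--Watts identification for $\Sigma\CP$: the assignment $\CX\mapsto\big(-\boxtimes_\CP\CX\big)$ is a monoidal equivalence between the $2$-category of finite semisimple $\CP$-$\CP$-bimodule categories (equivalently $\RMod_{\CP\boxtimes\CP^\rev}(2\hilb)$), with monoidal structure the relative composition $\boxtimes_\CP$, and $\fun_{2\hilb}(\RMod_\CP(2\hilb),\RMod_\CP(2\hilb))=\FZ_0(\Sigma\CP)$; this is the $2$-categorical form of Morita theory available from \cite{gjf19,jf20} and already implicit in \cite{kz}. Secondly, the regular bimodule $\CP$ carries a commuting central left $\FZ_1(\CP)$-action (forget half-braidings along $\FZ_1(\CP)\to\CP$, then tensor); by \cite{eno08} (see also \cite[Thm.~3.3.6, 3.3.7]{kz}) one has $\fun_{\CP\boxtimes\CP^\rev}(\CP,\CP)\simeq\FZ_1(\CP)$ and $\CP$ is invertible as a $\FZ_1(\CP)$-$(\CP\boxtimes\CP^\rev)$-bimodule, i.e. it witnesses a Morita equivalence $\FZ_1(\CP)\simeq_{\mathrm{Mor}}\CP\boxtimes\CP^\rev$. (Since $\CP$ is only multi-fusion, its unit may fail to be simple, but $\FZ_1(\CP)$ remains a fusion category and this argument is unaffected.) Hence $X\mapsto X\boxtimes_{\FZ_1(\CP)}\CP$ is an equivalence $\RMod_{\FZ_1(\CP)}(2\hilb)\xrightarrow{\ \simeq\ }\RMod_{\CP\boxtimes\CP^\rev}(2\hilb)$, and composing with the first identification defines
\[
\Phi\colon \Sigma\FZ_1(\CP)\xrightarrow{\ \simeq\ }\FZ_0(\Sigma\CP),\qquad \Phi(X)\coloneqq\big(-\boxtimes_\CP(X\boxtimes_{\FZ_1(\CP)}\CP)\big),
\]
which is the functor pictured on the left of Figure~\ref{fig:fun-PP-2}, with inverse the construction on the right (tensor a $\CP$-bimodule down over $\CP^\rev\boxtimes\CP$ to obtain a right $\FZ_1(\CP)$-module).

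It then remains to upgrade $\Phi$ to a monoidal equivalence. The needed comparison $2$-isomorphism $\Phi(X)\circ\Phi(X')\simeq\Phi(X\boxtimes_{\FZ_1(\CP)}X')$ is, after unwinding the Eilenberg--Watts dictionary, a coherent equivalence of $\CP$-bimodules
\[
(X\boxtimes_{\FZ_1(\CP)}\CP)\ \boxtimes_\CP\ (X'\boxtimes_{\FZ_1(\CP)}\CP)\ \simeq\ (X\boxtimes_{\FZ_1(\CP)}X')\ \boxtimes_{\FZ_1(\CP)}\CP .
\]
I would produce this by rewriting the left-hand side as $X\boxtimes_{\FZ_1(\CP)}\big(\CP\boxtimes_\CP(X'\boxtimes_{\FZ_1(\CP)}\CP)\big)$, collapsing $\CP\boxtimes_\CP(-)\simeq\mathrm{id}$ on left $\CP$-modules, and then using the braiding of $\FZ_1(\CP)$ to reassociate the two instances of $\boxtimes_{\FZ_1(\CP)}$; the unit datum $\Phi(\FZ_1(\CP))=-\boxtimes_\CP\CP\simeq\id_{\Sigma\CP}$ is immediate. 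One also has to verify the associativity (hexagon-type) coherence satisfied by this comparison.

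The main obstacle is precisely this monoidal coherence: one must check that the $E_2$ (braided) structure on $\FZ_1(\CP)$ that makes $\boxtimes_{\FZ_1(\CP)}$ a monoidal structure on $\RMod_{\FZ_1(\CP)}(2\hilb)$ is transported, along the Morita bimodule $\CP$, exactly to the $E_1$ (composition) structure on $\FZ_0(\Sigma\CP)$ --- equivalently, that the identification $\Sigma\FZ_1(\CP)\simeq\mathrm{End}_{\mathbf{Mor}}(\CP)$ inside the Morita $3$-category is monoidal. This, together with the need to handle relative products $\boxtimes_\CP$ of module $2$-categories with all their balancing and associativity coherences, lives in the part of the theory of (multi-)fusion $n$-categories and Karoubi completion that is not yet fully rigorous \cite{gjf19,jf20}, which is why the statement is only a Theorem$^{\mathrm{ph}}$. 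A secondary subtlety is the word ``natural'': since $\FZ_1(-)$ is not functorial for arbitrary monoidal functors, naturality should be read as functoriality with respect to (braided) equivalences of $\CP$, or as a statement formulated inside the Morita $3$-category.
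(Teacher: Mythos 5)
Your proposal is correct and follows essentially the same route as the paper: the same functor $\Phi(X) = X\boxtimes_{\FZ_1(\CP)}\CP$ into $\CP$-$\CP$-bimodules (identified with $\FZ_0(\Sigma\CP)$), with the same inverse $Y\mapsto Y\boxtimes_{\CP^\rev\boxtimes\CP}\CP$. You merely make explicit what the paper leaves as ``routine'' --- the Morita invertibility of $\CP$ as an $\FZ_1(\CP)$-$(\CP\boxtimes\CP^\rev)$-bimodule via \cite{eno08} and the monoidal coherence of $\Phi$ --- which is a useful elaboration but not a different argument.
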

\begin{proof}
We have an evident equivalence $\FZ_0(\Sigma\CP) \simeq \BMod_{\CP|\CP}(2\hilb)$ and a monoidal functor $\Phi: \Sigma\FZ_1(\CP) \to \BMod_{\CP|\CP}(2\hilb)$ defined by $X\mapsto X \boxtimes_{\FZ_1(\CP)} \CP$ for objects and $f\mapsto f \boxtimes_{\FZ_1(\CP)} \id_\CP$ for 1-morphisms. It is routine to check that the assignment $Y \mapsto Y\boxtimes_{\CP^\rev\boxtimes\CP} \CP$ for objects and $g\mapsto \id_\CP\boxtimes_{\CP^\rev\boxtimes\CP}g$ for 1-morphisms defines an inverse of $\Phi$. The physical meanings of the construction of $\Phi$ and its quasi-inverse are illustrated in Figure\,\ref{fig:fun-PP-2}. 
\end{proof}

\begin{rem} \label{rem:fun-PP-proof-2}
Although the proof of Theorem$^{\mathrm{ph}}$\,\ref{thm:fun-PP} does have some physical meanings, it is not directly related to the condensation completion. We would like to sketch the physical meaning of Theorem$^{\mathrm{ph}}$\,\ref{thm:fun-PP} in terms of condensation completion. We denote the indecomposable $\CP$-modules in $2\hilb$ by $\CX_i$ for $i=1,\cdots, N$, and set $\CP_{\CX_i}^\vee \coloneqq \fun_{\CP^\rev}(\CX_i, \CX_i)$. To define a non-zero functor $F: \Sigma\CP \to \Sigma\CP$, we need to construct a non-zero monoidal functor $F_i:\CP_{\CX_i}^\vee \to \CP_{F(\CX_i)}^\vee$ for $i=1,\cdots, N$. By \cite[Theorem\ 3.2.3]{kz}, defining each $F_i$ amounts to choosing a closed multi-fusion $\FZ_1(\CP)$-$\FZ_1(\CP)$-bimodule $\FZ_1(\CP)_{\CY_i}^\vee \coloneqq \fun_{\FZ_1(\CP)^\rev}(\CY_i,\CY_i)$ for $\CY_i\in\Sigma\FZ_1(\CP)$ equipped with a monoidal equivalence $\FZ_1(\CP)_{\CY_i}^\vee\boxtimes_{\FZ_1(\CP)}  \CP_{\CX_i}^\vee \simeq \CP_{F(\CX_i)}^\vee$. Then $F_i$ can be constructed explicitly as the composed functor: 
\[
F_i: \CP_{\CX_i}^\vee \to \FZ_1(\CP)_{\CY_i}^\vee\boxtimes_{\FZ_1(\CP)}  \CP_{\CX_i}^\vee \xrightarrow{\simeq}\CP_{F(\CX_i)}^\vee.
\]
We illustrate the physical meaning of this construction in the following picture,
$$
\raisebox{-30pt}{
  \begin{picture}(130,90)
   \put(-40,15){\scalebox{1}{\includegraphics{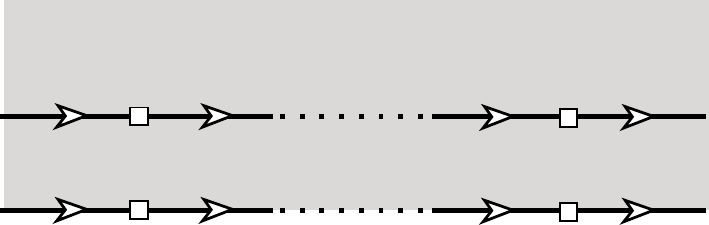}}}
   \put(-40,15){
     \setlength{\unitlength}{.75pt}\put(0,0){
     \put(183,52)  {\scriptsize $\CM_{\CY_{N-1}}^\vee$}
     \put(233,52)  {\scriptsize $\CM_{\CY_{N}}^\vee$}
     \put(20,52) {\scriptsize $\CM_{\CY_1}^\vee$}
     \put(81,52) {\scriptsize $\CM_{\CY_2}^\vee$}
     \put(20,-8) {\scriptsize $\CP_{\CX_1}^\vee$}
     \put(81,-7) {\scriptsize $\CP_{\CX_2}^\vee$}
     \put(185,-8){\scriptsize $\CP_{\CX_{N-1}}^\vee$}
     \put(240,-7) {\scriptsize $\CP_{\CX_{N}}^\vee$}
     \put(110,70){\scriptsize $\CM=\FZ_1(\CP)$}
     \put(45,50) {\scriptsize $(\CQ,q)$}
         
     }\setlength{\unitlength}{1pt}}
  \end{picture}} 
$$
where the 0d wall between $\CM_{\CY_1}^\vee$ and $\CM_{\CY_2}^\vee$ is given by a pair $(\CQ,q)$, where $\CQ=\fun_{\FZ_1(\CP)}(\CY_1,\CY_2)$ and $q\in\CQ$. Filling the 0d walls between $\CM_{\CY_i}^\vee$'s to give a construction of a functor $F$ is physically natural but not yet mathematically necessary at this stage (unless we apply Theorem$^{\mathrm{ph}}$\,\ref{thm:fun-PP}). Let us restrict the construction of $F$ to this type. Note that the data $q$ is needed to determine the functor $F$ on morphisms: 
\begin{align*}
\fun_{\CP^\rev}(\CX_i, \CX_j) &\xrightarrow{F} \CQ \boxtimes_{\FZ_1(\CP)} \fun_{\CP^\rev}(\CX_i, \CX_j) \simeq \fun_{\CP^\rev}(F(\CX_i), F(\CX_j))  \\
f &\mapsto q\boxtimes_{\FZ_1(\CP)} f.
\end{align*}
When $\CY_2=\CY_1$, it is necessary that $q=\one_{\CM_{\CY_1}^\vee}$. Moreover, $(\CQ,q)$ is required to be invertible. We spell out its precise meaning below. Let $(\CQ',q')$ be the 0d domain wall between $\CM_{\CY_2}^\vee$ and $\CM_{\CY_1}^\vee$. We must have
\begin{align*}
(\CQ,q)\boxtimes_{\CM_{\CY_2}^\vee} (\CQ',q') &= (\CQ\boxtimes_{\CM_{\CY_2}^\vee} \CQ', q\boxtimes_{\CM_{\CY_2}^\vee} q') \simeq (\CM_{\CY_1}^\vee, \one_{\CM_{\CY_1}^\vee}),   \\
(\CQ',q')\boxtimes_{\CM_{\CY_1}^\vee} (\CQ,q) &= (\CQ'\boxtimes_{\CM_{\CY_1}^\vee} \CQ, q'\boxtimes_{\CM_{\CY_1}^\vee} q) \simeq (\CM_{\CY_2}^\vee, \one_{\CM_{\CY_2}^\vee}). 
\end{align*}
As a consequence, the functor $\CM_{\CY_1}^\vee \to \CM_{\CY_2}^\vee$ defined by $f \mapsto q\boxtimes_{\CM_{\CY_2}^\vee} f \boxtimes_{\CM_{\CY_2}^\vee} q'$ is a monoidal equivalence. In other words, we have $\CM_{\CY_i}^\vee \simeq^\otimes \CM_{\CY_j}^\vee$ and $\CY_i\simeq\CY_j$. Then we can see that to define such a functor $F\in \fun_{2\hilb}(\Sigma\CP,\Sigma\CP)$ is equivalent to give an object $\CY$ in $\Sigma\FZ_1(\CP)$. 
\end{rem}

\begin{rem}
By Theorem${}^{\mathrm{ph}}$\,\ref{pthm:center-repM=trivial}, we obtain $\FZ_1(\FZ_0(\Sigma\CP))\simeq 2\hilb$, i.e. a typical example of the center of a center being trivial (recall the fact $\FZ_2(\FZ_1(\CP))\simeq 1\hilb$). This also means that $\Sigma\CP$ is a reasonable 2-codimensional description of an anomalous 1d topological order $\CP$. We expect this result and (\ref{eq:z0-sigma}) to hold for unitary multi-fusion $n$-categories. 
\end{rem}

\begin{rem} \label{rem:cc-gapless}
Conversely, if $\CS$ is a finite unitary 2-category such that $\FZ_1(\FZ_0(\CS))\simeq 2\hilb$, however, it is not true that $\CS \simeq \Sigma\CP$ for a unitary multi-fusion category $\CP$. Indeed, by By Theorem$^{\mathrm{ph}}$\,\ref{pthm:T=sigmaM}, $\FZ_0(\CS) \simeq \Sigma\CM$ for a UMTC $\CM$. As a consequence, $\CS$ describes a 1d boundary of the condensation completion of $\CM$. 
However, in general, $\CM$ is chiral, then this 1d boundary $\CS$ is necessarily gapless. By \cite{kz19a}, $\CS$ should be given by the condensation completion of a $\CB$-enriched unitary multi-fusion category ${}^\CB\CX$ for a UMTC $\CB$ and a unitary multi-fusion category $\CX$ such that $\FZ_1({}^\CB\CX)\simeq\CM$. Similar to Figure\,\ref{fig:cc-P}(b), we see that the condensation completion of ${}^\CB\CX$ produces all possible chiral gapless boundaries of $\CM$ and 0d walls between them. The precise mathematical description of this condensation completion of ${}^\CB\CX$  should be the delooping of ${}^\CB\CX$ in a proper sense. This involves some technical issues in the mathematical theory of enriched fusion categories that are beyond this work. We assume that this can be done, and denote this condensation completion by $\Sigma({}^\CB\CX)$. Then we obtain $\FZ_1(\FZ_0(\Sigma({}^\CB\CX)))\simeq 2\hilb$. Since this description of chiral gapless boundaries includes that of gapped boundaries as special cases, we obtain that $\CS\simeq\Sigma({}^\CB\CX)$ for a $\CB$-enriched unitary multi-fusion category ${}^\CB\CX$ such that $\FZ_1({}^\CB\CX)\simeq\CM$. 
\end{rem}

\subsection{2d SPT/SET orders} \label{sec:2d-SET-2}
The fact that the precise categorical description depends on the codimensions also applies to SET orders. Recall that an anomaly-free 2d SET order is described by two braided embeddings: 
$\CE\hookrightarrow \CC \hookrightarrow \CM$,
where all three unitary braided fusion 1-categories $\CE,\CC,\CM$ describe only particle-like excitations. They are clearly 0-codimensional descriptions. If we want to find a 1-codimensional description of an anomaly-free 2d SET order viewed as a boundary of the trivial 3d SPT order, then we need to do condensation completion. In particular, the condensation completions of $\CE$, $\CC$ and $\CM$ are given by fusion 2-categories $\Sigma\CE$, $\Sigma\CC$ and $\Sigma\CM$, respectively, as we explained in Remark\,\ref{rem:premodular}, and $\Sigma\CB\simeq\RMod_\CB(2\hilb)$ for any braided fusion category $\CB$.  


Since $\CE$ describes the bulk excitations of the trivial 2d SPT order, its condensation completion $\Sigma\CE$ describes all condensation descendants of symmetry charges in the trivial 2d SPT order. Its 3d bulk must be the trivial 3d SPT order, which is mathematically described by $\FZ_1(\Sigma\CE)$ \cite{kwz1,kwz2}. Since $\Sigma\CE$ is a symmetric fusion 2-category, we obtain a canonical braided embedding $\iota_0: \Sigma\CE \hookrightarrow \FZ_1(\Sigma\CE)$, which is necessarily full because $\Sigma\CE=2\Rep(G)$ or $2\Rep(G,z)$. Therefore, we obtain a pair $(\FZ_1(\Sigma\CE), \iota_0)$, which is precisely the 0-codimensional description of the trivial 3d SPT order (recall Theorem$^{\mathrm{ph}}$\,\ref{pthm:main-1}). 

Since $\CC$ describes the particle-like excitations of an anomaly-free 2d SET$_{/\CE}$ order, we should expect that the monoidal center of its condensation completion $\Sigma\CC$ should give again the trivial 3d SPT order. Mathematically, it means that there is a braided embedding $\iota: \Sigma\CE \hookrightarrow \FZ_1(\Sigma\CC)$ and a braided equivalence $\phi: \FZ_1(\Sigma\CE) \simeq \FZ_1(\Sigma\CC)$ rendering the following diagram commutative: 
\be \label{diag:rep-EC}
\raisebox{2em}{\xymatrix@R=1.5em{
& \Sigma\CE \ar@{^(->}[dl]_{\iota_0} \ar@{^(->}[dr]^{\iota} & \\
\FZ_1(\Sigma\CE) \ar[rr]_\simeq^\phi & & \FZ_1(\Sigma\CC)\, .
}}
\ee
We illustrate relations between $\Sigma\CE$ and $\Sigma\CC$ and their 3d bulks in Figure\,\ref{fig:2d-SET}. The braided equivalence $\phi$ is physically achieved by tunneling through the 2d invertible domain wall labeled by $\CY_\phi$, which is canonically associated to $\phi$. When $\CC=\CE$, $\CY_\phi$ is simply a 2d SPT order. Notice that this mathematical description is completely parallel to that of anomaly-free 1d SET orders (recall (\ref{diag:EA})). Therefore, we obtain that an anomaly-free 2d SET$_{/\CE}$ order is mathematically characterized by a pair $(\CC,\phi)$, where $\CC$ is a UMTC$_{/\CE}$ and $\phi: \FZ_1(\Sigma\CE) \to \FZ_1(\Sigma\CC)$ is a braided equivalence. 



\begin{rem}\label{rem:over-E=ff}
Since $\Sigma\CE$ includes symmetry charges and their condensation descendants, they can be moved in and out of the 2d boundary freely. Therefore, we expect that the composed functor $\Sigma\CE \stackrel{\iota}{\hookrightarrow} \FZ_1(\Sigma\CC) \to \Sigma\CC$ is monoidal and faithful. Mathematically, using $\Sigma\CE=\kar(B\CE)$ instead of $\RMod_\CE(2\hilb)$ (recall Remark\,\ref{rem:sigma-RMod}), one can show that there is a canonical faithful functor $\Sigma\CE \to \Sigma\CC$, which necessarily factors through $\FZ_1(\Sigma\CC) \to \Sigma\CC$. 
\end{rem}

\begin{rem}
It is not clear to us if the natural isomorphisms $\phi\circ\iota_0\simeq\iota$ rendering the diagram (\ref{diag:rep-EC}) commutative and the higher isomorphisms of them have any physical meanings. So we ignore them. See Remark\,\ref{rem:spt-groupoid} for the physical meanings of other higher isomorphisms. 
\end{rem}

When $\CC=\CE$, the pair $(\CE,\phi)$ gives a mathematical description of a 2d SPT order. Moreover, the stacking of SPT orders is described by the composition of autoequivalences. We summarize the classification result below.

We denote the underlying group of the category of braided autoequivalences of
$\FZ_1(\Sigma\CE)$ preserving $\iota_0$ by
$\Aut^{br}(\FZ_1(\Sigma\CE),\iota_0)$, and the underlying set of the category of braided equivalences from $(\FZ_1(\Sigma\CE),\iota_0)$ to $(\FZ_1(\Sigma\CC),\iota)$ by $\mathrm{BrEq}((\FZ_1(\Sigma\CE),\iota_0),(\FZ_1(\Sigma\CC),\iota))$, which is clearly an $\Aut^{br}(\FZ_1(\Sigma\CE),\iota_0)$-torsor. Moreover, an equivalence $\phi\in\Aut^{br}(\CC,\eta_\CC)$ of the 2d boundary theory should automatically induces an autoequivalence in $\Aut^{br}(\FZ_1(\Sigma\CC), \iota)$ of the 3d bulk. Hence, there is a natural $\Aut^{br}(\CC,\eta_\CC)$-action on $\mathrm{BrEq}((\FZ_1(\Sigma\CE),\iota_0),(\FZ_1(\Sigma\CC),\iota))$. 

\begin{pthm} \label{pthm:2d-SET-2}
The group of 2d SPT$_{/\CE}$ orders (with the multiplication defined by the stacking and the identity element defined by the trivial SPT$_{/\CE}$ order) is isomorphic to the group $\Aut^{br}(\FZ_1(\Sigma\CE),\iota_0)$. An anomaly-free 2d SET$_{/\CE}$ orders with particle-like excitations given by $(\CC,\eta_\CC)$ is characterized by $\phi\in\mathrm{BrEq}(\FZ_1(\Sigma\CE),\FZ_1(\Sigma\CC))$. Moreover, we have 
\[
\{ \mbox{2d anomaly-free SET$_{/\CE}$ orders with topological excitations $\CC$} \}  = \frac{\mathrm{BrEq}((\FZ_1(\Sigma\CE),\iota_0),(\FZ_1(\Sigma\CC),\iota))}{\Aut^{br}(\CC,\eta_\CC)}.
\]
\end{pthm}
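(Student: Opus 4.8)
The plan is to assemble the ingredients established in Section~\ref{sec:cc} and in the preceding discussion of this subsection into the three asserted statements, proceeding in close parallel with the $1$d case (Theorem$^{\mathrm{ph}}$\,\ref{pthm:1d-spt-2}). First I would recall that the particle-like excitations of an anomaly-free $2$d SET$_{/\CE}$ order form a unitary modular $1$-category $\CC$ over $\CE$ (with $\eta_\CC:\CE\xrightarrow{\simeq}\FZ_2(\CC)$ an equivalence onto the M\"uger center), and that, by the Condensation-Completion Principle, the correct $1$-codimensional description of this order viewed as a gapped boundary of its $3$d bulk is the condensation completion $\Sigma\CC\simeq\RMod_\CC(2\hilb)$, with the bulk given by $\FZ_1(\Sigma\CC)$. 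By Theorem$^{\mathrm{ph}}$\,\ref{pthm:trivial-bulk} this bulk is the trivial $3$d SPT order, whose standard (gauging-the-symmetry) description is $(\FZ_1(\Sigma\CE),\iota_0)$; hence the SET order carries the additional datum of a braided equivalence $\phi:\FZ_1(\Sigma\CE)\xrightarrow{\simeq}\FZ_1(\Sigma\CC)$ preserving the symmetry charges, i.e.\ fitting into the commutative diagram~\eqref{diag:rep-EC}, so $\phi\in\mathrm{BrEq}((\FZ_1(\Sigma\CE),\iota_0),(\FZ_1(\Sigma\CC),\iota))$. I would then argue, exactly as in $1$d, that the pair $(\CC,\phi)$ is a complete invariant: the $1$-codimensional datum $\Sigma\CC$ together with its identification $\phi$ with the bulk already records every topological defect of every codimension and the boundary-bulk relation, and nothing further is needed. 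This is the second assertion of the theorem.

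Next I would pin down when two such pairs give the same order. Fixing the equivalence class of $\CC$, an equivalence of $2$d SET$_{/\CE}$ orders restricts on the boundary to a braided autoequivalence $\alpha$ of $\CC$ preserving $\eta_\CC$ — equivalently, to a monoidal autoequivalence of $\Sigma\CC$ over $\Sigma\CE$, since looping recovers $\CC$ from $\Sigma\CC$ — and it acts on the bulk identification by postcomposition, $\phi\mapsto\FZ_1(\Sigma\alpha)\circ\phi$, where $\FZ_1(\Sigma\alpha)$ preserves $\iota$ precisely because $\alpha$ preserves $\eta_\CC$. No autoequivalence of the bulk other than the identity is divided out: the whole point of the boundary-bulk approach is that inequivalent identifications $\phi$ yield genuinely inequivalent SET orders — this is the ``missing datum'' of the Introduction — and $\mathrm{BrEq}((\FZ_1(\Sigma\CE),\iota_0),(\FZ_1(\Sigma\CC),\iota))$ is an $\Aut^{br}(\FZ_1(\Sigma\CE),\iota_0)$-torsor. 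Therefore the set of $2$d anomaly-free SET$_{/\CE}$ orders with topological excitations $\CC$ is $\mathrm{BrEq}((\FZ_1(\Sigma\CE),\iota_0),(\FZ_1(\Sigma\CC),\iota))/\Aut^{br}(\CC,\eta_\CC)$, which is the displayed formula.

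The first assertion is the specialization $\CC=\CE$: a $2$d SPT$_{/\CE}$ order has no non-trivial particle-like excitations, so its category of excitations is $\CE$ with $\eta_\CC=\id_\CE$, giving $\Sigma\CC=\Sigma\CE$ and $\Aut^{br}(\CE,\id_\CE)$ trivial (a braided autoequivalence $\alpha$ with $\alpha\circ\id_\CE\simeq\id_\CE$ is isomorphic to the identity). Hence, as a set, the $2$d SPT$_{/\CE}$ orders are $\mathrm{BrEq}((\FZ_1(\Sigma\CE),\iota_0),(\FZ_1(\Sigma\CE),\iota_0))=\Aut^{br}(\FZ_1(\Sigma\CE),\iota_0)$. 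For the group law I would show that stacking corresponds to composition of autoequivalences, by stacking the two boundary-bulk configurations as in Figure~\ref{fig:2d-SET} (the $2$d analogue of Figure~\ref{fig:pic-aut}(b)): the two copies of the trivial $3$d SPT order stack, under the symmetry-preserving stacking, again to the trivial $3$d SPT order, and the two identifications glue to $\phi_1\circ\phi_2$, so that the $1$d domain wall associated with the product is $\CY_{\phi_1\circ\phi_2}\simeq\CY_{\phi_1}\boxtimes_{(\FZ_1(\Sigma\CE),\iota_0)}\CY_{\phi_2}$. This identifies the group of $2$d SPT$_{/\CE}$ orders with $(\Aut^{br}(\FZ_1(\Sigma\CE),\iota_0),\circ)$, which is the first assertion.

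The main obstacle — and the reason the statement is a Theorem$^{\mathrm{ph}}$ rather than a theorem — is twofold. First, the claim that $(\CC,\phi)$ is a genuinely \emph{complete} invariant: making precise that the natural isomorphisms filling diagram~\eqref{diag:rep-EC} and their higher coherences may be ignored is exactly the kind of question the Introduction assigns to a ``complete-yet-unknown theory''. Second, the compatibility of stacking with composition in the SPT case: even in $1$d this required a nontrivial argument (the analogue of Theorem$^{\mathrm{ph}}$\,\ref{pthm:star=composition} together with Figure~\ref{fig:pic-aut}), and promoting it to $2$d requires the $2$-categorical version of that picture, a workable notion of the symmetry-preserving stacking of unitary fusion $2$-categories, and the functoriality of $\FZ_1$ and $\Sigma$ — all of which rest on foundations of multi-fusion $n$-categories that are only sketched here, and on Theorem$^{\mathrm{ph}}$\,\ref{pthm:center-repM=trivial} and~\ref{pthm:T=sigmaM} for the underlying anomaly-free case.
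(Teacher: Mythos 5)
Your proposal is correct and follows essentially the same route as the paper: the boundary--bulk relation plus Theorem$^{\mathrm{ph}}$\,\ref{pthm:trivial-bulk} and the condensation completion $\Sigma\CC$ produce the datum $\phi$ in diagram~(\ref{diag:rep-EC}) in exact parallel with the 1d case, the quotient arises from the induced $\Aut^{br}(\CC,\eta_\CC)$-action on the torsor $\mathrm{BrEq}((\FZ_1(\Sigma\CE),\iota_0),(\FZ_1(\Sigma\CC),\iota))$, and the SPT case is the specialization $\CC=\CE$ with stacking identified with composition. Your elaborations (the explicit postcomposition action $\phi\mapsto\FZ_1(\Sigma\alpha)\circ\phi$, the triviality of $\Aut^{br}(\CE,\id_\CE)$, and the $\CY_{\phi_1\circ\phi_2}$ gluing argument) only spell out steps the paper asserts implicitly, and your closing caveats match the paper's own reasons for labelling this a Theorem$^{\mathrm{ph}}$.
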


\begin{figure}
$$
\raisebox{-30pt}{
  \begin{picture}(140,95)
   \put(0,15){\scalebox{1}{\includegraphics{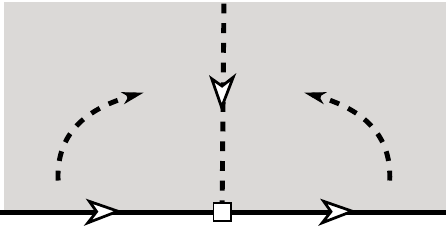}}}
   \put(0,15){
     \setlength{\unitlength}{.75pt}\put(0,0){
     \put(120,68)  {\scriptsize $(\FZ_1(\Sigma\CE),\iota_0)$}
     \put(5,68) {\scriptsize $(\FZ_1(\Sigma\CC),\iota)$}
     \put(35,-8) {\scriptsize $\Sigma\CC$}
     \put(125,-8){\scriptsize $\Sigma\CE$}
     \put(81,-7) {\scriptsize $\CX$}
     \put(83,93) {\scriptsize $\CY_\phi$}
         
     }\setlength{\unitlength}{1pt}}
  \end{picture}} 
  \hspace{3cm}
  \raisebox{-30pt}{
  \begin{picture}(20,95)
   \put(-10,15){\scalebox{1}{\includegraphics{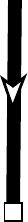}}}
   \put(-10,15){
     \setlength{\unitlength}{.75pt}\put(0,0){
     \put(1,-9) {\scriptsize $\CX$}
     \put(0,89) {\scriptsize $\Sigma\CM_\phi$}
         
     }\setlength{\unitlength}{1pt}}
  \end{picture}} 
$$
$$
\hspace{1.4cm} (a) \hspace{5.2cm} (b)
$$
\caption{Picture (a) depicts a physical configuration that illustrates the relation among $\Sigma\CE,\Sigma\CC$. 
$\CY_\phi$ denotes the 2d invertible domain wall between two 3d bulks associated to the braiding equivalence $\phi:\FZ_1(\Sigma\CE)\to \FZ_1(\Sigma\CC)$ and $\iota\simeq\phi\circ\iota_0$. Picture (b) depicts the result of the closing-fan process indicated by the dotted lines in Picture (a) (see (\ref{eq:sigma-M})).}
\label{fig:2d-SET}
\end{figure}

Theorem$^{\mathrm{ph}}$\,\ref{pthm:2d-SET-2} and its compatibility with the classification result in Theorem$^{\mathrm{ph}}$\,\ref{pthm:2d-SET-1} have many immediate mathematical consequences. In the rest of this subsection, we explain them in details and summarize the corresponding mathematical consequences as physical theorems. For mathematicians, these physical theorems should be viewed as mathematical conjectures. 

\medskip
If a UMTC$_{/\CE}$~$\CC$ is anomalous, on the one hand, its 3d bulk cannot be
the trivial 3d SPT order, hence, $\FZ_1(\Sigma\CE) \nsimeq \FZ_1(\Sigma\CC)$ as
braided fusion 2-categories; on the other hand, by \cite{LW160205936}, $\CC$
should not admit any minimal modular extension. These facts provide a
mathematical characterization of the existence of a minimal modular extension. 
\begin{pthm} \label{pthm:mext-center-EC}
A UMTC$_{/\CE}$~$\CC$ admits a minimal modular extension if and only if $\FZ_1(\Sigma\CE) \simeq \FZ_1(\Sigma\CC)$ as braided fusion 2-categories. 
\end{pthm}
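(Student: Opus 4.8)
The plan is to establish both directions using the boundary-bulk relation together with the theory of minimal modular extensions from Section~\ref{sec:mext}. The forward direction is essentially Theorem$^{\mathrm{ph}}$\,\ref{pthm:center-repM=trivial} applied to a minimal modular extension: suppose $\CC$ admits a minimal modular extension $(\CM,\iota_\CM)$, so $\CM$ is a UMTC. By Theorem$^{\mathrm{ph}}$\,\ref{pthm:cc-M} the condensation completion $\Sigma\CM$ is a fusion $2$-category, and by Theorem$^{\mathrm{ph}}$\,\ref{pthm:center-repM=trivial} we have $\FZ_1(\Sigma\CM)\simeq 2\hilb$. Now I want to relate $\FZ_1(\Sigma\CC)$ to $\FZ_1(\Sigma\CM)$ and to $\FZ_1(\Sigma\CE)$. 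The key point is that $\iota_\CM:\CC\hookrightarrow\CM$ factors through $\FZ_2(\CE;\CM)$, the centralizer of $\CE$ in $\CM$; passing to condensation completions, the delooping should turn this relative centralizer relationship into a statement that $\Sigma\CC$ is, up to Morita-type equivalence, the ``$\CE$-relative'' part of $\Sigma\CM$. Concretely, I expect $\FZ_1(\Sigma\CC)$ to be recovered from $\FZ_1(\Sigma\CE)$ and $\FZ_1(\Sigma\CM)\simeq 2\hilb$ via a ``closing-the-fan'' argument of the type depicted in Figure\,\ref{fig:2d-SET}(b): the braided equivalence $\phi:\FZ_1(\Sigma\CE)\to\FZ_1(\Sigma\CC)$ is produced by tunnelling through the $2$d wall associated to the minimal modular extension, exactly as in the discussion preceding Theorem$^{\mathrm{ph}}$\,\ref{pthm:2d-SET-2}.

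For the converse, suppose $\FZ_1(\Sigma\CE)\simeq\FZ_1(\Sigma\CC)$ as braided fusion $2$-categories, compatibly with the canonical embeddings $\iota_0,\iota$ as in diagram~(\ref{diag:rep-EC}); that is, we have $\phi\in\mathrm{BrEq}((\FZ_1(\Sigma\CE),\iota_0),(\FZ_1(\Sigma\CC),\iota))$. By Theorem$^{\mathrm{ph}}$\,\ref{pthm:2d-SET-2} such a $\phi$ characterizes an anomaly-free $2$d SET$_{/\CE}$ order with particle-like excitations $\CC$. But an anomaly-free $2$d SET$_{/\CE}$ order is, by Theorem$^{\mathrm{ph}}$\,\ref{pthm:2d-SET-1} and the discussion in Section~\ref{sec:2d-SET-1}, exactly the data of a quadruple $(\CC,\eta_\CC;\CM,\iota_\CM)$ with $(\CM,\iota_\CM)$ a minimal modular extension of $\CC$. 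So from $\phi$ I need to explicitly reconstruct $\CM$: this is the ``closing-the-fan'' operation, fusing the SET boundary $\Sigma\CC$ with $\Sigma\CE$ along the $3$d bulk so as to break the symmetry completely and arrive at an anomaly-free $2$d topological order, whose UMTC of particle-like excitations is the desired $\CM$. Schematically $\Sigma\CM_\phi \simeq \Sigma\CE\boxtimes_{\FZ_1(\Sigma\CE)}\CY_\phi\boxtimes_{\FZ_1(\Sigma\CC)}\Sigma\CC$ or a variant thereof, and one recovers $\CM=\Omega\Sigma\CM_\phi$ together with the braided embedding $\CC\hookrightarrow\CM$ induced by $\iota$; that it factors through $\FZ_2(\CE;\CM)$ follows by tracking how $\CE$ sits inside via $\iota_0$. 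Finally, modularity of $\CM$ ($\FZ_2(\CM)\simeq\hilb$, equivalently $\FZ_1(\Sigma\CM)\simeq 2\hilb$) comes precisely from the hypothesis that $\FZ_1(\Sigma\CC)$ is the center of a symmetric fusion $2$-category, so the total $3$d bulk of the closed-fan configuration is trivial.

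The main obstacle I anticipate is making the ``closing-the-fan'' / condensation-completion manipulations mathematically precise: this requires knowing how the monoidal center $\FZ_1(\Sigma-)$ interacts with the relative tensor product $\boxtimes_{\FZ_1(\Sigma\CE)}$ over a braided fusion $2$-category, and that $\Omega$ of the resulting fusion $2$-category recovers a genuine UMTC rather than merely a premodular or multi-fusion object. In the $1$d analogue (Section~\ref{sec:1d-SET-2}, equations (\ref{eq:recover-A})--(\ref{eq:recover-M})) this is underwritten by \cite[Theorem\,3.1.7, Theorem\,3.3.7]{kz}; here one would need the yet-to-be-developed higher-categorical analogues, so at the level of rigor of this paper the argument is physical. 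A secondary subtlety is keeping track of the embeddings $\iota_0,\iota$ throughout, so that the reconstructed $(\CM,\iota_\CM)$ really is a \emph{minimal} modular extension (dimension count $\fpdim\CM=\fpdim\CC\cdot\fpdim\CE$, equivalently $\FZ_2(\CE;\CM)\simeq\CC$) and not merely some modular extension; this should follow from the commutativity of~(\ref{diag:rep-EC}) but needs to be checked.
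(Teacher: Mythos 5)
Your proposal matches the paper's reasoning: the paper derives this theorem in the short paragraph immediately preceding it, purely from the consistency of the two characterizations of anomaly-freeness of a 2d SET$_{/\CE}$ order (existence of a minimal modular extension on the gauging side, triviality of the 3d bulk, i.e. $\FZ_1(\Sigma\CC)\simeq\FZ_1(\Sigma\CE)$, on the boundary-bulk side). The closing-the-fan construction you invoke to pass between $\phi$ and a minimal modular extension $(\CM_\phi,\iota_{\CM_\phi})$ is exactly what the paper records right after this theorem as the conjectured bijection $h:\mathrm{BrEq}(\FZ_1(\Sigma\CE),\FZ_1(\Sigma\CC))\to\mext(\CC)$ via $\Sigma\CM_\phi\simeq\Sigma\CC\boxtimes_{\FZ_1(\Sigma\CC)}\CY_\phi\boxtimes_{\FZ_1(\Sigma\CE)}(\Sigma\CE)^\rev$, with the same caveats about the missing higher-categorical foundations that you identify.
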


\begin{rem}
It is important and interesting to provide a mathematical proof of Theorem$^{\mathrm{ph}}$\ \ref{pthm:mext-center-EC}. In \cite[Section\,V.C]{jf20}, one can find a more detailed discussion of the existence of minimal modular extensions.  
\end{rem}

If we close the fan depicted in Figure\,\ref{fig:2d-SET} to obtain an anomaly-free 2d topological order, the categorical description of such obtained anomaly-free 2d topological order is given by 
\be \label{EYC}
\Sigma\CC \boxtimes_{\FZ_1(\Sigma\CC)} \CY_\phi \boxtimes_{\FZ_1(\Sigma\CE)} (\Sigma\CE)^\rev.  
\ee
This anomaly-free 2d topological order provides a way of gauging $\Sigma\CC$ in
the same dimension. Therefore, it should come from a unique minimal modular
extension of the UMTC$_{/\CE}$~$\CC$. We denote this minimal modular extension by $\CM_\phi$. 

\begin{pthm}
We expect to have the following mathematical results.  
\bnu
\item There is a natural monoidal equivalence: 
\be \label{eq:sigma-M}
\Sigma\CC \boxtimes_{\FZ_1(\Sigma\CC)} \CY_\phi \boxtimes_{\FZ_1(\Sigma\CE)} (\Sigma\CE)^\rev \simeq \Sigma\CM_\phi. 
\ee
In particular, it implies that there is a braided equivalence 
\[
\Omega(\Sigma\CC \boxtimes_{\FZ_1(\Sigma\CC)} \CY_\phi \boxtimes_{\FZ_1(\Sigma\CE)} (\Sigma\CE)^\rev) \simeq \CM_\phi.
\] 
\item There is a bijection $h: \mathrm{BrEq}(\FZ_1(\Sigma\CE),\FZ_1(\Sigma\CC)) \to \mext(\CC)$ defined by 
\[
\phi \mapsto \Omega(\Sigma\CC \boxtimes_{\FZ_1(\Sigma\CC)} \CY_\phi \boxtimes_{\FZ_1(\Sigma\CE)} (\Sigma\CE)^\rev),
\]
intertwining the $\Aut^{br}(\CC,\eta_\CC)$-actions. When $\CC=\CE$, the map $h: \Aut^{br}(\FZ_1(\Sigma\CE),\iota_0) \to  \mext(\CE)$ is a group isomorphism. In particular, we should have the following explicit group isomorphisms. 
\[
\Aut^{br}(\FZ_1(2\Rep(G)),\iota_0) \simeq H^3(G,U(1)); \quad\quad \Aut^{br}(\FZ_1(2\Rep(\Zb_2,z)),\iota_0) \simeq \Zb_{16}; \quad\quad
\]
\enu
\end{pthm}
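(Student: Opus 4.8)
The plan is to establish the two items in order. Both are physical statements resting on the boundary-bulk relation \cite{kwz1,kwz2}, on Theorem$^{\mathrm{ph}}$\,\ref{pthm:T=sigmaM}, and on the consistency of the modular-extension approach of Section\,\ref{sec:mext} with the boundary-bulk approach of Section\,\ref{sec:bb-relation}; the $2$d arguments parallel the $1$d recollections \eqref{eq:recover-A}--\eqref{eq:recover-M} one categorical level up. For Item 1, I would first argue that closing the fan of Figure\,\ref{fig:2d-SET}(a) along the dotted lines produces an anomaly-free $2$d topological order: the operation squeezes out both $3$d bulks $\FZ_1(\Sigma\CC)$ and $\FZ_1(\Sigma\CE)$, promoting the symmetry charges inside $\Sigma\CE$ to honest excitations (this is the gauging of $\CE$), and the resulting $2$d phase floats freely, so its own $3$d bulk is trivial, i.e. the fusion $2$-category $\CT \coloneqq \Sigma\CC \boxtimes_{\FZ_1(\Sigma\CC)} \CY_\phi \boxtimes_{\FZ_1(\Sigma\CE)} (\Sigma\CE)^\rev$ satisfies $\FZ_1(\CT)\simeq 2\hilb$. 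Theorem$^{\mathrm{ph}}$\,\ref{pthm:T=sigmaM} then supplies a UMTC $\CM_\phi\coloneqq\Omega\CT$ with $\CT\simeq\Sigma\CM_\phi$, which is the first claim; looping gives the displayed braided equivalence. It remains to recognize $\CM_\phi$ as a minimal modular extension of $(\CC,\eta_\CC)$: the topological excitations in $\CC$ are retained under the gauging (the invertible wall $\CY_\phi$ being transparent), giving a braided embedding $\CC\hookrightarrow\CM_\phi$, and, exactly as in \eqref{eq:recover-A}--\eqref{eq:recover-M}, the $(\Sigma\CE)^\rev$-leg contributes only the gauge charges and fluxes of $\CE$, so the image of $\CC$ is the full centralizer $\FZ_2(\CE;\CM_\phi)$ and $\fpdim(\CM_\phi)=\fpdim(\CE)\cdot\fpdim(\CC)$, i.e. $(\CM_\phi,\iota_{\CM_\phi})$ is minimal. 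By the equivalence of the two approaches this is the minimal modular extension attached to $\phi$, which justifies the notation.

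For Item 2, granting Item 1, the map $h$ is defined and lands in $\mext(\CC)$; it factors through equivalence classes since $\phi\simeq\phi'$ forces $\CY_\phi\simeq\CY_{\phi'}$, and it intertwines the $\Aut^{br}(\CC,\eta_\CC)$-actions because an autoequivalence of $\CC$ induces one of $\Sigma\CC$ which then acts compatibly on both ends of the fan construction. For bijectivity I would use the torsor structures: the source $\mathrm{BrEq}(\FZ_1(\Sigma\CE),\FZ_1(\Sigma\CC))$ is a torsor over $\Aut^{br}(\FZ_1(\Sigma\CE),\iota_0)$ and the target $\mext(\CC)$ is a torsor over $\mext(\CE)$, and $h$ is equivariant along the $\CC=\CE$ restriction of $h$ itself --- this equivariance being the statement that stacking an SET order with an SPT order (the action on the source) corresponds to the $\star$-action of \eqref{eq:def-star} on minimal modular extensions (the action on the target), which is description-independent. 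A torsor morphism over a group isomorphism is a bijection, so it suffices to treat $\CC=\CE$.

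When $\CC=\CE$, both $\mext(\CE)$ (by Theorem$^{\mathrm{ph}}$\,\ref{pthm:2d-SET-1}) and $\Aut^{br}(\FZ_1(\Sigma\CE),\iota_0)$ (by Theorem$^{\mathrm{ph}}$\,\ref{pthm:2d-SET-2}) are the group of $2$d SPT$_{/\CE}$ orders under physical stacking, and $h$ is the identification of these two descriptions of the same group, hence a group isomorphism; concretely, one level above Figure\,\ref{fig:pic-aut}(b), stacking of fan configurations is computed by the product $\star$ of \eqref{eq:def-star} together with $\CY_{\phi_1\circ\phi_2}\simeq\CY_{\phi_1}\boxtimes_{(\FZ_1(\Sigma\CE),\iota_0)}\CY_{\phi_2}$, so $h(\phi_1\circ\phi_2)\simeq h(\phi_1)\star h(\phi_2)$. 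The explicit isomorphisms then follow by specializing $\CE=\Rep(G)$ and $\CE=\Rep(\Zb_2,z)$, so that $\Sigma\CE=2\Rep(G)$ and $2\Rep(\Zb_2,z)$, and invoking $\mext(\Rep(G))\simeq H^3(G,U(1))$ and $\mext(\Rep(\Zb_2,z))\simeq\Zb_{16}$ from Theorem$^{\mathrm{ph}}$\,\ref{pthm:2d-SET-1}.

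The main obstacle is making ``closing the fan'' rigorous: this requires a workable theory of relative tensor products of (unitary) fusion $2$-categories over braided fusion $2$-categories and of their identification with condensation completions, which depends on the still-developing foundations of fusion $2$-categories; and, even granting that, one must match the two a priori unrelated constructions of $\CM_\phi$ --- as $\Omega$ of the closed fan, and as a minimal modular extension coming from the first approach --- which is precisely the consistency of Section\,\ref{sec:mext} with Section\,\ref{sec:bb-relation} and is at present only argued physically. A secondary subtlety is checking that the various functors $\Sigma\CC\to\CT$ and the looping operations preserve braidings and faithfulness, for which a full treatment would need the module theory of fusion $2$-categories not developed here.
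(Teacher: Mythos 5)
Your proposal matches the paper's own treatment: the paper states this result only as an expectation, justified by the fan-closing picture of Figure\,\ref{fig:2d-SET} (the composite $\Sigma\CC \boxtimes_{\FZ_1(\Sigma\CC)} \CY_\phi \boxtimes_{\FZ_1(\Sigma\CE)} (\Sigma\CE)^\rev$ is an anomaly-free 2d order, hence a gauging of the symmetry, hence $\Sigma$ of a unique minimal modular extension $\CM_\phi$) together with the asserted consistency of the two approaches, and your argument is exactly this reasoning fleshed out. The extra structure you supply --- the trivial-center check feeding into Theorem$^{\mathrm{ph}}$\,\ref{pthm:T=sigmaM}, the $\fpdim$/centralizer verification of minimality, and the torsor-equivariance reduction of bijectivity to the SPT case --- goes beyond what the paper writes down but is consistent with it, and you correctly identify that the remaining gap (a workable relative tensor product of fusion $2$-categories and the matching of the two constructions of $\CM_\phi$) is precisely what the paper leaves open by labelling this a Theorem$^{\mathrm{ph}}$.
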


\void{
In \cite{kwz1,kwz2}, the notion of a morphism between two potentially anomalous $n$d topological orders were introduced. 
More precisely, let $\CC_n$ and $\CD_n$ be two potentially anomalous $n$d topological orders, where the subscript $n$ represents the spatial dimension. Then a morphism $f: \CC_n \to \CD_n$ is defined by a potentially anomalous $n$d topological order $f_n$ such that $f_n\boxtimes_{\FZ_1(\CC_n)} \CC_n = \CD_n$. The physical meaning of this definition is illustrated in the following picture. 
\be \label{pic:morphism}
\raisebox{-10pt}{
\begin{picture}(140, 30)
   \put(0,5){\scalebox{2}{\includegraphics{pic-def-morphism-2-eps-converted-to2.pdf}}}
   \put(-25,5){
     \setlength{\unitlength}{.75pt}\put(-18,-70){
     \put(60,87)      {\footnotesize $\FZ_1(\CD_n)$}
     \put(150, 86)     {\footnotesize $\FZ_1(\CC_n)$}
     \put(200, 86)     {\footnotesize $\CC_n$}
     \put(122, 88)     {\footnotesize $f_n$}
     }\setlength{\unitlength}{1pt}}
  \end{picture}}
\ee
Note that the 0-codimensional description of the anomaly-free (\nao)d topological $\FZ_1(\CC_n)$ is a unitary modular tensor $n$-category, the precise mathematical definition of which is not yet known. We assume the existence of such a definition. The 1-codimensional description of $\CC_n$ demands condensation completion and is described by a unitary fusion $n$-category, i.e. an $E_1$-algebra. In this context, $\FZ_1(\CC_n)$ in (\ref{pic:morphism}) is the $E_1$-center of $\CC_n$. 

\begin{pthm}
There is a natural bijection from the underlying set of (unitary) monoidal $n$-functors between two (unitary) multi-fusion $n$-categories $\CC_n$ and $\CD_n$ to the set of equivalence classes of unitary multi-fusion $n$-category $f_n$ equipped with a braided equivalence $\psi: \FZ_1(\CD_n) \boxtimes \overline{\FZ_1(\CC_n)} \xrightarrow{\simeq} \FZ_1(f_n)$. 
\end{pthm}

\begin{rem}
According to \cite[Definition\ 2.6.1]{kz}, such unitary multi-fusion $n$-category $f_n$ can be called a closed multi-fusion $\FZ_1(\CD_n)$-$\FZ_1(\CC_n)$-bimodule. 
\end{rem}

\begin{rem}
The notion of a multi-fusion 2-category was recently introduced in \cite{dr}. It is an interesting project to prove above mathematical conjecture for $n=2$. 
\end{rem}
}

For a 2d SPT order, $\CC=\CE=\Rep(G)$, and for each $\phi\in\Aut^{br}(\FZ_1(2\Rep(G)),\iota_0)$, we have $\CM_\phi\simeq\FZ_1(1\hilb_G^\omega)$ for an $\omega\in H^3(G,U(1))$. It implies that the 1d defect junction $\CX$ in Figure\ \ref{fig:2d-SET}, as a 1d boundary of $\Sigma\CM_\phi$, is gapped. According to Theorem$^{\mathrm{ph}}$\,\ref{pthm:cc-P}, $\CX=\Sigma\CP$ for a unitary fusion 1-category $\CP$ equipped with a braided equivalence $\CM \to \FZ_1(\CP)$.
By the boundary-bulk relation, we obtain a monoidal equivalence $(\ref{EYC}) \simeq \fun_{2\hilb}(\CX,\CX)$. Hence, $\CY_\phi$ determines $\CX$ uniquely as the unique closed module over the unitary multi-fusion 2-category (\ref{EYC}). Therefore, we obtain a 2d generalization of (\ref{eq:pic-aut}) for $\CE=\Rep(G)$. 

\begin{pthm} \label{pthm:pic=aut-2d}
There are natural group isomorphisms: 
\begin{align*}
 \Pic(2\Rep(G)) &\simeq \Aut^{br}(\FZ_1(2\Rep(G)),\iota_0) \simeq H^3(G,U(1)).
\end{align*}
\end{pthm}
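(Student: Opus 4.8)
The plan is to lift, one categorical level up, the argument of Section~\ref{sec:1d-SET-2} that produced the isomorphism~\eqref{eq:pic-aut}, and then to invoke the consistency between Theorem$^{\mathrm{ph}}$~\ref{pthm:2d-SET-1} and Theorem$^{\mathrm{ph}}$~\ref{pthm:2d-SET-2} to pin down the group. For the first isomorphism $\Pic(2\Rep(G))\simeq\Aut^{br}(\FZ_1(2\Rep(G)),\iota_0)$, I would first build the map $\CX\mapsto\phi_\CX$. An invertible $2\Rep(G)$-$2\Rep(G)$-bimodule $\CX$ realizes a 1d domain wall between two copies of the trivial 3d SPT order, each described after condensation completion by the pair $(\FZ_1(2\Rep(G)),\iota_0)$; this is exactly Figure~\ref{fig:2d-SET}(a) with $\CC=\CE=\Rep(G)$. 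By the boundary-bulk relation --- the fusion 2-categorical analogue of \cite[Theorem~3.3.7]{kz} --- such a wall is equivalent data to its relative 2d bulk $\fun_{2\Rep(G)|2\Rep(G)}(\CX,\CX)$, which, $\CX$ being invertible, is an invertible 2d domain wall between two copies of $\FZ_1(2\Rep(G))$, hence (the fusion 2-categorical analogue of \cite{eno09,dn}) a braided autoequivalence $\phi_\CX$ of $\FZ_1(2\Rep(G))$. As in the 1d case, the defining property of $\Pic(2\Rep(G))$ --- that the right action on $\CX$ is induced from the left one through the symmetric braiding of $2\Rep(G)$ --- translates precisely into $\phi_\CX\circ\iota_0\simeq\iota_0$, so the target is $\Aut^{br}(\FZ_1(2\Rep(G)),\iota_0)$.

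Next I would check that $\CX\mapsto\phi_\CX$ is a group isomorphism. Functoriality of $\CX\mapsto\fun_{2\Rep(G)|2\Rep(G)}(\CX,\CX)$ under the relative tensor product $\boxtimes_{2\Rep(G)}$ --- physically, stacking two layers of Figure~\ref{fig:2d-SET}, the 2d analogue of Figure~\ref{fig:pic-aut}(b) --- carries $\boxtimes_{2\Rep(G)}$ to composition of braided autoequivalences, so the map is a homomorphism; invertibility of $\CX$ and full faithfulness of the center functor make it bijective. For the second isomorphism I would argue by consistency: since $2\Rep(G)=\Sigma\Rep(G)$, Theorem$^{\mathrm{ph}}$~\ref{pthm:2d-SET-2} identifies the group of 2d SPT$_{/\Rep(G)}$ orders with $\Aut^{br}(\FZ_1(2\Rep(G)),\iota_0)$, while Theorem$^{\mathrm{ph}}$~\ref{pthm:2d-SET-1} (from \cite{LW160205936}) identifies the same group with $\mext(\Rep(G))\simeq H^3(G,U(1))$, and comparing the two yields $\Aut^{br}(\FZ_1(2\Rep(G)),\iota_0)\simeq H^3(G,U(1))$. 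The explicit form of this isomorphism is the closing-fan map of Figure~\ref{fig:2d-SET}(a)$\to$(b): for $\phi\in\Aut^{br}(\FZ_1(2\Rep(G)),\iota_0)$ the anomaly-free 2d topological order~\eqref{EYC} is $\Sigma\CM_\phi$ by~\eqref{eq:sigma-M}, with $\CM_\phi\simeq\FZ_1(1\hilb_G^\omega)$ a minimal modular extension of $\Rep(G)$ labelled by a well-defined $\omega\in H^3(G,U(1))$; and, using Theorem$^{\mathrm{ph}}$~\ref{pthm:cc-P} to identify the junction $\CX$ with the unique gapped boundary $\Sigma\CP$ of $\Sigma\CM_\phi$ (so $\FZ_1(\CP)\simeq\CM_\phi$), one recovers $\CX\in\Pic(2\Rep(G))$ as the unique closed module over~\eqref{EYC}, closing the triangle of identifications.

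The hard part will be making the fusion 2-categorical Morita theory of the first step rigorous: one needs, for fusion 2-categories $\CR$ and their module 2-categories, the analogues of \cite{eno09,dn} (invertible $\CR$-$\CR$-bimodules $\leftrightarrow$ braided autoequivalences of $\FZ_1(\CR)$) and of \cite[Theorem~3.3.7]{kz} (the center functor on domain walls together with its full faithfulness), as well as enough coherence of the delooping/Karoubi completion of \cite{dr,gjf19,jf20} both to make sense of $\fun_{2\Rep(G)|2\Rep(G)}(\CX,\CX)$ as the relative 2d bulk of $\CX$ and to establish the intertwining of the stacking and composition products --- the one-dimension-up version of Theorem$^{\mathrm{ph}}$~\ref{pthm:star=composition}. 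Since these foundations are at present only partly in place, the statement is being recorded as a physical theorem; a rigorous proof amounts to supplying exactly this higher Morita-theoretic dictionary.
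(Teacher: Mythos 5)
Your proposal follows essentially the same route as the paper: the paper likewise realizes an element of $\Aut^{br}(\FZ_1(2\Rep(G)),\iota_0)$ as an invertible 2d wall $\CY_\phi$ between two copies of the trivial 3d SPT order, closes the fan to obtain $\Sigma\CM_\phi$ with $\CM_\phi\simeq\FZ_1(1\hilb_G^\omega)$, recovers $\CX\in\Pic(2\Rep(G))$ as the unique gapped junction via Theorem$^{\mathrm{ph}}$\,\ref{pthm:cc-P}, and gets the identification with $H^3(G,U(1))$ from consistency with the minimal-modular-extension classification of Theorem$^{\mathrm{ph}}$\,\ref{pthm:2d-SET-1}.

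One caution about your first paragraph: the claim that $\CX\mapsto\phi_\CX$ is bijective ``by invertibility of $\CX$ and full faithfulness of the center functor'' is too quick on surjectivity, and surjectivity is precisely the step that does \emph{not} survive a naive one-level-up lift of the 1d argument (\ref{eq:brpic-aut})--(\ref{eq:pic-aut}). Given $\phi$, one must produce an invertible $2\Rep(G)$-module $\CX$ whose relative bulk is $\CY_\phi$, i.e.\ the wall $\CY_\phi$ must admit a gapped junction with the two boundary copies of $2\Rep(G)$. This holds here only because every $\CM_\phi\in\mext(\Rep(G))$ is a Drinfeld center $\FZ_1(1\hilb_G^\omega)$, hence non-chiral, hence admits a gapped boundary $\CP$ with $\CX=\Sigma\CP$ --- exactly the input your second paragraph supplies when discussing the explicit form of the $H^3$ isomorphism. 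Had the lifted 1d argument sufficed on its own, it would equally ``prove'' $\Pic(2\Rep(\Zb_2,z))\simeq\Zb_{16}$, which the paper explicitly denies in Remark\,\ref{rem:16-fold-way}: there the nontrivial $\CM_\phi$ are chiral, the junctions are gapless, and one must pass to enriched bimodules. So the logical order should be: consistency first gives $\Aut^{br}(\FZ_1(2\Rep(G)),\iota_0)\simeq\mext(\Rep(G))\simeq H^3(G,U(1))$, and the non-chirality of each $\CM_\phi$ then yields the identification with $\Pic(2\Rep(G))$ --- which is how the paper arranges the argument, and how your own second paragraph in fact proceeds.
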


\void{
We would like to provide an independent mathematical proof of the following result. 
\begin{thm} \label{thm:pic-nrepG-coh}
$\Pic(2\Rep G) \simeq H^{3}(G,U(1))$. 
\end{thm}
\pf
By definition, $\Pic(2\Rep G)$ is the group formed by the equivalence classes of invertible objects of $3\Rep G$. That is, $\Pic(2\Rep G) = \pi_0(3\Rep G^\times)$, where $3\Rep G^\times$ is the $3$-groupoid obtained by discarding all non-invertible objects and morphisms of $3\Rep G$. We have $3\Rep G \simeq \kar(\fun(G,3\hilb))$, where $G$ is regarded as a groupoid with a single object, i.e. a 1-group. Thus $3\Rep G^\times \simeq \fun(G,3\hilb^\times)$. Note that the $3$-groupoid $3\hilb^\times$ consists of a single object, a single $k$-morphism for $1\leq k\le 2$ and the space of $3$-morphisms $\Cb^\times \simeq U(1)$. Namely, $3\hilb^\times \simeq K(\Zb,4)$. Therefore, $\Pic(2\Rep G) \simeq \pi_0 \fun(G,K(\Zb,4)) \simeq H^{4}(G,\Zb) \simeq H^{3}(G,U(1))$.
\epf
\begin{thm}
For a $2$-group $\CG^{(2)}$, $\Pic(\Rep(\CG^{(2)}))\simeq H^{3}(\CG^{(2)},U(1))$.
\end{thm}
\pf
The proof is the same as that of Theorem\,\ref{thm:pic-nrepG-coh}. 
\epf
}

\begin{rem} \label{rem:16-fold-way}
Interestingly, Theorem$^{\mathrm{ph}}$\,\ref{pthm:pic=aut-2d} does not hold for $\CE=\Rep(G,z)$. For example, when $G=\Zb_2$, for $\phi\in\Aut^{br}(\FZ_1(2\Rep(\Zb_2,z)),\iota_0)$ and $c_\phi\in\{ 0,\frac{1}{2}, \cdots, \frac{15}{2}\}$, the pairs $(\CM_\phi,c_\phi)$ reproduce the Kitaev’s 16-fold ways, which, viewed as 2d topological orders, all have chiral gapless edges except for the trivial one $\phi=\id_{\FZ_1(2\Rep(\Zb_2,z))}$. According to Remark\,\ref{rem:cc-gapless}, the condensation completion of a single chiral  gapless edge automatically includes all gapless edges of $(\CM_\phi,c_\phi)$. All of these gapless edges can be categorically described by a $\CB$-enriched multi-fusion category ${}^\CB\CX$, where $\CB$ is a UMTC that is Witt equivalent to $\CM_\phi$. In this case, we believe that it is possible to generalize the notion of invertible $2\Rep(\Zb_2,z)$-bimodules in an enriched setting. We denote the group of all such invertible bimodules by $\Pic^{\mathrm{en}}(2\Rep(\Zb_2,z))$. Since 16 UMTC's $\CM_\phi$ all have different Witt classes, we should expect that $\Pic^{\mathrm{en}}(2\Rep(\Zb_2,z))\simeq \Zb_{16}$ as groups. 
\end{rem}

Recall Example\,\ref{rem:n-SFC}, there are more unitary symmetric fusion $2$-categories than $2\Rep(G)$ and $2\Rep(G,z)$. For example, $2\hilb_H$ for a finite abelian group $H$. All unitary symmetric fusion $2$-categories should be viewed as certain higher symmetries. We would like to generalize Theorem$^{\mathrm{ph}}$\,\ref{pthm:2d-SET-2} to a higher symmetry defined by a unitary symmetric fusion 2-category $\CR$. 

\begin{defn}
A unitary fusion 2-category over $\CR$ is a unitary fusion 2-category $\CA$
equipped with a braided faithful functor $\iota_\CA: \CR \hookrightarrow
\FZ_1(\CA)$ such that the composed functor $\CR \hookrightarrow \FZ_1(\CA) \to \CA$ is faithful (recall Remark\,\ref{rem:over-E=ff}).
\end{defn}

The simplest example of a unitary fusion 2-category over $\CR$ is $\CR$. It is equipped with a canonical braided faithful functor $\iota_0: \CR \hookrightarrow \FZ_1(\CR)$. 

\begin{rem}
Notice that we do not require $\iota_\CA: \CR \hookrightarrow \FZ_1(\CA)$ to be full. This is because $\iota_0: \CR \hookrightarrow \FZ_1(\CR)$ is not full in general. For example, consider the fusion 2-category $2\hilb_H$ for a non-trivial finite abelian group $H$. It has a canonical symmetric fusion 2-category structure by choosing the trivial braidings and the trivial sylleptic structure \cite{C98}. We have $\FZ_1(2\hilb_H) \simeq \oplus_{h\in H} 2\Rep(H)$ as 2-categories \cite{ktz}. The canonical functor $2\hilb_H \to \FZ_1(2\hilb_H)$ is faithful but not full. 
\end{rem}

The following definition first appeared in \cite[Definition\,2.7]{dno} for 1-categories. 
\begin{defn}\label{def:equivalence-over-R}
An equivalence between two unitary fusion $2$-categories $\CA$ and $\CA'$ over $\CR$ is a monoidal equivalence $f:\CA\to \CA'$ rendering the following diagram commutative:
\[
\begin{tikzcd} 
\CR \arrow[hook]{r}   \arrow[hook]{d}  & \FZ_1(\CA) \arrow{d}{\simeq}[swap]{}   \\
\FZ_1(\CA') \arrow{r}{\simeq} & \FZ_1(f),
\end{tikzcd} 
\]
where $\FZ_1(f) = \fun_{\CA|\CA'}({}_f\CA', {}_f\CA')$ and ${}_f\CA'$ is an $\CA$-$\CA'$-bimodule with the left $\CA$-module structure induced from the monoidal functor $f$. 
\end{defn}

All such equivalences, together with higher isomorphisms, form a 2-groupoid. When $\CA'=\CA$, we denote the underlying group by $\Aut^\otimes(\CA,\iota_\CA)$. We denote the set of equivalence classes of braided equivalences $\phi:\FZ_1(\CR)\to \FZ_1(\CA)$ preserving the symmetries (i.e. $\phi\circ\iota_0\simeq\iota_\CA$) by $\mathrm{BrEq}((\FZ_1(\CR),\iota_0),(\FZ_1(\CA),\iota_\CA))$, which is equipped with a natural $\Aut^\otimes(\CA,\iota_\CA)$-action.

\begin{pthm} \label{pthm:2d-SET-R}
A $2$d SPT/SET order with a higher symmetry $\CR$ is called a $2$d SPT/SET$_{/\CR}$ order. We give the following classification. 
\bnu 
\item An anomaly-free 2d SET$_{/\CR}$ order is uniquely characterized by a pair $(\CA,\phi)$, where $\CA$ is a unitary fusion 2-category over $\CR$ describing all topological excitations (including all condensation descendants) and $\phi: \FZ_1(\CR) \to \FZ_1(\CA)$ is a braided equivalence rendering the following diagram commutative (up to natural isomorphisms): 
\[
\xymatrix@R=1.5em{
& \CR \ar@{^(->}[dl]_{\iota_0} \ar@{^(->}[dr]^{\iota_\CA} & \\
\FZ_1(\CR) \ar[rr]^\phi_\simeq & & \FZ_1(\CA).
}
\]

\item When $\CA=\CR$, the pair $(\CR, \phi)$ describes a 2d SPT$_{/\CR}$ order and $(\CR,\id_{\FZ_1(\CR)})$ is the trivial SPT order. Moreover, the group of all 2d SPT$_{/\CR}$ orders is isomorphic to the group $\Aut^{br}(\FZ_1(\CR),\iota_0)$.  For a given category of topological excitations (including all condensation descendants) $\CA$, i.e. a unitary fusion 2-category, we have
\enu
\[
\{ \mbox{2d anomaly-free SET$_{/\CR}$ orders with topological excitations $\CA$} \} = \frac{ \mathrm{BrEq}((\FZ_1(\CR),\iota_0),(\FZ_1(\CA),\iota_\CA))}{\Aut^\otimes(\CA,\iota_\CA)}\, .
\]
\end{pthm}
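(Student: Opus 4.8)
The plan is to treat Theorem$^{\mathrm{ph}}$\,\ref{pthm:2d-SET-R} as the 2-categorical analogue of the 1d story worked out in Section\,\ref{sec:1d-SET-2} and the 2d story of Section\,\ref{sec:2d-SET-2}, now with $\Sigma\CE$ replaced by an arbitrary unitary symmetric fusion 2-category $\CR$. The main physical input is the boundary-bulk relation together with Theorem$^{\mathrm{ph}}$\,\ref{pthm:trivial-bulk}: an anomaly-free 2d SET$_{/\CR}$ order, viewed as a 1-codimensional boundary of its 3d bulk, must have that bulk be the trivial 3d SPT$_{/\CR}$ order. By the Condensation-Completion Principle of Section\,\ref{sec:cc}, the correct 1-codimensional description of the boundary is a unitary fusion 2-category $\CA$ (the condensation completion of the UMTC$_{/\CE}$-type data of particle-like excitations), and the bulk is its monoidal center $\FZ_1(\CA)$. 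The trivial 3d SPT$_{/\CR}$ order has 0-codimensional description $(\FZ_1(\CR),\iota_0)$, exactly as in Theorem$^{\mathrm{ph}}$\,\ref{pthm:main-1} with $n\Rep(G)$ generalized to $\CR$. So the first step is: record that an anomaly-free 2d SET$_{/\CR}$ order supplies precisely a unitary fusion 2-category $\CA$ over $\CR$ together with a braided equivalence $\phi\colon\FZ_1(\CR)\xrightarrow{\simeq}\FZ_1(\CA)$ identifying the bulk, and that $\phi$ must carry the symmetry charges along, i.e. $\phi\circ\iota_0\simeq\iota_\CA$, giving diagram commutativity. This is the verbatim 2d analogue of (\ref{diag:rep-EC}) and (\ref{diag:EA}), and the argument of Remark\,\ref{rem:over-E=ff} shows the composite $\CR\hookrightarrow\FZ_1(\CA)\to\CA$ is faithful, matching Definition of ``unitary fusion 2-category over $\CR$''.

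Second, I would address why the pair $(\CA,\phi)$ is a \emph{complete} invariant, i.e. why no further data is needed. Here I mimic the 1d argument in Section\,\ref{sec:1d-SET-2}: the braided equivalence $\phi$ is realized physically by tunneling through a 2d invertible domain wall $\CY_\phi$ between the two copies of the 3d bulk (one presented as $\FZ_1(\CR)$, one as $\FZ_1(\CA)$), canonically associated to $\phi$ via the functoriality of the center construction (the 2-categorical analogue of (\ref{eq:brpic-aut})--(\ref{eq:pic-aut}), i.e. $\FZ_1^{(1)}$ applied one categorical level up). Closing the fan as in Figure\,\ref{fig:2d-SET}(b) and (\ref{EYC})--(\ref{eq:sigma-M}) recovers the minimal-modular-extension data of the first approach, so $(\CA,\phi)$ determines and is determined by the SET order. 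For two pairs $(\CA,\phi)$ and $(\CA',\phi')$ to describe the same SET order one needs a monoidal equivalence $f\colon\CA\to\CA'$ over $\CR$ (Definition\,\ref{def:equivalence-over-R}) intertwining $\phi$ and $\phi'$ via the induced $\FZ_1(f)$; hence for fixed $\CA$ the set of SET orders with excitations $\CA$ is the orbit set $\mathrm{BrEq}((\FZ_1(\CR),\iota_0),(\FZ_1(\CA),\iota_\CA))/\Aut^\otimes(\CA,\iota_\CA)$, which is the displayed formula. The $\Aut^\otimes(\CA,\iota_\CA)$-action on the torsor is the one noted just before the statement: an autoequivalence of the boundary theory induces an autoequivalence of the bulk.

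Third, for the SPT case $\CA=\CR$: stacking of 2d SPT$_{/\CR}$ orders corresponds to composition of the braided autoequivalences $\phi$ of $\FZ_1(\CR)$ preserving $\iota_0$ — this is the 2-categorical analogue of Theorem$^{\mathrm{ph}}$\,\ref{pthm:star=composition} and Remark\,\ref{rem:spt-commutative}, and follows from the same ``stack two layers of Figure\,\ref{fig:2d-SET}(a)'' picture: the stacked domain wall is $\CY_{\phi_1}\boxtimes_{(\FZ_1(\CR),\iota_0)}\CY_{\phi_2}\simeq\CY_{\phi_1\circ\phi_2}$. The trivial SPT$_{/\CR}$ order is $(\CR,\id_{\FZ_1(\CR)})$ since its condensation-completed excitations are exactly $\CR$ and its bulk-identification is the tautological one. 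Therefore the group of 2d SPT$_{/\CR}$ orders is $\Aut^{br}(\FZ_1(\CR),\iota_0)$. I would present all of this at the level of rigor of the surrounding physical theorems, citing \cite{kwz1,kwz2,kz} for boundary-bulk relation and functoriality of $\FZ_1$, \cite{dr,gjf19,jf20} for the fusion-2-category and condensation-completion machinery, and \cite{LW160205936} for the minimal-modular-extension comparison that underlies closing the fan.

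The main obstacle is precisely the step that this paper explicitly declines to make fully rigorous: establishing that the condensation completion genuinely produces a well-behaved unitary fusion 2-category $\CA$ over $\CR$ for a general higher symmetry $\CR$, and that $\FZ_1$ of it is again a \emph{center} in the precise sense needed for the boundary-bulk relation to close — i.e. that the higher-categorical analogue of the ``center of a center is trivial'' statement (Theorem$^{\mathrm{ph}}$\,\ref{pthm:center-repM=trivial}) and of the equivalence $\Sigma\FZ_1(\CP)\simeq\FZ_0(\Sigma\CP)$ (Theorem$^{\mathrm{ph}}$\,\ref{thm:fun-PP}) hold at this level, and that the canonical functor $\CR\hookrightarrow\FZ_1(\CA)$ behaves as required when $\CR$ is not of the form $\Sigma\CE$ (e.g. $\CR=2\hilb_H$, where $\iota_0$ is faithful but not full). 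Because the theory of unitary multi-fusion $n$-categories and the unitarity of Karoubi completion are assumed rather than proved (see the remarks after Theorem$^{\mathrm{ph}}$\,\ref{pthm:main-1} and in Section\,\ref{sec:cc}), the proof will necessarily remain at the ``Theorem$^{\mathrm{ph}}$'' level, with the hard mathematical content deferred to \cite{jf20} and its successors.
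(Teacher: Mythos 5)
Your proposal is correct and follows essentially the same route as the paper: Theorem$^{\mathrm{ph}}$\,\ref{pthm:2d-SET-R} is obtained there by transplanting, word for word, the boundary-bulk-relation argument of Theorem$^{\mathrm{ph}}$\,\ref{pthm:2d-SET-2} (trivial 3d bulk $(\FZ_1(\CR),\iota_0)$, condensation-completed boundary $\CA$, invertible wall $\CY_\phi$, closing the fan, stacking as composition of autoequivalences) from $\CR=\Sigma\CE$ to a general unitary symmetric fusion 2-category $\CR$, with the only adjustment being the definitions of ``fusion 2-category over $\CR$'' and equivalence thereof (in particular dropping fullness of $\iota_\CA$, as you note for $2\hilb_H$). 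You also correctly identify the same unproved higher-categorical inputs that the paper itself defers to \cite{gjf19,jf20}.
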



\begin{rem}
When $\CR=\Sigma\CE$ for $\CE=\Rep(G)$ or $\Rep(G,z)$, the unitary fusion $2$-category $\CA$ is necessarily monoidally equivalent to $\Sigma\CC$ for a unitary modular $1$-category $\CC$ over $\CE$ because all 1-codimensional defects are not detectable via braiding, thus must be condensation descendants.  
\end{rem}

\begin{figure}
$$
\raisebox{-30pt}{
  \begin{picture}(140,95)
   \put(0,15){\scalebox{1}{\includegraphics{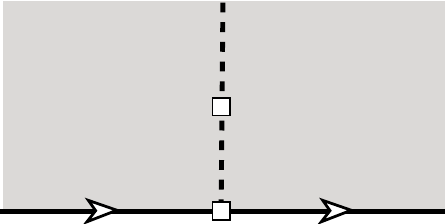}}}
   \put(0,15){
     \setlength{\unitlength}{.75pt}\put(0,0){
     \put(120,68)  {\scriptsize $(\FZ_1(\Sigma\CE),\iota_0)$}
     \put(5,68) {\scriptsize $(\FZ_1(\Sigma\CE),\iota_0)$}
     \put(35,-8) {\scriptsize $\Sigma\CE$}
     \put(125,-8){\scriptsize $\Sigma\CE$}
     \put(93,43) {\scriptsize $\CP$}
     \put(83,93) {\scriptsize $\CY_{\id}$}
         
     }\setlength{\unitlength}{1pt}}
  \end{picture}} 
$$
\caption{A 1d SPT $\CP$ is realized as an invertible domain wall on the trivial 2d SPT $\CY_{\id}$.}
\label{fig:2d-SPT}
\end{figure}

\begin{rem} \label{rem:spt-groupoid}
In Figure\,\ref{fig:2d-SPT}, if we insert an invertible gapped domain wall $\CP$ on the trivial 2d SPT order $\CY_{\id}$, one can see immediately that this 1d gapped domain wall is a 1d SPT order. Since an invertible domain wall defines a higher isomorphism of $\CY_{\id}$, by adding invertible domain walls of all higher codimensions on $\CY_\phi$ for all $\phi\in\Aut^{br}(\FZ_1(\Sigma\CE),\iota_0)$, we obtain a higher group of 2+1D SPT orders $\CG_3$. If $\CE=\Rep(G)$, by our classification of lower dimensional bosonic SPT orders, we obtain the formula $\pi_i(\CG_3) \simeq H^{3-i}(G,U(1))$, $i=0,1,2$. For a properly defined higher group of the minimal modular extensions of $\Rep(G)$, the same formula is rigorously proved (even for $i=3$, viewed as (-1+1)D SPT orders) in an upcoming paper \cite{DN20}. We can also consider domain walls between $\CY_\phi$ and $\CY_{\phi'}$ for $\phi\nsimeq\phi'$. These walls are gapless in general especially for the fermionic cases, i.e. $\CE=\Rep(G,z)$. They define even more interesting (i.e. gapless) higher morphisms. This picture clearly generalizes to all higher dimensional SPT/SET orders. As we have mentioned in the introduction, from a mathematical point of view, the right question is to study the category of SPT/SET orders instead of the set of them, but is beyond the scope of this work. In this work, we emphasize the importance of the boundary-bulk relation \cite{kwz2}, which should serve as the guiding principle for the future study of the category of SPT/SET orders as in \cite{kwz1}. 
\end{rem}

In the rest of this subsection, we discuss symmetry anomalies (recall Definition\,\ref{defn:hooft-anomaly}) of a  UMTC$_{/\CE}$ $(\CC,\eta_\CC)$, where $\CE$ is a unitary symmetric fusion 1-category. 

\smallskip
We first discuss a relation between anomalies and condensations. Microscopically, a condensation in a potentially anomalous phase can be achieved by introducing new interactions among excitations. Regarding this phase as the gapped boundary of a potentially non-trivial 1-dimensional-higher bulk, we see that these new interaction can be restricted to the boundary of the 1-dimensional-higher bulk without interacting with the physical degrees of freedom in the bulk. Although these new interactions might break the symmetries, it does not change the 1-dimensional-higher bulk up to different symmetries. Macroscopically, a condensation of a potentially anomalous phase produces a new potentially anomalous phase and a gapped domain wall between two phases, which share the same 1-dimensional-higher bulk. This physical intuition coincides with the precise mathematical theory of 1d condensations developed in \cite{kong}. More precisely, in 1d cases, 1d condensations produces Morita equivalent 1d phases, which share the same 2d bulk. We believe that the mathematical theory of higher dimensional SET orders and a higher Morita theory will eventually further confirm this physical intuition. We summarized this physical intuition in the following physical Theorem$^{\mathrm{ph}}$. 

\begin{pthm} \label{pthm:condensation-anomaly}
A (potentially symmetry-breaking) condensation of a (potentially anomalous) SET order does not change the gravitational anomaly (recall Definition\,\ref{defn:gravitational-anomaly}). When there is no gravitational anomaly, a condensation without breaking the symmetry preserves the symmetry anomaly. 
\end{pthm}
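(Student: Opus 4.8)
The plan is to reduce both assertions to a single structural fact: the $1$-dimensional-higher bulk of a (potentially anomalous) gapped phase is an invariant of its category of topological excitations together with all condensation descendants, and a condensation does not change this bulk. Physically this is the content of the remark preceding the theorem: microscopically a condensation of an SET order $\CX$ is realized by turning on new interactions among the excitations of $\CX$; regarding $\CX$ as a gapped boundary of its $1$-higher-dimensional bulk, these interactions are supported on the boundary and never couple to the physical degrees of freedom in the bulk, so the bulk --- as a topological order, with or without its symmetry data --- is left unchanged up to equivalence. Mathematically the statement to invoke is that a condensation replaces the (braided) fusion higher category of excitations by a (higher) Morita-equivalent one, and that the $E_1$-center is a (higher) Morita invariant: in $1$d this is exactly \cite{eno08} (with the $1$d-condensation-theoretic input of \cite{kong}), and in the condensation-completed picture it says that if $\CC$ condenses to $\CC'$ then $\Sigma\CC$ and $\Sigma\CC'$ share the same monoidal center $\FZ_1(\Sigma\CC)\simeq\FZ_1(\Sigma\CC')$, their common $1$-higher-dimensional bulk (recall Theorem$^{\mathrm{ph}}$\,\ref{pthm:cc-M}, Theorem$^{\mathrm{ph}}$\,\ref{pthm:cc-P} and the boundary-bulk relation \cite{kwz1,kwz2}).

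For the first assertion, fix a potentially anomalous SET order $\CX$ with symmetry $\CR$ and let $\CX'$ be obtained by an arbitrary (possibly symmetry-breaking) condensation. By Definition\,\ref{defn:gravitational-anomaly} the gravitational anomaly of $\CX$ (resp. $\CX'$) is that of the topological order $\CX_0$ (resp. $\CX'_0$) obtained by completely breaking the symmetry, and by the boundary-bulk relation this anomaly is detected by the bulk topological order of $\CX_0$, i.e. by $\FZ_1(\Sigma\CC_0)$ where $\CC_0$ is the category of excitations-with-descendants after breaking the symmetry. The key step is that completely breaking the symmetry and performing the condensation commute at the level of topological orders up to equivalence: in either order one arrives at a phase whose excitations-with-descendants are obtained from those of $\CX$ by forgetting the $\CR$-module structure and then passing to modules over the \emph{same} condensable algebra, so $\CX'_0$ is itself a condensation of $\CX_0$. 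Since the center is unchanged by the condensation, $\FZ_1(\Sigma\CC_0)\simeq\FZ_1(\Sigma\CC'_0)$, hence the gravitational anomalies of $\CX$ and $\CX'$ agree.

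For the second assertion, assume $\CX$ has no gravitational anomaly and that the condensation $\CX\to\CX'$ does not break the symmetry, so $\CX'$ is again a SET$_{/\CR}$ order. By Definition\,\ref{defn:hooft-anomaly} the symmetry anomaly of $\CX$ is the equivalence class of its $1$-higher-dimensional bulk, which is an SPT$_{/\CR}$ order; categorically, in the $2$d case this is the pair $(\FZ_1(\Sigma\CC),\iota)$ regarded as an object over $(\FZ_1(\Sigma\CR),\iota_0)$ (recall diagram (\ref{diag:rep-EC}), diagram (\ref{diag:EA}) and Theorem$^{\mathrm{ph}}$\,\ref{pthm:2d-SET-2}), and similarly for $\CX'$. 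A symmetry-preserving condensation is a condensation \emph{relative to} $\CR$: it replaces the category of excitations (equivalently $\CA=\Sigma\CC$) by one that is Morita-equivalent to it \emph{over} $\CR$. The relative boundary-bulk relation then shows that the $\CR$-relative center --- the bulk together with the embedding $\iota$ of the symmetry --- is preserved; in $1$d this is the refinement of Morita invariance of $\FZ_1$ compatible with the embedding of $\CE$ (\cite{eno08,dno,dmno} and diagram (\ref{diag:EA})), and in higher dimensions one uses the expected higher analogue. Hence $(\FZ_1(\Sigma\CC),\iota)\simeq(\FZ_1(\Sigma\CC'),\iota')$ as objects over $(\FZ_1(\Sigma\CR),\iota_0)$, i.e. $\CX$ and $\CX'$ have the same bulk SPT order and therefore the same symmetry anomaly.

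The main obstacle is that this argument rests on higher-categorical machinery that is only partially available: the theory of condensable algebras and higher Morita equivalence in braided fusion $n$-categories, the invariance of the (relative) $E_1$-center under such equivalences, and the boundary-bulk relation in all dimensions. In $1$d and $2$d all of these inputs are rigorous (\cite{eno08,kong,dmno,dno} together with Theorem$^{\mathrm{ph}}$\,\ref{pthm:cc-M}--\ref{pthm:cc-P}), so the theorem stands on firm ground there; in general dimension it is a physical theorem contingent on the expected higher Morita theory. A secondary technical point that should be nailed down is precisely the commutativity, up to equivalence of bulks, of ``breaking the symmetry'' with ``condensing'' used in the first assertion; this ought to follow from a functorial formulation of symmetry breaking as an operation on the category of topological excitations, but we have only argued for it heuristically above.
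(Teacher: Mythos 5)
Your proposal is correct and follows essentially the same route as the paper: the paper justifies this Theorem$^{\mathrm{ph}}$ in the single paragraph preceding it, by the same microscopic argument (a condensation is implemented by new interactions supported on the boundary of the $1$-higher-dimensional bulk, hence cannot alter the bulk) together with the same macroscopic/mathematical input (a condensation produces a Morita-equivalent phase sharing the same bulk, rigorously so in $1$d via \cite{kong,eno08}, and conjecturally via a higher Morita theory in general). Your additional care about commuting ``fully breaking the symmetry'' with ``condensing'' in the first assertion, and about the $\CR$-relative center in the second, goes slightly beyond what the paper writes down, but it is consistent with, and a reasonable sharpening of, the paper's stated physical intuition.
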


Next we argue that if $\mext(\CC,\eta_\CC)$ is empty, $(\CC,\eta_\CC)$ has no gravitational anomaly but only symmetry anomalies. Indeed, a breaking of symmetry can be achieved by condensing a condensable algebra $A$ in $\CE$. As a consequence, we obtain a UMTC $\CC_A$ over $\CE_A$ \cite{dno}. When the symmetry $\CE$ is bosonic, i.e. $\CE=\Rep(G)$, we can break the symmetry completely by choosing $A=\fun(G)$. In this case, $\CE_A=1\hilb$ and $\CC_A$ is a UMTC \cite[Corollary\ 4.31]{DGNO10}, which is anomaly-free. When the symmetry $\CE$ is fermionic, i.e. $\CE=\Rep(G,z)$, we can break the symmetry down to only the fermion parity symmetry $\Rep(\Zb_2, z)$ by choosing $A$ properly. This is always possible because there exists a canonical fiber functor $\Rep(G,z) \to \Rep(\Zb_2, z)$. It has long been conjectured that 2d fermionic topological orders are always anomaly-free \cite{BFHNPRW17}. 

By assuming above conjecture and by Definition\,\ref{defn:gravitational-anomaly} or Theorem$^{\mathrm{ph}}$\,\ref{pthm:condensation-anomaly}, we conclude that if the pair $(\CC,\eta_\CC)$ admits no minimal modular extension, it has no gravitational anomaly but only symmetry anomalies. In other words, its 3d bulk is a (potentially non-trivial) SPT order (recall Definition\,\ref{defn:hooft-anomaly}). A physical example was constructed in \cite{CBVF15}. As a consistent check, note that there is no 3d SPT order with only the fermion parity symmetry according to \cite[Table\ 2]{KTT1429} and \cite{FH160406527}. This result is consistent with the conjecture that 2d fermionic topological orders are all anomaly-free. By the classification of bosonic 3d SPT orders \cite{LW170404221,jf20}, when $\CE=\Rep(G)$, the SET order $(\CC,\eta_\CC)$, which admits no minimal modular extension, has only 't Hooft anomaly (recall Definition\,\ref{defn:hooft-anomaly}).

\medskip
More generally, let $\CR$ be a unitary symmetric fusion 2-category. 
\begin{pthm} \label{pthm:2d-thooft}
A 2d SET$_{/\CR}$ order with only a symmetry anomaly can be described by a unitary fusion 2-category over $\CR$, i.e. a pair $(\CA,\iota_\CA)$, together with a minimal modular extension $(\CM,\iota_\CM)$ of $\CR$ and a braided equivalence $\phi: \CM \to \FZ_1(\CA)$ satisfying $\iota_\CA \simeq \phi\circ \iota_\CM$. 
\end{pthm}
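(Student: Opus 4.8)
The plan is to run, step by step, the argument that established the anomaly-free case (Theorem$^{\mathrm{ph}}$\,\ref{pthm:2d-SET-R}), with the trivial $3$d SPT bulk $(\FZ_1(\CR),\iota_0)$ replaced by a general $3$d SPT$_{/\CR}$ bulk. First I would observe that, exactly as in the anomaly-free case, the full collection of topological excitations of the $2$d SET$_{/\CR}$ order --- particle-like excitations together with all condensation descendants (the $1$d domain walls and the $0$d walls between them) --- assembles into a unitary fusion $2$-category $\CA$, and the symmetry charges, being freely movable in and out of the $2$d world, equip $\CA$ with a braided functor $\iota_\CA : \CR \hookrightarrow \FZ_1(\CA)$ whose composite $\CR \hookrightarrow \FZ_1(\CA) \to \CA$ is faithful (recall Remark\,\ref{rem:over-E=ff}); that is, $\CA$ is a unitary fusion $2$-category over $\CR$. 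By the Condensation-Completion Principle together with the boundary-bulk relation (Theorem$^{\mathrm{ph}}$\,\ref{pthm:center-repM=trivial} and the discussion around it), the $0$-codimensional description of the $3$d bulk of this SET order is the monoidal center $\FZ_1(\CA)$.

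Next I would bring in the hypothesis. By Definition\,\ref{defn:hooft-anomaly}, having ``only a symmetry anomaly'' means precisely that the $3$d bulk carries no intrinsic topological order, i.e. it is a $3$d SPT$_{/\CR}$ order. By the gauging-the-symmetry classification of higher-dimensional SPT orders (Theorem$^{\mathrm{ph}}$\,\ref{pthm:main-1} with $n=2$), such a $3$d SPT$_{/\CR}$ order is described by a minimal modular extension $(\CM,\iota_\CM)$ of $\CR$, and $\CM$ is its $0$-codimensional description. Since the bulk of the given SET order is unique, its two descriptions must agree, so there is a braided equivalence $\phi : \CM \xrightarrow{\simeq} \FZ_1(\CA)$; and because both descriptions record where the symmetry charges sit inside the bulk, $\phi$ must intertwine the two embeddings of $\CR$, i.e. $\iota_\CA \simeq \phi\circ\iota_\CM$, giving the commuting triangle
\[
\xymatrix@R=1em{
& \CR \ar@{^(->}[dl]_{\iota_\CM} \ar@{^(->}[dr]^{\iota_\CA} & \\
\CM \ar[rr]_\simeq^\phi & & \FZ_1(\CA)\, .
}
\]
This produces the asserted data $(\CA,\iota_\CA ; \CM,\iota_\CM ; \phi)$. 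For the converse I would check that any such tuple does describe a $2$d SET$_{/\CR}$ order with only a symmetry anomaly: $\CA$ supplies the layer of topological excitations, the first approach turns $(\CM,\iota_\CM)$ into a $3$d SPT$_{/\CR}$ bulk, and $\phi$ witnesses that the bulk computed from the boundary, $\FZ_1(\CA)$, is identified --- compatibly with the symmetry embeddings --- with $\CM$, so the boundary-bulk relation holds and the bulk is an SPT order. As in Theorem$^{\mathrm{ph}}$\,\ref{pthm:2d-SET-R} one would finally spell out the notion of equivalence of such tuples (monoidal equivalences of $\CA$ over $\CR$, equivalences of minimal modular extensions, and natural isomorphisms relating the $\phi$'s) so that equivalence classes of tuples biject with distinct SET$_{/\CR}$ orders.

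I expect the main obstacle to be not the bookkeeping above but the physical inputs it rests on, none of which is yet rigorous: the boundary-bulk relation for unitary fusion $2$-categories (bulk $=$ monoidal center), the classification in Theorem$^{\mathrm{ph}}$\,\ref{pthm:main-1} of $3$d SPT$_{/\CR}$ orders by minimal modular extensions, and --- most delicately --- the equivalence between ``the $3$d bulk has no intrinsic topological order'' and ``$\FZ_1(\CA)$ is, over $\CR$, braided equivalent to a minimal modular extension of $\CR$''. When $\CR \not\simeq \Sigma\CE$ one must moreover argue that $\CA$ still admits such a center description; in the case $\CR = \Sigma\CE$ the statement should reduce, by Theorem$^{\mathrm{ph}}$\,\ref{pthm:mext-center-EC}, to the assertion that $\CA \simeq \Sigma\CC$ for a UMTC$_{/\CE}$~$\CC$ admitting a minimal modular extension in $\mext(\CC)$, the anomaly-free case corresponding to $\CM \simeq \FZ_1(\Sigma\CE)$. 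Making all of this precise would require the full theory of (unitary) multi-fusion $2$-categories and a higher Morita theory, and is beyond the scope of this work.
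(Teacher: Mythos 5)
Your proposal is correct and coincides with the paper's own (largely implicit) justification: the paper states this as a physical theorem without a written proof, and the intended argument is exactly the one you give --- rerun the boundary-bulk analysis of Theorem$^{\mathrm{ph}}$\,\ref{pthm:2d-SET-R} with the trivial bulk $(\FZ_1(\CR),\iota_0)$ replaced by the nontrivial $3$d SPT$_{/\CR}$ bulk $(\CM,\iota_\CM)$ supplied by Definition\,\ref{defn:hooft-anomaly} and Theorem$^{\mathrm{ph}}$\,\ref{pthm:main-1}, precisely as the paper does explicitly for the $1$d analogue Theorem$^{\mathrm{ph}}$\,\ref{pthm:1d-thooft}. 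Your closing caveats about which inputs remain non-rigorous are consistent with the paper's own disclaimers.
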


\begin{expl}
If $\CR=\Sigma\CE$ and $\CE=1\Rep(G)$, then we expect that $\CM\simeq^{br} \FZ_1(2\hilb_G^\omega)$ for a non-trivial cocycle $\omega\in H^4(G,U(1))$ (see Theorem$^{\mathrm{ph}}$\,\ref{pthm:main-1} and \cite{LW170404221,jf20}).  
\end{expl}

Recall that two UMTC$_{/\CE}$ $\CC_1$ and $\CC_2$ are called Witt$_{/\CE}$ equivalent if there are unitary fusion 1-categories $\CA_1,\CA_2$ over $\CE$ such that there is a braided equivalence 
\[
\CC_1 \boxtimes_\CE \FZ_2(\CE;\FZ_1(\CA_1)) \simeq \CC_2 \boxtimes_\CE \FZ_2(\CE;\FZ_1(\CA_2)).  
\]
The physical meaning of this Witt$_{/\CE}$ equivalence is that the 2d potentially anomalous SET's associated to $\CC_1$ and $\CC_2$ can be obtained via anyon condensations without breaking the symmetry from the same potentially anomalous SET \cite{dno}. By Theorem$^{\mathrm{ph}}$\,\ref{pthm:condensation-anomaly}, UMTC$_{/\CE}$'s in the same Witt$_{/\CE}$ equivalence class should share the same symmetry anomaly. As a consequence, there should be a well-defined group homomorphism 
\begin{align} \label{eq:witt}
\mbox{Witt$_{/\CE}$ group} &\xrightarrow{\FZ_1\circ \Sigma} \mext(\Sigma\CE)   \\
[\CC]_{/\CE} &\hspace{1mm} \mapsto \quad (\FZ_1(\Sigma\CC),\iota),  \nonumber
\end{align}
where $\iota$ is defined in Remark~\ref{rem:over-E=ff}. This group homomorphism might shed light on the study of both sides. We leave a systematic study of symmetry anomalies to the future. 

\subsection{\texorpdfstring{$n$}{n}d SPT/SET orders} \label{sec:nd-SET-2}

Theorem$^{\mathrm{ph}}$\,\ref{pthm:2d-SET-R} is ready to be generalized to all dimensions. For $n\geq 1$, let $\CR$ be a higher symmetry defined by a unitary symmetric fusion $n$-category $\CR$.

\void{
\medskip
We would like to reformulate above setup in a different but equivalent way, which is ready to be generalized. Set $\Sigma\CE \coloneqq \LMod_\CE((n+1)\vect)$. We introduce a couple of useful notions: 
\bnu
\item A fusion $n$-category $\CA$ is called connected if $\hom_\CA(\one_\CA,x)$ is not trivial for all non-zero object $x\in\CA$. It is not hard to see that a connected fusion $n$-category $\CA\simeq \Rep(\Omega\CA)$. 
\item A fusion $n$-category $\CA$ over a braided fusion $n$-category $\CB$ is a fusion $n$-category $\CA$ equipped with a braided embedding $\CB \hookrightarrow \FZ_1(\CA)$. For example, the braided fusion $n$-category $\CB$ is a fusion $n$-category over $\CB$ because there is a canonical embedding $\CB \hookrightarrow \FZ_1(\CB)$. 
\enu
}

\begin{defn} \label{def:n-fusion-over-R}
For $n\geq 1$, a unitary fusion $n$-category over $\CR$ is a pair $(\CA,\iota_\CA)$, where $\CA$ is a unitary fusion $n$-category $\CA$ and $\iota_\CA: \CR \hookrightarrow \FZ_1(\CA)$ is a braided faithful functor such that the composed functor $\CR \hookrightarrow \FZ_1(\CA) \to \CA$ is also faithful.
\end{defn}

The simplest unitary fusion $n$-category over $\CR$ is given by $(\CR,\iota_0)$, where $\iota_0: \CR \hookrightarrow \FZ_1(\CR)$ is the canonical braided faithful functor.  

\begin{defn}[\cite{dno} for $n=1$] \label{def:n-equivalence-over-R}
For $n\geq 1$, two unitary fusion $n$-categories over $\CR$, i.e. two pairs $(\CA,\iota_\CA)$ and $(\CA',\iota_{\CA'})$,  are called equivalent if there is a unitary monoidal equivalence $f:\CA\to \CA'$ rendering the following diagram commutative:
\[
\begin{tikzcd} 
\CR \arrow[hook]{r}{\iota_\CA}   \arrow[hook]{d}[swap]{\iota_{\CA'}}  & \FZ_1(\CA) \arrow{d}{\simeq}[swap]{}   \\
\FZ_1(\CA') \arrow{r}{\simeq} & \FZ_1(f),
\end{tikzcd} 
\]
where $\FZ_1(f) = \fun_{\CA|\CA'}({}_f\CA', {}_f\CA')$ and ${}_f\CA'$ is an $\CA$-$\CA'$-bimodule with the left $\CA$-module structure induced from the monoidal functor $f$. 
\end{defn}

All such equivalences, together with higher isomorphisms, form an $n$-groupoid. When $\CA'=\CA$, we denote the underlying group by $\Aut^\otimes(\CA,\iota_\CA)$. We denote the set of equivalence classes of braided equivalences $\phi:\FZ_1(\CR)\to \FZ_1(\CA)$ preserving the symmetries (i.e. $\phi\circ\iota_0\simeq\iota_\CA$) by $\mathrm{BrEq}((\FZ_1(\CR),\iota_0),(\FZ_1(\CA),\iota_\CA))$, which is equipped with a natural $\Aut^\otimes(\CA,\iota_\CA)$-action.

\begin{pthm} \label{pthm:main-2}
For $n\geq 1$, we call an $n$d (spatial dimension) SPT/SET order with a higher symmetry $\CR$ an $n$d SPT/SET$_{/\CR}$ order. We propose the following classification. 
\bnu 
\item An anomaly-free $n$d SET$_{/\CR}$ order is uniquely characterized by
a pair $(\CA,\phi)$, where $\CA$ is a unitary fusion $n$-category over $\CR$ describing all topological excitations (including all condensation descendants) and $\phi: \FZ_1(\CR) \to \FZ_1(\CA)$ is a braided equivalence rendering the following diagram commutative (up to natural isomorphisms):
\[
\xymatrix@R=1em{
& \CR \ar@{^(->}[dl]_{\iota_0} \ar@{^(->}[dr]^{\iota_\CA} & \\
\FZ_1(\CR) \ar[rr]_\simeq^\phi & & \FZ_1(\CA).
}
\]

\item When $\CA=\CR$, the pair $(\CR, \phi)$ describes an SPT$_{/\CR}$ order and $(\CR,\id_{\FZ_1(\CR)})$ describes the trivial SPT$_{/\CR}$ order. Moreover, the group of all SPT$_{/\CR}$ orders is isomorphic to the group $\Aut^{br}(\FZ_1(\CR),\iota_0)$, which denotes the underlying group of braided autoequivalences of $\FZ_1(\CR)$ preserving $\iota_0$. For a given category $\CA$ of topological excitations (including all condensation descendants), i.e. a unitary fusion $n$-category over $\CR$, we have
\enu
\[
\{ \mbox{$n$d anomaly-free SET$_{/\CR}$ orders with topological excitations $\CA$} \} = \frac{ \mathrm{BrEq}((\FZ_1(\CR),\iota_0),(\FZ_1(\CA),\iota_\CA))}{\Aut^\otimes(\CA,\iota_\CA)}\, .
\]
\end{pthm}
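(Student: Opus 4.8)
The plan is to derive this classification by generalizing, essentially verbatim, the boundary-bulk argument already used in the $1$d case (Theorem$^{\mathrm{ph}}$\,\ref{pthm:1d-spt-2}) and the $2$d case (Theorem$^{\mathrm{ph}}$\,\ref{pthm:2d-SET-R}), now phrased for unitary fusion $n$-categories and their $E_1$-centers. The three inputs I would rely on are: (i) Theorem$^{\mathrm{ph}}$\,\ref{pthm:trivial-bulk}, that the unique bulk of an anomaly-free $n$d SET order is the trivial $\nao$d SPT order; (ii) the first-approach description of that trivial bulk as the pair $(\FZ_1(\CR),\iota_0)$ coming from Theorem$^{\mathrm{ph}}$\,\ref{pthm:main-1}; and (iii) the Condensation-Completion Principle together with the boundary-bulk relation of \cite{kwz1,kwz2}: when the SET order is realized as a $1$-codimensional gapped boundary of its bulk, the relevant datum is a unitary fusion $n$-category $\CA$ — the condensation completion of the category of $2$-or-higher-codimensional topological excitations — whose monoidal center $\FZ_1(\CA)$ is the $0$-codimensional description of the $\nao$d bulk.

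First I would construct the data. Because the symmetry charges $\CR$ and all of their condensation descendants can be freely created, moved into the bulk, and annihilated, $\CR$ embeds into $\CA$ with a half-braiding, yielding the braided faithful functor $\iota_\CA:\CR\hookrightarrow\FZ_1(\CA)$ of Definition\,\ref{def:n-fusion-over-R} (compare Remark\,\ref{rem:over-E=ff}). Combining (i) and (ii) with the boundary-bulk relation, $\FZ_1(\CA)$ must be braided equivalent to $\FZ_1(\CR)$; physically the equivalence $\phi$ is realized by tunnelling through the $n$d invertible domain wall $\CY_\phi$ separating two copies of the trivial $\nao$d SPT order, exactly as in Figure\,\ref{fig:2d-SET}. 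Since tunnelling carries symmetry charges to symmetry charges, $\phi\circ\iota_0\simeq\iota_\CA$, which is the commuting triangle. This produces the pair $(\CA,\phi)$ from any anomaly-free $n$d SET$_{/\CR}$ order.

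Next I would show that the data determines the phase up to the stated ambiguity. Conversely, $(\CA,\phi)$ reconstructs the boundary together with its bulk; two pairs $(\CA,\phi)$, $(\CA',\phi')$ describe the same SET order exactly when there is a unitary monoidal equivalence $f:\CA\to\CA'$ over $\CR$ as in Definition\,\ref{def:n-equivalence-over-R} intertwining $\phi$ and $\phi'$, so that for fixed $\CA$ the remaining freedom is precisely the $\Aut^\otimes(\CA,\iota_\CA)$-action — giving the quotient formula. Here I would also check the analogue of the remark after Definition\,\ref{equiquad}, namely that the bulk equivalence is pinned down by the boundary equivalence up to isomorphism, so that no extra datum is hidden. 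For the SPT case $\CA=\CR$, the pair $(\FZ_1(\CR),\id)$ describes the trivial bulk hence the trivial SPT order; and the physical stacking of two $n$d SPT$_{/\CR}$ orders amounts to stacking two copies of the fan of Figure\,\ref{fig:2d-SET}, under which the bulk walls compose as $\CY_{\phi_1}\boxtimes_{\FZ_1(\CR)}\CY_{\phi_2}\simeq\CY_{\phi_1\circ\phi_2}$. Hence stacking corresponds to composition of braided autoequivalences and the group of SPT$_{/\CR}$ orders is $\Aut^{br}(\FZ_1(\CR),\iota_0)$ (commutativity as in Remark\,\ref{rem:spt-commutative}).

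The hard part is foundational rather than conceptual: everything above presupposes a working theory of unitary (multi-)fusion $n$-categories, of the $E_1$-center $\FZ_1(-)$, of the delooping/condensation completion $\Sigma(-)$, and of a higher Morita theory in which ``bulk $=$ center of boundary'' is a theorem — ingredients currently available only in low dimensions or in the non-unitary framework of \cite{gjf19,jf20}. In particular one must know that $\FZ_1$ of the condensation completion indeed computes the $\nao$d bulk, that $\CR$ stays symmetric fusion so that $\iota_0$ exists and is faithful, and that the commuting triangle together with $\CA$ genuinely rigidifies the phase modulo $\Aut^\otimes(\CA,\iota_\CA)$ with no residual higher data — precisely the points the paper flags as conjectural, which is why the statement is recorded as a Theorem$^{\mathrm{ph}}$ rather than a Theorem.
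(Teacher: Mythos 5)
Your proposal reproduces the paper's own justification: the $n$d statement is obtained exactly as you describe, by transporting the boundary-bulk argument of the 1d and 2d cases (Theorem$^{\mathrm{ph}}$\,\ref{pthm:1d-spt-2}, \ref{pthm:2d-SET-R} and Figure\,\ref{fig:2d-SET}) verbatim to unitary fusion $n$-categories, with the trivial $\nao$d bulk supplied by the first approach as $(\FZ_1(\CR),\iota_0)$, the pair $(\CA,\phi)$ extracted via the condensation-completion principle and the boundary-bulk relation, the residual ambiguity identified with $\Aut^\otimes(\CA,\iota_\CA)$, and stacking matched to composition of autoequivalences. You also correctly identify the same foundational caveats the paper flags as the reason the result is recorded as a Theorem$^{\mathrm{ph}}$, so no gap relative to the paper's argument.
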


\begin{rem} \label{rem:TO}
When the higher symmetry $\CR$ is trivial, i.e. $\CR=n\hilb$, the statement of Theorem$^{\mathrm{ph}}$\,\ref{pthm:main-2} reduces to the classification of topological orders modulo invertible topological orders without symmetries (see \cite{kwz1,jf20}). This result was first proposed in \cite{kwz1}. But the definition of a multi-fusion $n$-category given in \cite{kwz1} is wrong due to the lack of Karoubi completion \cite{gjf19}. The correct one is given in \cite{jf20}, where one can also find many strong results with rigorous proof. The notion of unitarity for higher category is still missing (see Remark\,\ref{rem:unitary}). 
\end{rem}

\begin{rem} \label{rem:spatial-morita-eq}
Note that the braided equivalence $\phi: \FZ_1(\CR) \xrightarrow{\simeq} \FZ_1(\CA)$ seems to suggests that $\CA$ is ``Morita invertible'' over $\CR$. But one has to take this ``Morita invertibility'' with caution because the bimodule $\CX$ defining the ``Morita equivalence'' can be gapless (see the example given in Remark\,\ref{rem:16-fold-way}). An example of ``gapless Morita equivalence'' was introduced in \cite{kz19b} for 1+1D gapless edges of 2+1D topological orders under the name of ``spatially Morita equivalence''. 
\end{rem}

\begin{rem} The statement of Theorem$^{\mathrm{ph}}$\,\ref{pthm:main-2} makes sense if $\CR$ is only a unitary braided fusion $n$-category but not symmetric. In this case, $\CR$ contains non-trivial topological excitations. Can we still view $\CR$ as some kind of higher symmetries? It is interesting to investigate this question along the line of \cite{JW19}. 
\end{rem}

\void{
\begin{thm} \label{thm:pic-nrepG-coh}
$\Pic(n\Rep G) \simeq H^{n+1}(G,U(1))$. 
\end{thm}
\pf
By definition, $\Pic(n\Rep G)$ is the group formed by the equivlence classes of invertible objects of $(\nao)\Rep G$. That is, $\Pic(n\Rep G) = \pi_0((\nao)\Rep G^\times)$, where $(\nao)\Rep G^\times$ is the $(\nao)$-groupoid obtained by discarding all non-invertible objects and morphisms of $(\nao)\Rep G$. We have $(\nao)\Rep G \simeq \fun(G,(\nao)\hilb)$, where $G$ is regarded as a groupoid with a single object, i.e. a 1-group. Thus $(\nao)\Rep G^\times \simeq \fun(G,(\nao)\hilb^\times)$. Note that the $(\nao)$-groupoid $(\nao)\hilb^\times$ consists of a single object, a single $k$-morphism for $1\leq k\le n$ and the space of (\nao)-morphisms $\Cb^\times \simeq U(1)$. Namely, $(\nao)\hilb^\times \simeq K(\Zb,n+2)$. Therefore, $\Pic(n\Rep G) \simeq \pi_0 \fun(G,K(\Zb,n+2)) \simeq H^{n+2}(G,\Zb) \simeq H^{n+1}(G,U(1))$.
\epf

\begin{thm}
For an $n$-group $\CG^{(n)}$, $\Pic(\Rep(\CG^{(n)}))\simeq H^{n+1}(\CG^{(n)},U(1))$.
\end{thm}
\pf
The proof is the same as that of Theorem\,\ref{thm:pic-nrepG-coh}. 
\epf
}

\begin{rem} \label{rem:wen14}
We would like to point out that our classification of SPT orders, even in the case $\CR=n\Rep(G)$, goes beyond the usual group-cohomology classification (see also \cite{W1477}).

\void{where the $n$d bosonic SPT phases with symmetry $G$ are classified via cohomology $H^{n+1}(B(G \times SO_\infty),U(1))$. This classification is not one-to-one. Each element in $H^{n+1}(G \times SO_\infty,U(1))$ corresponds to a SPT phase, but different elements may sometimes correspond to the same SPT phase. For a finite group $G$, the above result can be reduced to $(\oplus_{k=1}^{n} H^k(G,i\mathrm{TO}^{n+1-k}))\oplus
H^{n+1}(G,U(1))$. Some examples are given in the following table: 
\[
\begin{tabular}{ |c|c|c|c|c|c|c|c| }
 \hline
symmetry & 0+1D & 1+1D & 2+1D &3+1D & 4+1D & 5+1D & 6+1D  \\
\hline
$Z_m$ & $\Zb_m$ & 0 & $\Zb_m$ & 0 & ${\Zb_m}\oplus\blue{\Zb_m}$ & $\blue{\Zb_{\<m,2\>}}$ & ${\Zb_m}\oplus\blue{\Zb_m\oplus \Zb_{\<m,2\>}}$  \\
$Z_2^T$ & 0 & $\Zb_2$ & 0 & ${\Zb_2}\oplus\blue{\Zb_2}$ & 0 & ${\Zb_2}\oplus \blue{2\Zb_2}$ & $\blue{\Zb_2}$ \\
\hline
\end{tabular}
\]
where entries in black are SPT phases in $H^{n+1}(G,U(1))$; those in blue are the beyond-group-cohomology SPT phases (i.e. in $\oplus_{k=1}^{n} H^k(BG,i\mathrm{TO}^{n+1-k})$). Here $i\mathrm{TO}^{n+1}$ is the abelian group of (\nao)D invertible topological orders, whose multiplication is stacking, and is given by
\begin{align*}
\begin{matrix}
n+1\text{D}: & 0+1 & 1+1 & 2+1 & 3+1 & 4+1 \\
i\mathrm{TO}^{n+1}: & 0 & 0 & \Zb & 0 & \Zb_2.\\
\end{matrix}
\end{align*}
The study of the precise relation between the results in this work and that in \cite{W1477} will be left to future works. }
\end{rem}

The assumed compatibility of Theorem$^{\mathrm{ph}}$\,\ref{pthm:main-1} and \ref{pthm:main-2} has some immediate consequences. We briefly discuss a few of them in the rest of this subsection. We start from the case $\CR=n\hilb$. In this case, we obtain the compatibility of 
the 0-codimensional description and the 1-codimensional description of anomaly-free $n$d topological orders for $n\geq 2$. This compatibility was a basic assumption in \cite{kwz1}.
\begin{pthm} 
For $n\geq 2$, a  unitary fusion $n$-category $\CT$ has the trivial monoidal center if and only if there is a unitary modular ($\nmo$)-category $\CM$ such that $\CT \simeq \Sigma\CM$ as unitary fusion $n$-categories. 
\end{pthm}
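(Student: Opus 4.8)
The plan is to deduce both implications from the condensation-completion picture of Section~\ref{sec:cc}, together with the definitional identity $\FZ_2(\CM)=\Omega\FZ_1(\Sigma\CM)$ and the computation $\Omega(n\hilb)=(\nmo)\hilb$. The substantial input, borrowed from the non-unitary theory of \cite{jf20}, is the structural fact that in an anomaly-free $n$d topological order every defect of positive codimension is a condensation descendant of the point-like excitations; this is exactly what upgrades the $n=2$ case, Theorem$^{\mathrm{ph}}$~\ref{pthm:T=sigmaM}, to all dimensions.

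For the ``only if'' direction, assume $\FZ_1(\CT)\simeq n\hilb$. I would first show that $\CT$ is the condensation completion of its looping: since $1$-codimensional (and higher-codimensional) defects in $\CT$ cannot be detected by braiding, they must be condensation descendants of the point-like excitations, which do braid; running the bootstrap of Theorem$^{\mathrm{ph}}$~\ref{pthm:T=sigmaM}, with Theorem$^{\mathrm{ph}}$~\ref{thm:fun-PP} and Remark~\ref{rem:fun-PP-proof-2} replaced by their higher-dimensional analogues --- i.e. \cite[Corollary\,IV.2]{jf20}, see Remark~\ref{rem:theo-n-umtc} --- yields a unitary braided fusion $(\nmo)$-category $\CM\coloneqq\Omega\CT$ together with an equivalence $\CT\simeq\Sigma\CM$. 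Granting this, $\FZ_1(\Sigma\CM)\simeq\FZ_1(\CT)\simeq n\hilb$, hence $\FZ_2(\CM)=\Omega\FZ_1(\Sigma\CM)=\Omega(n\hilb)=(\nmo)\hilb$, so $\CM$ is a unitary modular $(\nmo)$-category, as required.

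For the ``if'' direction, assume $\CT\simeq\Sigma\CM$ with $\CM$ a unitary modular $(\nmo)$-category. Then $\FZ_1(\CT)\simeq\FZ_1(\Sigma\CM)$ and $\Omega\FZ_1(\Sigma\CM)=\FZ_2(\CM)\simeq(\nmo)\hilb$; but triviality of the looping does not by itself force $\FZ_1(\Sigma\CM)$ to be trivial (as $2\hilb_H$, with $\Omega(2\hilb_H)\simeq\hilb$, shows), so I would additionally invoke the same structural input --- the argument sketched after Theorem$^{\mathrm{ph}}$~\ref{pthm:center-repM=trivial}, proved in general in \cite[Corollary\,IV.2]{jf20} --- that every object of the center $\FZ_1(\Sigma\CM)$ is a direct sum of $\id_{\Sigma\CM}$. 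A connected braided fusion $n$-category is the delooping of its looping, so $\FZ_1(\Sigma\CM)\simeq\Sigma\Omega\FZ_1(\Sigma\CM)\simeq\Sigma(\nmo)\hilb=n\hilb$, i.e. $\CT$ has trivial monoidal center.

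The main obstacle is precisely the structural input used in both directions: for general $n$ this is \cite[Corollary\,IV.2]{jf20}, which is stated non-unitarily, and promoting it to the unitary setting requires the still-missing theory of unitary multi-fusion $n$-categories (Remark~\ref{rem:unitary}); as elsewhere in this paper we take this compatibility on faith. Everything else --- the adjunction $\Sigma\dashv\Omega$, the identity $\FZ_2=\Omega\FZ_1\Sigma$, and $\Omega(n\hilb)=(\nmo)\hilb$ --- is formal.
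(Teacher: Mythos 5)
Your argument is correct at the level of rigor this paper operates at, and it follows exactly the route the paper gestures at: the paper states this Theorem$^{\mathrm{ph}}$ without proof, presenting it as an immediate consequence of the assumed compatibility of Theorem$^{\mathrm{ph}}$\,\ref{pthm:main-1} and \ref{pthm:main-2}, and deferring the substance to \cite[Corollary IV.2]{jf20} via Remark\,\ref{rem:theo-n-umtc} --- precisely the structural input you isolate. Your expansion of the $2$d arguments (Theorem$^{\mathrm{ph}}$\,\ref{pthm:center-repM=trivial} and \ref{pthm:T=sigmaM}) to general $n$, including the correct observation that in the ``if'' direction one needs connectedness of $\FZ_1(\Sigma\CM)$ and not merely triviality of its looping (cf.\ $2\hilb_H$), and your flagging of the unitary/non-unitary gap of Remark\,\ref{rem:unitary}, are consistent with the paper's treatment and in fact supply more detail than it does.
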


\begin{rem} \label{rem:theo-n-umtc}
The non-unitary version of this result is proved recently by Johnson-Freyd in \cite[Corollary IV.2]{jf20}. Although Johnson-Freyd ignored the subtle issue of the lack of a universally accepted and well developed model or theory of weak $n$-categories, for physically oriented readers, we believe that it is safe to take this result as a theorem. 
\end{rem}

Similarly, we expect a generalization of (\ref{eq:z0-sigma}) for higher fusion categories. 
\begin{pthm}
For $n\geq 1$ and a unitary fusion $n$-categories $\CP$, there should be a natural monoidal equivalence: 
\be \label{eq:z0-sigma-2}
\Sigma\FZ_1(\CP) \simeq \FZ_0(\Sigma\CP) \coloneqq \fun_{n\hilb}(\Sigma\CP,\Sigma\CP). 
\ee
\end{pthm}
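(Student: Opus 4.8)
The plan is to generalize the proof of the $n=1$ case (Theorem$^{\mathrm{ph}}$\,\ref{thm:fun-PP}) almost verbatim, replacing the $2$-categorical module theory used there by the $(\nao)$-categorical Morita theory of \cite{gjf19,jf20}. First I would translate both sides into module-theoretic language. By the Condensation-Completion Principle and the higher analogue of Theorem$^{\mathrm{ph}}$\,\ref{pthm:cc-P}, the condensation completion $\Sigma\CP$ is $\RMod_\CP((\nao)\hilb)$, the $(\nao)$-category of right $\CP$-modules in $(\nao)\hilb$; likewise $\Sigma\FZ_1(\CP) = \RMod_{\FZ_1(\CP)}((\nao)\hilb)$. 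A higher Eilenberg--Watts theorem --- the statement that $(\nao)\hilb$-linear endofunctors of $\RMod_\CP$ are exactly $\CP$-bimodules, with composition given by the relative tensor product $\boxtimes_\CP$ --- then identifies $\FZ_0(\Sigma\CP)$ with $\BMod_{\CP|\CP}((\nao)\hilb)$. So the claim reduces to a monoidal equivalence $\RMod_{\FZ_1(\CP)}((\nao)\hilb) \simeq \BMod_{\CP|\CP}((\nao)\hilb)$.

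Next I would write down the comparison functor exactly as in the $n=1$ proof. Regard $\CP$ as a $\FZ_1(\CP)$-$(\CP^\rev\boxtimes\CP)$-bimodule, with the left $\FZ_1(\CP)$-action induced by the forgetful functor $\FZ_1(\CP)\to\CP$ and the right $(\CP^\rev\boxtimes\CP)$-action being the regular $\CP$-bimodule structure; the half-braiding carried by an object of the center ensures the two actions commute. Define $\Phi: \RMod_{\FZ_1(\CP)} \to \BMod_{\CP|\CP}$ by $X\mapsto X\boxtimes_{\FZ_1(\CP)}\CP$ and $f\mapsto f\boxtimes_{\FZ_1(\CP)}\id_\CP$, and $\Psi: \BMod_{\CP|\CP}\to\RMod_{\FZ_1(\CP)}$ by $Y\mapsto Y\boxtimes_{\CP^\rev\boxtimes\CP}\CP$ and $g\mapsto\id_\CP\boxtimes_{\CP^\rev\boxtimes\CP}g$. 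That $\Phi$ and $\Psi$ are mutually quasi-inverse reduces, by associativity of relative tensor products, to the two bimodule equivalences
\be
\CP\boxtimes_{\CP^\rev\boxtimes\CP}\CP \simeq \FZ_1(\CP) \quad\text{and}\quad \CP\boxtimes_{\FZ_1(\CP)}\CP \simeq \CP^\rev\boxtimes\CP,
\ee
i.e. to the assertion that $\CP$ is an invertible bimodule implementing a Morita equivalence between $\FZ_1(\CP)$ and $\CP^\rev\boxtimes\CP$. This is the $(\nao)$-categorical incarnation of the classical fact that the center of a separable algebra is Morita equivalent to its enveloping algebra; the separability input here is precisely the full dualizability/separability of fusion $n$-categories established in \cite{jf20}.

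Finally I would verify monoidality. The $E_2$-structure (braiding) on $\FZ_1(\CP)$ promotes $\RMod_{\FZ_1(\CP)}$ to an $E_1$-category via $X\boxtimes_{\FZ_1(\CP)}X'$ (one uses the braiding to turn the relative tensor product of right modules back into a right module), and one checks $\Phi(X\boxtimes_{\FZ_1(\CP)}X')\simeq\Phi(X)\boxtimes_\CP\Phi(X')$ using $\CP\boxtimes_\CP M\simeq M$ and the compatibility of the forgetful functor $\FZ_1(\CP)\to\CP$ with the bimodule structure. This is routine in content but requires assembling the higher coherence data, and is cleanest to organize through the universal properties of relative tensor products of $(\nao)$-categories.

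The main obstacle is essentially foundational and comes in two layers. First, the argument presupposes a fully developed $(\nao)$-categorical Eilenberg--Watts theorem and Morita theory in the \emph{unitary} setting, together with the non-trivial separability/full-dualizability of unitary fusion $n$-categories, none of which is presently available in the needed generality --- although the non-unitary counterparts should be within reach of \cite{jf20,gjf19}. Second, even granting this, the monoidality check amounts to matching two $E_1$-structures coherently at the level of $n$-categories, where the bookkeeping of higher coherences is the genuinely hard step. This is why we record the statement as a physical theorem; for $n=1$ it is precisely Theorem$^{\mathrm{ph}}$\,\ref{thm:fun-PP}, and a rigorous proof in general should follow once the higher Morita theory of \cite{jf20,gjf19} is extended to the unitary world.
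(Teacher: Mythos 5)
Your proposal follows the same route as the paper: the paper offers no separate argument for the general-$n$ statement, presenting it only as the expected generalization of the $n=1$ case (Theorem$^{\mathrm{ph}}$\,\ref{thm:fun-PP}), whose proof --- the identification $\FZ_0(\Sigma\CP)\simeq\BMod_{\CP|\CP}$, the functor $X\mapsto X\boxtimes_{\FZ_1(\CP)}\CP$, and its quasi-inverse $Y\mapsto Y\boxtimes_{\CP^\rev\boxtimes\CP}\CP$ --- is exactly what you reproduce and lift to $(\nao)$-categories. Your added commentary (the higher Eilenberg--Watts step, the Morita equivalence between $\FZ_1(\CP)$ and $\CP^\rev\boxtimes\CP$, and the foundational caveats about unitary higher Morita theory) is consistent with, and if anything more explicit than, the paper's treatment.
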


When the higher symmetry $\CR$ is non-trivial, the assumed compatibility of Theorem$^{\mathrm{ph}}$\,\ref{pthm:main-1} and \ref{pthm:main-2} can be stated more precisely. Let $\CE$ be a unitary symmetric fusion $n$-category and $(\CC,\eta_\CC)$ a unitary modular $n$-category over $\CE$. Similar to the 2d case (recall Remark\,\ref{rem:over-E=ff}), we expect that there should be a natural braided embedding $\iota: \Sigma\CE \hookrightarrow \FZ_1(\Sigma\CC)$ rendering $\Sigma\CC$ a unitary fusion ($\nao$)-category over $\Sigma\CE$.  
\begin{pthm}
There should be a natural isomorphism between the set of minimal modular extensions of $(\CC,\eta_\CC)$ and the set of $\mathrm{BrEq}((\FZ_1(\Sigma\CE),\iota_0), (\FZ_1(\Sigma\CC),\iota))$. When $\CC=\CE$, this isomorphism should be a group isomorphism. 
\end{pthm}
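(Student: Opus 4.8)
The plan is to prove the bijection by generalizing to all $n$ the $n=2$ argument behind Theorem$^{\mathrm{ph}}$\,\ref{pthm:2d-SET-2} and the accompanying discussion of the map $h$ and the product $\star$, replacing each $1$-categorical input by its conjectural $n$-categorical counterpart and taking the foundations of multi-fusion $n$-categories from \cite{jf20,gjf19} for granted.

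First I would construct the forward map $h\colon \mathrm{BrEq}((\FZ_1(\Sigma\CE),\iota_0),(\FZ_1(\Sigma\CC),\iota)) \to \mext(\CC,\eta_\CC)$. Given such a $\phi$, the higher-dimensional analogue of the correspondence between braided autoequivalences of a center and invertible codimension-$1$ defects (generalizing \cite{eno09} and \cite[Cor.\,3.3.10]{kz}) yields an invertible $n$d domain wall $\CY_\phi$ between the $(\nao)$d bulks $\FZ_1(\Sigma\CE)$ and $\FZ_1(\Sigma\CC)$, normalized so that the identification $\iota\simeq\phi\circ\iota_0$ is the one implemented by $\CY_\phi$. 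I would then ``close the fan'' exactly as in \eqref{eq:sigma-M}, setting
\[
\Sigma\CM_\phi \;:=\; \Sigma\CC \boxtimes_{\FZ_1(\Sigma\CC)} \CY_\phi \boxtimes_{\FZ_1(\Sigma\CE)} (\Sigma\CE)^\rev ,
\]
a unitary fusion $(\nao)$-category. Two things then need checking: (i) $\FZ_1(\Sigma\CM_\phi)\simeq(\nao)\hilb$, so that by the higher analogue of Theorem$^{\mathrm{ph}}$\,\ref{pthm:T=sigmaM} (i.e.\ \cite[Cor.\,IV.2]{jf20}) $\Sigma\CM_\phi$ is the delooping of a unitary modular $n$-category $\CM_\phi$ (equivalently $\CM_\phi=\Omega\Sigma\CM_\phi$); and (ii) with $\iota_{\CM_\phi}\colon\CC\hookrightarrow\CM_\phi$ obtained by applying $\Omega$ to the canonical inclusion $\Sigma\CC\hookrightarrow\Sigma\CM_\phi$, the canonical functor $\FZ_2(\CC)\to\FZ_2(\iota_{\CM_\phi})$ is a braided equivalence --- this is exactly where the hypothesis $\iota\simeq\phi\circ\iota_0$ enters, forcing the copy of $\CE=\FZ_2(\CC)$ inside $\Sigma\CC$ to land, under the fusion, precisely on the $E_2$-centralizer of $\CC$ contributed by the $(\Sigma\CE)^\rev$-factor. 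This gives $(\CM_\phi,\iota_{\CM_\phi})\in\mext(\CC,\eta_\CC)$.

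For the inverse, given $(\CM,\iota_\CM)\in\mext(\CC,\eta_\CC)$ the delooping $\Sigma\CM$ is an anomaly-free $n$d topological order equipped with canonical inclusions $\Sigma\CC\hookrightarrow\Sigma\CM$ (from $\iota_\CM$) and $\Sigma\CE\hookrightarrow\Sigma\CM$ (from $\CE\simeq\FZ_2(\iota_\CM)\hookrightarrow\CM$). I would ``cut the fan open'': the boundary--bulk relation applied to $\Sigma\CM$, viewed as the result of fusing $\Sigma\CC$ and $(\Sigma\CE)^\rev$ along a common relative bulk, together with these inclusions, reconstructs an invertible $n$d domain wall between $\FZ_1(\Sigma\CC)$ and $\FZ_1(\Sigma\CE)$, hence a braided equivalence $\phi\colon\FZ_1(\Sigma\CE)\xrightarrow{\simeq}\FZ_1(\Sigma\CC)$ with $\phi\circ\iota_0\simeq\iota$. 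That this assignment and $h$ are mutually inverse reduces to cancellation identities such as $\Sigma\CC\boxtimes_{\FZ_1(\Sigma\CC)}\FZ_1(\Sigma\CC)\simeq\Sigma\CC$ and the $n$d versions of \cite[Thm.\,3.1.7,\,3.3.7]{kz}, together with functoriality of $\phi\mapsto\CY_\phi$. When $\CC=\CE$ I would then upgrade the bijection to a group isomorphism: $\mext(\CE)$ carries the stacking product $\star$ (the $n$d analogue of \eqref{eq:def-star}) and $\Aut^{br}(\FZ_1(\Sigma\CE),\iota_0)$ carries composition, and their compatibility is the $n$d form of Theorem$^{\mathrm{ph}}$\,\ref{pthm:star=composition}, transparent physically because stacking two copies of the closed-fan configuration composes $\CY_{\phi_1}$ with $\CY_{\phi_2}$. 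Finally one checks that $h$ intertwines the $\Aut^\otimes(\CC,\eta_\CC)$-actions on the two sides, exactly as in the $n=2$ case.

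The main obstacle is twofold. Conceptually, establishing (i) --- that closing the fan produces something with trivial monoidal center --- requires controlling $\FZ_1$ under the relative tensor products over $\FZ_1(\Sigma\CC)$ and $\FZ_1(\Sigma\CE)$, a ``center of a fused system'' statement for which only the trivial-symmetry case is currently rigorous (\cite[Cor.\,IV.2]{jf20}, and in $2$d \cite{DN20}). Foundationally, every step above --- deloopings, relative tensor products of fusion $n$-categories, invertible domain walls, the $E_2$-centralizer $\FZ_2(\iota_\CM)$, and the boundary--bulk relation one dimension up --- presupposes a theory of unitary (multi-)fusion $n$-categories and their higher Morita theory that is, as yet, available only non-unitarily and only through \cite{jf20} (see Remark\,\ref{rem:unitary}); this is why the statement is a physical theorem, and a fully rigorous proof must await those foundations.
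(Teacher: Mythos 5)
Your proposal is correct and takes essentially the same route as the paper: the paper offers no separate argument for this $n$d statement beyond asserting it as a consequence of the assumed compatibility of Theorem$^{\mathrm{ph}}$\,\ref{pthm:main-1} and Theorem$^{\mathrm{ph}}$\,\ref{pthm:main-2}, and your closing-the-fan construction is exactly the $n$d generalization of the 2d argument around (\ref{eq:sigma-M}) and Theorem$^{\mathrm{ph}}$\,\ref{pthm:star=composition} that the paper intends. You also correctly flag the same obstacles (controlling $\FZ_1$ under the relative tensor products and the missing unitary higher-categorical foundations) that the paper itself acknowledges.
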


The discussion of $1$d, $2$d SET$_{/\CR}$ orders with only symmetry anomalies (recall Theorem$^{\mathrm{ph}}$\,\ref{pthm:1d-thooft} and \ref{pthm:2d-thooft}) can be generalized to $n$d cases directly. 
\begin{pthm} \label{pthm:nd-t-hooft-anomaly}
For $n\geq 1$, an $n$d (spatial dimension) SET$_{/\CR}$ order with only a symmetry anomaly (without gravitational anomaly) is characterized by a quintuple $(\CA,\iota_\CA;\CM,\iota_\CM;\phi)$, where $(\CA,\iota_\CA)$ is a unitary fusion $n$-category over $\CR$ describing all topological excitations (including all condensation descendants), $(\CM,\iota_\CM)$ is a minimal modular extension of $(\CR,\id_\CR)$ (which determines the symmetry anomaly), and $\phi: \CM \to \FZ_1(\CA)$ is a braided equivalence rendering the following diagram commutative (up to natural isomorphisms): \[ \xymatrix@R=1em{ & \CR \ar@{^(->}[dl]_{\iota_\CM} \ar@{^(->}[dr]^{\iota_\CA} & \\ \CM \ar[rr]_\simeq^\phi & & \FZ_1(\CA).}
\] 
If $(\CM,\iota_\CM)\nsimeq(\FZ_1(\CR),\iota_0)$, then the symmetry anomaly is non-trivial. We obtain 
\begin{align*}
&\{ \mbox{$n$d SET$_{/\CR}$ orders with symmetry anomaly $(\CM,\iota_\CM)$ and topological excitations $\CA$} \}  \\
&\hspace{2cm} = \frac{ \mathrm{BrEq}((\CM,\iota_\CM),(\FZ_1(\CA),\iota_\CA))}{\Aut^\otimes(\CA,\iota_\CA)}\, .
\end{align*}
\end{pthm}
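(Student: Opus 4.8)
The plan is to run the boundary--bulk argument that already proved the $1$d case (Theorem$^{\mathrm{ph}}$\,\ref{pthm:1d-thooft}) and the $2$d case (Theorem$^{\mathrm{ph}}$\,\ref{pthm:2d-thooft}) in arbitrary dimension, feeding in the gauging-the-symmetry description of the bulk SPT order from Theorem$^{\mathrm{ph}}$\,\ref{pthm:main-1}. First I would fix the physical setup: an $n$d SET$_{/\CR}$ order $\CX$ with no gravitational anomaly and a (possibly non-trivial) symmetry anomaly is, by Definition\,\ref{defn:hooft-anomaly}, realizable as a gapped boundary of an $(n{+}1)$d SPT$_{/\CR}$ order (no intrinsic topological order in the bulk). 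By Theorem$^{\mathrm{ph}}$\,\ref{pthm:main-1} this bulk SPT order carries the $0$-codimensional categorical description $(\CM,\iota_\CM)$, where $\CM$ is a unitary modular $n$-category, i.e. a minimal modular extension of $(\CR,\id_\CR)$, and $\iota_\CM:\CR\hookrightarrow\CM$ the canonical braided embedding. This is the datum that ``determines the symmetry anomaly'': one has $(\CM,\iota_\CM)\simeq(\FZ_1(\CR),\iota_0)$ exactly when the bulk is the trivial SPT order, i.e. the anomaly vanishes, which follows from the $\CA=\CR$, $\phi=\id_{\FZ_1(\CR)}$ specialization of Theorem$^{\mathrm{ph}}$\,\ref{pthm:main-2}.

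Next I would extract the boundary data. The complete set of topological excitations of $\CX$, including all condensation descendants as dictated by the Condensation-Completion Principle (Section\,\ref{sec:cc}), forms a unitary fusion $n$-category $\CA$. The symmetry charges $\CR$ can be freely created on, and moved into and out of, the boundary, which -- exactly as in Remark\,\ref{rem:over-E=ff} and the $1$d/$2$d discussions -- equips $\CA$ with a braided faithful functor $\iota_\CA:\CR\hookrightarrow\FZ_1(\CA)$ whose composite with $\FZ_1(\CA)\to\CA$ is still faithful, so that $(\CA,\iota_\CA)$ is a unitary fusion $n$-category over $\CR$ in the sense of Definition\,\ref{def:n-fusion-over-R}. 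The boundary--bulk relation \cite{kwz1,kwz2}, applied to the boundary $\CA$ of the bulk SPT order, then supplies a braided equivalence $\phi:\CM\xrightarrow{\simeq}\FZ_1(\CA)$, and compatibility of transporting symmetry charges from the bulk to the boundary forces $\iota_\CA\simeq\phi\circ\iota_\CM$, i.e. the stated triangle commutes. Conversely, given a quintuple $(\CA,\iota_\CA;\CM,\iota_\CM;\phi)$ one reconstructs the SET order by placing $\CA$, as a $1$-codimensional description, on a boundary of the bulk SPT order encoded by $(\CM,\iota_\CM)$, with $\phi$ gluing the center of the boundary to the bulk; that this reconstruction is inverse to the extraction is the content of the boundary--bulk correspondence in the fixed codimension.

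Finally I would identify the equivalence relation. Two quintuples $(\CA,\iota_\CA;\CM,\iota_\CM;\phi)$ and $(\CA',\iota_{\CA'};\CM,\iota_\CM;\phi')$ with the same anomaly datum describe the same SET order iff there is an equivalence of unitary fusion $n$-categories over $\CR$ in the sense of Definition\,\ref{def:n-equivalence-over-R}, namely $f:\CA\to\CA'$ with $\FZ_1(f)$ intertwining $\phi$ and $\phi'$. In particular, once $\CA$ is fixed, the residual ambiguity is precisely the $\Aut^\otimes(\CA,\iota_\CA)$-action on $\mathrm{BrEq}((\CM,\iota_\CM),(\FZ_1(\CA),\iota_\CA))$ by post-composition with the image of an automorphism under the map $\Aut^\otimes(\CA,\iota_\CA)\to\Aut^{br}(\FZ_1(\CA),\iota_\CA)$, $f\mapsto\FZ_1(f)$. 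This yields the claimed identification
\[
\{\,\mbox{$n$d SET$_{/\CR}$ with symmetry anomaly $(\CM,\iota_\CM)$ and topological excitations $\CA$}\,\}\;=\;\frac{\mathrm{BrEq}((\CM,\iota_\CM),(\FZ_1(\CA),\iota_\CA))}{\Aut^\otimes(\CA,\iota_\CA)}\,.
\]

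The main obstacle is the same one flagged throughout the paper: turning the boundary--bulk relation in a fixed codimension into a theorem in all dimensions, i.e. that a $1$-codimensional boundary of an anomaly-free $(n{+}1)$d phase is exactly a unitary fusion $n$-category whose monoidal center is the bulk, and that this assignment is an equivalence of the appropriate (higher) groupoids. This rests on the as-yet-undeveloped unitary theory of multi-fusion $n$-categories (Remark\,\ref{rem:unitary}) and on a higher Morita/condensation theory only partially available via \cite{gjf19,jf20}. Granting Theorem$^{\mathrm{ph}}$\,\ref{pthm:main-1} and \ref{pthm:main-2} together with the boundary--bulk principle, the remaining steps -- the commuting triangle, the reconstruction, and the quotient by $\Aut^\otimes(\CA,\iota_\CA)$ -- are formal and parallel to the $n=1,2$ cases already treated.
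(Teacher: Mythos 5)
Your proposal is correct and follows essentially the same route as the paper, which states this theorem as the direct generalization of the $1$d and $2$d cases (Theorem$^{\mathrm{ph}}$\,\ref{pthm:1d-thooft} and \ref{pthm:2d-thooft}): the bulk SPT order is read off from the gauging description of Theorem$^{\mathrm{ph}}$\,\ref{pthm:main-1} as a minimal modular extension $(\CM,\iota_\CM)$, the boundary excitations are condensation-completed into a unitary fusion $n$-category over $\CR$, and the boundary--bulk relation supplies $\phi$ and the commuting triangle, with the residual $\Aut^\otimes(\CA,\iota_\CA)$-action giving the quotient. Your explicit acknowledgement that the argument rests on the yet-to-be-developed unitary higher-category and boundary--bulk machinery matches the paper's own caveats.
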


\begin{rem}
Since $(\CM,\iota_\CM)$ describes an $\nao$d SPT order which is invertible under stacking, we expect a natural group isomorphism $\Aut^{br}(\CM,\iota_\CM) \simeq \Aut^{br} (\FZ_1(\CR),\iota_0)$. Note that all $n$d invertible domain walls associated to $\phi\in\Aut^{br}(\CM,\iota_\CM)$ can be obtained from stacking $n$d SPT orders with the trivial domain wall associated to $\id_\CM$. 
\end{rem}

\begin{rem}
For our classification results to be useful in the study of real cases in practice, the key is to compute the monoidal center of higher fusion $n$-categories. Unfortunately, not many results on this problem are available partially because higher category theory is still underdeveloped and partially because computing center is already non-trivial for fusion 1-, 2-categories. As far as we know, only the categories of topological excitations of Dijkgraaf-Witten theories (as the monoidal centers of $n\mathrm{Vec}_G^\omega$) were computed in \cite{will08} for the 2d cases and in \cite{ktz} for the 3d cases, and conjectured in \cite{ktz} for all higher dimensional cases. In a unique situation, when the 1-dimensional higher bulk is known to be 3d and bosonic, there is no need to compute the center because this bulk is uniquely determined by a 4-cocycle in $H^4(G,U(1))$ according to \cite{LW170404221}. In this case, one can treat the 4-cocycle directly as the anomaly as in \cite{WLL16,BC20,BB20}.  
\end{rem}

\begin{rem}\label{anoset}
The generalization to mixed gravitational and symmetry anomaly is immediate. In short, one simply allow the modular extension $\CM$ in the above theorem to be \emph{not minimal}. Note that the minimal modular extension description of an anomaly-free $\nao$d SET$_{/\CR}$ order may be simplified to the data $\CR\xrightarrow{\iota_\CM} \CM$, where $\CM$ is a unitary modular $n$-category and $\iota_\CM$ is a braided embedding. The topological excitations are given by the $E_2$-centralizer $\FZ_2(\iota_\CM)$ (recall Remark\,\ref{rem:E2-center}), and $\CM$ is naturally a minimal modular extension of $\FZ_2(\iota_\CM)$. The topological excitations of an anomalous $n$d SET order are still described by a fusion $n$-category $\CA$ over $\CR$. By the boundary-bulk relation, we obtain the following commutative diagram: 
  \be 
  \raisebox{2em}{\xymatrix@R=1em{
      & \CR \ar@{^(->}[dl]_{\iota_\CM} \ar@{^(->}[dr]^{\iota_\CA} & \\
	  \CM \ar[rr]_\simeq^\phi & & \FZ_1(\CA)\, ,
      }}
  \ee
  which automatically induces an equivalence between the bulk excitations
  $\FZ_2(\iota_\CM)\simeq \FZ_2(\iota_\CA)$. When $\FZ_2(\iota_\CM)$ is larger
  than $\CR$, there is a gravitational anomaly.
\end{rem}

\begin{rem} \label{rem:tower}
We want to emphasize again that the second approach based on the idea of boundary-bulk relation is not independent from the first approach based on the idea of gauging the symmetry. More precisely, the categorical description of a potentially anomalous $n$d SPT/SET order depends on the categorical description of 1-higher-dimensional SPT orders obtained from the idea of gauging the symmetry. If we do not gauge the symmetry of the 1-higher-dimensional SPT order, one can still apply the boundary-bulk relation (by taking an over-$\CR$ center), but then the categorical data of a 1-dimension-higher SPT order is not complete either. One should consider even higher dimensional SPT orders. As a consequence, we obtain an infinite tower of higher dimensional SPT orders. Although one can speculate if this infinite tower can provide a precise description of an $n$d SPT/SET order, one can see, from this perspective, that applying the trick of gauging the symmetry allows us to truncate the tower and complete the missing data in the 1-higher dimension. In retrospective and from a mathematical point of view, to be able to gauge the symmetry in the same dimension is both miraculous and mysterious, and demands further studies. 
\end{rem}

\void{
\begin{table*}[t] \label{table:iTO-wen}
 \caption{
Entries in black are SPT phases described by group cohomology $H^{d+1}(G,U(1))$. The results in blue are the beyond-group-cohomology SPT phases described by $\oplus_{k=1}^{d} H^k(BG,i\mathrm{TO}^{d+1-k})$.  
} 
\label{SPT}
 \centering
 \begin{tabular}{ |c|c|c|c|c|c|c|c| }
 \hline
symmetry & 0+1D & 1+1D & 2+1D &3+1D & 4+1D & 5+1D & 6+1D  \\
\hline
$Z_n$ & $\Zb_n$ & 0 & $\Zb_n$ & 0 & ${\Zb_n}\oplus\blue{\Zb_n}$ & $\blue{\Zb_{\<n,2\>}}$ & ${\Zb_n}\oplus\blue{\Zb_n\oplus \Zb_{\<n,2\>}}$  \\
$Z_2^T$ & 0 & $\Zb_2$ & 0 & ${\Zb_2}\oplus\blue{\Zb_2}$ & 0 & ${\Zb_2}\oplus \blue{2\Zb_2}$ & $\blue{\Zb_2}$ \\
\hline
 \end{tabular}
\end{table*}
}


\appendix

\section{Appendix} \label{sec:appendix}

In this Appendix, we briefly review Gaiotto and Johnson-Freyd's construction of the Karoubi completion of an $n$-category \cite{gjf19} and the notion of a (braided) multi-fusion $n$-category \cite{jf20} and various higher centers \cite{lurie}. 


\medskip
Consider two potentially anomalous $n$d gapped phases $X$ and $Y$. A condensation of $X$ onto $Y$ is a pair of $(\nmo)$d gapped domain walls $f:X\rightleftarrows Y:g$, together with a condensation of the composite $(\nmo)$d gapped wall $f\circ g$ from $Y$ to $Y$ onto the trivial wall $\id_Y$ from $Y$ to $Y$. This leads to the precise formulation of an $n$-condensation between two objects $X$ and $Y$ in an $n$-category (assumed to be weak) defined inductively. More precisely, 0-condensations are equalities among elements in a set. An $n$-condensation of $X$ onto $Y$ is pair of 1-morphisms $f:X\rightleftarrows Y:g$, together with an $(\nmo)$-condensation of $fg$ to $\id_Y$. A walking $n$-condensation is the $n$-category $\spadesuit_n$ freely generated by an $n$-condensation. Therefore, an $n$-condensation in $\CC$ is precisely a functor $\spadesuit_n \to \CC$. Its full subcategory on the object $X$ is denoted by $\clubsuit_n$. A condensation $n$-monad in an $n$-category $\CC$ is a functor $\clubsuit_n \to \CC$. An $n$-category $\CC$ has all condensates if all $(\nmo)$-categories $\Hom_\CC(X,Y)$ have all condensates and every condensation monad $\clubsuit_n \to \CC$ extends to a condensation $\spadesuit_n \to \CC$, which is automatically unique if exists. An $n$-category is called Karoubi complete if every condensation monad factors through a condensation. Moreover, by \cite[Theorem\ 2.3.10]{gjf19}, for an $n$-category $\CC$ whose $(\nmo)$-categories of morphisms have all condensates, there is an $n$-category $\kar(\CC)$ called the ``Karoubi envelope'' (or ``Karoubi completion'') of $\CC$, such that there is a fully faithful functor $\CC \to \kar(\CC)$, which is an equivalence if $\CC$ has all condensations. 

An $n$-category is called $\Cb$-linear if the set of $n$-morphisms of given domain and codomain is a vector space over $\Cb$ and all compositions are $\Cb$-linear in each variables; called additive if the direct sum is defined for all $k$-morphisms $0\leq k<n$. Following \cite{jf20}, we define ``the delooping'' of a $\Cb$-linear additive Karoubi complete $n$-category $\CC$ by $\Sigma\CC \coloneqq \kar(B\CC)$, where $B\CC$ denotes the ``one-point delooping'' of $\CC$. We denote the category of $\Cb$-linear additive Karoubi complete monoidal $n$-categories (as objects) and bimodules (as 1-morphisms) by $\Alg_{E_1}^{\mathrm{Mor}}(n\catkc)$, which is itself symmetric monoidal with tensor product $\boxtimes$ defined by the naive tensor product $\otimes$ followed by a Karoubi completion \cite[Section\ II.B]{jf20}.  
\begin{defn}[\cite{jf20}] \label{def:jf}
A monoidal $n$-category $\CC$ is multi-fusion if it is additive, $\Cb$-linear, Karoubi complete and fully dualizable in $\Alg_{E_1}^{\mathrm{Mor}}(n\catkc)$. It is fusion, if, in addition, $\Omega^n\CC=\Cb$. 
\end{defn}

A braided monoidal $n$-category is an $E_2$-algebra in the $(\nao)$-category of $n$-categories, or equivalently, an $E_1$-algebra in the $(\nao)$-category of monoidal $n$-categories. A symmetric monoidal $n$-category is an $E_{n+2}$-algebra (automatically an $E_\infty$-algebra) in the $(\nao)$-category of $n$-categories. We assume that a proper notion of a unitary braided (multi-)fusion $n$-category can be defined. 
\begin{rem} \label{rem:unitary}
Tentatively, following \cite[Definition A.4]{kwz1}), a unitary $n$-category is a $\Cb$-linear category $\CC$ equipped with an equivalence $\delta: \CC \to \CC^\op$ fixing all $k$-morphisms for $0\leq k<n$, and is antilinear, involutive and positive on $n$-morphisms, i.e.
\[
\delta(\lambda f) \simeq \bar{\lambda} \delta(f), \quad\quad \delta\delta(f)=f, \quad\quad f\circ \delta(f) = 0 \Rightarrow f=0 ,
\]
for all $n$-morphisms $f$ in $\CC$ and $\lambda\in\Cb$. If $\CC$ has adjoints (i.e. all $k$-morphisms have the left and right adjoints for $1\leq k<n$), then the left adjoint and the right adjoint of a $k$-morphism $f$ are canonically equivalent \cite[Proposition\ A.7]{kwz1}. An $n$-functor $F: \CC\to\CC'$ is unitary if $F$ is $\Cb$-linear for $n$-morphisms and $F\circ \delta = \delta \circ F$. 
\end{rem}

Using the definition of $\Sigma\CC \coloneqq \kar(B\CC)$, one can define $\Sigma\Cb \coloneqq \vect$, i.e. the category of finite dimensional vector spaces over $\Cb$, and $n\vect \coloneqq \Sigma^n\Cb$. We assume that it generalizes to the unitary cases, $\Sigma\Cb=\hilb$ and $n\hilb \coloneqq \Sigma^n\Cb$. By \cite[Corollary 4.2.3 \& 4.2.4]{gjf19}, for a multi-fusion $n$-category $\CA$, $\Sigma\CA=\kar(B\CA)$ is equivalent to the category $\RMod_\CA^{\mathrm{fd}}$ of fully dualizable $\CA$-module $n$-categories. We further assume that the Karoubi completion is compatible with the notion of unitarity. For a unitary multi-fusion $n$-category, we expect that there is an equivalence $\Sigma\CA \simeq \RMod_\CA^{\mathrm{fd}}((\nao)\hilb)$.

\medskip
For a multi-fusion category $\CC$, its monoidal center or $E_1$-center, denoted by $\FZ_1(\CC)$, can be defined via the universal property \cite{lurie}, or more concretely by $\fun_{\CC\boxtimes\CC^\op}(\CC,\CC)$. The notion of an $E_n$-center, denoted by $\FZ_n(-)$, can also be defined by the universal property, or more concretely by the $E_1$-center of an $E_1$-algebra in the category of $E_{n-1}$-algebras \cite{lurie}. The notion of centralizers can be defined similarly (see \cite[Section\ 5.3]{lurie}). A unitary modular $n$-category can be defined as a unitary braided fusion $n$-category $\CC$ with a trivial $E_2$-center (or satisfying $\FZ_1(\CC) \simeq \CC\boxtimes\overline{\CC}$).

\end{document}